\newtheorem{theorem}{\bf Theorem}[section]
\newtheorem{definition}{\bf Definition}[section]
\newtheorem{condition}{\bf Hypothesis}
\newtheorem{lemma}{\bf Lemma}[section]
\newtheorem{remark}{\bf Remark}[section]
\newtheorem{notation}{\bf Notation}
\newtheorem{example}{\bf Example}
\newtheorem{stochastic}{\bf Stochastic system}
\newtheorem{model}{\bf Model}
\newenvironment{proof}[1][Proof.]{
  \begin{trivlist}
    \item[\hskip \labelsep {\bfseries #1}]
  \ignorespaces
  }{%
    \hspace*{\fill} $\square$\par
    \end{trivlist}
}
\begin{document}

\title[Stochastic interacting wave functions]{System of stochastic interacting wave functions that model quantum measurements}

%

\author{Carlos M. Mora$^{1}$$^{*}$}

\address{$^*$ Corresponding author.}

\address{$^1$ Departamento de Ingenier\'{\i}a Matem\'{a}tica, 
Facultad de Ciencias F\'{\i}sicas y Matem\'{a}ticas, Universidad de Concepci\'{o}n,
Casilla 160 C, Concepci\'{o}n, Chile.}

\eads{\mailto{cmora@ing-mat.udec.cl}}

\begin{abstract}
We develop a system of non-linear stochastic evolution equations that
describes the continuous  measurements of quantum systems with mixed initial state.
We address quantum systems  with unbounded Hamiltonians and unbounded interaction operators.
Using  arguments of  the theory of quantum measurements
we derive a system of stochastic interacting wave functions (SIWF for short) 
that models the continuous monitoring of quantum systems.
We prove the existence and uniqueness of the solution to this system 
under conditions general enough for the applications. 
We obtain that the mixed state generated by the SIWF at any time does not depend on the initial state,
and satisfies the diffusive stochastic quantum master equation,
which is also known as Belavkin equation.
We present two physical examples.
In one, the SIWF becomes a system of non-linear stochastic partial differential equations.
In the other, we deal with a model of a circuit quantum electrodynamics. \\

\noindent{\it Keywords\/}: Quantum measurement process, non-linear stochastic evolution equation,
stochastic master equation, Belavkin equation, open quantum system, unbounded generator,
random operators
\end{abstract}

\ams{81P15, 81S22, 60H15, 60H25, 60H30, 81P16}

\submitto{\NL}



\section{Introduction}


We address the modeling of continuous measurements of quantum systems. 
The states of quantum systems are described by density operators that 
are non-negative operators with unit trace on a separable complex Hilbert space $\left(\mathfrak{h},\left\langle \cdot,\cdot\right\rangle \right) $,
where the  inner product 
$\left\langle \cdot,\cdot \right\rangle:   \mathfrak{h} \times  \mathfrak{h} \rightarrow \mathbb{C}$ 
is anti-linear in the first variable and linear in the second argument. 
In this paper, we study the continuous monitoring of quantum systems with infinite-dimensional state space $\mathfrak{h}$.
This occurs, for instance,  
in quantum systems involving continuous variables like position,
where $\mathfrak{h}$ is a tensor product space containing $L^2 \left( \mathbb{R}^d, \mathbb{C} \right) $ among its components
(see, e.g., \cite{Cohen-Tannoudji,Gough,Sakurai} and Section \ref{sec:Ex1}).
Another example arises from  the  Rabi and Jaynes-Cummings models,
where the coupling between a  two-level atom and  a  quantized electromagnetic field of a cavity
is described by linear operators in 
$\mathfrak{h} = \ell^2 \left( \mathbb{Z}_+ \right) \otimes\mathbb{C}^2$  
(see, e.g., \cite{HarocheRaimond2006,Niemczyk2010} and Section  \ref{sec:Ex2}).

\subsection{Framework}

The evolution of a closed quantum system with a pure initial state 
$ \left\vert   \psi \left( 0 \right)  \rangle\langle  \psi \left( 0 \right) \right\vert  $
is given by $ \left\vert   \psi \left( t \right)  \rangle\langle  \psi \left( t \right) \right\vert  $,
where $\psi \left( t \right)$ satisfies the Schr\"{o}dinger equation
\begin{equation}
 \label{eq:DeterministicSchrodinger}
\mathrm{i}  \frac{d}{ d \, t} \psi \left( t \right) 
 =
H\left( t \right) \psi \left( t \right) 
\qquad \quad \mathrm{for \; all \,} t \geq 0 
\end{equation}
(see, e.g., \cite{Cohen-Tannoudji,Gough,Sakurai}).
Here,  $ \psi \left( t \right)  \in \mathfrak{h} $,
the time-dependent Hamiltonian $H\left( t \right)$  is a symmetric operator in $\mathfrak{h}$,
and following Dirac's notation we write 
$ \left\vert   \phi  \rangle\langle  \varphi \right\vert  $
for the operator
\[
\left\vert   \phi  \rangle\langle  \varphi \right\vert \left( \xi \right)
=
\langle  \varphi , \xi  \rangle  \phi 
\qquad
\qquad \quad \mathrm{for \; all \,} \xi \in \mathfrak{h} 
\]
whenever $ \phi, \varphi \in \mathfrak{h}$.

Assume that $\dim \left( \mathfrak{h} \right) = + \infty $ and that $H\left( t \right)$ is an unbounded operator.
If $H\left( t \right) $ does not depended on $t$,
then the existence and uniqueness of the solution to  (\ref{eq:DeterministicSchrodinger}) 
is guaranteed by  the self-adjointness of the Hamiltonian (see, e.g., \cite{ReedSimonVol2}).
In the non-autonomous case,
there exist a unique global solution to (\ref{eq:DeterministicSchrodinger}) 
when the commutator $\left[ H\left( t \right)  , C \right] C^{-1}$
is regular for  certain positive self-adjoint operator $C$ 
(see, e.g., \cite{MASPERO2017721} and references therein).
The well-posedness of (\ref{eq:DeterministicSchrodinger}) has also been established in  the theory of Floquet operators (see, e.g., \cite{Liang20224850} and references therein).

Let the initial state  $\varrho_{0}$ be mixed
(that is, $\varrho_{0}
=
\sum_{n = 1}^{\infty} p_n \left\vert\psi^n_0  \rangle\langle\psi^n_0  \right\vert 
$
with $ p_n \geq 0$, $\sum_{n = 1}^{\infty} p_n = 1 $ and $\left\Vert \psi^n_0 \right\Vert = 1 $),
then
the evolution of the closed quantum system in the  Schr\"{o}dinger picture is described by the von Neumann equation
\begin{equation}
\label{eq:vonNeumann}
\frac{d}{dt} \varrho_{t} = - \mathrm{i} \left( H\left( t \right) \varrho_{t}  - \varrho_{t}  H\left( t \right)\right) ,
\end{equation}
where $\varrho_{t}$ is a density operator in $\mathfrak{h}$.
If there exist a unique unitary propagator for (\ref{eq:DeterministicSchrodinger}),
then
\[
\varrho_{t} 
=
\sum_{n = 1}^{\infty} p_n \left\vert\psi^n_t  \rangle\langle\psi^n_t  \right\vert  ,
\]
where $\psi^n_t$ is the solution of (\ref{eq:DeterministicSchrodinger}) with initial condition $\psi^n_0 $.

In practice, 
the  vast majority of quantum systems are affected by their environment (see, e.g., \cite{HarocheRaimond2006}),
and therefore they are open.
In the  Schr\"{o}dinger picture,
the evolution of many open quantum systems is described by 
the quantum master equation in  the Gorini-Kossakowski-Sudarshan-Lindblad (GKSL) form 
\begin{equation}
\label{eq:GKSL}
\varrho_{t} =\varrho_0 + \int_{0}^{t}
\left( G \left( s \right) \varrho_{s} +\rho_{s} G \left( s \right)^{\ast}
+\sum_{\ell=1}^{\infty } L_{\ell} \left( s \right) \varrho_{s} L_{\ell}\left( s \right)^{\ast}\right) ds ,
\end{equation}
where
$t \geq 0$, 
$\varrho_{t}$ is a density operator in $\mathfrak{h}$,
the operators $L_{\ell} \left( t \right) $ in $\mathfrak{h}$ represent 
the interaction between the environment and the  quantum system with Hamiltonian  $H\left( t \right)$, 
and
$G \left( t \right)$ is a linear operator in $\mathfrak{h}$ satisfying
\begin{equation}
\label{eq:Def_G}
G\left( t \right)=-  \mathrm{i} H\left( t \right)-\frac{1}{2} \sum_{\ell=1}^{\infty}L_{\ell}\left( t \right)^{\ast}L_{\ell} \left( t \right) 
\end{equation}
 on appropriate domain
(see, e.g., \cite{AlickiLendi2007,BreuerPetruccione}).
The existence and uniqueness of the solution of the linear operator equation (\ref{eq:GKSL}) is proved, for instance, in  \cite{ChebFagn2,Chebotarev2000,Fagnola,FagMora2019,MoraAP}.

The evolution of the open quantum system (\ref{eq:GKSL}) is also described by 
the non-linear stochastic Schr\"odinger equation
\begin{eqnarray}
\fl
\label{eq:NLSSE_i}
& \hat{\phi}_t
 =
\hat{\phi}_0 
+ \sum_{ \ell =1}^{ \infty} \int_0^t \left(
L_{\ell} \left( s \right)    \hat{\phi}_s 
- \Re \left(  \langle \hat{\phi}_s,  L_{\ell} \left( s \right)   \hat{\phi}_s \rangle \right)  \hat{\phi}_s
\right) dW_s^{\ell} 
\\
\nonumber
\fl
& \quad
+ \hspace{-2pt} \int_0^t \hspace{-2pt} \left(  \hspace{-2pt} G\left( s \right)  \hat{\phi}_s 
+ \sum_{ \ell =1}^{ \infty} \left( \Re \left( \langle \hat{\phi}_s,  L_{\ell} \left( s \right)  \hat{\phi}_s \rangle  \right) 
 L_{\ell} \left( s \right)  \hat{\phi}_s
-\frac{1}{2} \left( \Re \left(\langle \hat{\phi}_s, L_{\ell} \left( s \right)  \hat{\phi}_s \rangle  \right) \right)^2  \hat{\phi}_s  \right)
 \hspace{-2pt} \right) \hspace{-1pt} ds 
\end{eqnarray}
(see, e.g., \cite{BarchielliBelavkin1991,Belavkin1,Ghirardi1990, Kupsch1996}),
where $t \geq 0$, $\hat{\phi}_t \in \mathfrak{h}$,
the $\mathfrak{h}$-valued random variable $\hat{\phi}_0$ satisfies $\left\Vert \hat{\phi}_0 \right\Vert =1$, 
and
$W^1, W^2, \ldots$ are independent real Wiener processes (which are also called Brownian motions).
From now on,
$\Re \left( z \right)$ stands for the real value of the complex number $z$,
and so
$
\Re \left( \langle \hat{\phi}_s,  L_{\ell} \left( s \right)  \hat{\phi}_s \rangle \right)
=
\langle \hat{\phi}_s,  \left( L_{\ell} \left( s \right) + L_{\ell} \left( s \right)^{\ast} \right)  \hat{\phi}_s \rangle
$
provided that $ \hat{\phi}_s$ belongs to the domain of $L_{\ell} \left( s \right)^{\ast} $.
We can choose $\hat{\phi}_0 $ such that $ \rho_{0} = \mathbb{E} \left( \left\vert \hat{\phi}_0   \rangle\langle \hat{\phi}_0   \right\vert \right) $,
and hence
\begin{equation}
\label{eq:Rep_GKSL}
 \varrho_{t} = \mathbb{E} \left( \left\vert \hat{\phi}_t  \rangle\langle \hat{\phi}_t  \right\vert \right) 
\end{equation}
(see, e.g., \cite{Percival,MoraAP}).
Using the unravelling (\ref{eq:Rep_GKSL}) we compute  the mean value of the quantum observables
by means of the numerical solution of (\ref{eq:NLSSE_i}),
which overcomes the difficulties arising from 
the numerical solution of  (\ref{eq:GKSL})
(see, e.g,  \cite{BreuerPetruccione,MoraAAP2005,Percival}).
Under a quantum non-explosion  condition,
(\ref{eq:NLSSE_i})  has a unique weak probabilistic solution (see, e.g., \cite{FagMora2013,MoraReAAP}),
which means that there exist an adapted process $\hat{\phi}_t$ and Brownian motions $W^1_t, W^2_t, \ldots$ 
such that  (\ref{eq:NLSSE_i}) holds, 
and the joint probability distribution of any $\hat{\phi}_t, W^1_t, W^2_t, \ldots$ satisfying (\ref{eq:NLSSE_i}) 
is unique. 
It is worth pointing out that (\ref{eq:NLSSE_i}) is different from
the stochastic non-linear  Schr\"odinger equation that 
arises from adding random effects to the classical non-linear  Schr\"odinger equation
(see, e.g., \cite{BarrueDebusscheTusseau} and references therein).

The state $ \left\vert \hat{\phi}_t  \rangle\langle \hat{\phi}_t  \right\vert $,
where  $\hat{\phi}_t$ satisfies the non-linear stochastic evolution equation (\ref{eq:NLSSE_i})
or its versions,
describes 
continuous monitoring  of quantum systems with an initial random pure state  
(see, e.g.,  \cite{BarchielliBelavkin1991,Barchielli,BarchielliHolevo1995,BreuerPetruccione}).
In case the initial state $ \rho_{0} $ is mixed,
the evolution of quantum systems undergoing continuous measurements is governed by 
the stochastic master equations 
(see, e.g., \cite{Barchielli,Pellegrini2010AHP,WisemanMilburn2010} and references therein),
which are also known as Belavkin's equations
(see, e.g., \cite{Belavkin2013} and references therein).
In this paper,
we focus on measurement processes described by  the diffusive stochastic master equation
\begin{eqnarray}
\fl
\nonumber
\rho_{t}
 & = 
 \rho_{0}
 +
 \int_0^t \left(
 G \left( s \right) \rho_{s}
 +  \rho_{s}   G\left( s \right)^{\ast} 
 + \sum_{\ell=1}^{\infty} L_{ \ell} \left( s \right) \rho_{s}  L_{\ell}\left( s \right)^{\ast}
 \right) ds
 \\
 \fl
 \label{eq:SQME}
 & \quad
 + 
 \sum_{\ell=1}^{\infty} 
  \int_0^t \left( L_{ \ell} \left( s \right) \rho_{s}  +  \rho_{s}  L_{\ell}\left( s \right)^{\ast}
  - 2 \Re \left( \Tr \left( L_{ \ell} \left( s \right) \rho_{s}  \right) \right) \rho_{s}
  \right) d W^{\ell}_s 
 \qquad \mathrm{for}\; \mathrm{all}\;  t \geq 0 ,
\end{eqnarray}
where
$ \rho_{t}$ is a random density operator on $\mathfrak{h}$,
and
$W^1, W^2, \ldots$ are independent real Brownian motions.
In the Belavkin equation (\ref{eq:SQME}),
$ \rho_{t}$ represents the quantum state describing the interaction of the quantum system 
(\ref{eq:vonNeumann})  with the apparatus 
$ L_{1}, L_2, \ldots $,
and the  outcome of the continuous monitoring of the observable  $ L_{\ell} $ is given by 
\begin{equation}
\label{eq:QFiltering}
B_{t}^{\ell}
=
W^{\ell}_t + 2  \int_0^t \Re \left( \Tr \left( L_{ \ell} \left( s \right) \rho_{s}  \right) \right) ds .
\end{equation}
In Section \ref{PhysicalBackground} we present the derivation of  (\ref{eq:SQME})
given by \cite{Barchielli} in the context of the Schr\"odinger picture  (see also, e.g., \cite{BarchielliGregoratti2013,WisemanMilburn2010}).
Starting from the Heisenberg picture,
the stochastic master equation (\ref{eq:SQME}), and its versions,
is obtained in the quantum filtering theory (see, e.g., \cite{BarchielliGregoratti2013,Belavkin2013,Bouten2007}),
where 
the $W^{\ell}_t $'s are the innovation processes.

The Belavkin equation (\ref{eq:SQME}) has a unique solution
when  $\mathfrak{h} $ is finite-dimensional
(see, e.g., \cite{Barchielli,Pellegrini2008}).
If
$G \left( s \right)$ and $L_{ \ell} \left( s \right)$ are bounded operator,
then normalizing the solution of the  linear stochastic quantum master equation 
yields a solution to the non-linear stochastic evolution equation  (\ref{eq:SQME})
(see, e.g., \cite{BarchielliHolevo1995,BarchielliPaganoniZucca} and Section \ref{PhysicalBackground}).
To the best of our knowledge,
it has not been proven  the existence and uniqueness of the solution of (\ref{eq:SQME})
in the case where $G \left( s \right)$ or some $L_{ \ell} \left( s \right)$ are unbounded operators 
like in Sections  \ref{sec:Ex1} and  \ref{sec:Ex2}.
A formal calculation, which is rigorous in case $ \dim \left( \mathfrak{h} \right) < + \infty$,
shows that the mean statistical operator $ \mathbb{E} \rho_{t} $ satisfies the GKSL equation (\ref{eq:GKSL}).

The stochastic Schr\"odinger equation (\ref{eq:NLSSE_i}) is connected with the linear stochastic Schr\"odinger equation
\begin{equation}
\label{eq:Linear_SSE}
\fl
\phi_{t} \left(\phi_{0} \right)
= \phi_{0}  +\int_{0}^{t}G \left( s \right) \phi_{s}  \left( \phi_{0} \right)   \, ds 
+ \sum_{\ell=1}^{\infty }\int_{0}^{t} L_\ell \left( s \right) \phi_{s}  \left( \phi_{0} \right)  \, dB_{s}^{\ell} 
\qquad \forall t \geq 0 ,
\end{equation}
where $\phi_{t} \left(\phi_{0} \right)$ is an $\mathfrak{h}$-valued adapted stochastic process and 
$B^1, B^2, \ldots$ are independent real Brownian motions on the filtered complete probability space 
$\left( \Omega,\mathfrak{F},\left( \mathfrak{F}_{t}\right) _{t \in \mathbb{I}},\mathbb{Q}\right) $.
Under the probability measure $ \mathbb{P} = \left\Vert \phi_{T} \left( \hat{\phi}_0  \right) \right\Vert ^2 \cdot \mathbb{Q} $
we have that 
$\phi_{t} \left( \hat{\phi}_0  \right) / \left\Vert \phi_{t} \left( \hat{\phi}_0  \right) \right\Vert $
and 
\[
W^{\ell}_t  
= 
B_{t}^{\ell} 
- 
\int_0^t 
\frac{ 2 \, \Re \langle  \phi_{s} \left( \hat{\phi}_0  \right)  ,  L_{\ell}\left( s \right)  \phi_{s} \left( \hat{\phi}_0  \right)  \rangle 
}{ \left\Vert \phi_{s} \left( \hat{\phi}_0  \right) \right\Vert ^2 } 
ds
\]
satisfy (\ref{eq:NLSSE_i}) in the interval $\left[ 0, T \right]$
(see, e.g., \cite{FagMora2013,MoraReAAP} for details).
Conversely, we can derive (\ref{eq:Linear_SSE}) from (\ref{eq:NLSSE_i}) 
(see, e..g., Section 2.5.4 of \cite{BarchielliGregoratti2013} and the proof of Proposition 2 of  \cite{MoraReAAP}).
In \cite{Holevo} is established the weak topological solution of (\ref{eq:Linear_SSE}), and the regular strong topological solution of (\ref{eq:Linear_SSE}) is studied by, e.g., \cite{FagMora2013,MoraMC2004,MoraReIDAQP}.

\subsection{Contributions of the paper}

We address the continuous measurements of infinite-dimensional quantum systems 
(i.e, $ \dim \left( \mathfrak{h} \right) = + \infty $)
with the property that 
the Hamiltonian $H \left( t \right)$ and the apparatus $L_{ \ell} \left( t \right)$
can be unbounded operators.
In particular, 
we develop the following model describing 
the evolution of quantum systems undergoing continuous monitoring.

\begin{model}
	\label{Modelo3}
	Suppose that the initial density operator $\rho_{0}$ is equal to the operator 
	$
	\sum_{n = 1}^{N} p_n \left\vert  \hat{\phi}^n_0   \rangle\langle  \hat{\phi}^n_0   \right\vert
	$,
	where $N \in \mathbb{N}$ or $N = + \infty$,
	the $p_n's$ are non-negative real numbers satisfying 
	$\sum_{n = 1}^{N} p_n = 1$,
	and the  $\hat{\phi}^n_0$'s are  unit vectors in $\mathfrak{h}$.
	Then, 
	the state of the quantum system at time $t$ is 
	\begin{equation}
		\label{eq:Def_rhom}
		\rho_{t} = \sum_{n = 1}^{N} p_n \, \left\vert \check{\psi}^n_t  \rangle\langle \check{\psi}^n_t  \right\vert ,
	\end{equation}
	where the $\check{\psi}^n_t$'s are $\mathfrak{h}$-valued adapted process with continuous sample paths
	satisfying the system of non-linear stochastic evolution equations on $\mathfrak{h}$:
	\begin{eqnarray}
		\nonumber
		\check{\psi}^n_t 
		& =
		\hat{\phi}^n_0 
		+ \int_0^t\left( G\left( s \right) \check{\psi}^n_s
		+ \sum_{ \ell =1}^{ \infty} \left( \wp_{\ell} \left( s \right)  L_{\ell} \left( s \right)  \check{\psi}^n_s
		-\frac{1}{2} \wp_{\ell} \left( s  \right)^2  \check{\psi}^n_s \right) 
		\right)ds 
		\\
		& \quad
		+
		\sum_{ \ell =1}^{ \infty} \int_0^t  
		\left( L_{\ell} \left( s \right) \check{\psi}^n_s - \wp_{\ell} \left( s \right) \check{\psi}^n_s  \right)  dW_s^{\ell} ,
		\label{eq:SIWEm}
	\end{eqnarray}
	where
	$W^1, W^2, \ldots$ are independent real Brownian motions,
	and
	\begin{equation}
		\label{eq:Def_pl_m}
		\wp_{\ell} \left( s  \right)
		=
		\sum_{n=1}^{N} p_n \, \Re \left( \langle  \check{\psi}^n_s ,  L_{\ell} \left( s \right)  \check{\psi}^n_s \rangle \right) .
	\end{equation}
	
\end{model}

There is no loss of generality in assuming $ N = + \infty $ in Model \ref{Modelo3},
because we can set $p_{k} = 0$ for all $ k \geq N+1 $ in case $N \in \mathbb{N}$.
On the hand,
the initial mixed state $\rho_0$  is random in many physical situations.
For instance, $\rho_0$ can be the partial trace of a quantum state interacting with a fluctuating environment.
This leads us to consider that 
$ p_1, p_2, \ldots  $ and $\hat{\phi}^1_0, \hat{\phi}^2_0, \ldots $ are real random variables and $\mathfrak{h}$-valued random variables respectively. 
In order to treat the case where some $p_n$ are equal to $0$ with positive probability 
we define 
$  \psi^n_t = \sqrt{p_n}  	\check{\psi}^n_t   $,
and so Model \ref{Modelo3} with a random initial density operator
becomes Model \ref{Modelo1} given below.

\begin{stochastic}
\label{st:SIWE}
 For any  $ n \in \mathbb{N} $ we have
\begin{eqnarray}
\nonumber
\psi^n_t 
& =
\psi^n_0 
+ \int_0^t\left( G\left( s \right) \psi^n_s
+ \sum_{ \ell =1}^{ \infty} \left( \wp_{\ell} \left( s \right)  L_{\ell} \left( s \right)  \psi^n_s
-\frac{1}{2} \wp_{\ell} \left( s  \right)^2  \psi^n_s \right) 
\right)ds 
\\
& \quad
+
\sum_{ \ell =1}^{ \infty} \int_0^t  
\left( L_{\ell} \left( s \right) \psi^n_s - \wp_{\ell} \left( s \right) \psi^n_s  \right)  dW_s^{\ell} ,
 \label{eq:SIWE}
\end{eqnarray}
where $ \psi^1_t, \psi^2_t, \ldots $ are $\mathfrak{h}$-valued adapted process with continuous sample paths,
\begin{equation}
 \label{eq:Def_pl}
 \wp_{\ell} \left( s  \right)
=
\sum_{n=1}^{\infty} \Re \left( \langle  \psi^n_s ,  L_{\ell} \left( s \right)  \psi^n_s \rangle \right) 
\end{equation}
and
$W^1, W^2, \ldots$ are independent real Brownian motions on the filtered complete probability space 
$\left( \Omega,\mathfrak{F},\left( \mathfrak{F}_{t}\right) _{t \in \mathbb{I}},\mathbb{P}\right) $.
\end{stochastic}

\begin{model}
	\label{Modelo1}
	Suppose that the initial random density operator 
	is distributed according to the probability measure $\mu$
	defined on $\mathcal{B} \left(  \mathfrak{L}_{1} \left( \mathfrak{h}\right) \right)$,
	where $\mathcal{B} \left(  \mathfrak{L}_{1} \left( \mathfrak{h}\right) \right)$ is 
	the collection of all Borel sets of  
	the space of all trace class operators on $\mathfrak{h}$.
	Let $\left(  \psi^n_t  \right)_{n \in \mathbb{R}}$ satisfy the system of non-linear stochastic evolution equations on $\mathfrak{h}$ given by the stochastic system \ref{st:SIWE} with
	$  \psi^n_0 = \sqrt{p_n}  \hat{\phi}^n_0  $, where:
	\begin{itemize}
		\item $p_1, p_2, \ldots$ are $\mathfrak{F}_{0}$-measurable real-value random variables  such that 
		$p_n \geq 0$ and  $\sum_{n = 1}^{\infty} p_n = 1$.
		
		\item $ \hat{\phi}^1_0,  \hat{\phi}^2_0, \ldots $ are $\mathfrak{F}_{0}$-measurable $\mathfrak{h}$-value random variables 
		such that  $\left\Vert  \hat{\phi}^n_0 \right\Vert = 1$ for all $n \in \mathbb{N}$.
		
		\item 
		$
		\sum_{n = 1}^{\infty} p_n \left\vert  \hat{\phi}^n_0   \rangle\langle  \hat{\phi}^n_0   \right\vert 
		$
		is distributed according to $\mu$, i.e., 
		for any $A \in \mathcal{B} \left(  \mathfrak{L}_{1} \left( \mathfrak{h}\right) \right)$ we have
		$
		\mathbb{P} \left( \sum_{n = 1}^{\infty} p_n \left\vert  \hat{\phi}^n_0   \rangle\langle  \hat{\phi}^n_0   \right\vert \in A \right)
		=
		\mu \left( A  \right)
		$.
	\end{itemize}
	Then, 
	the state of the quantum system at time $t$ is 
	\begin{equation}
		\label{eq:Def_rho}
		\rho_{t} = \sum_{n = 1}^{\infty} \left\vert \psi^n_t  \rangle\langle\psi^n_t  \right\vert .
	\end{equation}
\end{model}

According to Section \ref{sec:DerivationModel} we have that 
$\psi^n_t \equiv 0$ for  all $n \geq N+1$ 
whenever $0 = p_{N+1} = p_{N+2} = \cdots$.
Using the substitution $ \check{\psi}^n_t = \psi^n_t / \sqrt{p_n} $
we translate Model \ref{Modelo1}  into Model \ref{Modelo3}
(with a random initial density operator)
in the case  that $ p_n > 0 $ for all $n \in \mathbb{N}$ or  $ p_n > 0 $ for all $n \leq N$.
If $\rho_0$ is a random pure state (i.e., $  \rho_{0} $  and $\left\vert  \hat{\phi}_0   \rangle\langle  \hat{\phi}_0   \right\vert $ have the same distribution),
then the quantum state (\ref{eq:Def_rho}) becomes $\left\vert \hat{\phi}_t  \rangle\langle \hat{\phi}_t  \right\vert $,
where $\hat{\phi}_t$ is given by (\ref{eq:NLSSE_i});
in other words,  $\psi^1_t = \hat{\phi}_t $ and $\psi^n_t = 0$ for any $n \geq 2$.

In Section \ref{sec:EyU_solutions} 
we study the stochastic system \ref{st:SIWE}.
Under assumptions used to guarantee the existence and uniqueness of the solution to (\ref{eq:NLSSE_i}),
we first prove the existence and uniqueness of the solution of the stochastic system \ref{st:SIWE}
whenever  $\psi^1_0, \psi^2_0, \ldots$ are given regular $\mathfrak{h}$-random variables. 
For this purpose, we find that the sequence 
$\left( \psi^n_t  \right)_{n \in \mathbb{N}}$ satisfies
the following stochastic evolution equation on  $L^2 \left( \mathbb{N} ;  \mathfrak{h} \right)$:
\begin{eqnarray}
		\fl
		\label{eq:NLSSE_e}
		& \hat{x}_t
		=
		\hat{x}_0+ \sum_{ \ell =1}^{ \infty} \int_0^t \left(
		\widetilde{ L_{\ell} \left( s \right) }   \hat{x}_s 
		- \Re \left(  \langle \hat{x}_s, \widetilde{ L_{\ell} \left( s \right) }  \hat{x}_s \rangle \right)  \hat{x}_s
		\right) dW_s^k
		\\
		\fl
		& 
		\nonumber
		\  +
		\int_0^t\left( 
		\widetilde{ G\left( s \right) } \hat{x}_s
		+
		\sum_{ \ell =1}^{ \infty} \left( \Re \left( \langle \hat{x}_s, \widetilde{ L_{\ell} \left( s \right) } \hat{x}_s \rangle  \right) 
		\widetilde{ L_{\ell} \left( s \right) } \hat{x}_s
		-\frac{1}{2} \left( \Re \left(\langle \hat{x}_s, \widetilde{ L_{\ell} \left( s \right) } \hat{x}_s \rangle  \right) \right)^2 \hat{x}_s \right) 
		\right)ds ,
\end{eqnarray}
where  
$t \geq 0$,
$L^2 \left( \mathbb{N} ;  \mathfrak{h} \right)$ denotes the Hilbert space of all square-summable sequences of elements of $ \mathfrak{h}$,
$\hat{x}_t$ takes values in  $L^2 \left( \mathbb{N} ;  \mathfrak{h} \right)$,
$\hat{x}_0$ is distributed according to the law of $\left( \psi^n_0  \right)_{n \in \mathbb{N}}$,
and to each operator 
$A: \mathcal{D} \left( A \right) \subset  \mathfrak{h} \rightarrow \mathfrak{h} $ 
we associate the operator  
$\widetilde{A} : \mathcal{D} \left( \widetilde{A} \right) \subset   L^2 \left( \mathbb{N} ;  \mathfrak{h} \right)  
\rightarrow L^2 \left( \mathbb{N} ;  \mathfrak{h} \right)$
given by
\[
\left(  \widetilde{A} f \right)  \left(  n \right)  =  A f  \left(  n \right)  
\qquad  \mathrm{for}\ \mathrm{all}\;  n \in  \mathbb{N} 
\]
(see Section \ref{sec:L2Nh} for details).
Since (\ref{eq:NLSSE_e}) is a non-linear stochastic Schr\"odinger equation of type (\ref{eq:NLSSE_i}),
but with values in $L^2 \left( \mathbb{N} ;  \mathfrak{h} \right)$,
we use the results established by \cite{FagMora2013,MoraReAAP}
to deduce that the stochastic system (\ref{st:SIWE}) 
has a unique weak (in the probabilistic sense) solution 
(see, Theorem \ref{th:EyU-SIWE} for details).

In Section \ref{sec:Measurements} we deduce that the density operator given by Model \ref{Modelo1}, and so by Model \ref{Modelo3}, describes the evolution of the state of continuous quantum measurement processes.
More precisely,  
from the derivation of the diffusive stochastic master equation  (\ref{eq:SQME})
given by Section 3.1 of \cite{Barchielli}
we obtain that the state of the quantum system at time $t$ is 
\begin{equation}
	\label{eq:Int_Rho_t}
	\rho_t
	=
	\sum_{n = 1}^{\infty} 
	\left\vert
	\frac{ \phi_{t} \left(  \sqrt{p_n}  \hat{\phi}_{0}^n  \right) }{ 
		\sqrt{  \sum_{k=1}^{\infty} \left\Vert \phi_{t} \left(  \sqrt{p_k}  \hat{\phi}_{0}^k  \right)   \right\Vert ^{2}}}
	\rangle \langle  
	\frac{ \phi_{t} \left(  \sqrt{p_n}  \hat{\phi}_{0}^n  \right) }{ 
		\sqrt{  \sum_{k=1}^{\infty} \left\Vert \phi_{t} \left(  \sqrt{p_k}  \hat{\phi}_{0}^k  \right)   \right\Vert ^{2}}}
	\right\vert ,
\end{equation}
where  $\sqrt{p_n}  \hat{\phi}_{0}^n $ are regular enough,
and $ \phi_{t} \left(  \cdot  \right) $ satisfies 
the linear stochastic Schr\"odinger equation (\ref{eq:Linear_SSE})
with $B^{\ell}_t$ being the outcome of the observable  $ L_{\ell} $ as in (\ref{eq:QFiltering}).
We have that 
$\left( \phi_{t} \left(  \sqrt{p_n}  \hat{\phi}_{0}^n   \right) \right)_{n \in \mathbb{N}} $
satisfies the stochastic evolution equation 
\begin{equation}
	\label{eq:LSSE_e}
	x_t
	=
	x_0+\int_0^t \widetilde{ G\left( s \right) } \, x_s \, ds 
	+
	\sum_{ \ell =1}^{ \infty} \int_0^t   \widetilde{ L_{\ell} \left( s \right) } \, x_s \, dB_s^{\ell} 
	\qquad 
	\mathrm{for} \ \mathrm{all} \; t \geq 0,
\end{equation}
where $x_{t} $ is a $L^2 \left( \mathbb{N} ;  \mathfrak{h} \right)$-valued adapted stochastic process  with continuous sample paths.
Since (\ref{eq:LSSE_e}) is a linear stochastic Schr\"odinger equation of type (\ref{eq:Linear_SSE}) with values in $L^2 \left( \mathbb{N} ;  \mathfrak{h} \right)$,
using the contributions of \cite{MoraReAAP} we deduce that
$ \left( \left( \phi_{t} \left(  \sqrt{p_n}  \hat{\phi}_{0}^n  \right) 
/
\left\Vert \left( \phi_{t} \left(  \sqrt{p_k}  \hat{\phi}_{0}^k \right)  \right)_{k \in \mathbb{N}}   \right\Vert
 \right)_{n \in \mathbb{N}} 
, 
\left( W^{\ell}_t \right)_{\ell \in \mathbb{N}} \right)$ 
is the solution of the stochastic system \ref{st:SIWE}
under an adequate underlying probability measure,
where 
\[
W^{\ell}_t
=
B^{\ell}_t
+ 2  \int_0^t \Re \left( \Tr \left( L_{ \ell} \left( s \right) \rho_{s}  \right) \right) ds.
\]
Hence,
Models \ref{Modelo3} and  \ref{Modelo1} are alternative models to 
the stochastic master equation (\ref{eq:SQME})
for describing the evolution of quantum system undergoing continuous perfect measurements.

Under general conditions,
in Section \ref{sec:Well-definedness} 
we prove that the distribution of the density operator $\rho_{t}$ given by (\ref{eq:Def_rho}),
and so by (\ref{eq:Def_rhom}),
depends on the initial quantum state only as a function of the distribution $\mu$.
Hence,
the distribution of $\rho_{t}$ does not depend on 
the particular representation of $\rho_{0}$ by a random ensemble of random pure states,
and it does not depend on  the underlying probability space either.
Therefore,
Models \ref{Modelo3} and \ref{Modelo1} are well-defined,
which prevents inconsistent physical results.
For example,
we have that the measurement outcomes $B^{\ell}_t$ do not depend  on the decomposition
$
\sum_{n = 1}^{\infty} p_n \left\vert  \hat{\phi}^n_0   \rangle\langle  \hat{\phi}^n_0   \right\vert 
$
of $\rho_0$.
In order to prove the main result of Section \ref{sec:Well-definedness}
we deduce that
\begin{equation}
	\label{eq:Eta_Rep}
	\sum_{n = 1}^{\infty} \left\vert  \phi_{t} \left( \sqrt{p_n}  \hat{\phi}_{0}^n \right)  \rangle\langle  \phi_{t} \left( \sqrt{p_n}  \hat{\phi}_{0}^n \right)   \right\vert 
	=
	\Phi_t \rho_0 \Phi_t^{\star}  ,
\end{equation}
where 
$ \Phi_t:L^2 \left(  \Omega , \mathfrak{F}_{0}, \mathbb{Q}\right) \rightarrow L^2 \left(  \Omega , \mathfrak{F}_{t}, \mathbb{Q}\right) $ 
is the extension of the stochastic evolution operator  associated to the linear stochastic Schr\"odinger equation (\ref{eq:Linear_SSE}) (see Lemma \ref{OperadorPhi_t} for details),
and $\Phi_t^{\star}$ is the adjoint operator of  $\Phi_t$.
In case $\mathfrak{h}$ is finite-dimensional,
the representation (\ref{eq:Eta_Rep}) has been obtained by using that  
$\Phi_t^{\star}$  is the fundamental matrix of the so-called dual equation to (\ref{eq:Linear_SSE}),
i.e., the stochastic differential equation dual to (\ref{eq:Linear_SSE}) in $\mathfrak{h}$,
(see, e.g., Sections 2.2.2 and 3.1.1 of \cite{Barchielli}).
We do not use the dual equation to prove (\ref{eq:Eta_Rep}) with $\mathfrak{h}$  infinite-dimensional
due to this approach involves additional conditions that, for instance, 
guarantee the existence and uniqueness of the regular solution to the dual equation; 
the weak topological solution of the dual equation was established by \cite{Holevo}.
From (\ref{eq:Eta_Rep}) we get that $\rho_t$ defined by (\ref{eq:Int_Rho_t}) is equal to
$
\Phi_t \rho_0 \Phi_t^{\star} / \Tr \left( \Phi_t \rho_0 \Phi_t^{\star} \right) 
$, 
and so this $\rho_t$ does not depend  on the decomposition 
$
\sum_{n = 1}^{\infty} p_n \left\vert  \hat{\phi}^n_0   \rangle\langle  \hat{\phi}^n_0   \right\vert 
$
of $\rho_0$.
Finally,
by comparing $\rho_{t}$ given by (\ref{eq:Def_rho})
with an adequate density operator defined on the canonical space 
$\mathfrak{h}^{\mathbb{N}} \times  C \left( \left[ 0,T\right] ,  \mathbb{R}^{\mathbb{N}} \right)$,
we deduce that the probability distribution of $\rho_{t}$
does not depend on  the decomposition of the initial density operator
$\rho_0$ even if $\rho_0$ live in different probability spaces.

In Section \ref{sec:ExistenceRS_QME}
we prove that $\rho_{t}$ defined by Model \ref{Modelo1},
and so by Model \ref{Modelo3},
satisfies the Belavkin equation (\ref{eq:SQME})
provided that the initial condition $\rho_{0}$ is regular.
Hence, (\ref{eq:SQME}) has at least one solution.
Partially inspired by \cite{MoraAP},
we apply  the It\^{o} formula  to a functional of the solution of (\ref{eq:NLSSE_e}),
and then using techniques from operator theory we obtain that the density operator (\ref{eq:Def_rho})
satisfies (\ref{eq:SQME}) provided that
the integrals are  understood as Bochner integrals  in $ \mathfrak{L}_{2} \left( \mathfrak{h}\right)$,
where $ \mathfrak{L}_{2} \left( \mathfrak{h}\right)$ denotes the Hilbert space  of all Hilbert-Schmidt operators on $\mathfrak{h}$.
The problem of the uniqueness of the solution to  (\ref{eq:SQME}) remains open.
In addition to be a physically meaningful solution to  (\ref{eq:SQME}),
Model \ref{Modelo1} offers promising prospects for 
the numerical simulation of quantum measurement processes
with infinite-dimensional state space.
Indeed,  
in \cite{MoraFernBiscay2018} is solved numerically a set of finite-dimensional stochastic master equations  
by applying three exponential schemes to a finite-dimensional version of (\ref{eq:SIWE}).

According to Sections \ref{sec:EyU_solutions} and \ref{sec:Measurements} 
we have that Models \ref{Modelo3} and \ref{Modelo1} are mathematically sound
under the conditions ensuring the existence and uniqueness of the solution of
the non-linear stochastic Schr\"odinger equation (\ref{eq:NLSSE_i}) in \cite{FagMora2013},
which are slightly different from the hypotheses used in \cite{MoraReAAP} for that purpose.
If we additionally assume that $G\left( t \right), L_{1} \left( t \right), L_{2} \left( t \right), \ldots$ are closable operators,
then Section \ref{sec:ExistenceRS_QME} guarantees that $\rho_{t}$ solves the Belavkin equation (\ref{eq:SQME}) with a regular initial condition.
Therefore,
a number of physical situations previously studied in the literature satisfy the hypotheses required in this paper.
This fact is illustrated in Section \ref{sec:examples},
where we present two examples.
Section \ref{sec:Ex1} is devoted to an example in coordinate representation (i.e.,  $\mathfrak{h} =  L^{2}\left( \mathbb{R} ,\mathbb{C}\right) $), studied in \cite{FagMora2013},
that describes 
the continuous measurement of position of a closed quantum system  that is confined within a one-dimensional box.
Section \ref{sec:Ex2}  treats the continuous monitoring of the radiation field of a closed quantum system with Rabi Hamiltonian,
(for which $ \mathfrak{h} = \ell^2 \left(\mathbb{Z}_+ \right)\otimes \mathbb{C}^{2}$)
by using arguments similar to those in Section 4.1 of \cite{FagMora2019} and Section 4 of \cite{MoraReAAP}.

The paper is organized as follows.
Section \ref{sec:MainResults} presents  the main results of the paper,
which are proved in Section \ref{sec:Proofs}.
Throughout this paper we will use the following notation.
We write $\mathfrak{L}\left( \mathfrak{h}\right)$ for the space of all bounded linear operators on  $\mathfrak{h}$,
and 
$\mathfrak{L}_{1}\left( \mathfrak{h}\right)$ is the Banach space of all all trace-class operators on $\mathfrak{h}$ 
endowed  with the trace norm.

\section{Main results}
\label{sec:MainResults}

\subsection{Existence and uniqueness of solutions of the stochastic system \ref{st:SIWE} }
\label{sec:EyU_solutions}

Definition \ref{def:regular-sol-SIWE},  given below, provides a concept of regular solution 
for the stochastic system \ref{st:SIWE}   
that 
arises from adapting the notion of regular solution for the non-linear stochastic Schr\"odinger equation (\ref{eq:NLSSE_i})
(see, e.g., \cite{FagMora2013,MoraReAAP}).
Drawing a parallel with partial differential equations,
we describe the smoothness of the solutions to  the system of stochastic interacting wave functions  (\ref{eq:SIWE})
by replacing the partial derivatives of a function by the operator $C$.

\begin{condition}
\label{hyp:L-G-C-def}
The linear operator $C$ is a self-adjoint non-negative operator in $\mathfrak{h}$ with the following properties:

\begin{itemize}
 \item For any $t \geq 0$, 
 $\mathcal{D}\left(C \right) \subset \mathcal{D}\left( G  \left( t \right) \right)
 \cap  \bigcap_{\ell \in \mathbb{N}}  \mathcal{D}\left( L_{\ell} \left( t \right) \right) $,
 where $\mathcal{D}\left( A \right) $ denotes the domain of the operator $A$.

 \item The functions 
 $G \left( \cdot \right) \circ \pi _{\mathcal{D}\left(C \right)}, L_{\ell}  \left( \cdot \right)  \circ \pi _{\mathcal{D}\left(C \right)}
 : 
 \left( \left[ 0 , \infty \right[ \times \mathfrak{h}, 
\mathcal{B}\left(  \left[ 0 , 
\infty \right[ \times \mathfrak{h} \right) \right) 
\rightarrow 
\left(\mathfrak{h}, \mathcal{B} \left( \mathfrak{h} \right)  \right)
$
are measurable, where $\ell \in \mathbb{N}$, 
$ \mathcal{B} \left( \mathfrak{Y} \right)$ stands for the collection of all Borel sets of the topological space $ \mathfrak{Y}$
and
$
 \pi _{\mathcal{D}\left(C \right)} \left( x \right)
 =
\cases{
 x   & if  $ x\in \mathcal{D}\left( C\right) $
\\
0   & if  $ x \notin \mathcal{D}\left( C\right)$
}
$.
\end{itemize}
 
\end{condition}

\begin{definition}
\label{def:regular-sol-SIWE}
Let $C$ satisfy Hypothesis \ref{hyp:L-G-C-def}.
Assume that $\mathbb{I}$ is 
either $\left[ 0,\infty \right[ $ or the interval $\left[ 0,T\right] $
with $T\in \mathbb{R}_{+}$. 
We say that 
$\left( \Omega ,\mathfrak{F},\left( \mathfrak{F}_{t}\right) _{t \in \mathbb{I}}, \mathbb{P},
\left( \psi^n_t  \right)_{t \in \mathbb{I}}^{n \in \mathbb{N}},\left( W_{t}^{\ell}\right) _{t \in \mathbb{I}}^{\ell\in\mathbb{N}}\right) $
is a solution of class $C$ of the stochastic system \ref{st:SIWE}  with initial 
distribution $\theta$  if and only if:
\begin{itemize}
\item $W^{1}, W^2, \ldots $ are real-valued independent Brownian motions on
the filtered complete probability space 
$\left( \Omega,\mathfrak{F},\left( 
\mathfrak{F}_{t}\right) _{t \in \mathbb{I}},\mathbb{P}\right) $.

\item For any $n \in \mathbb{N}$, 
$\left(  \psi^n_t  \right)_{t \in \mathbb{I}} $  is an 
$ \mathfrak{h}  $-valued adapted process with continuous sample paths 
such that 
$\mathbb{P}\left( \sum_{n  \in \mathbb{N}} \left\Vert \psi^n_t \right\Vert ^{2} = 1 \; \mathrm{for}\; \mathrm{all}\; t  \in \mathbb{I} \right) =1$.

\item For any $n \in \mathbb{N}$ and $t \in \mathbb{I}$,  
$\psi^n_t  \in \mathcal{D}\left( C\right) $ $\mathbb{P}$-a.s.
and 
$
\sup_{s\in \left[ 0,t\right] } \sum_{n  \in \mathbb{N}}  \mathbb{E}_{\mathbb{P}}  \left\Vert C \psi^n_s \right\Vert ^{2}  < \infty 
$.

\item $\theta$ is the probability distribution of $ \left( \psi^n_0  \right)_{n \in \mathbb{N}} $ 
on $\left(\mathfrak{h}^{\mathbb{N}} ,  \mathcal{B} \left( \mathfrak{h}  \right)^{\mathbb{N}} \right)$,
where $ \mathcal{B} \left( \mathfrak{h}  \right)^{\mathbb{N}} $ is the product $\sigma$-algebra on 
$\mathfrak{h}^{\mathbb{N}} $.

\item 
$\mathbb{P}$-almost surely for all $t\in \mathbb{I}$,
\begin{eqnarray}
\fl
\nonumber
\psi^n_t 
& =
\psi^n_0 
+ \int_0^t\left( G\left( s \right) \pi _{\mathcal{D}\left( C\right) }\left(  \psi^n_s   \right)
+ \sum_{ \ell =1}^{ \infty} \left( \wp_{\ell} \left( s \right)  L_{\ell} \left( s \right)  \pi _{\mathcal{D}\left( C\right) }\left(  \psi^n_s   \right)
-\frac{1}{2} \wp_{\ell} \left( s  \right)^2    \psi^n_s    \right) 
\right)ds 
\\
\fl
\label{eq:SIWE_f}
& \quad
+
\sum_{ \ell =1}^{ \infty} \int_0^t  
\left( L_{\ell} \left( s \right) \pi _{\mathcal{D}\left( C\right) }\left(  \psi^n_s   \right) 
- \wp_{\ell} \left( s \right)  \psi^n_s    \right)  dW_s^{\ell} 
\end{eqnarray}
with
$
\wp_{\ell} \left( s  \right)
=
\sum_{n  \in \mathbb{N} } \Re \left( \langle  \psi^n_s ,  L_{\ell} \left( s \right)  \pi _{\mathcal{D}\left( C\right) }\left(  \psi^n_s   \right)  \rangle \right) 
$.

\end{itemize}

We shall say, for short, that 
$\left( \mathbb{P}, \left( \psi^n_t  \right)_{t \in \mathbb{I}}^{n \in \mathbb{N}},\left( W_{t}^{\ell}\right) _{t \in \mathbb{I}}^{\ell\in\mathbb{N}} \right)$ 
is a $C$-solution of the stochastic system \ref{st:SIWE} .
\end{definition}

\begin{remark}
\label{Nota:Distancia}
For any separable metric space $\left( \mathbf{E}, d \right) $,
the product  $\sigma$-algebra  $ \mathcal{B} \left(  \mathbf{E} \right)^{\mathbb{N}} $ 
is equal to  the Borel  $\sigma$-algebra of  
 $ \mathbf{E}^{\mathbb{N}} $ equipped with  the metric
\[
\fl
d_{\mathbf{E}^{\mathbb{N}}} \left(  \left( x_n\right)_{n \in \mathbb{N}} ,  \left( y_n\right)_{n \in \mathbb{N}}   \right)
=
\sum_{n=1}^{\infty} 2^{-n} \frac{d \left( x_n , y_n \right)}{ 1 + d \left( x_n , y_n \right)}
\qquad \quad \forall  \left( x_n\right)_{n \in \mathbb{N}} ,  \left( y_n\right)_{n \in \mathbb{N}} \in \mathbf{E}^{\mathbb{N}} 
\]
(see, e.g., proof of Proposition 12.2.2 of \cite{Dudley}).
Therefore,
the distribution $\theta$ of the Definition \ref{def:regular-sol-SIWE}  
is a probability measure on $ \mathcal{B} \left( \left( \mathfrak{h}^{\mathbb{N}}, d_{\mathfrak{h}^{\mathbb{N}}}  \right)\right)$,
because $\mathfrak{h}$ is a separable Hilbert space.
It is worth pointing out that 
the distance $ d_{\mathbf{E}^{\mathbb{N}}} $ induces the product topology of $\mathbf{E}^{\mathbb{N}} $ (see, e.g., Proposition 2.4.4 of \cite{Dudley}), and $\left( \mathbf{E}^{\mathbb{N}} , d_{\mathbf{E}^{\mathbb{N}}}  \right)$ is a Polish space.
\end{remark}

\begin{remark}
Since
$
\Tr \left(  \sum_{n = 1}^{\infty} \left\vert\psi^n_t  \rangle\langle\psi^n_t  \right\vert  \right) 
=  \sum_{n=1}^{\infty} \left\Vert \psi^n_t \right\Vert ^{2} 
$,
Definition \ref{def:regular-sol-SIWE} requires that 
$
 \Tr \left(  \sum_{n = 1}^{\infty} \left\vert\psi^n_t  \rangle\langle\psi^n_t  \right\vert  \right)  = 1
$
$ \mathbb{P}$-a.s.
Thus,
$ \sum_{n = 1}^{\infty} \left\vert\psi^n_t  \rangle\langle\psi^n_t  \right\vert $
is a density operator  $\mathbb{P}$-almost surely. 
\end{remark}

\begin{remark}
For simplicity of notation,
we omit writing  $ \pi _{\mathcal{D}\left( C\right) }\left(  \cdot  \right)$ in (\ref{eq:SIWE_f}),
when no confusion can arise.
Thus, (\ref{eq:SIWE_f}) becomes Stochastic system \ref{st:SIWE}.
 \end{remark}

In this paper, 
we assume the conditions on the coefficients $G \left( t \right)$, $L_{\ell} \left( t \right)$ described by Hypothesis \ref{hyp:CF}
given below.
These have been used to guarantee the existence and uniqueness of the solution of
the non-linear stochastic Schr\"odinger equation (\ref{eq:NLSSE_i}) (see, e.g., \cite{FagMora2013,MoraReAAP})
and in the study of the GKSL equation (\ref{eq:GKSL}) (see, e.g., \cite{Chebotarev2000,Fagnola,MoraJFA,MoraAP}).

\begin{condition}\label{hyp:CF}
Suppose that $C$ satisfies Hypothesis \ref{hyp:L-G-C-def}.  
In addition assume that:

\begin{itemize}

\item[(H2.1)] 
For all $t \geq 0$ and $x \in \mathcal{D}\left( C\right)$,
$\left\Vert  G  \left( t \right) x  \right\Vert^{2} 
\leq K \left( t \right) \left( \left\Vert  x  \right\Vert^{2} + \left\Vert  C x  \right\Vert^{2} \right)$,
where $K \left( \cdot \right)$ is a non-decreasing non-negative function.

\item[(H2.2)] For any $x \in \mathcal{D}\left(C \right)$ and all $t \geq 0$,
$
2\Re\left\langle  x, G \left( t \right) x\right\rangle 
+\sum_{\ell=1}^{\infty }\left\Vert  L_{\ell} \left( t \right) x \right\Vert ^{2} =  0
$.

\item[(H2.3)] There exists a non-decreasing non-negative function $\alpha$ 
and a core $\mathfrak{D}_{1}$ of $C^{2}$ such that  
for all $t \geq 0$ and $x \in \mathfrak{D}_{1}$ we have
\[
2\Re\left\langle C^{2} x, G \left( t \right) x\right\rangle 
+\sum_{\ell=1}^{\infty }\left\Vert C L_{\ell} \left( t \right) x \right\Vert ^{2}
\leq \alpha \left( t \right)  \left( \left\Vert  x  \right\Vert^{2} + \left\Vert  C x  \right\Vert^{2} \right) .
\]
\end{itemize}

\end{condition}

\begin{remark}
\label{re:CondH2.2}
Suppose that $G$ satisfies (\ref{eq:Def_G}) on $ \mathcal{D}\left( C\right) $,
where $C$ is a self-adjoint non-negative operator in $\mathfrak{h}$
such that 
$\mathcal{D}\left(C \right) \subset \mathcal{D}\left( H  \left( t \right) \right)
 \cap  \bigcap_{\ell \in \mathbb{N}}  \mathcal{D}\left( L_{\ell} \left( t \right) \right) $.
Then,
direct calculation yields Condition H2.2 of Hypothesis \ref{hyp:CF}. 
\end{remark}

We now establish the existence and uniqueness in law of the $C$-solution of  the stochastic system \ref{st:SIWE}.

\begin{definition}
We say that a uniqueness in joint law holds for the  $C$-solutions of the stochastic system \ref{st:SIWE} 
if for any two  $C$-solutions 
$\left( \mathbb{P}, \left( \psi^n_t  \right)_{t\in \mathbb{I} }^{n \in \mathbb{N}},\left( W_{t}^{\ell}\right) _{t\in \mathbb{I}}^{\ell\in\mathbb{N}} \right)$ 
and 
$\left( \hat{\mathbb{P}}, \left(  \hat{\psi}^n_t  \right)_{t\in \mathbb{I} }^{n \in \mathbb{N}},\left(  \hat{W}_{t}^{\ell}\right) _{t\in \mathbb{I}}^{\ell\in\mathbb{N}} \right)$
of the stochastic system \ref{st:SIWE}   with initial law $\theta$,
we have that
 the distributions of the random variables 
$\left(\left( \psi^n_{t\left(k \right)}  \right)_{1\leq k \leq N }^{n \in \mathbb{N}},
\left( W_{t\left(k \right)}^{\ell}\right) _{1\leq k \leq N}^{\ell\in\mathbb{N}} \right)$
and 
$\left(\left( \hat{\psi}^n_{t\left(k \right)}  \right)_{1\leq k \leq N }^{n \in \mathbb{N}},
\left( \hat{W}_{t\left(k \right)}^{\ell}\right) _{1\leq k \leq N}^{\ell\in\mathbb{N}} \right)$
coincide  for any $N \in \mathbb{N}$ and any collection 
$ t_{k \left( 1 \right)} <  t_{k \left( 2 \right)} < \cdots < t_{k \left( N \right)}$ of points of $\mathbb{I}$.
\end{definition}

\begin{theorem}
\label{th:EyU-SIWE}
Assume Hypothesis \ref{hyp:CF}.
Let $\theta$ be  a probability measure on $\mathfrak{h}^{\mathbb{N}} $,
endowed with the product $\sigma$-algebra,
such that
\[
\theta \left( \left\{ \left( x_n \right)_{n \in \mathbb{N}} :
x_n  \in \mathcal{D} \left( C \right) \,  \mathrm{for}\; \mathrm{all}\;  n \in \mathbb{N}, \, \sum_{n = 1}^{\infty} \left\Vert x_n \right\Vert^2 =1
\right\}  \right) = 1
\]
and
$
\int_{\mathfrak{h}^{\mathbb{N}}}
 \sum_{n = 1}^{\infty}  \left\Vert  C  x_n \right\Vert ^{2} \theta\left( d \left( x_n \right)_{n \in \mathbb{N} }\right) < \infty
$. 
Suppose that $\mathbb{I}$ is either $\left[ 0,\infty \right[ $ or the interval $\left[ 0,T\right] $ with $T\in \mathbb{R}_{+}$. 
Then, 
the stochastic system \ref{st:SIWE}  has a  solution 
$\left( \mathbb{P}, \left( \psi^n_t  \right)_{t\in \mathbb{I} }^{n \in \mathbb{N}},\left( W_{t}^{\ell}\right) _{t\in \mathbb{I}}^{\ell\in\mathbb{N}} \right)$ 
of class $C$ with initial law $\theta$.
Moreover, the uniqueness in joint law holds for the  $C$-solutions of  the stochastic system \ref{st:SIWE}.
\end{theorem}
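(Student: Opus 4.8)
The plan is to recognise the stochastic system \ref{st:SIWE} as a single non-linear stochastic Schr\"odinger equation of the form (\ref{eq:NLSSE_i}), but posed on the separable Hilbert space $L^2\left(\mathbb{N};\mathfrak{h}\right)$, and then to invoke the existence-and-uniqueness theory for (\ref{eq:NLSSE_i}) developed in \cite{FagMora2013,MoraReAAP}. Writing $\hat{x}_t=\left(\psi^n_t\right)_{n\in\mathbb{N}}$ and using that $\langle f,\widetilde{L_\ell\left(s\right)}f\rangle=\sum_n\langle f\left(n\right),L_\ell\left(s\right)f\left(n\right)\rangle$, so that $\Re\langle\hat{x}_s,\widetilde{L_\ell\left(s\right)}\hat{x}_s\rangle=\wp_\ell\left(s\right)$ by (\ref{eq:Def_pl}), one checks that the system (\ref{eq:SIWE}) is exactly the componentwise reading of equation (\ref{eq:NLSSE_e}). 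Moreover, the normalisation $\sum_n\left\Vert\psi^n_t\right\Vert^2=1$ demanded in Definition \ref{def:regular-sol-SIWE} is precisely the unit-norm constraint $\left\Vert\hat{x}_t\right\Vert_{L^2\left(\mathbb{N};\mathfrak{h}\right)}=1$ on a solution of (\ref{eq:NLSSE_e}). Thus a $C$-solution of \ref{st:SIWE} and a $\widetilde{C}$-solution of (\ref{eq:NLSSE_e}) are the same object, and uniqueness in joint law of $\left(\hat{x}_t,\left(W^\ell_t\right)_\ell\right)$ transfers verbatim to the uniqueness asserted in the theorem.

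The heart of the argument is therefore to verify that the lifted data $\widetilde{C}$, $\widetilde{G\left(t\right)}$, $\widetilde{L_\ell\left(t\right)}$ satisfy Hypotheses \ref{hyp:L-G-C-def} and \ref{hyp:CF} on $L^2\left(\mathbb{N};\mathfrak{h}\right)$. That $\widetilde{C}$ is self-adjoint and non-negative, and the measurability required by Hypothesis \ref{hyp:L-G-C-def}, follow from the corresponding properties of $C$ and of $G\left(\cdot\right),L_\ell\left(\cdot\right)$ read coordinate by coordinate. Conditions (H2.1) and (H2.2) lift immediately and with the same constants: since $\widetilde{C}^2=\widetilde{C^2}$ and $\widetilde{C}\,\widetilde{L_\ell\left(t\right)}=\widetilde{CL_\ell\left(t\right)}$, both sides of each inequality decouple into a sum over $n$ of the scalar inequalities applied to $f\left(n\right)$, and $\sum_n\left(\left\Vert f\left(n\right)\right\Vert^2+\left\Vert Cf\left(n\right)\right\Vert^2\right)=\left\Vert f\right\Vert^2+\left\Vert\widetilde{C}f\right\Vert^2$.

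The one step that requires genuine work is (H2.3), because it is stated on a core of the square of the reference operator. Here I would take as candidate core $\widetilde{\mathfrak{D}}_1$ the set of finitely supported sequences all of whose entries lie in the core $\mathfrak{D}_1$ of $C^2$, and verify that it is a core of $\widetilde{C}^2=\widetilde{C^2}$ by a two-stage approximation: first truncate an arbitrary element of $\mathcal{D}\left(\widetilde{C^2}\right)$ to finite support, which converges in the graph norm of $\widetilde{C^2}$ by the tail estimate $\sum_{n>N}\left(\left\Vert f\left(n\right)\right\Vert^2+\left\Vert C^2f\left(n\right)\right\Vert^2\right)\to0$, and then approximate each of the finitely many entries in the graph norm of $C^2$ using that $\mathfrak{D}_1$ is a core. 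On $\widetilde{\mathfrak{D}}_1$ the inequality (H2.3) again decouples componentwise into the scalar estimates summed over the finitely many nonzero entries, producing the same non-decreasing function $\alpha$. I expect this core construction to be the main obstacle, since it is where the infinite index set interacts with the operator domains; everything else is a bookkeeping transcription of the scalar hypotheses.

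Finally, the two displayed hypotheses on $\theta$ in the statement say exactly that, viewed as a law on $L^2\left(\mathbb{N};\mathfrak{h}\right)$, the initial datum is supported on unit vectors of $\mathcal{D}\left(\widetilde{C}\right)$ and satisfies $\int\left\Vert\widetilde{C}x\right\Vert^2\,\theta\left(dx\right)<\infty$, which are precisely the regularity requirements on the initial law in the existence theorem for (\ref{eq:NLSSE_i}). Applying that theorem on $L^2\left(\mathbb{N};\mathfrak{h}\right)$ yields a weak solution $\hat{x}_t$ of (\ref{eq:NLSSE_e}) with $\sup_{s\le t}\mathbb{E}\left\Vert\widetilde{C}\hat{x}_s\right\Vert^2<\infty$, unique in joint law. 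Reading $\psi^n_t=\hat{x}_t\left(n\right)$, and noting that on the $\mathbb{P}$-full set $\left\{\hat{x}_s\in\mathcal{D}\left(\widetilde{C}\right)\right\}$ the coordinatewise projection $\pi_{\mathcal{D}\left(C\right)}$ agrees with $\pi_{\mathcal{D}\left(\widetilde{C}\right)}=\mathrm{id}$, delivers a $C$-solution of \ref{st:SIWE} in the sense of Definition \ref{def:regular-sol-SIWE}, completing both the existence and the uniqueness claims.
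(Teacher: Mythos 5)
Your overall strategy is exactly the paper's: lift the system to $L^2\left( \mathbb{N} ; \mathfrak{h} \right)$, verify Hypothesis \ref{hyp:CF} for $\widetilde{C}$, $\widetilde{G\left( t \right)}$, $\widetilde{L_{\ell}\left( t \right)}$, and invoke the existence--uniqueness theory of \cite{FagMora2013} for the non-linear equation (\ref{eq:NLSSE_e}). Your treatment of (H2.1), (H2.2) and of the initial law matches the paper's Lemma \ref{lem:hyp-CF-e}, and your core of finitely supported sequences with entries in $\mathfrak{D}_{1}$ is a legitimate variant of the paper's core $\widetilde{\mathfrak{D}_{1}}$ (all $f \in \mathcal{D}\left( \widetilde{C^2} \right)$ with every entry in $\mathfrak{D}_{1}$, shown to be a core by a dominated-convergence argument); your two-stage truncation-then-approximation works and yields the same function $\alpha$.

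The genuine gap is the assertion that a $C$-solution of the stochastic system \ref{st:SIWE} and a $\widetilde{C}$-solution of (\ref{eq:NLSSE_e}) ``are the same object,'' so that uniqueness ``transfers verbatim.'' Only one direction is routine: from a $\widetilde{C}$-solution $\hat{x}_t$, reading off the components $\psi^n_t = \hat{x}_t\left( n \right)$ gives a $C$-solution (the paper's Lemma \ref{lem:Existence}), which settles existence. Uniqueness, however, requires the converse: every $C$-solution $\left( \psi^n_t \right)_{n \in \mathbb{N}}$ must be shown to assemble into a genuine $\widetilde{C}$-solution of (\ref{eq:NLSSE_e}), and this is not automatic, because Definition \ref{def:sol_L2} demands that $t \mapsto \hat{x}_t$ be an adapted process with continuous sample paths \emph{in the norm of} $L^2\left( \mathbb{N} ; \mathfrak{h} \right)$, whereas Definition \ref{def:regular-sol-SIWE} only provides continuity of each coordinate $t \mapsto \psi^n_t$ in $\mathfrak{h}$; coordinatewise continuity does not imply continuity in the $\ell^2$-sum norm. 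The paper closes this with a compactness argument exploiting the normalization: for $s_k \rightarrow t$ the vectors $\hat{x}_{s_k}$ lie on the unit sphere, so subsequences converge weakly; the weak limit is identified as $\hat{x}_t$ by coordinatewise continuity; and weak convergence together with $\left\Vert \hat{x}_{s_k} \right\Vert = 1 = \left\Vert \hat{x}_t \right\Vert$ upgrades to strong convergence. (Adaptedness also needs the observation that $\mathcal{B}\left( L^2\left( \mathbb{N} ; \mathfrak{h} \right) \right)$ sits inside the product $\sigma$-algebra, plus a modification on the exceptional null set where the normalization fails.) Without this step your argument proves existence but not the uniqueness-in-joint-law claim, since the uniqueness statement for (\ref{eq:NLSSE_e}) constrains only processes that are bona fide $\widetilde{C}$-solutions.
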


\begin{proof}
Deferred to Section \ref{sec:th:EyU-SIWE}.
\end{proof}

\begin{remark}
In the space $ \mathfrak{h}^{\mathbb{N}}  \times \mathbb{R} ^{\mathbb{N}} $,
we consider the distance
\begin{eqnarray*}
&
 d_{\mathfrak{h}^{\mathbb{N}}  \times \mathbb{R} ^{\mathbb{N}} } 
 \left( \left( \left(  x_n \right)_{n \in \mathbb{N} } ,  \left(  u_n \right)_{n \in \mathbb{N} } \right)
,
\left( \left(  y_n \right)_{n \in \mathbb{N} } ,  \left(  v_n \right)_{n \in \mathbb{N} } \right)
 \right)
 \\
 &  =
d_{ \mathfrak{h} ^{\mathbb{N}}} \left( \left(  x_n \right)_{n \in \mathbb{N} } ,  \left(  y_n \right)_{n \in \mathbb{N} }  \right)
 +
d_{\mathbb{R} ^{\mathbb{N}}} \left( \left(  u_n \right)_{n \in \mathbb{N} } , \left(  v_n \right)_{n \in \mathbb{N} }  \right)
\end{eqnarray*}
for any $ \left(  x_n \right)_{n \in \mathbb{N} }, \left(  y_n \right)_{n \in \mathbb{N} } \in \mathfrak{h} ^{\mathbb{N}} $ 
and 
$  \left(  u_n \right)_{n \in \mathbb{N} },  \left(  v_n \right)_{n \in \mathbb{N} } \in  \mathbb{R} ^{\mathbb{N}} $.
Let $C \left( \mathbb{I} ,  \mathfrak{h} ^{\mathbb{N}}  \times \mathbb{R} ^{\mathbb{N}}  \right)$ be the space of all continuous 
$  \mathfrak{h} ^{\mathbb{N}}  \times \mathbb{R} ^{\mathbb{N}}  $-valued functions, together the metric
\[
\fl
d_{ C \left( \mathbb{I} ,  \mathfrak{h} ^{\mathbb{N}}  \times \mathbb{R} ^{\mathbb{N}}  \right) } \left( f , g \right)
=
\cases{
 \sup_{t \in \mathbb{I} } d_{\mathfrak{h}^{\mathbb{N}}  \times \mathbb{R} ^{\mathbb{N}} }   \left( f  \left( t  \right) , g \left( t  \right) \right)
 & if  $\mathbb{I}  = \left[ 0,T\right] $
 \\
 \sum_{n=1}^{\infty} 2^{-n} \frac{
 \sup_{t \in \left[ 0, n \right] } d_{\mathfrak{h}^{\mathbb{N}}  \times \mathbb{R} ^{\mathbb{N}} }   \left( f  \left( t  \right) , g \left( t  \right) \right)
 }{ 1 +  \sup_{t \in \left[ 0, n \right] } d_{\mathfrak{h}^{\mathbb{N}}  \times \mathbb{R} ^{\mathbb{N}} }  \left( f  \left( t  \right) , g \left( t  \right) \right) }
 &  if  $\mathbb{I}  = \left[ 0,\infty \right[  $
},
\]
where $T\in \mathbb{R}_{+}$.
Then,
$\mathcal{B} \left( C \left( \mathbb{I} ,  \mathfrak{h}^{\mathbb{N}}  \times \mathbb{R} ^{\mathbb{N}}   \right) \right) $ 
is equal to the $\sigma$-algebra 
\[
\left\{ A \cap  C \left( \mathbb{I} , \mathfrak{h}^{\mathbb{N}}  \times \mathbb{R} ^{\mathbb{N}}   \right)
:
A \in \mathcal{B} \left(  \mathfrak{h}^{\mathbb{N}}  \times \mathbb{R} ^{\mathbb{N}}  \right)^{\mathbb{I}} 
\right\}
,
\]
where $ \mathcal{B} \left(  \mathfrak{h}^{\mathbb{N}}  \times \mathbb{R} ^{\mathbb{N}}  \right)^{\mathbb{I}}$
is the product $\sigma$-algebra on $  \left(  \mathfrak{h}^{\mathbb{N}}  \times \mathbb{R} ^{\mathbb{N}}  \right)^{\mathbb{I}} $
(see, e.g., proof of Proposition 12.2.2 of \cite{Dudley}).
According to Theorem \ref{th:EyU-SIWE},
$\left(\left( \psi^n_t  \right)^{n \in \mathbb{N}},\left( W_{t}^{\ell}\right)^{\ell\in\mathbb{N}} \right)_{t\in \mathbb{I} }$
and 
$\left( \left(  \hat{\psi}^n_t  \right)^{n \in \mathbb{N}},\left(  \hat{W}_{t}^{\ell}\right)^{\ell\in\mathbb{N}} \right)_{t\in \mathbb{I} }$ have the same finite-dimensional probability distributions, 
and so 
the laws of 
$\left(\left( \psi^n_t  \right)^{n \in \mathbb{N}},\left( W_{t}^{\ell}\right)^{\ell\in\mathbb{N}} \right)_{t\in \mathbb{I} }$
and 
$\left( \left(  \hat{\psi}^n_t  \right)^{n \in \mathbb{N}},\left(  \hat{W}_{t}^{\ell}\right)^{\ell\in\mathbb{N}} \right)_{t\in \mathbb{I} }$
 on 
$\mathcal{B} \left( C \left( \mathbb{I} ,  \mathfrak{h}^{\mathbb{N}}  \times \mathbb{R} ^{\mathbb{N}}   \right) \right)  $ coincide.
This follows from applying the monotone class theorem or Proposition 1.4 of \cite{DaPrato}.
\end{remark}

\subsection{Continuous measurements}
\label{sec:Measurements}

\subsubsection{Physical Background.}
\label{PhysicalBackground}

Let the initial state of the quantum system be
the pure state 
$  \left\vert \hat{\phi}_0  \rangle\langle \hat{\phi}_0   \right\vert $,
where $ \hat{\phi}_0 $ is a unit vector of  $ \mathfrak{h}$.
Then,
we consider the following model of continuous quantum measurements (see, e.g., Sections 2.4  and 2.5.4 of \cite{Barchielli}).

\begin{itemize}
 \item For every $ \ell \in \mathbb{N} $, 
 $B_{t}^{\ell}$ is the  output of the continuous measurement of the quantum observable 
 $L_\ell \left( t\right) + L_\ell \left( t \right)^{\ast}  $ at time $t$.
 
 \item Under the probability measure $\mathbb{Q}$ on the filtration 
 $\left( \Omega ,\mathfrak{F}, \left(\mathfrak{F}_{t}\right) _{t\geq 0} \right)$, 
  $B^1, B^2, \ldots$ are real-valued independent Wiener processes.
 Moreover, any $\mathbb{Q}$-zero set belongs to  $\mathfrak{F}_{0}$.

 \item The state of the quantum system at time $t$ conditioned on 
 $\left\{ B_{s}^{\ell} : s \in \left[ 0, t \right] \, \mathrm{and} \, \ell \in \mathbb{N} \right\}$
 is  $  \left\vert \frac{ \phi_{t} \left( \hat{\phi}_{0} \right)}{\left\Vert \phi_{t} \left( \hat{\phi}_{0} \right) \right\Vert}   \rangle
 \langle \frac{\phi_{t} \left( \hat{\phi}_{0} \right)}{\left\Vert \phi_{t} \left( \hat{\phi}_{0} \right) \right\Vert}   \right\vert $,
 where $\phi_{t} \left( \hat{\phi}_{0} \right)$  is the strong solution of the linear stochastic Schr\"odinger equation 
 (\ref{eq:Linear_SSE}) with initial condition $\hat{\phi}_{0} $.

\item If $A \in \mathfrak{F}_{t}$, then the probability that the event $A$ occurs in physics is equal to
\[
\left\Vert \phi_{t} \left( \hat{\phi}_{0} \right)   \right\Vert ^{2} \cdot \mathbb{Q} \left( A \right)
: =
\int_A \left\Vert \phi_{t} \left( \hat{\phi}_{0} \right)   \right\Vert ^{2} d \mathbb{Q} .
\]
 
\end{itemize}
This model describes physical phenomena like the heterodyne detection 
(see, e.g. \cite{Barchielli,BreuerPetruccione,WisemanMilburn2010}).

Now,
we  suppose that the initial state $\rho_{0} $ is mixed, that is, 
\begin{equation}
\label{eq:Decomp_Inicial_rho}
\rho_{0} = \sum_{n = 1}^{\infty} p_n \left\vert \hat{\phi}^n_0  \rangle\langle \hat{\phi}^n_0  \right\vert 
\end{equation}
with $\left\Vert \hat{\phi}^n_0 \right\Vert = 1$, $p_n \geq 0$  and  $\sum_{n = 1}^{\infty} p_n = 1$.
The physical interpretation of  (\ref{eq:Decomp_Inicial_rho}) is that 
the system starts in the state $\left\vert \hat{\phi}^n_0  \rangle\langle \hat{\phi}^n_0  \right\vert$ with probability $p_n$.
As in Section 3.1 of  \cite{Barchielli},
we assume that the  output of the continuous measurement $B^1, B^2, \ldots$ are real-valued independent Wiener processes
in the the reference probability space 
$\left( \Omega ,\mathfrak{F}, \left(\mathfrak{F}_{t}\right) _{t\geq 0}, \mathbb{Q} \right)$,
and that $\mathbb{Q}  \left( \mathcal{N}  = n \right) = p_n$,
where the $\mathfrak{F}_{0}$-random variable  $\mathcal{N}: \Omega \rightarrow \mathbb{N}$
is defined by $\mathcal{N} \left( \omega \right) = n$
if and only if the initial state is $\left\vert \phi^n_0  \rangle\langle\phi^n_0  \right\vert$.

Set $\mathcal{B}_t = \left(\left( B_{s}^{\ell} \right)^{  \ell \in \mathbb{N}   }\right)_{ s \in \left[ 0, t \right] } $.
Thus, 
$\mathcal{B}_t$ is a random variable 
taking values in  $C \left( \left[ 0, t \right], \mathbb{R}^{\mathbb{N}}  \right)$ equipped with its Borel $\sigma$-algebra.
Since $\mathcal{B}_t$  and $\mathcal{N}$ are independent under $\mathbb{Q}$,
$
\mathbb{Q} \left( \left\{ \mathcal{N}  \in B \right\} \cap  \left\{ \mathcal{B}_t  \in A \right\} \right)
=
\mathbb{Q}_{ \mathcal{N}} \left(  B \right) \mathbb{Q}_{ \mathcal{B}_t} \left(  A \right) 
$,
where $ B \subset  \mathbb{N} $, $A \in \mathfrak{B} \left( C \left( \left[ 0, t \right], \mathbb{R}^{\mathbb{N}}  \right)\right) $,
and $\mathbb{Q}_{ \mathcal{N}}$ and $ \mathbb{Q}_{ \mathcal{B}_t}$ are the probability distribution of 
$\mathcal{N}$ and $ \mathcal{B}_t$ under $\mathbb{Q}$, respectively.
From the four item of the first paragraph we have that 
\[
 \mathbb{P}_{ \mathcal{B}_t |  \mathcal{N}  } \left( A | n \right)
=
\int_{A  }  
\left\Vert \phi_{t} \left( \hat{\phi}_{0}^n \right)   \right\Vert ^{2} d \mathbb{Q}_{\mathcal{B}_t }
\]
is the physical  conditional probability distribution of $\mathcal{B}_t$ given $\mathcal{N}$.
Applying the Bayes rule (see, e.g., Theorem 1.31 of \cite{Schervish1995}) we deduce that
the physical  conditional probability distribution of $\mathcal{N}$ given $\mathcal{B}_t$ is
\[ 
\fl
\mathbb{P}_{ \mathcal{N} |  \mathcal{B}_t  } \left( B | B_{s}^{\ell}  : s \in \left[ 0, t \right] \, \mathrm{and} \, \ell \in \mathbb{N}  \right)
=
\frac{1}{\sum_{k = 1}^{\infty} p_k \left\Vert \phi_{t} \left( \hat{\phi}_{0}^k \right)   \right\Vert ^{2} }
\sum_{n \in B}
p_n \left\Vert \phi_{t} \left( \hat{\phi}_{0}^n \right)   \right\Vert ^{2}
.
\]
Hence,
conditioned on $\left\{ B_{s}^{\ell} : s \in \left[ 0, t \right] \; \mathrm{and} \; \ell \in \mathbb{N} \right\}$,
the probability that the system is in the state 
 $  \left\vert \frac{\phi_{t} \left( \hat{\phi}_{0}^n \right)}{\left\Vert \phi_{t} \left( \hat{\phi}_{0}^n \right) \right\Vert}   \rangle
 \langle \frac{\phi_{t} \left( \hat{\phi}_{0}^n \right)}{\left\Vert \phi_{t} \left( \hat{\phi}_{0}^n \right) \right\Vert}   \right\vert $
is equal to
\[
\fl
\mathbb{P}_{ \mathcal{N} |  \mathcal{B}_t  } \left( \left\{ n \right\} | B_{s}^{\ell}  : s \in \left[ 0, t \right] \, \mathrm{and} \; \ell \in \mathbb{N}  \right)
=
p_n \left\Vert \phi_{t} \left( \hat{\phi}_{0}^n \right)   \right\Vert ^{2} / \left(
\sum_{k = 1}^{\infty} p_k \left\Vert \phi_{t} \left( \hat{\phi}_{0}^k \right)   \right\Vert ^{2} 
\right)
.
\]
Then,
conditioned on $\left\{ B_{s}^{\ell} : s \in \left[ 0, t \right] \; \mathrm{and} \; \ell \in \mathbb{N} \right\}$
the quantum system at time $t$ is in the mixed state
\[
\rho_t
=
\sum_{n = 1}^{\infty}
p_n \frac{\left\Vert \phi_{t} \left( \hat{\phi}_{0}^n \right)   \right\Vert ^{2}
}{
 \sum_{k = 1}^{\infty} p_k \left\Vert \phi_{t} \left( \hat{\phi}_{0}^k \right)   \right\Vert ^{2} 
}  
\left\vert \frac{\phi_{t} \left( \hat{\phi}_{0}^n \right)}{\left\Vert \phi_{t} \left( \hat{\phi}_{0}^n \right) \right\Vert}   \rangle
 \langle \frac{\phi_{t} \left( \hat{\phi}_{0}^n \right)}{\left\Vert \phi_{t} \left( \hat{\phi}_{0}^n \right) \right\Vert}   \right\vert ,
\]
(see, e.g., Section 3.1 of \cite{Barchielli}).
Hence,
\begin{equation}
\label{eq:Ded_rho_t}
 \rho_t
=
\frac{ 1 }{
 \sum_{k = 1}^{\infty} p_k \left\Vert \phi_{t} \left( \hat{\phi}_{0}^k \right)   \right\Vert ^{2} 
} 
\sum_{n = 1}^{\infty}
p_n 
\left\vert \phi_{t} \left( \hat{\phi}_{0}^n \right) \rangle \langle \phi_{t} \left( \hat{\phi}_{0}^n \right)  \right\vert 
 .
\end{equation}

 In the literature (see, e.g., \cite{Barchielli}),
(\ref{eq:Ded_rho_t})  is rewritten as $ \rho_{t} =  \eta_{t} / tr \left( \eta_{t}  \right) $ with 
\begin{equation}
\label{def:eta}
 \eta_{t} = \sum_{n = 1}^{\infty} p_n \left\vert \phi_{t} \left( \hat{\phi}^n_0 \right)  \rangle\langle \phi_{t} \left( \hat{\phi}^n_0 \right)  \right\vert 
.
\end{equation}
Applying the It\^{o} formula yields  the linear stochastic quantum master equation 
\begin{eqnarray}
\nonumber
\eta_{t}
 & = 
 \rho_{0}
 +
 \int_0^t \left(
 G \left( s \right) \eta_{s}
 +  \eta_{s}   G\left( s \right)^{\ast} 
 + \sum_{\ell=1}^{\infty} L_{ \ell} \left( s \right) \eta_{s}  L_{\ell}\left( s \right)^{\ast}
 \right) ds
 \\
 \label{eq:LQME}
 & \quad + 
 \sum_{\ell=1}^{\infty} 
  \int_0^t \left( L_{ \ell} \left( s \right) \eta_{s}  +  \eta_{s}  L_{\ell}\left( s \right)^{\ast}
  \right) d B^{\ell}_s ,
\end{eqnarray}
whenever $dim \left( \mathfrak{h} \right) < + \infty$ 
or 
$dim \left( \mathfrak{h} \right) = + \infty$ with $G \left( s \right)$, $L_{ \ell} \left( s \right)$ bounded operators
(see, e.g., \cite{BarchielliHolevo1995,BarchielliPaganoniZucca});
in these cases (\ref{eq:LQME}) has a unique solution (see, e.g., \cite{BarchielliPaganoni}).
By combining It\^{o}'s formula with Girsanov's theorem,
it is obtained that $\eta_{t} / tr \left( \eta_{t}  \right)$ satisfies the stochastic quantum master equation (\ref{eq:SQME})
(see, e.g., \cite{BarchielliHolevo1995,BarchielliPaganoniZucca}).

\subsubsection{Derivation of Model \ref{Modelo1}.}
\label{sec:DerivationModel}

Suppose that $\rho_{0}$ is given by (\ref{eq:Decomp_Inicial_rho}),
but with $p_1, p_2, \ldots$ being $\mathfrak{F}_{0}$-measurable real-value random variables  such that 
$p_n \geq 0$ and  $\sum_{n = 1}^{\infty} p_n = 1$, 
and $ \hat{\phi}^1_0,  \hat{\phi}^2_0, \ldots $ being 
$\mathfrak{F}_{0}$-measurable $\mathfrak{h}$-value random variables 
such that  $\left\Vert  \hat{\phi}^n_0 \right\Vert = 1$ for all $n \in \mathbb{N}$.
Since  (\ref{eq:Linear_SSE})  has a unique strong regular solution,
conditioning on $\mathfrak{F}_{0}$ we obtain from (\ref{eq:Ded_rho_t}) that
\[
 \rho_t
=
\frac{ 1 }{
 \sum_{k = 1}^{\infty} p_k \left\Vert \phi_{t} \left( \hat{\phi}_{0}^k \right)   \right\Vert ^{2} 
} 
\sum_{n = 1}^{\infty}
p_n 
\left\vert \phi_{t} \left( \hat{\phi}_{0}^n \right) \rangle \langle \phi_{t} \left( \hat{\phi}_{0}^n \right)  \right\vert 
 .
\]
Using the linearity of (\ref{eq:Linear_SSE}) yields
\begin{equation}
\label{eq:Ded_rho_t_equiv}
 \rho_t
=
\frac{ 1 }{
 \sum_{k = 1}^{\infty}  \left\Vert \phi_{t} \left(  \sqrt{p_k} \hat{\phi}_{0}^k \right)   \right\Vert ^{2} 
} 
\sum_{n = 1}^{\infty} 
\left\vert \phi_{t} \left(  \sqrt{p_n} \hat{\phi}_{0}^n \right) \rangle \langle  \phi_{t} \left(  \sqrt{p_n} \hat{\phi}_{0}^n \right)  \right\vert 
 .
\end{equation}
Taking $  \psi^n_0 = \sqrt{p_n}  \hat{\phi}_{0}^n  $  we obtain
\begin{equation}
\label{eq:Rho_ded}
 \rho_t
=
\sum_{n = 1}^{\infty}
\left\vert \psi_{t}^n \rangle \langle \psi_{t}^n  \right\vert  ,
\end{equation}
where,  by abuse of notation,
\[
\psi^n_t =  \phi_{t} \left(  \sqrt{p_n}  \hat{\phi}_{0}^n  \right) / \sqrt{  \sum_{k=1}^{\infty} \left\Vert \phi_{t} \left(  \sqrt{p_k}  \hat{\phi}_{0}^k  \right)   \right\Vert ^{2}  }
.
\]
Next, we get that $\left( \psi^n_t  \right)_{t \in \left[ 0,T\right]}^{n \in \mathbb{N}}$
is the unique weak probabilistic solution of the stochastic system \ref{st:SIWE}
with initial datum 
$  \left( \sqrt{p_n}  \hat{\phi}_{0}^n  \right)_{n \in \mathbb{N}} $,
and so (\ref{eq:Rho_ded}) coincides with (\ref{eq:Def_rho}).

First, we examine the solutions to the linear stochastic Schr\"odinger equation (\ref{eq:Linear_SSE})
that appear in (\ref{eq:Ded_rho_t}) and (\ref{eq:Ded_rho_t_equiv}).

\begin{definition} 
\label{def:regular-sol-LSIWE}
Suppose that $C$ satisfy Hypothesis \ref{hyp:L-G-C-def},
and that  $\mathbb{I}$ is 
either $\left[ 0,\infty \right[ $ or the interval $\left[ 0,T\right] $
with $T\in \mathbb{R}_{+}$. 
Let $B^1, B^2, \ldots$  be real-valued independent Wiener processes on a filtered probability 
space $\left( \Omega ,\mathfrak{F}, \left(\mathfrak{F}_{t}\right) _{t \in \mathbb{I}},\mathbb{Q}\right) $,
where $\left( \mathfrak{F}_{t}\right) _{t \in \mathbb{I}}$ satisfies the so-called usual conditions.
Assume that $\phi_0$ is an $\mathfrak{h}$-valued $\mathfrak{F}_{0}$-measurable random variable
such that $\mathbb{E} \left( \left\Vert \phi_0 \right\Vert ^{2} +  \left\Vert C \phi_0 \right\Vert ^{2}  \right) < \infty$.
An $\mathfrak{h}$-valued adapted process with continuous sample paths  
$\left( \phi_t  \right)_{t \in \mathbb{I}} $
 is called strong 
$C$-solution of (\ref{eq:Linear_SSE}) on  $\mathbb{I}$ with initial datum $\phi_0 $ if and only if:

\begin{itemize}

\item For any  $t\in \mathbb{I}$, $\phi_t  \in \mathcal{D}\left( C\right) $ $\mathbb{Q}$-a.s.

\item For all $t\in \mathbb{I}$, 
$ 
\mathbb{E} \left\Vert \phi_t \right\Vert ^{2} \leq  \mathbb{E} \left\Vert \phi_0 \right\Vert ^{2}
$
and
$
\sup_{s\in \left[ 0,t\right] } \mathbb{E} \left\Vert C \phi_s \right\Vert ^{2}  < \infty 
$.

\item Almost surely for all $t\in \mathbb{I}$,
\begin{equation*}
\phi_t
 =
\phi_0 
+ \int_0^t  G\left( s \right) \pi _{\mathcal{D}\left( C\right) }\left(  \phi_s   \right) ds 
+
\sum_{ \ell =1}^{ \infty} \int_0^t   L_{\ell} \left( s \right) \pi _{\mathcal{D}\left( C\right) }\left(  \phi_s   \right)   dB_s^{\ell} .
\end{equation*}

\end{itemize}
\end{definition}

\begin{theorem}
\label{th:regular-sol-LSIWE}
Let Hypothesis \ref{hyp:CF} hold true.
Suppose that $\left( \psi^n_0  \right)_{n \in \mathbb{N}}$ is a sequence of 
$\mathfrak{h}$-valued $\mathfrak{F}_{0}$-measurable random variables such that
$ 
\sum_{n=1}^{\infty} \left( \mathbb{E} \left\Vert \psi^n_0 \right\Vert ^{2} 
+  \mathbb{E} \left\Vert C \psi^n_0 \right\Vert ^{2} \right) <  \infty
$.
Let $\mathbb{I}$ be either $\left[ 0,\infty \right[ $ or $\left[ 0,T\right] $  with $T\in \mathbb{R}_{+}$.
Then, for each $n \in \mathbb{N}$ the stochastic evolution equation (\ref{eq:Linear_SSE})
has a unique strong $C$-solution $\left( \phi_t \left( \psi^n_0 \right) \right)_{ t \in \mathbb{I}}$ on  $\mathbb{I}$ 
with initial datum $ \psi^n_0$.
Moreover,
$\left(  \sum_{n=1}^{\infty} \left\Vert \phi_{t} \left( \psi^n_0 \right)   \right\Vert ^{2} \right) _{t \in \mathbb{I}} $ 
is a martingale, 
and
\begin{equation}
\fl
\label{eq:LS_cota}
\mathbb{E} \left( 
\sum_{n=1}^{\infty} \left\Vert C \phi_{t} \left( \psi^n_0 \right)   \right\Vert ^{2} \right)
\leq 
\exp \left( t \alpha \left( t \right) \right) 
\left( \mathbb{E} \left(\sum_{n=1}^{\infty} \left\Vert C  \psi^n_0    \right\Vert ^{2} \right)
+ t \alpha\left( t \right)  \mathbb{E} \left(
\sum_{n=1}^{\infty} \left\Vert  \psi^n_0    \right\Vert ^{2} \right) \right)
\end{equation}
for all $t \in \mathbb{I}$.
\end{theorem}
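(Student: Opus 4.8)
The plan is to reduce all three assertions to the existing theory of the linear stochastic Schr\"odinger equation (\ref{eq:Linear_SSE}) by lifting the whole sequence to the Hilbert space $L^2\left(\mathbb{N};\mathfrak{h}\right)$. Writing $x_0=\left(\psi^n_0\right)_{n\in\mathbb{N}}$ and letting $\widetilde{C}$ act componentwise, the summability hypothesis is precisely $x_0\in\mathcal{D}\left(\widetilde{C}\right)$ with $\mathbb{E}\left(\|x_0\|^2+\|\widetilde{C}x_0\|^2\right)<\infty$. Since $\widetilde{G}$ and $\widetilde{L_{\ell}}$ also act componentwise, the family of component equations (\ref{eq:Linear_SSE}) driven by the common Brownian motions $B^1,B^2,\dots$ is exactly the single equation (\ref{eq:LSSE_e}) on $L^2\left(\mathbb{N};\mathfrak{h}\right)$. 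I would therefore establish the abstract single-equation results (existence, uniqueness, the martingale norm, and the $C$-estimate) of \cite{FagMora2013,MoraMC2004,MoraReIDAQP} for (\ref{eq:Linear_SSE}) in the space $L^2\left(\mathbb{N};\mathfrak{h}\right)$ and then read off the statements for the components. In particular, existence of each strong $C$-solution $\phi_t\left(\psi^n_0\right)$ follows from the $n$-th component of the unique strong $\widetilde{C}$-solution $x_t$ of (\ref{eq:LSSE_e}); per-component uniqueness is recovered by embedding any single solution as the one-point-supported sequence and invoking uniqueness in $L^2\left(\mathbb{N};\mathfrak{h}\right)$.

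The crux is to verify that $\widetilde{G},\widetilde{L_{\ell}},\widetilde{C}$ satisfy Hypothesis \ref{hyp:CF} on $L^2\left(\mathbb{N};\mathfrak{h}\right)$ with the same $K$ and $\alpha$. Conditions (H2.1) and (H2.2) transfer at once by summing the scalar versions over $n$: $\|\widetilde{G}f\|^2=\sum_n\|Gf(n)\|^2\le K\left(t\right)\sum_n\left(\|f(n)\|^2+\|Cf(n)\|^2\right)$, and $2\Re\langle f,\widetilde{G}f\rangle+\sum_{\ell}\|\widetilde{L_{\ell}}f\|^2=\sum_n\left(2\Re\langle f(n),Gf(n)\rangle+\sum_{\ell}\|L_{\ell}f(n)\|^2\right)=0$. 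I expect the main obstacle to be (H2.3), since it is assumed only on a core $\mathfrak{D}_1$ of $C^2$ and I must exhibit a core of $\widetilde{C}^2=\widetilde{C^2}$ on which the lifted inequality holds. The natural candidate is the set of finitely supported sequences with values in $\mathfrak{D}_1$; I would check it is a core of $\widetilde{C^2}$ by a two-step approximation (first truncate the support, obtaining finitely supported $\mathcal{D}\left(C^2\right)$-valued sequences that are dense in $\mathcal{D}\left(\widetilde{C^2}\right)$ for the graph norm, then approximate each nonzero component in the $C^2$-graph norm using that $\mathfrak{D}_1$ is a core of $C^2$), and then sum the scalar (H2.3) over the finitely many nonzero components to obtain the lifted bound with the same $\alpha$. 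The measurability requirements of Hypothesis \ref{hyp:L-G-C-def} for $\widetilde{G}$ and $\widetilde{L_{\ell}}$ are inherited from those of $G$ and $L_{\ell}$.

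With Hypothesis \ref{hyp:CF} verified in $L^2\left(\mathbb{N};\mathfrak{h}\right)$, the three conclusions come out together. The martingale property is a consequence of (H2.2): It\^{o}'s formula gives $d\|x_t\|^2=\sum_{\ell}2\Re\langle x_t,\widetilde{L_{\ell}}x_t\rangle\,dB^{\ell}_t$ with vanishing drift, so $\|x_t\|^2=\sum_n\|\phi_t\left(\psi^n_0\right)\|^2$ is a local martingale, and the $\widetilde{C}$-regularity (which makes $\sup_{s\le t}\mathbb{E}\|\widetilde{C}x_s\|^2$ finite) controls its quadratic variation and upgrades it to a genuine martingale. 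For (\ref{eq:LS_cota}) I would apply It\^{o}'s formula to $\|\widetilde{C}x_t\|^2$ — justified, as in the cited works, by the usual regularisation argument since $x_t$ need not lie in $\mathcal{D}\left(\widetilde{C}^2\right)$ — take expectations and use the lifted (H2.3) to get $\frac{d}{dt}\mathbb{E}\|\widetilde{C}x_t\|^2\le\alpha\left(t\right)\left(\mathbb{E}\|x_t\|^2+\mathbb{E}\|\widetilde{C}x_t\|^2\right)$, insert $\mathbb{E}\|x_t\|^2=\mathbb{E}\|x_0\|^2$ from the martingale property, and close with Gronwall's lemma together with the monotonicity of $\alpha$. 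Rewriting $\|\widetilde{C}x_t\|^2=\sum_n\|C\phi_t\left(\psi^n_0\right)\|^2$ and $\|x_0\|^2=\sum_n\|\psi^n_0\|^2$ yields (\ref{eq:LS_cota}) exactly.

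A cleaner but less thematic alternative, which avoids constructing the core of $\widetilde{C^2}$, is to apply the single-equation theory directly in $\mathfrak{h}$ for each fixed $n$ — obtaining the per-component martingale and the per-component estimate $\mathbb{E}\|C\phi_t\left(\psi^n_0\right)\|^2\le\exp\left(t\alpha\left(t\right)\right)\left(\mathbb{E}\|C\psi^n_0\|^2+t\alpha\left(t\right)\mathbb{E}\|\psi^n_0\|^2\right)$ — and then sum over $n$: summing the estimates gives (\ref{eq:LS_cota}) termwise, while the martingale property of $\sum_n\|\phi_t\left(\psi^n_0\right)\|^2$ follows from that of the partial sums by conditional monotone convergence, using $\mathbb{E}\sum_n\|\phi_t\left(\psi^n_0\right)\|^2=\sum_n\mathbb{E}\|\psi^n_0\|^2<\infty$.
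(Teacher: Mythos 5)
Your proposal is correct, and it in fact contains two complete routes, one of which is the paper's own. The ``cleaner but less thematic alternative'' in your final paragraph is precisely the paper's proof: the paper applies Theorem 2.4 of \cite{FagMora2013} to (\ref{eq:Linear_SSE}) in $\mathfrak{h}$ for each fixed $n$ to get existence, uniqueness and the per-component bound
$\mathbb{E}\left\Vert C\phi_{t}\left(\psi^n_0\right)\right\Vert^{2}\leq \exp\left(t\alpha\left(t\right)\right)\left(\mathbb{E}\left\Vert C\psi^n_0\right\Vert^{2}+t\alpha\left(t\right)\mathbb{E}\left\Vert \psi^n_0\right\Vert^{2}\right)$,
sums over $n$ to obtain (\ref{eq:LS_cota}), invokes Theorem 2.7 of \cite{FagMora2013} for the per-component martingale property, and upgrades to the full series exactly as you suggest, via $\mathbb{E}\sum_{n}\left\Vert \phi_{t}\left(\psi^n_0\right)\right\Vert^{2}=\sum_{n}\mathbb{E}\left\Vert\psi^n_0\right\Vert^{2}<\infty$ and the monotone convergence theorem for conditional expectations (citing \cite{Dudley}). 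Your primary route---lifting the whole sequence to $L^2\left(\mathbb{N};\mathfrak{h}\right)$---is not the one the paper uses for this particular theorem, but it is sound and coincides with machinery the paper develops anyway: Lemma \ref{lem:hyp-CF-e} is exactly your transfer of Hypothesis \ref{hyp:CF} to $\widetilde{C}$, $\widetilde{G\left(\cdot\right)}$, $\widetilde{L_{\ell}\left(\cdot\right)}$ (the core issue for $\widetilde{C^2}=\left(\widetilde{C}\right)^2$ being settled by Lemma \ref{lem:Core}, which uses all $\mathfrak{D}_1$-valued sequences in the graph domain rather than your finitely supported ones---both work), and Theorem \ref{th:EyU-LSSE-e} is your lifted statement; the paper simply reserves that apparatus for the nonlinear system (Theorem \ref{th:EyU-SIWE}) and for Theorem \ref{th:Model}. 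The trade-off is as you anticipated: the per-component argument is shorter and needs no core construction, while the lifted argument yields the martingale property of the whole sum in one stroke and recovers per-component uniqueness by your one-point-support embedding.
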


\begin{proof}
Deferred to Section \ref{sec:Proof:LSIWE}.
\end{proof}

Using Theorem \ref{th:regular-sol-LSIWE} we deduce that 
the positive operators
$
\sum_{n = 1}^{\infty} 
\left\vert \phi_{t} \left(  \hat{\phi}_{0}^n \right) \rangle \langle  \phi_{t} \left( \hat{\phi}_{0}^n \right)  \right\vert 
$
and
$
\sum_{n = 1}^{\infty} 
\left\vert \phi_{t} \left(  \sqrt{p_n} \hat{\phi}_{0}^n \right) \rangle \langle  \phi_{t} \left(  \sqrt{p_n} \hat{\phi}_{0}^n \right)  \right\vert 
$,
which are present in (\ref{eq:Ded_rho_t}) and (\ref{eq:Ded_rho_t_equiv}),
are positive trace class operators
(see also Lemma \ref{le:L1} given in Section \ref{sec:L2Nh}).
From Theorem \ref{th:Model}, given below, we have that the law of the density operator (\ref{eq:Rho_ded})
is uniquely determined by the solution of the stochastic system \ref{st:SIWE} for each regular dataset $ p_n $ and $ \hat{\phi}^n_0 $,
and 
the probability measure corresponding to this solution determines uniquely the physical  probability distribution 
$ \left(  \sum_{n=1}^{\infty} \left\Vert \phi_{T} \left(  \sqrt{p_n} \hat{\phi}_{0}^n  \right)   \right\Vert ^{2} \right) \cdot \mathbb{Q} $.
Hence, the density operators (\ref{eq:Def_rho}) and (\ref{eq:Rho_ded}) are identically distributed. 
Therefore,
the density operator $\rho_t $ defined by (\ref{eq:Def_rho})
is actually  given by (\ref{eq:Rho_ded}),
and so  it describes the state  of the quantum system continuously monitored by the apparatus $L_\ell \left( t\right) + L_\ell \left( t \right)^{\ast}  $.

\begin{theorem}
\label{th:Model}
Assume Hypothesis \ref{hyp:CF}. Let $T\in \mathbb{R}_{+}$.
Suppose that $\left( \psi^n_0  \right)_{n \in \mathbb{N}}$ is a sequence of 
$\mathfrak{h}$-valued $\mathfrak{F}_{0}$-measurable random variables such that
$
\sum_{n=1}^{\infty} \left\Vert \psi^n_0 \right\Vert ^{2} = 1 
$,
$
\sum_{n=1}^{\infty} \left\Vert C \psi^n_0 \right\Vert ^{2} < \infty 
$
and
$
\sum_{n=1}^{\infty} \mathbb{E} \left\Vert C \psi^n_0 \right\Vert ^{2} < \infty 
$.
Set
\[
\psi^n_t = 
\cases{
 \phi_{t} \left( \psi^n_0 \right) / \sqrt{  \sum_{n=1}^{\infty} \left\Vert \phi_{t} \left( \psi^n_0 \right)   \right\Vert ^{2}  }
&
 if  $  \sum_{n=1}^{\infty} \left\Vert \phi_{t} \left( \psi^n_0 \right)   \right\Vert ^{2} \neq 0 $
\\
0
&
 if  $  \sum_{n=1}^{\infty} \left\Vert \phi_{t} \left( \psi^n_0 \right)   \right\Vert ^{2} = 0 $
} 
\]
and 
\[
W_{t}^{\ell} = B_{t}^{\ell}
- \int_{0}^{t}    \left( 2 \sum_{n=1}^{\infty}  \Re \left(  \langle  \psi^n_s , L_{\ell} \left( s \right) \psi^n_s  \rangle  \right)  \right) ds 
,
\]
where $ \phi_{t} \left( \psi^n_0 \right) $ is the strong $C$-solution of  (\ref{eq:Linear_SSE}) with initial datum $ \psi^n_0 $.
Then,
$
\left( \left(  \sum_{n=1}^{\infty} \left\Vert \phi_{T} \left( \psi^n_0 \right)   \right\Vert ^{2} \right) \cdot \mathbb{Q}, \left( \psi^n_t  \right)_{t \in \left[ 0,T\right]}^{n \in \mathbb{N}},\left( W_{t}^{\ell}\right) _{t \in\left[ 0,T\right]}^{\ell\in\mathbb{N}} \right)
$
is the unique (in the joint law sense) $C$-solution of the stochastic system \ref{st:SIWE}  with initial distribution  equal to the law of $\left( \psi^n_0  \right)_{n \in \mathbb{N}}$.
\end{theorem}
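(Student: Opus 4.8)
The plan is to lift everything to the product Hilbert space $L^2\left( \mathbb{N};\mathfrak{h}\right)$ and to reduce the statement to the already available correspondence between the linear and the non-linear stochastic Schr\"odinger equations. First I would set $x_t = \left( \phi_{t}\left( \psi^n_0\right) \right)_{n \in \mathbb{N}}$. By Theorem \ref{th:regular-sol-LSIWE}, the hypotheses on $\left( \psi^n_0\right)_{n}$ ensure that each $\phi_{t}\left( \psi^n_0\right)$ is well defined and that $x_t$ is a strong $\widetilde{C}$-solution of the $L^2\left( \mathbb{N};\mathfrak{h}\right)$-valued linear equation (\ref{eq:LSSE_e}) with initial datum $x_0 = \left( \psi^n_0\right)_{n}$; moreover $\left\Vert x_t \right\Vert^{2} = \sum_{n=1}^{\infty} \left\Vert \phi_{t}\left( \psi^n_0\right) \right\Vert^{2}$ is a $\mathbb{Q}$-martingale and the a priori bound (\ref{eq:LS_cota}) holds. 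Since the operators $\widetilde{ G\left( s \right) }$, $\widetilde{ L_{\ell}\left( s \right) }$ and $\widetilde{C}$ inherit Hypothesis \ref{hyp:CF} on $L^2\left( \mathbb{N};\mathfrak{h}\right)$ (Section \ref{sec:L2Nh}), equation (\ref{eq:LSSE_e}) is a linear stochastic Schr\"odinger equation of exactly the type (\ref{eq:Linear_SSE}), only posed in the larger space.

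Next I would invoke the transformation linking (\ref{eq:Linear_SSE}) with (\ref{eq:NLSSE_i}) established in \cite{MoraReAAP} (recalled in the introduction), now applied verbatim in $L^2\left( \mathbb{N};\mathfrak{h}\right)$. Under the probability measure $\mathbb{P} = \left\Vert x_T \right\Vert^{2}\cdot\mathbb{Q}$ the processes
\[
W^{\ell}_t = B^{\ell}_t - \int_0^t \frac{2\, \Re \left\langle x_s, \widetilde{ L_{\ell}\left( s \right) } x_s \right\rangle}{\left\Vert x_s\right\Vert^{2}}\, ds
\]
are independent Brownian motions, and the normalized process $\hat{x}_t = x_t / \left\Vert x_t \right\Vert$ is a strong $\widetilde{C}$-solution of the $L^2\left( \mathbb{N};\mathfrak{h}\right)$-valued non-linear equation (\ref{eq:NLSSE_e}). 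Reading (\ref{eq:NLSSE_e}) coordinatewise, the $n$-th component of $\hat{x}_t$ is precisely the normalized vector $\psi^n_t$ of the statement, and the correction term satisfies $\Re \left\langle x_s, \widetilde{ L_{\ell}\left( s \right) } x_s \right\rangle / \left\Vert x_s\right\Vert^{2} = \sum_{n=1}^{\infty}\Re\left\langle \psi^n_s, L_{\ell}\left( s\right)\psi^n_s\right\rangle = \wp_{\ell}\left( s\right)$, so both the innovations $W^{\ell}$ and the component equations coincide with (\ref{eq:SIWE_f}) and (\ref{eq:Def_pl}). The normalization $\left\Vert \hat{x}_t\right\Vert = 1$ yields $\sum_{n=1}^{\infty}\left\Vert \psi^n_t\right\Vert^{2}=1$, and at $t=0$ one has $\psi^n_0 = \phi_{0}\left( \psi^n_0\right)$, so the initial law equals the law of $\left( \psi^n_0\right)_{n}$.

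For the regularity required by Definition \ref{def:regular-sol-SIWE} I would first note that $\left\Vert x_t\right\Vert^{2}$ is a nonnegative $\mathbb{Q}$-martingale, hence absorbed at $0$, and that $\mathbb{P} \ll \mathbb{Q}$ with density vanishing on $\left\{ \left\Vert x_T\right\Vert=0\right\}$; consequently $\mathbb{P}$-almost surely $\left\Vert x_t\right\Vert>0$ for every $t \in \left[ 0,T\right]$ and the normalization defining $\psi^n_t$ is well posed. The moment bound $\sup_{s\le t}\sum_{n=1}^{\infty}\mathbb{E}_{\mathbb{P}}\left\Vert C\psi^n_s\right\Vert^{2}<\infty$ is the one point that looks delicate, because the expectation is taken under the tilted measure; it collapses by conditioning. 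Indeed, $\sum_{n=1}^{\infty}\left\Vert C\psi^n_s\right\Vert^{2} = \left\Vert x_s\right\Vert^{-2}\sum_{n=1}^{\infty}\left\Vert C\phi_{s}\left( \psi^n_0\right)\right\Vert^{2}$ is $\mathfrak{F}_{s}$-measurable, and since $\mathbb{E}_{\mathbb{Q}}\left[ \left\Vert x_T\right\Vert^{2}\mid \mathfrak{F}_{s}\right] = \left\Vert x_s\right\Vert^{2}$, the density $\left\Vert x_T\right\Vert^{2}$ cancels the factor $\left\Vert x_s\right\Vert^{-2}$, leaving $\mathbb{E}_{\mathbb{P}}\sum_{n=1}^{\infty}\left\Vert C\psi^n_s\right\Vert^{2} = \mathbb{E}_{\mathbb{Q}}\sum_{n=1}^{\infty}\left\Vert C\phi_{s}\left( \psi^n_0\right)\right\Vert^{2}$, which is controlled uniformly on $\left[ 0,T\right]$ by (\ref{eq:LS_cota}). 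On the $\mathbb{P}$-null set $\left\{ \left\Vert x_s\right\Vert=0\right\}$ both sides vanish, so the identity is read as $0=0$.

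Finally, uniqueness in joint law is immediate from Theorem \ref{th:EyU-SIWE}: the triple just constructed is a $C$-solution of the stochastic system \ref{st:SIWE} with initial law equal to that of $\left( \psi^n_0\right)_{n}$, and that theorem asserts uniqueness in joint law within this class. The main obstacle I expect is transferring the Girsanov-type linear-to-non-linear correspondence of \cite{MoraReAAP} from a single Hilbert space to the infinite product $L^2\left( \mathbb{N};\mathfrak{h}\right)$, together with verifying that the lifted coefficients genuinely satisfy Hypothesis \ref{hyp:CF}; once these are in place, the componentwise identification and the martingale cancellation above are routine.
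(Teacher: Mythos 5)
Your proposal is correct and takes essentially the same route as the paper's proof: lift to $L^2\left( \mathbb{N};\mathfrak{h}\right)$, apply the Girsanov-type linear-to-nonlinear correspondence of \cite{MoraReAAP} in the lifted space, read the resulting equation (\ref{eq:NLSSE_e}) coordinatewise (the paper's Lemma \ref{lem:Existence}), and conclude uniqueness from Theorem \ref{th:EyU-SIWE}. The only cosmetic differences are that the paper obtains the lifted process from Theorem \ref{th:EyU-LSSE-e} and identifies its components with $\phi_{t}\left( \psi^n_0\right)$ by uniqueness of the component equation (rather than assembling $x_t$ from the components via Theorem \ref{th:regular-sol-LSIWE}), and that it derives your drift-correction formula explicitly through an It\^{o} computation of the bracket $\left[ B^{\ell}, \left\Vert x \right\Vert^{2}\right]$.
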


\begin{proof}
Deferred to Section \ref{sec:Proof_Model}.
\end{proof}

 Let the assumptions of Theorem \ref{th:Model} hold. 
 Suppose that  $ \psi^n_0 = 0 $ for all $n \geq N+1$, where $N \in \mathbb{N}$.
 Since (\ref{eq:Linear_SSE}) with initial datum $ \psi^n_0 $ has a unique strong $C$-solution,
$ \phi_t \left( \psi_0^n \right)  \equiv 0 $ for any  $n \geq N+1$.
Applying Theorem \ref{th:Model} we obtain that 
$ \psi^n_t \equiv 0$ for  all $n \geq N+1$,
where $\left(\left( \psi^n_t  \right)^{n \in \mathbb{N}},\left( W_{t}^{\ell}\right)^{\ell\in\mathbb{N}} \right)_{ t \in \left[ 0,T\right] }$
is the unique (in the joint law sense) $C$-solution of the stochastic system \ref{st:SIWE} 
with initial distribution equal to the law of $\left( \psi^n_0  \right)_{n \in \mathbb{N}}$.
Therefore, 
$ \psi^1_t, \ldots, \psi^N_t$  satisfy the following system:
\begin{stochastic}
\label{st:SIWE_N}
 For any  $ n = 1, 2, \ldots, N $ we have
\begin{eqnarray*}
\psi^n_t 
& =
\psi^n_0 
+ \int_0^t\left( G\left( s \right) \psi^n_s
+ \sum_{ \ell =1}^{ \infty} \left( \wp_{\ell} \left( s \right)  L_{\ell} \left( s \right)  \psi^n_s
-\frac{1}{2} \wp_{\ell} \left( s  \right)^2  \psi^n_s \right) 
\right)ds 
\\
& \quad
+
\sum_{ \ell =1}^{ \infty} \int_0^t  
\left( L_{\ell} \left( s \right) \psi^n_s - \wp_{\ell} \left( s \right) \psi^n_s  \right)  dW_s^{\ell} ,
\end{eqnarray*}
where $ \psi^1_t, \psi^2_t, \ldots,  \psi^N_t$ 
are $\mathfrak{h}$-valued adapted process with continuous sample paths,
\[
\wp_{\ell} \left( s  \right)
=
\sum_{n=1}^{N} \Re \left( \langle  \psi^n_s ,  L_{\ell} \left( s \right)  \psi^n_s \rangle \right) 
\]
and
$W^1, W^2, \ldots$ are independent real Brownian motions.
\end{stochastic}
\noindent  Using Theorem \ref{th:EyU-SIWE} we obtain that
the stochastic system \ref{st:SIWE_N} has uniqueness in joint law for its $C$-solutions,
which are defined by Definition \ref{def:regular-sol-SIWE}
with $ n  \in \mathbb{N}$ replaced by $n \leq N$.

\begin{remark}
Next, we will carry out formal computations.
Consider the solution $\left( \psi^n_t  \right)_{n \in \mathbb{N}}$ of the stochastic system \ref{st:SIWE} given by Theorem \ref{th:Model} with  $  \psi^n_0 = \sqrt{p_n}  \hat{\phi}_{0}^n  $.
For the sake of simplicity, we assume $p_n > 0$.
Set 
$  \hat{ \psi }^n_t  = \psi^n_t / \left\Vert \psi^n_t  \right\Vert $,
and
$ p_n \left( t\right) = \left\Vert \psi^n_t  \right\Vert^2
$.
From (\ref{eq:Rho_ded}) we obtain 
\begin{equation}
	\label{eq:NuevaSIWF4}
\rho_t
=
\sum_{n = 1}^{\infty} p_n \left( t\right) 
\left\vert \hat{ \psi }^n_t \rangle \langle \hat{ \psi }^n_t  \right\vert ,
\end{equation}
and (\ref{eq:Def_pl}) yields
\begin{equation}
	\label{eq:NuevaSIWF1}
	\wp_{\ell} \left( t \right) 
	=
	\sum_{k=1}^{\infty}  p_n \left( t\right) \Re \left( \langle  \hat{ \psi }^k_t ,  L_{\ell} \left( t \right)  \hat{ \psi }^k_t \rangle \right)
	.
\end{equation}
Combining It\^{o}'s formula with Hypothesis H2.2 we obtain
\[
\left\Vert \psi^n_t \right\Vert ^{2} 
= 
\left\Vert \psi^n_0 \right\Vert ^{2} 
+
\sum_{ \ell =1}^{ \infty} \int_0^t  2 \, \Re \left(
\langle \psi^n_s , L_{\ell} \left( s \right) \psi^n_s - \wp_{\ell} \left( s \right) \psi^n_s \rangle 
\right)  dW_s^{\ell} ,
\]
and so
\begin{equation}
	\label{eq:NuevaSIWF2}
p_n \left( t\right) 
=
p_n
+
\sum_{ \ell =1}^{ \infty} \int_0^t  
2 \, p_n \left( s\right)\left(  
  \Re \left(  \langle \hat{ \psi }^n_s  , L_{\ell} \left( s \right) \hat{ \psi }^n_s   \rangle \right) 
-  \wp_{\ell} \left( s \right) 
\right) dW_s^{\ell} .
\end{equation}
Using the definition of $  \hat{ \psi }^n_t $ gives
$  \hat{ \psi }^n_t = \phi_{t} \left( \psi^n_0 \right) /  \left\Vert \phi_{t} \left( \psi^n_0 \right)   \right\Vert $.
It is worth pointing out that
under the probability $ \left( \sum_{k=1}^{\infty} \left\Vert \phi_{T} \left( \psi^k_0 \right)   \right\Vert ^{2} \right) \cdot \mathbb{Q} $
the evolution of $  \phi_{t} \left( \psi^n_0 \right) /  \left\Vert \phi_{t} \left( \psi^n_0 \right)   \right\Vert $ is not describe by 
the non-linear stochastic Schr\"odinger equation (\ref{eq:NLSSE_i}).
This leads us to formally apply the It\^{o} formula to deduce 
\begin{eqnarray}
\fl \nonumber
 \hat{ \psi }^n_t 
 & = 
 \hat{ \psi }^n_0
 +
 \int_{0}^t \left( 
 G\left( s \right) \hat{ \psi }^n_s  + 
 \sum_{ \ell =1}^{ \infty} 
 \left( 2 \wp_{\ell} \left( s \right) - \Re \left(  \langle \hat{ \psi }^n_s  , L_{\ell} \left( s \right) \hat{ \psi }^n_s   \rangle \right) 
 \right) L_{\ell} \left( s \right) \hat{ \psi }^n_s   
 \right)  ds
 \\
 \fl \nonumber
 &  + 
  \int_{0}^t  \sum_{ \ell =1}^{ \infty} 
 \left( \frac{3}{2} \wp_{\ell} \left( s \right)^2 
 - 5 \wp_{\ell} \left( s \right)   \Re \left(  \langle \hat{ \psi }^n_s  , L_{\ell} \left( s \right) \hat{ \psi }^n_s   \rangle \right) 
 + 3 \left( \Re \left(  \langle \hat{ \psi }^n_s  , L_{\ell} \left( s \right) \hat{ \psi }^n_s   \rangle \right)  \right)^2
 \right) \hat{ \psi }^n_s     ds
 \\
 \fl
 & \quad + 
\sum_{ \ell =1}^{ \infty} \int_0^t  \left( L_{\ell} \left( s \right) \hat{ \psi }^n_s 
- \Re \left(  \langle \hat{ \psi }^n_s  , L_{\ell} \left( s \right) \hat{ \psi }^n_s   \rangle \right)  \psi^n_s  \right)  dW_s^{\ell} .
	\label{eq:NuevaSIWF3}
\end{eqnarray}
Thus,
according to (\ref{eq:NuevaSIWF4}) we have that the density operator $\rho_t$ 
is the ensemble of pure states $\hat{ \psi }^n_t$ occurring with probability $p_n \left( t\right) $,
where $\hat{ \psi }^1_t, \hat{ \psi }^2_t, \ldots $ and $ p_1 \left( t\right), p_2 \left( t\right), \ldots  $ satisfy the non-linear system of stochastic evolution equations given by 
(\ref{eq:NuevaSIWF1}), (\ref{eq:NuevaSIWF2}) and (\ref{eq:NuevaSIWF3}).
\end{remark}

\subsection{Well-definedness of Model \ref{Modelo1}}
\label{sec:Well-definedness}

We begin by recalling the notion of regular density operator introduced by \cite{ChebGarQue98} in case  $C$ is invertible
(see also \cite{MoraAP}).

\begin{definition}
\label{def:C-regular-operator}
Let $C$ be a non-negative self-adjoint operator in $\mathfrak{h}$. 
A non-negative  operator  $ \varrho:  \mathfrak{h} \rightarrow   \mathfrak{h}$ is  $C$-regular 
if 
$
 \varrho = \sum_{n\in\mathfrak{I}}\lambda_{n}\left\vert u_{n}\rangle\langle u_{n}\right\vert
$,
where 
 $\mathfrak{I}$ is a countable set, 
$ \lambda_{n} \geq 0 $, 
$  u_{n} \in \mathcal{D}\left( C\right) $,
$ \sum_{n\in \mathfrak{I}}\lambda_{n} < \infty$
and
$
 \sum_{n\in \mathfrak{I}}\lambda_{n} \left( \left\Vert u_{n}\right\Vert ^{2} +  \left\Vert Cu_{n}\right\Vert ^{2} \right)<\infty
$.
We write  $\mathfrak{L}_{1,C}^+ \left( \mathfrak{h}\right) $ for the set of all $C$-regular non-negative operators in $\mathfrak{h}$.
\end{definition}

\begin{remark}
From  \cite{ChebGarQue98} we have that  a non-negative operator 
$ \varrho \in \mathfrak{L}_{1}\left( \mathfrak{h}\right)$ is  $C$-regular 
if and only if 
$
\sup_{n \in \mathbb{N}} \Tr \left( \varrho \, C^2 \left( I + \frac{1}{n} C^2  \right)^{-1} \right) < + \infty 
$.
This gives
$\mathfrak{L}_{1,C}^+ \left( \mathfrak{h}\right) \in  \mathcal{B} \left( \mathfrak{L}_{1}\left( \mathfrak{h}\right) \right)$.
\end{remark}

Let $\mathbb{E} \Tr \left( C \varrho_0  C \right)  < + \infty$,
where $C$ fulfills the conditions of Hypotheses \ref{hyp:CF}.
According to the following theorem we have that there is at least 
one quantum state $\rho_{t}$ defined by (\ref{eq:Def_rho}),
which is uniquely determined by each regular dataset  $ \left( \sqrt{p_n}  \hat{\phi}^n_0  \right)_{n \in \mathbb{N}} $
according to Theorem \ref{th:EyU-SIWE}.

\begin{theorem}
\label{th:Def_Model1}
Assume Hypotheses \ref{hyp:CF}.
Let $\mathbb{I}$ be 
either $\left[ 0,\infty \right[ $ or the interval $\left[ 0,T\right] $
with $T\in \mathbb{R}_{+}$. 
Suppose that $\mu$ is a probability measure on $ \mathfrak{L}_{1} \left( \mathfrak{h}\right) $
such that
$
\mu \left( \mathfrak{L}_{1,C}^+ \left( \mathfrak{h}\right) \right) = 1
$
and
$
\int_{ \mathfrak{L}_{1} \left( \mathfrak{h}\right) }
 \Tr \left( C \varrho  \, C \right)  \mu \left( d \varrho \right) < \infty
$. 
Then,
there exists a  $C$-solution 
$\left( \mathbb{P}, \left( \psi^n_t  \right)_{t \in \mathbb{I} }^{n \in \mathbb{N}},\left( W_{t}^{\ell}\right) _{t \in \mathbb{I}}^{\ell\in\mathbb{N}} \right)$ of the stochastic system \ref{st:SIWE} 
such that $  \psi^n_0 = \sqrt{p_n}  \hat{\phi}^n_0  $  for all $n \in \mathbb{N}$,
where
$p_n \in \left[ 0, + \infty \right[$ and $\hat{\phi}^n_0 \in \mathfrak{h}$ are $\mathfrak{F}_{0}$-measurable random variables,
$\sum_{n = 1}^{\infty} p_n = 1$, $\left\Vert  \hat{\phi}^n_0 \right\Vert = 1$,
and
$
 \sum_{n = 1}^{\infty} p_n \left\vert  \hat{\phi}^n_0   \rangle\langle  \hat{\phi}^n_0   \right\vert 
$
is distributed according to $\mu$.
\end{theorem}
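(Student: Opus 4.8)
The plan is to reduce the statement to Theorem~\ref{th:EyU-SIWE} by manufacturing, out of the measure $\mu$, an admissible initial law $\theta$ on $\mathfrak{h}^{\mathbb{N}}$. The core of the argument is a measurable diagonalisation of a generic $C$-regular density operator: I will construct Borel maps $\Lambda_n:\mathfrak{L}_{1,C}^{+}(\mathfrak{h})\to[0,\infty[$ and $U_n:\mathfrak{L}_{1,C}^{+}(\mathfrak{h})\to\mathfrak{h}$, with $\|U_n(\varrho)\|=1$, such that for $\mu$-almost every $\varrho$ one has $\varrho=\sum_{n=1}^{\infty}\Lambda_n(\varrho)\,\ketbra{U_n(\varrho)}{U_n(\varrho)}$, the vectors $U_n(\varrho)$ attached to positive eigenvalues are orthonormal, and $\Lambda_1(\varrho)\ge\Lambda_2(\varrho)\ge\cdots\ge0$ with $\sum_n\Lambda_n(\varrho)=\Tr(\varrho)=1$ (here I use that $\mu$ is carried by density operators, as in Model~\ref{Modelo1}). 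Setting $\psi_0^n:=\sqrt{\Lambda_n(\varrho)}\,U_n(\varrho)$ and letting $\theta$ be the image of $\mu$ under $\varrho\mapsto(\psi_0^n)_{n\in\mathbb{N}}$, I then only have to check that $\theta$ meets the two requirements of Theorem~\ref{th:EyU-SIWE} and to read the data $p_n,\hat{\phi}_0^n$ back off the resulting solution.

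For the eigenvalues I would order them decreasingly with multiplicity; by the min--max (Weyl) inequalities $|\Lambda_n(\varrho)-\Lambda_n(\varrho')|\le\|\varrho-\varrho'\|_{\mathfrak{L}(\mathfrak{h})}\le\|\varrho-\varrho'\|_{\mathfrak{L}_1(\mathfrak{h})}$, so each $\Lambda_n$ is continuous, hence Borel. The eigenvectors are selected recursively on the indices carrying a strictly positive eigenvalue: given $U_1(\varrho),\dots,U_{n-1}(\varrho)$, the set of admissible $U_n(\varrho)$ is the intersection of the unit sphere, the eigenspace $\ker(\varrho-\Lambda_n(\varrho))$, and the orthogonal complement of $\{U_1(\varrho),\dots,U_{n-1}(\varrho)\}$; this set-valued map is nonempty and closed-valued, and I will show it is measurable in $\varrho$, so a Borel selection $U_n$ exists by a measurable selection theorem (e.g.\ the Kuratowski--Ryll-Nardzewski theorem). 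For indices with $\Lambda_n(\varrho)=0$ I simply put $U_n(\varrho):=e$ for a fixed unit vector $e$, since such terms do not contribute to $\varrho$. Because $\varrho$ is $C$-regular, the eigenvectors belonging to strictly positive eigenvalues lie in $\mathcal{D}(C)$ and, by the characterisation recalled after Definition~\ref{def:C-regular-operator} (see \cite{ChebGarQue98}), $\Tr(C\varrho C)=\sum_n\Lambda_n(\varrho)\,\|C U_n(\varrho)\|^2<\infty$; here $v\mapsto\|Cv\|$ is lower semicontinuous (being $\sup_N\|C_N v\|$ for the bounded truncations $C_N$ of $C$), so all the relevant quantities are Borel in $\varrho$.

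It remains to verify the hypotheses of Theorem~\ref{th:EyU-SIWE} for $\theta$ and to close the loop. Since $\psi_0^n\in\mathcal{D}(C)$ with $\sum_n\|\psi_0^n\|^2=\sum_n\Lambda_n(\varrho)=1$ for $\mu$-a.e.\ $\varrho$, the law $\theta$ is concentrated on the prescribed set, and
\[
\int_{\mathfrak{h}^{\mathbb{N}}}\sum_{n=1}^{\infty}\|C x_n\|^2\,\theta\!\left(d(x_n)_{n\in\mathbb{N}}\right)=\int_{\mathfrak{L}_1(\mathfrak{h})}\sum_{n=1}^{\infty}\Lambda_n(\varrho)\,\|C U_n(\varrho)\|^2\,\mu(d\varrho)=\int_{\mathfrak{L}_1(\mathfrak{h})}\Tr(C\varrho C)\,\mu(d\varrho)<\infty
\]
by assumption. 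Theorem~\ref{th:EyU-SIWE} then yields a $C$-solution $(\mathbb{P},(\psi_t^n),(W_t^{\ell}))$ whose initial sequence $(\psi_0^n)_{n}$ has law $\theta$. On this probability space I set $p_n:=\|\psi_0^n\|^2$ and $\hat{\phi}_0^n:=\psi_0^n/\|\psi_0^n\|$ when $p_n>0$, and $\hat{\phi}_0^n:=e$ when $p_n=0$; these are $\mathfrak{F}_0$-measurable, satisfy $p_n\ge0$, $\sum_n p_n=1$ and $\|\hat{\phi}_0^n\|=1$, and obey $\psi_0^n=\sqrt{p_n}\,\hat{\phi}_0^n$. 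Finally $\sum_n p_n\,\ketbra{\hat{\phi}_0^n}{\hat{\phi}_0^n}=\sum_n\ketbra{\psi_0^n}{\psi_0^n}$, and since the measurable map $\Sigma:(x_n)\mapsto\sum_n\ketbra{x_n}{x_n}$ satisfies $\Sigma\circ(\varrho\mapsto(\psi_0^n))=\mathrm{id}$ on $\mathfrak{L}_{1,C}^{+}(\mathfrak{h})$, the law of this operator is $\theta\circ\Sigma^{-1}=\mu$, as required.

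The main obstacle is the measurable spectral decomposition: eigenvalue crossings make the eigenprojections, and a fortiori the eigenvectors, discontinuous in $\varrho$, so continuity is hopeless and one must rely on a measurable selection theorem, taking care that the set-valued maps of admissible eigenvectors are genuinely Borel-measurable in $\varrho$ despite the varying degeneracy pattern. The secondary technical point is to confirm that membership in $\mathcal{D}(C)$ and the finiteness $\sum_n\Lambda_n(\varrho)\|CU_n(\varrho)\|^2=\Tr(C\varrho C)$ pass through this selection, which is precisely what links $C$-regularity of the $\mu$-typical $\varrho$ to the integrability condition demanded by Theorem~\ref{th:EyU-SIWE}.
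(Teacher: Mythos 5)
Your proposal is correct and follows essentially the same route as the paper: a measurable spectral decomposition of the $C$-regular initial state, pushforward of $\mu$ to an initial law $\theta$ on $\mathfrak{h}^{\mathbb{N}}$ verifying the hypotheses of Theorem \ref{th:EyU-SIWE}, and reading $p_n$, $\hat{\phi}_0^n$ back off the resulting solution with $\psi^n_0=\sqrt{p_n}\,\hat{\phi}^n_0$ and $\theta\circ\Sigma^{-1}=\mu$. The measurable diagonalisation that you identify as the main obstacle (and propose to build via Weyl eigenvalue continuity plus Kuratowski--Ryll-Nardzewski selection) is exactly what the paper's Lemma \ref{lem:Rep_Estado_Inicial} obtains by invoking Proposition 1.8 of \cite{DaPrato}, so that step need not be constructed by hand.
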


\begin{proof}
Deferred to Section \ref{sec:Proof:Def_Model1}.
\end{proof}

We now establish that  the density operator (\ref{eq:Def_rho})  does not depend on the decomposition of 
the regular density operator $\rho_0$.
Thus,
Model \ref{Modelo1} is well-defined whenever 
we only consider  decompositions of the initial density operator  $\rho_0$
by regular random variables $\hat{\phi}^n_0  $.

\begin{theorem}
 \label{th:Invarianza_ci}
Assume Hypothesis \ref{hyp:CF}.
Suppose that
 $\left( \mathbb{P}, \left( \psi^n_t  \right)_{t \geq 0}^{n \in \mathbb{N}},\left( W_{t}^{\ell}\right) _{t \geq 0 }^{\ell\in\mathbb{N}} \right)$
 and
 $\left( \widetilde{\mathbb{P}}, \left( \chi^n_t  \right)_{t  \geq 0 }^{n \in \mathbb{N}},\left(  \widetilde{ W }_{t}^{\ell}\right) _{t \geq 0}^{\ell\in\mathbb{N}} \right)$
are $C$-solutions of the stochastic system \ref{st:SIWE}  
such that the $\mathfrak{L}_{1}\left( \mathfrak{h}\right)$-valued random variables
$
\sum_{n = 1}^{\infty} \left\vert\psi^n_0  \rangle\langle\psi^n_0  \right\vert 
$
and
$
\sum_{n = 1}^{\infty} \left\vert\chi^n_0  \rangle\langle\chi^n_0  \right\vert 
$
are identically distributed.
Then, the stochastic processes 
$
\left( \sum_{n = 1}^{\infty} \left\vert  \psi^n_t   \rangle\langle \psi^n_t  \right\vert ,\left( W_{t}^{\ell}\right)^{\ell\in\mathbb{N}} \right) _{t\geq 0 }
$
and
$
\left( \sum_{n = 1}^{\infty} \left\vert  \chi^n_t   \rangle\langle \chi^n_t  \right\vert ,\left(  \widetilde{ W }_{t}^{\ell}\right)^{\ell\in\mathbb{N}} \right) _{\geq 0 }
$
have the same finite-dimensional probability distributions.
\end{theorem}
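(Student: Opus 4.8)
The plan is to reduce both $C$-solutions to the canonical ones built from the linear stochastic Schr\"odinger equation (\ref{eq:Linear_SSE}), and then to exploit the representation (\ref{eq:Eta_Rep}) in order to write the density operator process, the driving Wiener processes, and the change-of-measure density as a single \emph{decomposition-independent} Borel functional of the pair $\left( \rho_0 , \left( B^{\ell}\right)_{\ell} \right)$. First I would note that the map $\left( \left( x_n\right)_n , \left( u^{\ell}\right)_{\ell}\right) \mapsto \left( \sum_n \left\vert x_n \rangle\langle x_n \right\vert , \left( u^{\ell}\right)_{\ell}\right)$ is Borel measurable, so the finite-dimensional distributions of $\left( \sum_n \left\vert \psi^n_t \rangle\langle \psi^n_t \right\vert , \left( W^{\ell}_t\right)_{\ell}\right)_{t}$ are determined by those of $\left( \left( \psi^n_t\right)_n , \left( W^{\ell}_t\right)_{\ell}\right)_{t}$. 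Let $\theta$ and $\widetilde{\theta}$ be the laws of $\left( \psi^n_0\right)_n$ and $\left( \chi^n_0\right)_n$ on $\mathfrak{h}^{\mathbb{N}}$; these generically differ even though $\rho_0 = \sum_n \left\vert \psi^n_0 \rangle\langle \psi^n_0 \right\vert$ and $\widetilde{\rho}_0 = \sum_n \left\vert \chi^n_0 \rangle\langle \chi^n_0 \right\vert$ are identically distributed. By the uniqueness in joint law of Theorem \ref{th:EyU-SIWE}, the first given solution has the same finite-dimensional distributions as the explicit $C$-solution furnished by Theorem \ref{th:Model} from initial data with law $\theta$ (the integrability requirements of Theorem \ref{th:Model} follow from Definition \ref{def:regular-sol-SIWE} evaluated at $t=0$, and the martingale property in Theorem \ref{th:regular-sol-LSIWE} ensures the initial law is preserved under the tilting), and likewise for the second solution with $\widetilde{\theta}$. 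It therefore suffices to prove the claim for these two canonical solutions.

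For the canonical solution associated with $\left( \psi^n_0 \right)_n$, Theorem \ref{th:Model} together with Lemma \ref{OperadorPhi_t} and the representation (\ref{eq:Eta_Rep}) give
\[
\sum_{n=1}^{\infty} \left\vert \psi^n_t \rangle\langle \psi^n_t \right\vert = \frac{\eta_t}{\Tr \eta_t}, \qquad \eta_t = \sum_{n=1}^{\infty} \left\vert \phi_t \left( \psi^n_0 \right) \rangle\langle \phi_t \left( \psi^n_0 \right) \right\vert = \Phi_t \rho_0 \Phi_t^{\star},
\]
together with $W^{\ell}_t = B^{\ell}_t - \int_0^t 2 \Re \left( \Tr \left( L_{\ell}\left( s\right) \eta_s / \Tr \eta_s \right) \right) ds$ and the change-of-measure density $D_T = \Tr \left( \Phi_T \rho_0 \Phi_T^{\star}\right)$. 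The decisive point is that the extended stochastic evolution operator $\Phi_t$ depends only on the coefficients $G, L_{\ell}$ and on the Brownian paths $\left( B^{\ell}\right)_{\ell}$, and not on $\rho_0$ or on its decomposition; hence, by (\ref{eq:Eta_Rep}), the family $\left( \rho_t\right)_{t\le T}$, the processes $\left( W^{\ell}\right)_{\ell}$ and the density $D_T$ are all given by one fixed Borel functional of the pair $\left( \rho_0 , \left( B^{\ell}\right)_{\ell}\right)$, the very same functional for both canonical solutions.

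Next I would record that under the reference measure $\mathbb{Q}$ the operator $\rho_0$ is $\mathfrak{F}_0$-measurable while the $\left( B^{\ell}\right)_{\ell}$, being Wiener processes for the filtration, are independent of $\mathfrak{F}_0$; thus the $\mathbb{Q}$-law of $\left( \rho_0 , \left( B^{\ell}\right)_{\ell}\right)$ is the product of the law of $\rho_0$ with Wiener measure, and the same holds for $\left( \widetilde{\rho}_0 , \left( \widetilde{B}^{\ell}\right)_{\ell}\right)$. Since $\rho_0$ and $\widetilde{\rho}_0$ are identically distributed, these two joint laws coincide. Finally, fixing $0 \le t_1 < \cdots < t_N$, putting $T = t_N$, and writing the physical expectation of a bounded test function $g$ of $\left( \left( \rho_{t_k}\right)_k , \left( W^{\ell}_{t_k}\right)_{k,\ell}\right)$ as $\mathbb{E}_{\mathbb{P}}\left[ g\right] = \mathbb{E}_{\mathbb{Q}}\left[ D_T\, g\right]$, both $D_T$ and the argument of $g$ are the same fixed functional of $\left( \rho_0 , \left( B^{\ell}\right)_{\ell}\right)$. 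The expectation is therefore an integral of one fixed functional against the common joint law, so it agrees for the two canonical solutions; this yields equality of all finite-dimensional distributions.

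The main obstacle is the second step: giving (\ref{eq:Eta_Rep}) a rigorous meaning when $\mathfrak{h}$ is infinite-dimensional and verifying that the resulting map is genuinely decomposition-free and jointly measurable in the Brownian paths and in the operator argument $\rho_0$. This rests entirely on Lemma \ref{OperadorPhi_t}, which controls the extension $\Phi_t$ and its adjoint $\Phi_t^{\star}$ without invoking the dual equation. A secondary, more bookkeeping, difficulty is transporting the two solutions, which may live on different probability spaces, onto the common canonical space $\mathfrak{h}^{\mathbb{N}} \times C\left( \left[ 0,T\right] , \mathbb{R}^{\mathbb{N}}\right)$ so that the change-of-measure identity and the comparison of joint laws can be executed simultaneously for both constructions.
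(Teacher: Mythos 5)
Your overall architecture --- reduce both given solutions to the canonical ones of Theorem \ref{th:Model} via the uniqueness in joint law of Theorem \ref{th:EyU-SIWE}, use the representation (\ref{eq:Eta_Rep}) to eliminate the dependence on the decomposition, exploit the product structure (law of $\rho_0$) $\otimes$ (Wiener measure), and compare physical expectations through the density $D_T=\Tr\left( \Phi_T\rho_0\Phi_T^{\star}\right)$ --- is the same skeleton as the paper's proof, and your reduction step (including the use of the martingale property to see that the tilting preserves the initial law) is sound. However, there is a genuine gap at your decisive step: the assertion that $\left( \rho_t\right)_{t\le T}$, $\left( W^{\ell}\right)_{\ell}$ and $D_T$ are ``one fixed Borel functional of the pair $\left( \rho_0,\left( B^{\ell}\right)_{\ell}\right)$.'' The identity (\ref{eq:Eta_Rep}), as the paper establishes it (Lemma \ref{le:Invarianza_lineal}, built on Lemma \ref{OperadorPhi_t}), is an equality of operators acting on $L^2\left( \Omega,\mathfrak{F}_{t},\mathbb{Q}\right)$: the extension $\Phi_t$ maps random variables to random variables and is \emph{not} a pointwise-defined, path-indexed operator on $\mathfrak{h}$. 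What (\ref{eq:Eta_Rep}) delivers is almost-sure decomposition-independence of the density process on a \emph{fixed} probability space (Lemma \ref{le:Invarianza_nolineal}), not joint Borel measurability of the solution in $\left( \rho_0, B\right)$. A true pathwise functional representation of the strong solution of (\ref{eq:Linear_SSE}) in the initial datum and the driving paths would be a Yamada--Watanabe-type statement in infinite dimensions, which neither Lemma \ref{OperadorPhi_t} nor anything else in the paper supplies; so your remark that the obstacle ``rests entirely on Lemma \ref{OperadorPhi_t}'' is inaccurate, and your final integration of ``one fixed functional against the common joint law'' cannot be executed as written. Note also that uniqueness in law alone cannot bridge the two spaces, because the laws of $\left( \psi^n_0\right)_n$ and $\left( \chi^n_0\right)_n$ on $\mathfrak{h}^{\mathbb{N}}$ generically differ even when the initial operators are identically distributed.

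The paper closes exactly this hole with a device absent from your proposal: a measurable selection $\varrho\mapsto\left( \zeta_n\left( \varrho\right) \right)_n$, Borel on $\mathfrak{L}_{1}\left( \mathfrak{h}\right)$, yielding a canonical decomposition $\Gamma\left( \varrho\right) =\sum_{n}\left\vert \zeta_n\left( \varrho\right) \rangle\langle\zeta_n\left( \varrho\right) \right\vert$. On the canonical product space $\mathfrak{h}^{\mathbb{N}}\times C\left( \left[ 0,T\right] ,\mathbb{R}^{\mathbb{N}}\right)$ equipped with (law of the given initial data) $\otimes$ (Wiener measure), the linear equation is solved both from the transported given decomposition and from the canonical one $\Psi=\zeta\left( \rho_0\right)$; Lemma \ref{le:Invarianza_nolineal} shows the two normalized density processes agree almost surely on that one space, and Theorem \ref{th:Model} together with Theorem \ref{th:EyU-SIWE} transfers finite-dimensional distributions between spaces. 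Since the law of $\Psi=\zeta\left( \rho_0\right)$ is determined by the law of $\rho_0$ alone, identical distribution of the two initial operators forces equality of all finite-dimensional distributions, with no pointwise measurability in the Brownian paths ever needed. If you insert the measurable-selection step and replace your ``fixed functional'' claim by this same-space comparison followed by uniqueness in joint law, your argument becomes precisely the paper's proof; as it stands, the decisive step is asserted rather than proved.
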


\begin{proof}
Deferred to Section \ref{sec:Proof:Invarianza}.
\end{proof}

\subsection{Existence of a regular solution to the stochastic quantum master equation}
\label{sec:ExistenceRS_QME}

We obtain that that the density operator $\rho_{t}$ given by Model \ref{Modelo1}
is a regular solution of the stochastic quantum master equation (\ref{eq:SQME})
whenever the initial density operator is regular.

\begin{condition}
\label{hyp:Closable}
For any $t \geq 0$, the operators $G\left( t \right), L_{1} \left( t \right), L_{2} \left( t \right), \ldots$
are closable.
\end{condition}

\begin{theorem}
\label{th:Existence_SQME}
Assume Hypotheses \ref{hyp:CF} and \ref{hyp:Closable}.
Suppose that $\mu$ is a probability measure on $ \mathfrak{L}_{1} \left( \mathfrak{h}\right) $
such that
$
\mu \left( \mathfrak{L}_{1,C}^+ \left( \mathfrak{h}\right) \right) = 1
$
and
$
\int_{ \mathfrak{L}_{1} \left( \mathfrak{h}\right) }
 \Tr \left( C \varrho  \, C \right)  \mu \left( d \varrho \right) < \infty
$. 
Then:
\begin{itemize}
 \item[a)] There exists a  $C$-solution $\left( \mathbb{P}, \left( \psi^n_t  \right)_{t \geq 0 }^{n \in \mathbb{N}},\left( W_{t}^{\ell}\right) _{t \geq 0}^{\ell\in\mathbb{N}} \right)$ of the stochastic system \ref{st:SIWE} 
such that 
$\sum_{n = 1}^{\infty} \left\vert\psi^n_0  \rangle\langle\psi^n_0  \right\vert $ is distributed according to $\mu$.
 
 \item[b)] For any $t \geq 0 $ we set
$
 \rho_{t} = \sum_{n = 1}^{\infty} \left\vert\psi^n_t  \rangle\langle\psi^n_t  \right\vert 
$,
that is, $\rho_{t}$ is defined by (\ref{eq:Def_rho}).
Then, $\rho_{t} \in  \mathfrak{L}_{1,C}^+ \left( \mathfrak{h}\right) $  $\mathbb{P}$-a.s.  for any $t \geq 0$,
and 
\begin{eqnarray}
\fl
 \label{eq:NSME-e}
   \rho_t 
 & = 
  \rho_0 
 +
 \int_0^t  
 \left( G\left( s \right) \rho_s + \rho_s   G\left( s \right)^{\ast} 
 + \sum_{ \ell =1}^{ \infty}  L_{\ell} \left( s \right) \rho_s   L_{\ell} \left( s \right) ^{\ast} \right)   
ds
\\
\fl
&  \quad +
\sum_{ \ell =1}^{ \infty}
\int_0^t 
 \left(   L_{\ell} \left( s \right) \rho_s + \rho_s  \,  L_{\ell}\left( s \right)^{\ast}
 - 2  \Re \left( tr \left(  L_{\ell} \left( s \right) \rho_s \right) \right) \rho_s \right) 
 dW_s^{\ell} 
 \qquad 
 \mathrm{for \ all \;} t \geq 0 ,
\nonumber
\end{eqnarray} 
where the integral with respect to $ds$ is interpreted as a Bochner integral in both $ \mathfrak{L}_{1} \left( \mathfrak{h}\right)$
and $ \mathfrak{L}_{2} \left( \mathfrak{h}\right)$,
and  the integrals with respect to $dW_s^{\ell} $ are understood as stochastic integrals in  
$ \mathfrak{L}_{2} \left( \mathfrak{h}\right)$.
Moreover,
\[
t \mapsto \sum_{ \ell =1}^{ \infty} \int_0^t  \left( \rho_s  \,  L_{\ell}\left( s \right)^{\ast}  +   L_{\ell} \left( s \right) \rho_s 
 - 2  \Re \left( tr \left(  L_{\ell} \left( s \right) \rho_s \right) \right) \rho_s \right) 
 dB_s^{\ell} 
\]
is a continuous square integrable $ \mathfrak{L}_{2} \left( \mathfrak{h}\right)$-martingale on any bounded interval.
  
\end{itemize}

\end{theorem}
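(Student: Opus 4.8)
Part a) is immediate from Theorem \ref{th:Def_Model1}: the hypotheses imposed on $\mu$ are exactly those required there, so one obtains a $C$-solution $(\mathbb{P},(\psi^n_t)^{n\in\mathbb{N}},(W^\ell_t)^{\ell\in\mathbb{N}})$ of the stochastic system \ref{st:SIWE} whose initial operator $\sum_n\vert\psi^n_0\rangle\langle\psi^n_0\vert$ is distributed according to $\mu$. The content lies in part b). For the regularity $\rho_t\in\mathfrak{L}_{1,C}^+(\mathfrak{h})$, I would read off from Definition \ref{def:regular-sol-SIWE} that $\psi^n_t\in\mathcal{D}(C)$ a.s., that $\sum_n\Vert\psi^n_t\Vert^2=1$, and that $\sup_{s\le t}\sum_n\mathbb{E}\Vert C\psi^n_s\Vert^2<\infty$; the last bound forces $\sum_n\Vert C\psi^n_t\Vert^2<\infty$ $\mathbb{P}$-almost surely for each fixed $t$, so the eigen-expansion of $\rho_t=\sum_n\vert\psi^n_t\rangle\langle\psi^n_t\vert$ meets Definition \ref{def:C-regular-operator}.

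The core is an It\^o computation carried out in the Hilbert space $\mathfrak{L}_2(\mathfrak{h})$, where each rank-one operator sits with $\Vert\,\vert u\rangle\langle u\vert\,\Vert_2=\Vert u\Vert^2$. Applying It\^o's formula to the quadratic map $u\mapsto\vert u\rangle\langle u\vert$ along the equation (\ref{eq:SIWE}) for $\psi^n_t$, and writing $a^n_s$, $b^{n,\ell}_s$ for the drift and the $\ell$-th diffusion coefficient of $\psi^n_s$, the differential of $\vert\psi^n_t\rangle\langle\psi^n_t\vert$ has drift $\vert a^n_s\rangle\langle\psi^n_s\vert+\vert\psi^n_s\rangle\langle a^n_s\vert+\sum_\ell\vert b^{n,\ell}_s\rangle\langle b^{n,\ell}_s\vert$ and martingale part $\sum_\ell(\vert b^{n,\ell}_s\rangle\langle\psi^n_s\vert+\vert\psi^n_s\rangle\langle b^{n,\ell}_s\vert)\,dW^\ell_s$. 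The decisive algebraic point is that the terms carrying $\wp_\ell$ and $\wp_\ell^2$ in $a^n_s$ cancel exactly against the matching terms produced by the It\^o correction $\sum_\ell\vert b^{n,\ell}_s\rangle\langle b^{n,\ell}_s\vert$; what remains in the drift is $G\vert\psi^n_s\rangle\langle\psi^n_s\vert+\vert\psi^n_s\rangle\langle\psi^n_s\vert G^\ast+\sum_\ell L_\ell\vert\psi^n_s\rangle\langle\psi^n_s\vert L_\ell^\ast$, and the martingale part reduces (using that $\wp_\ell$ is real) to $\sum_\ell(L_\ell\vert\psi^n_s\rangle\langle\psi^n_s\vert+\vert\psi^n_s\rangle\langle\psi^n_s\vert L_\ell^\ast-2\wp_\ell\vert\psi^n_s\rangle\langle\psi^n_s\vert)\,dW^\ell_s$. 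Summing over $n$ and recalling $\wp_\ell=\Re\,\Tr(L_\ell\rho_s)$ from (\ref{eq:Def_pl}) produces (\ref{eq:NSME-e}).

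Turning these formal steps into a proof is where the hypotheses are spent. Using the closability Hypothesis \ref{hyp:Closable} together with $\mathrm{ran}(\rho_s)\subset\mathcal{D}(C)$, I would give rigorous meaning to $G\rho_s$, $\rho_sG^\ast$, $L_\ell\rho_s$, $\rho_sL_\ell^\ast$ and $L_\ell\rho_sL_\ell^\ast$ and justify identities such as $\vert G\psi^n_s\rangle\langle\psi^n_s\vert=G\vert\psi^n_s\rangle\langle\psi^n_s\vert$. Hypothesis H2.1 bounds $\sum_n\Vert G\psi^n_s\Vert\,\Vert\psi^n_s\Vert$ and Hypothesis H2.2 gives $\sum_\ell\Tr(L_\ell\rho_sL_\ell^\ast)=\sum_n\sum_\ell\Vert L_\ell\psi^n_s\Vert^2=-2\sum_n\Re\langle\psi^n_s,G\psi^n_s\rangle$; together they show the drift is trace class, so its time integral converges as a Bochner integral in $\mathfrak{L}_1(\mathfrak{h})$ and a fortiori in $\mathfrak{L}_2(\mathfrak{h})$. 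The $C$-solution bound $\sup_{s\le t}\sum_n\mathbb{E}\Vert C\psi^n_s\Vert^2<\infty$, combined with H2.1--H2.2, yields $\mathbb{E}\int_0^t\sum_\ell\Vert L_\ell\rho_s+\rho_sL_\ell^\ast-2\wp_\ell\rho_s\Vert_2^2\,ds<\infty$, which makes the stochastic integral a well-defined continuous square-integrable $\mathfrak{L}_2(\mathfrak{h})$-martingale and gives the closing assertion of the theorem; the same estimates dominate the tails needed to interchange the sum over $n$ with the stochastic integration. A cleaner alternative for the identity itself is to first derive its weak form by applying It\^o to $\langle\hat{x}_t,\widetilde{A}\hat{x}_t\rangle=\Tr(A\rho_t)$ for bounded test operators $A$ --- avoiding all domain questions about $A$ --- and then to recover the strong $\mathfrak{L}_2$/$\mathfrak{L}_1$ form, since a dense family of such $A$ separates the points of $\mathfrak{L}_2(\mathfrak{h})$.

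I expect the main obstacle to be exactly this last package of estimates: the cancellation of the $\wp_\ell$-terms is algebraically effortless, but controlling the $\mathfrak{L}_2$-norms of the operator-valued integrands uniformly enough to exchange the infinite $n$-sum with the stochastic integral, while simultaneously certifying that the drift lands in $\mathfrak{L}_1(\mathfrak{h})$ and the diffusion in $\mathfrak{L}_2(\mathfrak{h})$, is where Hypotheses \ref{hyp:CF} and \ref{hyp:Closable} must be used in concert and where all the real care is required.
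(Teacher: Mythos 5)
Your proposal is correct, and its algebraic and analytic core coincides with the paper's: the exact cancellation of the $\wp_{\ell}$- and $\wp_{\ell}^2$-terms against the It\^{o} correction, the identification $\wp_{\ell}(s)=\Re\,(\Tr(L_{\ell}(s)\rho_s))$, the use of Hypothesis \ref{hyp:Closable} to realize $G(s)\rho_s$, $\rho_s G(s)^{\ast}$, $L_{\ell}(s)\rho_s L_{\ell}(s)^{\ast}$ as bounded trace-class extensions, and the estimates from H2.1--H2.2 together with $\sup_{s\le t}\sum_{n}\mathbb{E}\Vert C\psi^n_s\Vert^{2}<\infty$ that place the drift in $\mathfrak{L}_{1}(\mathfrak{h})$ and make the stochastic integrals square-integrable $\mathfrak{L}_{2}(\mathfrak{h})$-martingales. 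What differs is the organization. You apply an operator-valued It\^{o} formula to $u\mapsto\vert u\rangle\langle u\vert$ for each $n$ directly along the given solution $(\psi^n_t,W^{\ell}_t)$ and then sum over $n$, so the burden falls on interchanging the infinite $n$-sum with the stochastic integration (which your estimates do control, since the map is only real-quadratic, not complex-bilinear, one must use real Fr\'echet differentiability there). The paper instead performs a scalar It\^{o} computation on matrix elements of the canonical $L^2(\mathbb{N};\mathfrak{h})$-valued solution $\hat{x}_t$ of (\ref{eq:NLSSE_e}) driven by auxiliary Brownian motions $B^{\ell}$: the bilinear form $F$ built with the conjugation $\overline{\,\cdot\,}$ already encodes the sum over $n$, and the operator identifications are exactly Lemma \ref{le:A_rho_B}. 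The price of that route is a final transfer step: since the equation is first derived for $(\varrho_t,B^{\ell}_t)$ rather than for $(\rho_t,W^{\ell}_t)$, the paper invokes uniqueness in joint law (Theorem \ref{th:EyU-SIWE}) and an equality-in-law transfer argument (Ondrej\'{a}t) to conclude. Your route avoids that transfer entirely; the paper's route avoids It\^{o} calculus in $\mathfrak{L}_{2}(\mathfrak{h})$ with infinitely many driving noises and the $n$-by-$n$ summation. Note finally that your ``cleaner alternative'' (testing $\Tr(A\rho_t)=\langle\hat{x}_t,\widetilde{A}\hat{x}_t\rangle$ against bounded $A$) is in substance the paper's method, specialized to rank-one test operators $A=\vert u\rangle\langle v\vert$.
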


\begin{proof}
Deferred to Section \ref{sec:Proof:Existence_SQME}.
\end{proof}

\begin{remark}
\label{re:Hip3}
Hypothesis \ref{hyp:Closable} is required for the existence of solutions to the GKSL quantum master equation
(\ref{eq:GKSL}) (see, e.g., \cite{MoraAP}).
Hypothesis \ref{hyp:Closable} is true if $G\left( t \right)^{\ast}$, $L_{1} \left( t \right)^{\ast}$, $L_{2} \left( t \right)^{\ast}, \ldots$ 
are densely defined (see, e.g., Section 3.5.5 of \cite{Kato}).
\end{remark}


\subsection{Examples}
\label{sec:examples}

\subsubsection{Quantum system in coordinate representation.}
\label{sec:Ex1}

The following model describes the continuous measurement of position of a quantum system 
with Schr\"{o}dinger Hamiltonian that is
confined within a one-dimensional box (see, e.g., \cite{Bassi2009,Gough}).
Here, 
the stochastic system \ref{st:SIWE} becomes a system of non-linear stochastic partial differential equations.

\begin{example}
\label{ex:RepCoord}
Take $\mathfrak{h} =  L^{2}\left( \mathbb{R} ,\mathbb{C}\right) $.
Consider the Hamiltonian $ H\left( t \right) $ given by 
\[
H\left( t \right)  f \left( x \right)  = - \alpha \frac{d^2}{dx^2}  f \left( x \right)  + V \left( x \right)  \,  f \left( x \right)  
\]
where $\alpha \in \mathbb{R}$ and $ V:  \mathbb{R}  \rightarrow \mathbb{R} $.
Choose $ L_{\ell} \left(t \right)   = 0 $ for any $\ell \geq 2$,
and set 
\[
L_{1} \left( t \right)  f \left( x \right)  =  \gamma \, x f \left( x \right) 
\]
where $  \gamma \in \mathbb{R} $.
\end{example}

In Example \ref{ex:RepCoord}, we suppose that the potential $V$ satisfies  
$ V  \in C^{2}\left( \mathbb{R},\mathbb{R}\right)$,
\begin{equation*}
\max \left\{  
\left| V \left( x  \right) \right|  , \left| V^{\prime \prime} \left( x  \right) \right|  
\right\}
\leq 
K \left( 1+ \left| x \right| ^{2} \right)
\qquad \qquad \forall x \in \mathbb{R} ,
\end{equation*}
and
$
\left|   V^{\prime} \left( x  \right) \right| \leq K  \left( 1+ \left| x \right| \right)
$
for all $x \in \mathbb{R}$, where $K > 0$.

The operator in $L^{2}\left( \mathbb{R} ,\mathbb{C}\right) $  given by 
\begin{equation}
\label{def:C_E1}
f \mapsto - \frac{d^2}{dx^2} f \left( x \right)  + x^{2} f \left( x \right)
\qquad  \mathrm{for}\; \mathrm{all}\;  f \in C^{\infty}_{c}\left( \mathbb{R} , \mathbb{C} \right) 
\end{equation}
is essentially self-adjoint 
(see, e.g., Th. X.28 of  \cite{ReedSimonVol2}),
where
$C^{\infty}_{c}\left( \mathbb{R} , \mathbb{C} \right)  $ stands for 
the set of all  functions from $\mathbb{R}$ to $ \mathbb{C}$
with compact support and derivatives of any order. 
Let $C$  be the closure of (\ref{def:C_E1}).
From Section 5.1 of \cite{FagMora2013} we have that 
$C$ satisfies Hypotheses \ref{hyp:L-G-C-def} and \ref{hyp:CF}.
Consider an  initial density operator $\widetilde{ \rho }_{0}$  such that
\begin{equation}
\label{eq:Descomp_inicial_Ex1}
\widetilde{ \rho }_{0} = \sum_{k = 1}^{\infty} p_k \left\vert \phi^k_0  \rangle\langle \phi^k_0  \right\vert ,
\end{equation}
where 
$ \left\Vert   \phi^k_0 \right\Vert _{ L^{2}\left( \mathbb{R} ,\mathbb{C}\right)} = 1 $
and 
$ \sum_{k = 0}^{\infty} p_k \mathbb{E} \left\Vert  \left(  - \frac{d^2}{dx^2}  + x^{2}  \right)  \phi^k_0 \left( x \right)
\right\Vert^2 _{ L^{2}\left( \mathbb{R} ,\mathbb{C}\right) }
<
+ \infty 
$.
Hence,
\begin{equation*}
\label{eq:Cond_Ex1}
\mathbb{E} \Tr \left( \left(  - \frac{d^2}{dx^2}  + x^{2}  \right) \widetilde{ \rho }_{0}   \left(  - \frac{d^2}{dx^2}  + x^{2}  \right)  \right)   < \infty .
\end{equation*}
Then,
using Theorems \ref{th:EyU-SIWE}, \ref{th:Def_Model1} and  \ref{th:Invarianza_ci} 
we deduce that Model \ref{Modelo1} is well-defined for Example \ref{ex:RepCoord}
under the assumptions of this section.
Indeed, 
there is a unique (in joint law) density operator $ \rho_{t} $ given by (\ref{eq:Def_rho}) 
such that $ \rho_{0} $  is distributed according to the law of $\widetilde{ \rho }_{0} $.
Moreover,
Hypothesis \ref{hyp:Closable} holds true,
because 
$C^{\infty}_{c}\left( \mathbb{R} , \mathbb{C} \right)$ belongs to the domain of 
$G\left( t \right)^{\ast}$ and $L_{1} \left( t \right)^{\ast}$.
Therefore, 
applying Theorem \ref{th:Existence_SQME}  we deduce that 
the operator $\rho_{t}$ given by (\ref{eq:Def_rho}) satisfies (\ref{eq:SQME}).

%

\subsubsection{Measurement of the radiation field.}

\label{sec:Ex2}

Next, we model the continuous monitoring of the radiation field of a quantum system with Rabi Hamiltonian.
In circuit quantum electrodynamics,
this example describes a quantum non-demolition measurement of the squeezed quadrature observable of a microwave resonator coupled to a superconducting qubit.

\begin{example}
\label{ex:RadF}
Set $ \mathfrak{h} = \ell^2 \left(\mathbb{Z}_+ \right)\otimes \mathbb{C}^{2}$.
Consider 
the Pauli matrices 
$
\sigma^x = 
\left(
\begin{array}{cc}
 0 & 1
 \\
 1 & 0
\end{array}
\right)
$
and
$
\sigma^z = 
\left(
\begin{array}{cc}
 1 &  0
 \\
0  & -1
\end{array}
\right)
$. 
The creation and annihilation operators $a^{\dagger}$, $a$ are the closed operators on $\ell^2 \left(\mathbb{Z}_+ \right)$
defined by
$
a^{\dagger}\, e_{n}
=
 \sqrt{ n+1} \, e_{n+1}  
 $
for any $ n \in \mathbb{Z}_{+} $,
and 
\[
a \, e_{n} 
=
\cases{
 \sqrt{n} \, e_{n-1}  & if  $ n \in  \mathbb{N} $
 \\
 0 &  if  $ n = 0 $ 
 },
\]
where $(e_n)_{n\ge 0}$ is the canonical orthonormal basis of $\ell^2 \left(\mathbb{Z}_+ \right)$.

Choose 
 $
 H \left( t \right)  = 
 \omega_1  \sigma^z /2+ \omega_2 \, a^{\dag} a + g   \left( a^{\dag} + a \right )  \otimes \sigma^x 
 $
 and
\[
 L_1  \left( t \right)
=   
\sqrt{\alpha}  \left( \exp{\left (\mathrm{i} \psi\right )} \, a^{\dag} 
+ \exp{\left (- \mathrm{i}  \psi \right )} a \right) ,
\]
where 
$\omega_1, \omega_2, \alpha > 0 $ and $g, \psi \geq 0$;
as usual, we write $A$, $B$ instead of $A \otimes I$, $I \otimes B$ 
whenever $A$, $B$ are linear operator in $ \ell^2 \left(\mathbb{Z}_+ \right)$ and $\mathbb{C}^{2}$ respectively.
For any $\ell \geq 2$, set $ L_{\ell} \left(t \right)   = 0 $.
The operator $G \left( t \right) $ is defined by (\ref{eq:Def_G}) on the domain of the number operator $N = a^{\dag} a$.
\end{example}

Since the system of Example \ref{ex:RadF} is autonomous 
we will write 
$ H $, $ L_{\ell} $ and $ G $ instead of $ H \left( t \right)$, $ L_{\ell} \left(t \right)$ and $ G \left( t \right) $.
According to Remark \ref{re:Hip3} we have that Hypothesis \ref{hyp:Closable} holds,
because $G^{\ast}$, $L_{1}^{\ast}$, $L_{2}^{\ast}, \ldots$ 
are densely defined.
The domains of the closable operators $ G$ and $ L_{\ell}$ 
contain $\mathcal{D}\left(  N \right)$,
and so the closed graph theorem establishes that 
$G$ and $L_{\ell}$ are bounded operators from $\left( N, \left\Vert \cdot \right\Vert_N \right)$ to $\mathfrak{h}$,
where
$
\left\Vert x \right\Vert^2_N =  \left\Vert x \right\Vert^2 + \left\Vert N \, x \right\Vert^2 
$.
Thus, the condition H2.1 of Hypothesis \ref{hyp:CF} holds,
and  
Lemma 2.2 of  \cite{FagMora2013} yields Hypothesis \ref{hyp:L-G-C-def} with $C=N$.
From Remark \ref{re:CondH2.2} we obtain the condition H2.2 of Hypothesis \ref{hyp:CF}.

Using the commutation relations $\left[a, a^{\dag}  \right] = I$, $\left[a,  N \right] =  a$ 
and $\left[N, a^{\dag}  \right] = a^{\dag}$ we deduce, after a long computation, that
\[
\fl
-\frac{1}{2} N^2 L_{1}^{\ast} L_{1} -\frac{1}{2}  L_{1}^{\ast} L_{1} N^2 +  L_{1}^{\ast} N^2 L_{1}  
=
 -\frac{1}{2}  L_{1}^{\ast} L_{1} + L_{1}^{\ast} \left( z \,  a^{\dag}  - \bar{z} \, a \right)
 + L_{1}^{\ast} \left( z \,  a^{\dag}  - \bar{z} \, a \right) N
\]
and
$
\left[ H, N^2 \right] 
=
g \left( 2 \, a \, N - a - a^{\dag} - 2 \, a^{\dag} N \right)
$
on the domain of $N^3$, 
where $ z = \sqrt{\alpha}  \exp{\left(\mathrm{i} \psi\right)} $.
Therefore, 
\begin{eqnarray}
\nonumber
& 2\Re\left\langle N^{2} x, G \left( t \right) x\right\rangle 
+\sum_{\ell=1}^{\infty }\left\Vert N L_{\ell} \left( t \right) x \right\Vert ^{2}
\\
\nonumber
&  =
2 \, g \, \mathrm{i} \langle \left(  a^{\dag} - a \right) x , N  \otimes \sigma^x  x \rangle
- g \, \mathrm{i} \langle x ,  \left(  a^{\dag} + a \right)   \otimes \sigma^x  x \rangle
\\
& \quad 
+
\langle x, -\frac{1}{2}  L_{1}^{\ast} L_{1} + L_{1}^{\ast}  \left(  z \, a^{\dag} - \bar{z} \, a \right) x \rangle
+
\langle  \left(  \bar{z}  \, a - z \, a^{\dag} \right) L_{1} x , N \, x  \rangle
\label{eq:Ex2_H2.3}
\end{eqnarray}
for all $x \in \mathcal{D}\left( N^3 \right) $.
The operators appearing in the right-hand side of (\ref{eq:Ex2_H2.3})
are bounded operators from $\left( N, \left\Vert \cdot \right\Vert_N \right)$ to $\mathfrak{h}$.
This gives the condition  H2.3 of Hypothesis \ref{hyp:CF}.

Assume that the initial density operator $\rho_{0} $ satisfies 
\begin{equation}
\label{eq:Descomp_inicial_Ex2}
\fl
\rho_{0} = \sum_{k = 1}^{\infty} p_k \left\vert \phi^k_0  \rangle\langle \phi^k_0  \right\vert 
\; \mathrm{with} \;
\left\Vert   \phi^k_0 \right\Vert _{  \ell^2 \left(\mathbb{Z}_+ \right)\otimes \mathbb{C}^{2} } = 1
\; \mathrm{and} \;
\sum_{k = 0}^{\infty} p_k \mathbb{E} \left\Vert  N  \phi^k_0 \left( x \right)
\right\Vert^2 _{  \ell^2 \left(\mathbb{Z}_+ \right)\otimes \mathbb{C}^{2} }
<
+ \infty .
\end{equation}
Since 
$
\mathbb{E} \, \Tr \left( N  \varrho_0   N  \right)     < \infty 
$,
applying Theorems \ref{th:EyU-SIWE}, \ref{th:Def_Model1} and  \ref{th:Invarianza_ci} we obtain
that  Model \ref{Modelo1} is well-defined for Example \ref{ex:RadF} under the condition (\ref{eq:Descomp_inicial_Ex2}).
According to Theorem \ref{th:Existence_SQME} we have that 
 $\rho_{t}$, defined by (\ref{eq:Def_rho}),  satisfies (\ref{eq:SQME}).

\section{Proofs}
\label{sec:Proofs}

\begin{notation}
 We will use the symbols  $K$ and $K \left( \cdot \right)$ to denote no-negative constants
and  non-decreasing  non-negative functions on $\left[ 0, \infty \right[$, respectively.
We write  $\mathbb{I}$ instead of  $\left[ 0,\infty \right[ $ or $\left[ 0,T\right] $  with $T\in \mathbb{R}_{+}$.
\end{notation}

\subsection{Proof of Theorem \ref{th:EyU-SIWE}}
\label{sec:th:EyU-SIWE}

For the sake of completeness,
we start by studying the main properties of the linear operators in $L^2 \left( \mathbb{N} ;  \mathfrak{h} \right)$.
Section \ref{sec:SSE_L2} is devoted to the linear and non-linear  stochastic Schr\"odinger equations on $L^2 \left( \mathbb{N} ;  \mathfrak{h} \right)$
that are strongly related to the stochastic system \ref{st:SIWE}.
Finally,
we prove  Theorem \ref{th:EyU-SIWE}.
It is worth pointing out that
we also use the results of Sections \ref{sec:L2Nh} and \ref{sec:SSE_L2} 
in the proof of Theorems  \ref{th:Invarianza_ci}, \ref{th:Model}  and \ref{th:Existence_SQME}.

\subsubsection{The space of all square-summable sequences of elements of $ \mathfrak{h}$.}
\label{sec:L2Nh}

Set 
\[
L^2 \left( \mathbb{N} ;  \mathfrak{h} \right)
= 
\left\{ 
f : \mathbb{N} \rightarrow  \mathfrak{h} : \sum_{n =1}^{\infty}  \left\Vert f  \left( n \right) \right\Vert^2    < + \infty
\right\} .
\]
Then, $ L^2 \left( \mathbb{N} ;  \mathfrak{h} \right) $ is the 
space of all square-summable sequences of elements of $ \mathfrak{h}$.
For any $f,g \in L^2 \left( \mathbb{N} ;  \mathfrak{h} \right) $ we define
$
\langle f , g  \rangle_{L^2 \left( \mathbb{N} ;  \mathfrak{h} \right)} 
=
\sum_{n =1}^{\infty} \langle   f \left( n \right)  ,  g \left( n \right) \rangle 
$,
and so 
$ \left( L^2 \left( \mathbb{N} ;  \mathfrak{h} \right), \left\langle \cdot,\cdot \right\rangle_{L^2 \left( \mathbb{N} ;  \mathfrak{h} \right)} \right) $
is a separable Hilbert space.
We write  $ \langle f , g  \rangle $ instead of  $ \langle f , g  \rangle_{L^2 \left( \mathbb{N} ;  \mathfrak{h} \right)} $
when no confusion can arise.
From the definition of $  L^2 \left( \mathbb{N} ;  \mathfrak{h} \right) $ we obtain the following lemma.

\begin{lemma}
\label{le:L1}
Suppose that  $ x,y \in L^2 \left( \mathbb{N} ;  \mathfrak{h} \right)  $.
Then,
the series $\sum_{n=1}^{\infty} \left\vert x  \left( n \right) \rangle\langle y  \left( n \right)\right\vert$
converges in $\mathfrak{L}_{1}\left( \mathfrak{h}\right)$,
and
$
tr \left( \sum_{n=1}^{\infty} \left\vert x  \left( n \right) \rangle\langle y  \left( n \right)\right\vert \right)
=
\sum_{n=1}^{\infty} \langle y  \left( n \right) ,  x  \left( n \right) \rangle
$.
\end{lemma}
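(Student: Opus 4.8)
The plan is to reduce the statement to elementary facts about rank-one operators, together with the completeness of the trace-class space and the continuity of the trace. First I would record that for any $\phi, \varphi \in \mathfrak{h}$ the operator $\left\vert \phi \rangle\langle \varphi \right\vert$ has rank at most one, hence lies in $\mathfrak{L}_{1}\left( \mathfrak{h}\right)$, and that its only possibly nonzero singular value equals $\Vert \phi \Vert \, \Vert \varphi \Vert$, so $\Vert\, \left\vert \phi \rangle\langle \varphi \right\vert \,\Vert_{\mathfrak{L}_{1}\left( \mathfrak{h}\right)} = \Vert \phi \Vert \, \Vert \varphi \Vert$. Expanding $\mathrm{tr}\left( \left\vert \phi \rangle\langle \varphi \right\vert \right) = \sum_{k} \langle e_k, \langle \varphi, e_k \rangle \, \phi \rangle$ over an orthonormal basis $\left( e_k \right)$ and applying Parseval's identity gives $\mathrm{tr}\left( \left\vert \phi \rangle\langle \varphi \right\vert \right) = \langle \varphi, \phi \rangle$; here I would be careful that the inner product is antilinear in its first argument, as fixed in the Introduction. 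Specialising to $\phi = x\left( n \right)$ and $\varphi = y\left( n \right)$ yields $\Vert\, \left\vert x\left( n \right) \rangle\langle y\left( n \right) \right\vert \,\Vert_{\mathfrak{L}_{1}\left( \mathfrak{h}\right)} = \Vert x\left( n \right) \Vert \, \Vert y\left( n \right) \Vert$ and $\mathrm{tr}\left( \left\vert x\left( n \right) \rangle\langle y\left( n \right) \right\vert \right) = \langle y\left( n \right), x\left( n \right) \rangle$.

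Next I would establish absolute summability in trace norm. By the Cauchy--Schwarz inequality in $\ell^2\left( \mathbb{N} \right)$,
\[
\sum_{n=1}^{\infty} \Vert x\left( n \right) \Vert \, \Vert y\left( n \right) \Vert
\leq
\left( \sum_{n=1}^{\infty} \Vert x\left( n \right) \Vert^2 \right)^{1/2}
\left( \sum_{n=1}^{\infty} \Vert y\left( n \right) \Vert^2 \right)^{1/2}
= \Vert x \Vert_{L^2 \left( \mathbb{N} ;  \mathfrak{h} \right)} \, \Vert y \Vert_{L^2 \left( \mathbb{N} ;  \mathfrak{h} \right)} < \infty ,
\]
since $x, y \in L^2 \left( \mathbb{N} ;  \mathfrak{h} \right)$. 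Thus the series $\sum_{n} \left\vert x\left( n \right) \rangle\langle y\left( n \right) \right\vert$ is absolutely convergent in the Banach space $\left( \mathfrak{L}_{1}\left( \mathfrak{h}\right), \Vert \cdot \Vert_{\mathfrak{L}_{1}\left( \mathfrak{h}\right)} \right)$, and by completeness its partial sums converge in trace norm to some $S \in \mathfrak{L}_{1}\left( \mathfrak{h}\right)$, which is the asserted limit.

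Finally I would identify the trace. Since $\left\vert \mathrm{tr}\left( A \right) \right\vert \leq \Vert A \Vert_{\mathfrak{L}_{1}\left( \mathfrak{h}\right)}$, the trace is a continuous linear functional on $\mathfrak{L}_{1}\left( \mathfrak{h}\right)$; interchanging it with the trace-norm limit gives $\mathrm{tr}\left( S \right) = \lim_{N} \sum_{n=1}^{N} \mathrm{tr}\left( \left\vert x\left( n \right) \rangle\langle y\left( n \right) \right\vert \right) = \sum_{n=1}^{\infty} \langle y\left( n \right), x\left( n \right) \rangle$, the last series converging absolutely by the same Cauchy--Schwarz bound. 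There is no substantial obstacle in this argument; the only points deserving attention are the convention-dependent evaluation $\mathrm{tr}\left( \left\vert \phi \rangle\langle \varphi \right\vert \right) = \langle \varphi, \phi \rangle$ and the use of completeness of $\mathfrak{L}_{1}\left( \mathfrak{h}\right)$ to pass from absolute to ordinary convergence.
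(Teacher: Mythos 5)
Your proof is correct and follows essentially the same route as the paper's: bound the trace norm of each rank-one term by $\Vert x\left( n \right)\Vert\,\Vert y\left( n \right)\Vert$, deduce absolute convergence of the series in the Banach space $\mathfrak{L}_{1}\left( \mathfrak{h}\right)$, and identify the trace of the limit by continuity of the trace functional. The only cosmetic difference is that you sum $\Vert x\left( n \right)\Vert\,\Vert y\left( n \right)\Vert$ via Cauchy--Schwarz, whereas the paper uses the elementary bound $ab\leq\frac{1}{2}\left( a^{2}+b^{2}\right)$; both give the same conclusion.
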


\begin{proof}
Since 
$
\left\Vert \,  \left\vert x \left( n \right) \rangle\langle y  \left( n \right) \right\vert \, \right\Vert_{  \mathfrak{L}_{1}\left( \mathfrak{h}\right) }
=
\left\Vert  x  \left( n \right) \right\Vert \left\Vert  y  \left( n \right) \right\Vert
$,
\[
\sum_{n=1}^{\infty}
\left\Vert \,  \left\vert x \left( n \right) \rangle\langle y  \left( n \right) \right\vert \, \right\Vert_{  \mathfrak{L}_{1}\left( \mathfrak{h}\right) }
\leq
\frac{1}{2} \sum_{n=1}^{\infty} \left\Vert  x  \left( n \right) \right\Vert^2 
+
\frac{1}{2} \sum_{n=1}^{\infty} \left\Vert  y  \left( n \right) \right\Vert^2 
< \infty .
\]
Therefore,
$
\left( \sum_{n=1}^{N} \left\vert x  \left( n \right) \rangle\langle y  \left( n \right)\right\vert \right)_{N \in \mathbb{N}}
$
converges in the $\mathfrak{L}_{1}\left( \mathfrak{h}\right)$-norm to a trace-class operator $\varrho$,
which is denoted by $\sum_{n=1}^{\infty} \left\vert x  \left( n \right) \rangle\langle y  \left( n \right)\right\vert$.
This yields
\[
tr \left( \varrho \right)
=
\lim_{N \rightarrow + \infty}
tr \left( \sum_{n=1}^{N} \left\vert x  \left( n \right) \rangle\langle y  \left( n \right)\right\vert \right)
=
\sum_{n=1}^{\infty} \langle y  \left( n \right) ,  x  \left( n \right) \rangle .
\]
\end{proof}

Applying standard arguments yields the next two lemmas.

\begin{definition}
\label{def:Operador_en_L2}
Let $A: \mathcal{D} \left( A \right) \subset  \mathfrak{h} \rightarrow \mathfrak{h} $ be a linear operador in $ \mathfrak{h} $. 
The linear operador  $\widetilde{A} : \mathcal{D} \left( \widetilde{A} \right) \subset   L^2 \left( \mathbb{N} ;  \mathfrak{h} \right)  
\rightarrow L^2 \left( \mathbb{N} ;  \mathfrak{h} \right)$ is defined by
\[
 \left(  \widetilde{A} f \right)  \left(  n \right)  =  A f  \left(  n \right)  
 \qquad  \mathrm{for}\ \mathrm{all}\;  n \in  \mathbb{N} 
\]
for any $f$ belonging to 
\[
\fl
\mathcal{D} \left( \widetilde{A} \right)  
=
\left\{ 
f \in  L^2 \left( \mathbb{N} ;  \mathfrak{h} \right) :   f  \left( n \right) \in \mathcal{D} \left(A\right)  
\; \mathrm{for}\ \mathrm{all}\;  n \in \mathbb{N}, \mathrm{and} \;
\sum_{n =1}^{\infty}  \left\Vert A f  \left( n \right) \right\Vert^2    < + \infty
\right\} .
\]

\end{definition}

\begin{lemma}
\label{lem:ClosedOperator}
Let  $A: \mathcal{D} \left( A \right) \subset  \mathfrak{h} \rightarrow \mathfrak{h} $
be a closed operator.
Then
$\widetilde{A} $ is closed.
\end{lemma}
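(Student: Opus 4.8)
The plan is to verify closedness directly from the definition: I will take an arbitrary sequence $(f_k)_{k \in \mathbb{N}}$ in $\mathcal{D}(\widetilde{A})$ such that $f_k \to f$ and $\widetilde{A} f_k \to g$ in $L^2(\mathbb{N}; \mathfrak{h})$, and show that $f \in \mathcal{D}(\widetilde{A})$ with $\widetilde{A} f = g$. The whole argument rests on the elementary observation that convergence in the $L^2(\mathbb{N}; \mathfrak{h})$-norm is stronger than coordinatewise convergence in $\mathfrak{h}$.

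First I would pass to coordinates. Since $\sum_{n=1}^{\infty} \lVert f_k(n) - f(n) \rVert^2 \to 0$ and $\sum_{n=1}^{\infty} \lVert A f_k(n) - g(n) \rVert^2 \to 0$ as $k \to \infty$, every individual term tends to zero; hence for each fixed $n \in \mathbb{N}$ we have $f_k(n) \to f(n)$ and $A f_k(n) = (\widetilde{A} f_k)(n) \to g(n)$ in $\mathfrak{h}$. Because $A$ is closed, this forces $f(n) \in \mathcal{D}(A)$ and $A f(n) = g(n)$ for every $n \in \mathbb{N}$.

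It then remains to check that $f$ lies in the domain $\mathcal{D}(\widetilde{A})$ as specified in Definition \ref{def:Operador_en_L2}. The limit $f$ already belongs to $L^2(\mathbb{N}; \mathfrak{h})$, and I have just shown $f(n) \in \mathcal{D}(A)$ for all $n$. The square-summability requirement $\sum_{n=1}^{\infty} \lVert A f(n) \rVert^2 < +\infty$ is immediate, since $A f(n) = g(n)$ and $g \in L^2(\mathbb{N}; \mathfrak{h})$ gives $\sum_{n=1}^{\infty} \lVert A f(n) \rVert^2 = \sum_{n=1}^{\infty} \lVert g(n) \rVert^2 < +\infty$. Consequently $f \in \mathcal{D}(\widetilde{A})$ and $(\widetilde{A} f)(n) = A f(n) = g(n)$ for all $n$, i.e. $\widetilde{A} f = g$, which is precisely closedness.

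I do not anticipate any genuine obstacle here; this is the kind of routine verification the authors flag with the phrase \emph{standard arguments}. The only point deserving a moment of care is the passage from norm convergence in $L^2(\mathbb{N}; \mathfrak{h})$ to termwise convergence in $\mathfrak{h}$, and the observation that the domain condition $\sum_n \lVert A f(n) \rVert^2 < \infty$ is automatically inherited from $g \in L^2(\mathbb{N}; \mathfrak{h})$ rather than needing a separate estimate.
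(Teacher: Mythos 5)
Your proof is correct and follows essentially the same route as the paper's: pass from norm convergence in $L^2\left( \mathbb{N} ; \mathfrak{h} \right)$ to coordinatewise convergence, apply the closedness of $A$ in each coordinate, and conclude membership in $\mathcal{D}\left( \widetilde{A} \right)$. You even make explicit the square-summability step $\sum_{n=1}^{\infty} \left\Vert A f\left( n \right) \right\Vert^2 = \sum_{n=1}^{\infty} \left\Vert g\left( n \right) \right\Vert^2 < \infty$, which the paper leaves implicit.
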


\begin{proof}
Suppose that  the sequence $f_k \in \mathcal{D} \left( \widetilde{A} \right)  $ satisfies
$
f_k  \rightarrow_{k \rightarrow + \infty } f 
$
and
$
 \widetilde{A} \, f_k  \rightarrow_{k \rightarrow + \infty } g
$.
Then,
$ f_k  \left( n \right) \rightarrow_{k \rightarrow + \infty } f  \left( n \right) $
and
$ A \, f_k   \left( n \right) \rightarrow_{k \rightarrow + \infty } g  \left( n \right) $
for all $ n \in \mathbb{N}$.
Since $A$ is closed, $  f  \left( n \right) \in  \mathcal{D} \left( A \right) $ and $  A \, f  \left( n \right)  = g  \left( n \right)$.
This implies $ f \in  \mathcal{D} \left( \widetilde{A} \right) $ and $  \widetilde{A} \, f = g $.
\end{proof}

\begin{lemma}
\label{lem:Core}
Let  $\mathfrak{D}$ be a core of a closed operator 
$A: \mathcal{D} \left( A \right) \subset  \mathfrak{h} \rightarrow \mathfrak{h} $.
Then,
\[
\widetilde{ \mathfrak{D} }
=
\left\{ 
f \in  \mathcal{D} \left( \widetilde{A} \right)   :   f  \left( n \right) \in  \mathfrak{D} 
\quad   \mathrm{for}\ \mathrm{all}\;  n \in \mathbb{N}
\right\} .
\]
is a core of $\widetilde{A} $.
\end{lemma}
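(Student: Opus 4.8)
The plan is to exploit the fact that, by Lemma \ref{lem:ClosedOperator}, the operator $\widetilde{A}$ is closed, so that $\widetilde{\mathfrak{D}}$ is a core of $\widetilde{A}$ precisely when $\widetilde{\mathfrak{D}}$ is dense in $\mathcal{D}(\widetilde{A})$ with respect to the graph norm $\|f\|_{\widetilde{A}}^2 = \|f\|^2 + \|\widetilde{A} f\|^2$. The key observation is that this graph norm decouples over the coordinates: for any $f \in \mathcal{D}(\widetilde{A})$,
\[
\|f\|_{\widetilde{A}}^2 = \sum_{n=1}^{\infty} \left( \|f(n)\|^2 + \|A f(n)\|^2 \right) = \sum_{n=1}^{\infty} \|f(n)\|_A^2 ,
\]
where $\|\cdot\|_A$ denotes the graph norm of $A$ on $\mathfrak{h}$. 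Thus approximating $f$ in the graph norm of $\widetilde{A}$ amounts to approximating its coordinates simultaneously in the graph norm of $A$, with control on the tail of the resulting series.

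With this in hand, I would fix $f \in \mathcal{D}(\widetilde{A})$ and $\varepsilon > 0$ and construct an approximant $g \in \widetilde{\mathfrak{D}}$ in two stages. First, since $\sum_{n} \|f(n)\|_A^2 < \infty$, I choose $N$ with $\sum_{n > N} \|f(n)\|_A^2 < \varepsilon^2/4$ and truncate, setting $f^N(n) = f(n)$ for $n \leq N$ and $f^N(n) = 0$ otherwise; then $\|f - f^N\|_{\widetilde{A}} < \varepsilon/2$. Second, using that $\mathfrak{D}$ is a core of $A$, for each $n \leq N$ I pick $g_n \in \mathfrak{D}$ with $\|f(n) - g_n\|_A < \varepsilon/(2\sqrt{N})$, and define $g(n) = g_n$ for $n \leq N$ and $g(n) = 0$ for $n > N$. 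A triangle inequality then gives $\|f - g\|_{\widetilde{A}} < \varepsilon$, establishing the graph-norm density.

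Two minor points must be checked to close the argument, neither of which is a genuine obstacle. The approximant $g$ has only finitely many nonzero coordinates, each lying in $\mathcal{D}(A)$, so $\sum_n \|A g(n)\|^2 < \infty$ and hence $g \in \mathcal{D}(\widetilde{A})$; moreover every coordinate of $g$ lies in $\mathfrak{D}$, where for $n > N$ one uses that $0 \in \mathfrak{D}$ since a core is a linear subspace. Consequently $g \in \widetilde{\mathfrak{D}}$, as required. The proof is entirely elementary; the only structural idea is the coordinatewise splitting of the graph norm, after which the tail-truncation and the componentwise use of the core property of $\mathfrak{D}$ are routine.
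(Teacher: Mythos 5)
Your proof is correct, and it reaches the conclusion by a mildly different mechanism than the paper's. Both arguments rest on the same two pillars: the coordinatewise decoupling of the graph norm, $\left\Vert f \right\Vert^2 + \left\Vert \widetilde{A} f \right\Vert^2 = \sum_{n} \left( \left\Vert f(n) \right\Vert^2 + \left\Vert A f(n) \right\Vert^2 \right)$, and the componentwise use of the core property of $\mathfrak{D}$; both also rely (explicitly in your case, implicitly in the paper's) on Lemma \ref{lem:ClosedOperator} to identify ``core of $\widetilde{A}$'' with ``graph-norm dense subspace of $\mathcal{D}(\widetilde{A})$.'' The difference is in how the infinite sum is controlled. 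You truncate: cut the tail at a finite $N$ where it is small, approximate the remaining $N$ coordinates in the graph norm of $A$, and set the tail coordinates to zero, which requires observing that $0 \in \mathfrak{D}$ because a core is a linear subspace. The paper instead approximates \emph{all} coordinates simultaneously: for each $n$ it selects a sequence $(x^n_k)_k \subset \mathfrak{D}$ converging to $f(n)$ in graph norm and arranged so that $\left\Vert x^n_k - f(n) \right\Vert + \left\Vert A x^n_k - A f(n) \right\Vert \leq 1/n$ uniformly in $k$, sets $f_k(n) = x^n_k$, and passes to the limit $k \to \infty$ by Lebesgue's dominated convergence theorem with respect to the counting measure, the dominating function being $n \mapsto 1/n^2$. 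Your route is slightly more elementary (no dominated convergence theorem, no need to arrange the uniform-in-$k$ bound), while the paper's never needs the fact $0 \in \mathfrak{D}$, since its approximants keep every coordinate nontrivial. Either argument is complete and both are short; the choice is a matter of taste.
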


\begin{proof}
Consider $f \in \mathcal{D} \left( \widetilde{A}  \right)$.
For every $n \in \mathbb{N}$ there exists a sequence $\left( x^n_k \right)_{k \in \mathbb{N}}$
of elements of $\mathfrak{D}$ such that
$ \left\Vert  x^n_k  - f  \left( n \right) \right\Vert +  \left\Vert A \, x^n_k  - A \, f  \left( n \right) \right\Vert 
\rightarrow_{k \rightarrow + \infty } 0$,
and
\[ 
\left\Vert  x^n_k  - f  \left( n \right) \right\Vert + \left\Vert A \, x^n_k  - A \, f  \left( n \right) \right\Vert \leq 1/n .
\]
We set $f_k \left( n \right) = x^n_k $ for all $k \in \mathbb{N}$.
Hence $f_k \in \widetilde{ \mathfrak{D} }$ for any $k \in \mathbb{N}$.
Using Lebesgue's dominated convergence theorem yields
\[
\int_{\mathbb{N} } \left( 
\left\Vert  f_k \left( n \right)  - f  \left( n \right) \right\Vert^2 + \left\Vert A \, f_k \left( n \right)  - A \, f  \left( n \right) \right\Vert^2
\right)  \mu \left( d \, n \right) 
\rightarrow_{k \rightarrow + \infty } 0 ,
\]
where
$
\mu = \sum_{n=1}^{+ \infty} \delta_{\left\{ n \right\}}
$.
This gives 
$
\left\Vert  f_k  - f  \right\Vert^2 + \left\Vert \widetilde{A}  \, f_k   - \widetilde{A}  \, f  \right\Vert^2
\rightarrow_{k \rightarrow + \infty } 0 
$,
and so $\widetilde{ \mathfrak{D} }$ is a core of $\widetilde{A} $.
\end{proof}

We now establish that the application $A \mapsto \widetilde{A}$ preserves the self-adjointness property.

\begin{lemma}
\label{lem:Adjunto}
 Suppose that  $A: \mathcal{D} \left( A \right) \subset  \mathfrak{h} \rightarrow \mathfrak{h} $  
 is a self-adjoint operator. 
 Then, $\widetilde{A}$ is self-adjoint. 
\end{lemma}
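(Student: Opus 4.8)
The plan is to show that $\widetilde{A}$ is symmetric and then that its domain already exhausts the domain of its adjoint, i.e.\ that $\mathcal{D}(\widetilde{A}^{\ast}) \subset \mathcal{D}(\widetilde{A})$; together with symmetry this yields $\widetilde{A} = \widetilde{A}^{\ast}$. First I would verify symmetry. For $f, g \in \mathcal{D}(\widetilde{A})$, using the definition of the inner product on $L^2\left( \mathbb{N} ;  \mathfrak{h} \right)$ and the self-adjointness of $A$ coordinatewise,
\[
\langle \widetilde{A} f, g \rangle = \sum_{n=1}^{\infty} \langle A f(n), g(n) \rangle = \sum_{n=1}^{\infty} \langle f(n), A g(n) \rangle = \langle f, \widetilde{A} g \rangle ,
\]
so $\widetilde{A} \subset \widetilde{A}^{\ast}$.

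The heart of the argument is the reverse domain inclusion, and the key device is to test the defining identity of the adjoint against functions concentrated on a single index. Given $g \in \mathcal{D}(\widetilde{A}^{\ast})$ with $h = \widetilde{A}^{\ast} g$, by definition $\langle \widetilde{A} f, g \rangle = \langle f, h \rangle$ for every $f \in \mathcal{D}(\widetilde{A})$. For a fixed $m \in \mathbb{N}$ and arbitrary $x \in \mathcal{D}(A)$, the function $f_{m,x}$ defined by $f_{m,x}(m) = x$ and $f_{m,x}(n) = 0$ for $n \neq m$ lies in $\mathcal{D}(\widetilde{A})$, since $\sum_{n} \left\Vert A f_{m,x}(n) \right\Vert^2 = \left\Vert A x \right\Vert^2 < \infty$. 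Substituting $f = f_{m,x}$ collapses both series to a single term, giving $\langle A x, g(m) \rangle = \langle x, h(m) \rangle$ for all $x \in \mathcal{D}(A)$. This is precisely the statement that $g(m) \in \mathcal{D}(A^{\ast})$ with $A^{\ast} g(m) = h(m)$; because $A$ is self-adjoint, $\mathcal{D}(A^{\ast}) = \mathcal{D}(A)$ and hence $A g(m) = h(m)$.

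Finally I would assemble these coordinatewise facts into membership in $\mathcal{D}(\widetilde{A})$. Since $h \in L^2\left( \mathbb{N} ;  \mathfrak{h} \right)$, we have $\sum_{n=1}^{\infty} \left\Vert A g(n) \right\Vert^2 = \sum_{n=1}^{\infty} \left\Vert h(n) \right\Vert^2 < \infty$, so by Definition \ref{def:Operador_en_L2} the function $g$ belongs to $\mathcal{D}(\widetilde{A})$ and $\widetilde{A} g = h$. Thus $\mathcal{D}(\widetilde{A}^{\ast}) \subset \mathcal{D}(\widetilde{A})$, which together with the symmetry $\widetilde{A} \subset \widetilde{A}^{\ast}$ gives $\widetilde{A} = \widetilde{A}^{\ast}$, as desired.

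I do not expect a genuine obstacle here; the argument is essentially routine. The only point requiring care is the interplay between the single-coordinate test functions and the identification $A^{\ast} = A$: one must check that these functions actually lie in $\mathcal{D}(\widetilde{A})$ so that they are admissible in the adjoint identity, which is immediate since each involves a single element of $\mathcal{D}(A)$, and one must invoke the self-adjointness of $A$ (not merely its symmetry) to pass from $g(m) \in \mathcal{D}(A^{\ast})$ to $g(m) \in \mathcal{D}(A)$.
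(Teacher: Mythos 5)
Your proof is correct and follows essentially the same route as the paper's: establish symmetry via the coordinatewise inner product, then test the adjoint identity against single-index functions to show each coordinate $g(m)$ lies in $\mathcal{D}(A^{\ast}) = \mathcal{D}(A)$ with $Ag(m) = h(m)$, and conclude $g \in \mathcal{D}(\widetilde{A})$ from square-summability of $h$. The only detail the paper records that you leave implicit is that $\mathcal{D}(\widetilde{A})$ is dense in $L^2\left( \mathbb{N} ; \mathfrak{h} \right)$ (so that $\widetilde{A}^{\ast}$ is well-defined), which is immediate from density of $\mathcal{D}(A)$ in $\mathfrak{h}$.
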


\begin{proof}
 Since  $ \mathcal{D} \left( A \right) $ is dense in  $\mathfrak{h}$,
 $\mathcal{D} \left( \widetilde{A} \right)   $ is dense in $ L^2 \left( \mathbb{N} ;  \mathfrak{h} \right)$.
 For any $f, g \in \mathcal{D} \left( \widetilde{A} \right) $,
 \[
 \langle f , \widetilde{A} \, g  \rangle 
=
\sum_{n =1}^{\infty} \langle   f \left( n \right)  ,  A \, g \left( n \right) \rangle
=
\sum_{n =1}^{\infty} \langle  A \, f \left( n \right)  ,   g \left( n \right) \rangle
=
 \langle \widetilde{A} \, f ,  g  \rangle .
 \]
 This gives 
 $  \widetilde{A} \subset   \left( \widetilde{A} \right)^{\ast}  $.

Suppose that $f \in \mathcal{D} \left( \left( \widetilde{A} \right)^{\ast}  \right)$.
Then, there exists an $g \in L^2 \left( \mathbb{N} ;  \mathfrak{h} \right) $ such that 
\[
\langle f ,   \widetilde{A} \, h  \rangle 
=
\langle  g ,    h \rangle 
\qquad  \mathrm{for}\; \mathrm{all}\;  h \in  \mathcal{D} \left(  \widetilde{A}  \right) .
\]
For any $k \in  \mathbb{N}$ and $x \in  \mathcal{D} \left( A  \right)$ we define
$
h_k \left( n \right) =
\cases{
 x & \quad if  $n = k$
 \\
 0 & \quad if  $n \neq k$
}
$.
Thus, 
\[
\langle f \left( k \right)  ,   A \, x  \rangle
=
\langle f ,   \widetilde{A} \, h_k  \rangle 
=
\langle  g ,    h_k \rangle 
=
\langle  g \left( k \right)  ,   x  \rangle  .
\]
Since $x$  is an arbitrary element of $ \mathcal{D} \left( A  \right)$,
$ f \left( k \right) \in  \mathcal{D} \left( A^{\ast}   \right) = \mathcal{D} \left( A \right)  $ 
and $g \left( k \right) = A^{\ast}  f \left( k \right) = A \, f \left( k \right) $.
Therefore,
$f \in \mathcal{D} \left( \widetilde{A}  \right)$,
and  $ \left( \widetilde{A} \right)^{\ast}  f = g =  \widetilde{A} f $.
This implies $ \left( \widetilde{A} \right)^{\ast}  \subset \widetilde{A}$.
\end{proof}

Finally, we study the closure of $ A \, \sum_{n=1}^{\infty} \left\vert x  \left( n \right) \rangle\langle y  \left( n \right)\right\vert \, B$
with $A, B^{\ast}$  relatively bonded with respect to $C$.

\begin{lemma}
\label{le:A_rho_B}
Let $C$ be a self-adjoint non-negative operator $C$ in $\mathfrak{h}$.
Suppose that $A$ is a linear operator in $\mathfrak{h}$
such that $\mathcal{D}\left(C \right) \subset \mathcal{D}\left( A \right)$
and 
\begin{equation}
 \label{eq:Cota_A}
 \left\Vert A \, u  \right\Vert^{2} \leq K   \left( \left\Vert  u  \right\Vert^{2} + \left\Vert  C u  \right\Vert^{2} \right)
 \qquad 
 \mathrm{for} \ \mathrm{all} \; u \in \mathcal{D}\left( C\right) .
\end{equation}
Assume that $B$ is a densely defined linear operator in  $\mathfrak{h}$ 
satisfying $\mathcal{D}\left(C \right) \subset \mathcal{D}\left( B^{\ast}  \right)$.
For any $ x,y \in \mathcal{D} \left( \widetilde{C} \right)  $ we define 
$
\varrho
=
\sum_{n=1}^{\infty} \left\vert x  \left( n \right) \rangle\langle y  \left( n \right)\right\vert
$.
Then,
the unique bounded extension of $A \rho B$ 
is equal to $\sum_{n=1}^{\infty} \left\vert A \, x  \left( n \right) \rangle\langle B^{\ast}  y  \left( n \right)\right\vert$,
which belongs to  $\mathfrak{L}_{1}\left( \mathfrak{h}\right)$.
\end{lemma}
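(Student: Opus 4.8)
The plan is to verify that the candidate operator $T = \sum_{n=1}^{\infty} |A\,x(n)\rangle\langle B^{\ast} y(n)|$ is trace class and then to show that it coincides with $A\varrho B$ on a dense domain; since $T$ is bounded, this identifies it as the unique bounded extension. First I would check that $(A x(n))_{n}$ and $(B^{\ast} y(n))_{n}$ belong to $L^2(\mathbb{N};\mathfrak{h})$, so that Lemma \ref{le:L1} applies and yields $T \in \mathfrak{L}_{1}\left( \mathfrak{h}\right)$. For $A$ this is immediate from the relative bound (\ref{eq:Cota_A}) and $x \in \mathcal{D}(\widetilde{C})$, since $\sum_n \|A x(n)\|^2 \le K \sum_n ( \|x(n)\|^2 + \|C x(n)\|^2 ) < \infty$. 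For $B^{\ast}$ I first need an analogous relative bound; I would obtain it from the closed graph theorem, viewing $B^{\ast}$ (which is closed, being the adjoint of a densely defined operator) as an everywhere-defined operator from the Banach space $(\mathcal{D}(C), \|\cdot\|_C)$ into $\mathfrak{h}$, where $\|u\|_C^2 = \|u\|^2 + \|Cu\|^2$: its graph is closed, hence it is bounded, giving $\|B^{\ast} u\|^2 \le K(\|u\|^2 + \|Cu\|^2)$ and therefore $\sum_n \|B^{\ast} y(n)\|^2 < \infty$.

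The core of the argument is to compute $A\varrho B$ on the dense domain $\mathcal{D}(B)$. Fix $u \in \mathcal{D}(B)$ and set $c_n = \langle y(n), Bu\rangle = \langle B^{\ast} y(n), u\rangle$, using $y(n) \in \mathcal{D}(B^{\ast})$; then $(c_n)_n \in \ell^2$ because $|c_n| \le \|B^{\ast} y(n)\|\,\|u\|$. Since the truncations $\varrho_N = \sum_{n=1}^{N}|x(n)\rangle\langle y(n)|$ converge to $\varrho$ in trace norm, hence in operator norm, the partial sums $S_N = \sum_{n=1}^{N} c_n x(n) = \varrho_N (Bu)$ converge in $\mathfrak{h}$ to $\varrho (Bu)$. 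The key observation is that $\varrho Bu$ actually lies in $\mathcal{D}(C)$: by Cauchy--Schwarz $\sum_n |c_n|\,\|C x(n)\| < \infty$, as both $(c_n)_n$ and $(\|C x(n)\|)_n$ are square-summable, so $C S_N$ converges; as $C$ is self-adjoint, hence closed, I conclude $\varrho Bu \in \mathcal{D}(C) \subset \mathcal{D}(A)$ and $S_N \to \varrho Bu$ in the graph norm of $C$.

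It then remains to push $A$ through the limit. Because $A$ satisfies (\ref{eq:Cota_A}), graph-norm convergence $S_N \to \varrho Bu$ forces $A S_N \to A \varrho Bu$; on the other hand $A S_N = \sum_{n=1}^{N} c_n A x(n)$ converges absolutely (again since $(c_n)_n$ and $(\|A x(n)\|)_n$ are square-summable) to $\sum_n \langle B^{\ast} y(n), u\rangle A x(n) = T u$. Hence $A\varrho B u = T u$ for every $u \in \mathcal{D}(B)$, and since $\mathcal{D}(B)$ is dense and $T$ is bounded, $T$ is the unique bounded extension of $A\varrho B$. I expect the main obstacle to be exactly this interchange of $A$ with the infinite summation: it is not a formal manipulation, because $A$ need not be closed, and it is made rigorous only through the two-step mechanism above, namely closedness of $C$ to place $\varrho Bu$ in $\mathcal{D}(C)\subset \mathcal{D}(A)$, followed by the relative $C$-boundedness of $A$ to transfer graph-norm convergence to convergence of $A S_N$.
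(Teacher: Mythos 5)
Your proposal is correct and follows essentially the same route as the paper's proof: the closed graph theorem yields the relative $C$-bound for $B^{\ast}$, truncations $\varrho_N$ together with the closedness of $C$ place $\varrho B u$ in $\mathcal{D}\left( C \right)$, the bound (\ref{eq:Cota_A}) then upgrades graph-norm convergence to convergence of $A \varrho_N B u$, and Lemma \ref{le:L1} gives that the limiting operator is trace class. The only (harmless) difference is that you run the argument for $u \in \mathcal{D}\left( B \right)$, where $Bu$ is unambiguously defined, whereas the paper phrases it for $z \in \mathcal{D}\left( C \right)$; your choice of dense domain is in fact the slightly cleaner one.
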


\begin{proof}
The operator $B^{\ast}$ is a closed operator in $\mathfrak{h}$,
because $B$ is densely defined.
Since $\mathcal{D}\left(C \right) \subset \mathcal{D}\left( B^{\ast}  \right)$,
using the closed graph theorem yields 
\begin{equation}
 \label{eq:Cota_B}
 \left\Vert B^{\ast} \, u  \right\Vert^{2} \leq K   \left( \left\Vert  u  \right\Vert^{2} + \left\Vert  C u  \right\Vert^{2} \right) 
\qquad  \mathrm{for}\ \mathrm{all}\;  u \in \mathcal{D}\left( C\right) .
\end{equation}
Take
$
\varrho_N
=
\sum_{n=1}^{N} \left\vert x  \left( n \right) \rangle\langle y  \left( n \right)\right\vert
$.
Let  $z \in \mathcal{D}\left( C\right)  $.
Since $ x,y \in \mathcal{D} \left( \widetilde{C} \right)  $,
\begin{equation}
 \label{eq:Conv_A}
 \fl
 A \, \varrho_N B \, z
=
\sum_{n=1}^N \langle y \left( n \right) , B \, z \rangle \, A \,  x \left( n \right) 
\longrightarrow_{N \rightarrow + \infty}
\sum_{n=1}^{+ \infty } \langle B^{\ast} y \left( n \right) ,  z \rangle \, A \,  x \left( n \right) .
\end{equation}
Therefore,
$
\lim_{N \rightarrow + \infty } C \, \varrho_N B \, z 
= 
\sum_{n=1}^{+ \infty } \langle B^{\ast} y \left( n \right) ,  z \rangle \, C \,  x \left( n \right)
$.
Since 
$ \lim_{N \rightarrow + \infty } \varrho_N B \, z  = \varrho  B \, z $
and 
$C$ is closed,
$  \varrho  B \, z \in  \mathcal{D} \left( C \right)  $ and
$
C \varrho  B \, z = \sum_{n=1}^{+ \infty } \langle B^{\ast} y \left( n \right) ,  z \rangle \, C \,  x \left( n \right) 
$.
Applying (\ref{eq:Cota_A}) and (\ref{eq:Conv_A}) we deduce that
\[
A \, \varrho B \, z
=
\sum_{n=1}^{+ \infty } \langle B^{\ast} y \left( n \right) ,  z \rangle \, A \,  x \left( n \right) .
\]
Combining Lemma \ref{le:L1}, (\ref{eq:Cota_A}) and (\ref{eq:Cota_B}) we obtain that
$
\sum_{n=1}^{\infty} \left\vert A \, x  \left( n \right) \rangle\langle B^{\ast}  y  \left( n \right)\right\vert
\in 
\mathfrak{L}_{1}\left( \mathfrak{h}\right)
$.
Since $\mathcal{D} \left( C \right)$ is dense in $\mathfrak{h}$,
the unique bounded extension of $A \rho B$ 
is equal to $\sum_{n=1}^{\infty} \left\vert A \, x  \left( n \right) \rangle\langle B^{\ast}  y  \left( n \right)\right\vert$.
\end{proof}

\subsubsection{ Stochastic Schr\"odinger equations on $L^2 \left( \mathbb{N} ;  \mathfrak{h} \right)$.}
\label{sec:SSE_L2}

We first examine the  linear stochastic Schr\"odinger equation {\color{blue} (\ref{eq:LSSE_e})},
where 
the map $A \mapsto \widetilde{ A }$ is given by Definition \ref{def:Operador_en_L2},
$B^1, B^2, \ldots$ are real-valued independent Wiener processes on a filtered complete probability 
space $\left( \Omega ,\mathfrak{F}, \left(\mathfrak{F}_{t}\right) _{t\geq 0},\mathbb{Q}\right) $,
and
$x_{t} $ is a $L^2 \left( \mathbb{N} ;  \mathfrak{h} \right)$-valued adapted stochastic process  with continuous sample paths.

\begin{lemma}
\label{lem:hyp-CF-e}
Suppose that Hypothesis \ref{hyp:CF} is fulfilled.
Choose 
\[
\widetilde{ \mathfrak{D}_{1} }
=
\left\{ 
f \in  \mathcal{D} \left( \widetilde{C^2} \right)   :   
f  \left( n \right) \in  \mathfrak{D_1} \quad  \mathrm{for}\ \mathrm{all}\;  n \in \mathbb{N}
\right\}.
\]
Then, 
Hypothesis \ref{hyp:CF} is valid when we replace 
$\mathfrak{h}$, $C$, $G \left( \cdot \right)$, $L_{\ell} \left( \cdot \right)$ and $\mathfrak{D}_{1}$
by 
$L^2 \left( \mathbb{N} ;  \mathfrak{h} \right)$, $\widetilde{C}$, $\widetilde{ G \left( \cdot \right)}$,  $\widetilde{ L_{\ell} \left( \cdot \right)}$ and $ \widetilde{ \mathfrak{D}_{1} } $, respectively.
\end{lemma}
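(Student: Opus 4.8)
The plan is to check every ingredient of Hypothesis~\ref{hyp:CF} for the tilde operators component by component, reducing each to the corresponding property of $C$, $G\left( \cdot \right)$ and $L_{\ell}\left( \cdot \right)$ on $\mathfrak{h}$ and then summing over $n \in \mathbb{N}$. I would first record the structural facts about $\widetilde{C}$. Self-adjointness is immediate from Lemma~\ref{lem:Adjunto}, and non-negativity follows from $\langle f, \widetilde{C} f\rangle = \sum_{n=1}^{\infty} \langle f\left( n\right), C f\left( n \right)\rangle \geq 0$. Next I would show $\left( \widetilde{C}\right)^2 = \widetilde{C^2}$: the two operators act identically, and their domains coincide because the interpolation inequality $\left\Vert C x\right\Vert^2 = \langle x, C^2 x\rangle \leq \frac{1}{2}\left( \left\Vert x\right\Vert^2 + \left\Vert C^2 x\right\Vert^2\right)$ shows that $\sum_n \left\Vert C^2 f\left( n\right)\right\Vert^2 < \infty$ already forces $\sum_n \left\Vert C f\left( n\right)\right\Vert^2 < \infty$. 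Since $C^2$ is self-adjoint, hence closed, and $\mathfrak{D}_1$ is a core of $C^2$, Lemma~\ref{lem:Core} then gives that $\widetilde{\mathfrak{D}_1}$ is a core of $\left( \widetilde{C}\right)^2$, which is precisely the core demanded in the tilde version of condition~H2.3.

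I would then dispatch the domain inclusion and condition~H2.1. Applying H2.1 termwise gives $\sum_n \left\Vert G\left( t\right) f\left( n\right)\right\Vert^2 \leq K\left( t\right)\left( \left\Vert f\right\Vert^2 + \left\Vert \widetilde{C} f\right\Vert^2\right) < \infty$ for $f \in \mathcal{D}\left( \widetilde{C}\right)$, which is at once the inclusion $\mathcal{D}\left( \widetilde{C}\right) \subset \mathcal{D}\left( \widetilde{G\left( t\right)}\right)$ and condition~H2.1 for $\widetilde{G\left( t\right)}$ with the same constant $K\left( t\right)$; combining H2.2 with H2.1 bounds $\sum_{\ell}\left\Vert L_{\ell}\left( t\right) x\right\Vert^2$ uniformly and yields $\mathcal{D}\left( \widetilde{C}\right) \subset \bigcap_{\ell}\mathcal{D}\left( \widetilde{L_{\ell}\left( t\right)}\right)$. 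For conditions~H2.2 and~H2.3 the essential manipulation is the interchange of $\sum_{\ell}$ and $\sum_n$; since the quantities $\left\Vert L_{\ell}\left( t\right) f\left( n\right)\right\Vert^2$ and $\left\Vert C L_{\ell}\left( t\right) f\left( n\right)\right\Vert^2$ are non-negative, Tonelli's theorem legitimates it, after which the inner expression in $n$ collapses to $0$ by the original H2.2, respectively is bounded by $\alpha\left( t\right)\left( \left\Vert f\left( n\right)\right\Vert^2 + \left\Vert C f\left( n\right)\right\Vert^2\right)$ by the original H2.3. Summing over $n$ produces the tilde identity $2\Re\langle f, \widetilde{G\left( t\right)} f\rangle + \sum_{\ell}\left\Vert \widetilde{L_{\ell}\left( t\right)} f\right\Vert^2 = 0$ and the tilde estimate with the same function $\alpha$. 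In H2.3 I would observe that the term $\sum_n 2\Re\langle C^2 f\left( n\right), G\left( t\right) f\left( n\right)\rangle$ converges absolutely by Cauchy--Schwarz, using $f \in \mathcal{D}\left( \widetilde{C^2}\right)$ and H2.1, so that splitting the combined series is legitimate and the residual non-negative series is finite.

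The step I expect to be the main obstacle is the measurability requirement of Hypothesis~\ref{hyp:L-G-C-def} for $\widetilde{G\left( \cdot\right)}\circ \pi_{\mathcal{D}\left( \widetilde{C}\right)}$ and $\widetilde{L_{\ell}\left( \cdot\right)}\circ \pi_{\mathcal{D}\left( \widetilde{C}\right)}$. I would first note that $\mathcal{D}\left( \widetilde{C}\right)$ is a Borel subset of $L^2\left( \mathbb{N};\mathfrak{h}\right)$: writing $\widetilde{C}_m$ for the bounded spectral truncations $\int_{\left[ 0,m\right]}\lambda\, dE_\lambda$ of $\widetilde{C}$, the map $f\mapsto \sup_m \left\Vert \widetilde{C}_m f\right\Vert$ is lower semicontinuous and $\mathcal{D}\left( \widetilde{C}\right)$ is exactly the Borel set where it is finite. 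On $\left[ 0,\infty\right[ \times \mathcal{D}\left( \widetilde{C}\right)$ I would realize $\widetilde{G\left( t\right)} f$ as the $L^2\left( \mathbb{N};\mathfrak{h}\right)$-limit of the finitely supported truncations $\left( G\left( t\right)\pi_{\mathcal{D}\left( C\right)}\left( f\left( 1\right)\right),\dots, G\left( t\right)\pi_{\mathcal{D}\left( C\right)}\left( f\left( N\right)\right),0,0,\dots\right)$; each coordinate map here is measurable, being the composition of the continuous projection $f\mapsto f\left( n\right)$ with the measurable map $G\left( \cdot\right)\circ \pi_{\mathcal{D}\left( C\right)}$ supplied by the original hypothesis, so every truncation is measurable and their pointwise limit is measurable. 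Gluing this with the value $0$ on the complement of $\mathcal{D}\left( \widetilde{C}\right)$, where $\pi_{\mathcal{D}\left( \widetilde{C}\right)}$ vanishes, yields the required measurability, two measurable maps defined on complementary Borel sets producing a measurable map; the argument for each $\widetilde{L_{\ell}\left( \cdot\right)}$ is verbatim the same.
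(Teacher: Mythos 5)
Your proof is correct, and for the bulk of the lemma (conditions H2.1, H2.2, H2.3, the domain inclusions, and the core property via Lemma \ref{lem:Core}) it follows essentially the same route as the paper: apply each hypothesis termwise in $n$, justify the interchange of $\sum_{\ell}$ and $\sum_{n}$ by non-negativity, and sum. Where you genuinely diverge is the measurability requirement of Hypothesis \ref{hyp:L-G-C-def}. The paper argues by weak measurability: it tests against an arbitrary $g \in L^2\left( \mathbb{N}; \mathfrak{h}\right)$, writes $\left( t,f\right) \mapsto \langle g, \widetilde{G\left( t\right)}\circ\pi_{\mathcal{D}( \widetilde{C})}\left( f\right)\rangle$ as a countable sum of measurable scalar functions, and then concludes strong (Borel) measurability from separability of the target space (Pettis-type reasoning). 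You instead exhibit $\widetilde{G\left( t\right)}\pi_{\mathcal{D}(\widetilde{C})}\left( f\right)$ directly as a pointwise limit of measurable finitely-supported truncations on the Borel set $\left[ 0,\infty\right[ \times \mathcal{D}( \widetilde{C})$ and glue with the zero map on the complement; this is more elementary in that it avoids any appeal to the equivalence of weak and strong measurability. Two of your side remarks also supply details the paper leaves implicit: the identity $( \widetilde{C})^2 = \widetilde{C^2}$, which the paper asserts without proof and which you justify via the interpolation inequality $\left\Vert Cx\right\Vert^2 \leq \frac{1}{2}\left( \left\Vert x\right\Vert^2 + \left\Vert C^2 x\right\Vert^2\right)$, and the Borel measurability of $\mathcal{D}( \widetilde{C})$ via spectral truncations — a fact the paper's own argument tacitly needs, since the indicator of $\mathcal{D}( \widetilde{C})$ appears in its scalar test functions. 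Both approaches deliver the same statement; yours is somewhat longer but self-contained on exactly the points where the paper is terse.
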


\begin{proof}
Since $C$ is self-adjoint,
applying Lemma \ref{lem:Adjunto} we obtain that $\widetilde{C}$ is a self-adjoint operator in 
$L^2 \left( \mathbb{N} ;  \mathfrak{h} \right)$.
From Condition H2.1 we obtain that for all $f \in \mathcal{D} \left( \widetilde{C} \right)  $,
\[
 \sum_{n =1}^{\infty}  \left\Vert   G  \left( t \right) f  \left( n \right) \right\Vert^2   
 \leq 
 K \left( t \right) \sum_{n =1}^{\infty} \left( \left\Vert   C \, f  \left( n \right) \right\Vert^2 
 +   \left\Vert  f  \left( n \right) \right\Vert^2  \right)  < + \infty .
\]
This yields  
$\mathcal{D}\left( \widetilde{ C } \right) \subset \mathcal{D}\left(  \widetilde{ G  \left( t \right) } \right)$
and
\[  
\left\Vert  \widetilde{ G  \left( t \right) } \, f  \right\Vert^{2} \leq  
K \left( t \right) \left( \left\Vert  \widetilde{ C } \, f  \right\Vert^{2} +  \left\Vert  f   \right\Vert^2 \right)
\qquad
\mathrm{for} \ \mathrm{all} \; f \in \mathcal{D} \left( \widetilde{C} \right)  .
\]

Combining Condition H2.2 with the Cauchy-Schwarz inequality gives
\[
\fl
\sum_{\ell=1}^{\infty }\left\Vert  L_{\ell} \left( t \right) x \right\Vert ^{2} 
=
-2\Re\left\langle  x, G \left( t \right) x \right\rangle 
\leq
2 \left\Vert  x   \right\Vert  \left\Vert  G \left( t \right) x  \right\Vert
\leq
 \left\Vert  x   \right\Vert^2 +  \left\Vert  G \left( t \right) x  \right\Vert^2 
\]
whenever  $x \in \mathcal{D}\left(C \right)$, and so
$
\sum_{\ell=1}^{\infty }\left\Vert  L_{\ell} \left( t \right) x \right\Vert ^{2} 
\leq
 K \left( t \right)   \left(   \left\Vert  C \, x  \right\Vert^{2} +  \left\Vert  x  \right\Vert^{2} \right) 
$.
Therefore,
$\mathcal{D}\left( \widetilde{ C } \right) \subset \mathcal{D}\left(  \widetilde{ L_{\ell}  \left( t \right) } \right)$,
and 
\[
\sum_{\ell=1}^{\infty }\left\Vert  \widetilde{L_{\ell} \left( t \right)} f \right\Vert ^{2} 
\leq
K \left( t \right) \left( \left\Vert  \widetilde{ C } \, f  \right\Vert^{2} +  \left\Vert  f   \right\Vert^2 \right) 
\qquad \mathrm{ for \ all \;} f \in \mathcal{D} \left( \widetilde{C} \right).
\]
Moreover,
according to Condition H2.2 we have that for any $f \in \mathcal{D} \left( \widetilde{C} \right)$,
\[
\sum_{n=1}^{\infty}
\left(
2\Re\left\langle  f  \left( n \right) ,  G \left( t \right)  f  \left( n \right) \right\rangle 
+\sum_{\ell=1}^{\infty }\left\Vert   L_{\ell} \left( t \right)  f  \left( n \right)  \right\Vert ^{2} 
\right)
=  0 .
\]
This gives
\[
 2\Re\left\langle  f,  \widetilde{ G \left( t \right) } f \right\rangle 
+\sum_{\ell=1}^{\infty }\left\Vert   \widetilde{ L_{\ell} \left( t \right) } f  \right\Vert ^{2} 
=
0
\qquad \mathrm{ for \ all \;}  f \in \mathcal{D} \left( \widetilde{C} \right) .
\]

The map $f \mapsto f \left( n \right)$ is a continuous function 
from $ L^2 \left( \mathbb{N} ;  \mathfrak{h} \right)$ to $ \mathfrak{h} $,
and so it is measurable.
Using Condition H1.2 we obtain that
$ \left( t, f \right) \mapsto G \left( t \right) \circ \pi _{C} \left(  f \left( n \right)  \right) $
is measurable from
$\left(  \left[ 0 , \infty \right[ \times L^2 \left( \mathbb{N} ;  \mathfrak{h} \right)  ,  \mathcal{B} \left(  \left[ 0 , \infty \right[ \times L^2 \left( \mathbb{N} ;  \mathfrak{h} \right)  \right) \right)  $
to
$ \left(  \mathfrak{h},  \mathcal{B} \left( \mathfrak{h} \right) \right) $.
Hence,
for any $g \in  L^2 \left( \mathbb{N} ;  \mathfrak{h} \right) $,
the function
\[
 \left( t, f \right) \mapsto
 \sum_{n=1}^{\infty}  \pi _{\widetilde{C}} \left(  f \right) \langle  g \left( n \right) , G \left( t \right) \circ \pi _{C} \left(  f \left( n \right)  \right) \rangle
 =
\langle  g  ,  G \left( t \right) \circ \pi _{ \widetilde{C} } \left(  f   \right) \rangle
\]
is  $ \mathcal{B} \left(  \left[ 0 , \infty \right[ \times L^2 \left( \mathbb{N} ;  \mathfrak{h} \right) \right) $-measurable.
This implies that
\[
\fl
\widetilde{G \left( \cdot \right)}  \circ \pi _{ \widetilde{C} } : 
\left(
 \left[ 0 , \infty \right[ \times L^2 \left( \mathbb{N} ;  \mathfrak{h} \right)  ,  \mathcal{B} \left(  \left[ 0 , \infty \right[ \times L^2 \left( \mathbb{N} ;  \mathfrak{h} \right)  \right) \right) 
\rightarrow 
 \left(  L^2 \left( \mathbb{N} ;  \mathfrak{h} \right) ,  \mathcal{B} \left(  L^2 \left( \mathbb{N} ;  \mathfrak{h} \right)\right) \right)
\]
is measurable.
Likewise, Condition H1.2 implies that 
\[
\fl
\widetilde{L_{\ell}  \left( \cdot \right)}  \circ \pi _{ \widetilde{C} } : 
\left(
 \left[ 0 , \infty \right[ \times L^2 \left( \mathbb{N} ;  \mathfrak{h} \right)  ,  \mathcal{B} \left(  \left[ 0 , \infty \right[ \times L^2 \left( \mathbb{N} ;  \mathfrak{h} \right)  \right) \right) 
\rightarrow 
  \left(  L^2 \left( \mathbb{N} ;  \mathfrak{h} \right) ,  \mathcal{B} \left(  L^2 \left( \mathbb{N} ;  \mathfrak{h} \right)\right) \right)
\]
is measurable.

Combining Lemma \ref{lem:Core} with Condition H1.4 we obtain that 
$ \widetilde{ \mathfrak{D}_{1} } $ is a core of $\widetilde{C^2} $.
Then
$ \widetilde{ \mathfrak{D}_{1} } $ is a core of $  \left(  \widetilde{C} \right)^2 $,
because  $\widetilde{C^2} =  \left(  \widetilde{C} \right)^2$.
Applying Condition H2.3 we obtain that for any $f \in  \widetilde{ \mathfrak{D}_{1} }$,
\begin{eqnarray*}
& 2\Re\left\langle \left(  \widetilde{C} \right)^2 f , \widetilde{ G \left( t \right) } f \right\rangle 
+\sum_{\ell=1}^{\infty }\left\Vert \widetilde{ C } \widetilde{ L_{\ell} \left( t \right) }  f \right\Vert ^{2}
\\
& =
\sum_{n = 1}^{+ \infty} \left(
2\Re \left\langle  C^{2}  f \left( n \right) , G \left( t \right)  f \left( n \right) \right\rangle 
+\sum_{\ell=1}^{\infty }\left\Vert C \, L_{\ell} \left( t \right)   f \left( n \right) \right\Vert ^{2}
\right)
\\
& \leq  \alpha \left( t \right) \sum_{n = 1}^{+ \infty} \left( 
\left\Vert C \, f \left( n \right) \right\Vert^{2} +  \left\Vert f \left( n \right) \right\Vert ^{2}
\right)
=
 \alpha \left( t \right) \left( \left\Vert \widetilde{ C } \, f   \right\Vert^{2} +  \left\Vert f  \right\Vert ^{2} \right) .
\end{eqnarray*}
\end{proof}

\begin{theorem}
\label{th:EyU-LSSE-e}
Let Hypothesis \ref{hyp:CF} hold true.
Assume that $x_0$ is a $\mathfrak{F}_{0}$-measurable random variable such that 
$x_0 \in \mathcal{D}\left( \widetilde{  C }\right) $  $\mathbb{Q}$-a.s.,
and  $\mathbb{E}_{\mathbb{Q}} \left(  \left\Vert  x_0 \right\Vert^{2} + \left\Vert \widetilde{ C } \, x_0 \right\Vert^{2}\right) <\infty $.
Then
(\ref{eq:LSSE_e}) has a unique strong $\widetilde{C}$-solution 
$\left( x_{t} \right) _{t \in \mathbb{I}}$ with initial datum $x_0$. 
Moreover, 
$\left(  \left\Vert   x_{t}   \right\Vert ^{2}\right) _{t \in \mathbb{I}} $ 
is a martingale, 
and
\[ 
\mathbb{E}_{\mathbb{Q}} \left( \left\Vert \widetilde{C} \, x_{t}  \right\Vert ^{2} \right)
\leq 
\exp \left( t \alpha \left( t \right) \right) \left( \mathbb{E}_{\mathbb{Q}} \left( \left\Vert \widetilde{C } \, x_0  \right\Vert ^{2} \right)
+ t \alpha\left( t \right) \mathbb{E}_{\mathbb{Q}} \left( \left\Vert x_0  \right\Vert ^{2} \right) \right)
\]
for all $t \in \mathbb{I}$.
\end{theorem}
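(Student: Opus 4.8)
The plan is to observe that the stochastic evolution equation (\ref{eq:LSSE_e}) is exactly the linear stochastic Schr\"odinger equation (\ref{eq:Linear_SSE}), but now posed in the separable Hilbert space $L^2 \left( \mathbb{N} ;  \mathfrak{h} \right)$ of Section \ref{sec:L2Nh}, with the coefficients $G \left( s \right)$, $L_{\ell} \left( s \right)$ and the regularizing operator $C$ replaced by $\widetilde{ G \left( s \right) }$, $\widetilde{ L_{\ell} \left( s \right) }$ and $\widetilde{C}$. Under this identification the notion of strong $\widetilde{C}$-solution of (\ref{eq:LSSE_e}) is precisely Definition \ref{def:regular-sol-LSIWE} read in $L^2 \left( \mathbb{N} ;  \mathfrak{h} \right)$, so that existence, uniqueness and regularity reduce to the abstract theory for (\ref{eq:Linear_SSE}) in a general separable Hilbert space developed in \cite{FagMora2013,MoraMC2004,MoraReAAP,MoraReIDAQP}, provided its standing assumptions hold in the new space.

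The first step is therefore to verify those standing assumptions. By Lemma \ref{lem:Adjunto} the operator $\widetilde{C}$ is self-adjoint and non-negative in $L^2 \left( \mathbb{N} ;  \mathfrak{h} \right)$, and Lemma \ref{lem:hyp-CF-e} shows that the data $\widetilde{C}$, $\widetilde{ G \left( \cdot \right) }$, $\widetilde{ L_{\ell} \left( \cdot \right) }$ satisfy Hypotheses \ref{hyp:L-G-C-def} and \ref{hyp:CF} with the same functions $K \left( \cdot \right)$ and $\alpha \left( \cdot \right)$ and the core $\widetilde{ \mathfrak{D}_{1} }$ of $\left( \widetilde{C} \right)^2 = \widetilde{C^2}$. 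The requirements on the initial datum match as well, since $x_0 \in \mathcal{D}\left( \widetilde{C}\right)$ $\mathbb{Q}$-a.s. and $\mathbb{E}_{\mathbb{Q}} \left( \left\Vert  x_0 \right\Vert^{2} + \left\Vert \widetilde{ C } \, x_0 \right\Vert^{2}\right) <\infty$ are exactly the moment conditions imposed by that theory on the initial condition of (\ref{eq:Linear_SSE}). Invoking the cited existence-and-uniqueness result then yields a unique strong $\widetilde{C}$-solution $\left( x_{t} \right) _{t \in \mathbb{I}}$ of (\ref{eq:LSSE_e}) with initial datum $x_0$.

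For the two quantitative conclusions I would extract the same It\^{o}-calculus arguments that underlie that theory. The martingale property of $\left( \left\Vert x_t \right\Vert^2 \right)_{t \in \mathbb{I}}$ follows from applying It\^{o}'s formula to $\left\Vert x_t \right\Vert^2$: the drift and bracket terms assemble into $2\Re\left\langle  x_s, \widetilde{ G \left( s \right) } x_s\right\rangle +\sum_{\ell=1}^{\infty }\left\Vert \widetilde{ L_{\ell} \left( s \right) } x_s \right\Vert ^{2}$, which vanishes identically by Condition H2.2 transferred to the tilde-system, leaving only stochastic integrals whose integrability is furnished by the $\widetilde{C}$-regularity of the solution. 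For the a priori bound I would apply It\^{o}'s formula to $\left\Vert \widetilde{C} \, x_t \right\Vert^2$, estimate the resulting drift from above by $\alpha \left( t \right) \left( \left\Vert \widetilde{C} \, x_t \right\Vert^2 + \left\Vert x_t \right\Vert^2 \right)$ using Condition H2.3 for the tilde-system, take expectations, and close the resulting differential inequality by Gronwall's lemma to obtain precisely the stated exponential estimate.

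The one point that requires genuine care, and which is exactly what the cited references handle, is the rigorous justification of the It\^{o} expansion of $\left\Vert \widetilde{C} \, x_t \right\Vert^2$, since $\widetilde{C}$ is unbounded while $x_t$ is only known to lie in $\mathcal{D}\left( \widetilde{C}\right)$. This is carried out by first performing the computation on the core $\widetilde{ \mathfrak{D}_{1} }$ with $\widetilde{C}$ replaced by a bounded Yosida-type approximation and then passing to the limit, using the estimate itself together with the closedness of $\widetilde{C}$; Lemmas \ref{lem:ClosedOperator} and \ref{lem:Core} guarantee exactly that $\widetilde{C}$ is closed and that $\widetilde{ \mathfrak{D}_{1} }$ is a core, so the approximation scheme of \cite{FagMora2013,MoraReAAP} transfers to $L^2 \left( \mathbb{N} ;  \mathfrak{h} \right)$ without modification. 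Since this localization is the only delicate ingredient and it is already available in the literature, the remainder of the argument is routine.
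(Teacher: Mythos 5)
Your proposal is correct and follows essentially the same route as the paper: the paper's proof consists of invoking Lemma \ref{lem:hyp-CF-e} to transfer Hypothesis \ref{hyp:CF} to $\left( L^2 \left( \mathbb{N} ; \mathfrak{h} \right), \widetilde{C}, \widetilde{G}, \widetilde{L_{\ell}} \right)$ and then applying Theorems 2.4 and 2.7 of \cite{FagMora2013}, exactly as you do. Your additional sketch of the It\^{o}/Gronwall/approximation machinery inside those cited theorems is accurate but not needed, since the paper treats it as a black box.
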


\begin{proof}
By Lemma \ref{lem:hyp-CF-e},
applying  Theorems 2.4 and 2.7 of \cite{FagMora2013}
we obtain  the assertion of this theorem. 
 \end{proof}

\begin{remark}
Under the assumptions of Theorem \ref{th:EyU-LSSE-e},
$ x_{t} \left( n \right)$  is the unique strong $C$-solution of  (\ref{eq:Linear_SSE}) with initial datum $x_0 \left( n \right)$.
\end{remark}

Second, we address the non-linear stochastic Schr\"odinger equation (\ref{eq:NLSSE_e}).
Since Theorem \ref{th:EyU-LSSE-e} holds,
applying Theorem 2.12 of  \cite{FagMora2013} yields the following theorem.

\begin{definition}
\label{def:sol_L2}
Let $C$ satisfy Hypothesis \ref{hyp:L-G-C-def}.
We say that 
$\left( \Omega ,\mathfrak{F},\left( \mathfrak{F}_{t}\right) _{t \in \mathbb{I}},
\left( \hat{x}_{t}\right) _{t \in \mathbb{I}},\left( W_{t}^{\ell}\right) _{t \in \mathbb{I}}^{\ell\in\mathbb{N}}\right) $
is a solution of class $\widetilde{C}$ of (\ref{eq:NLSSE_e}) on $\mathbb{I}$ with initial 
distribution $\theta$  if and only if:
\begin{itemize}
\item $W^{1}, W^2, \ldots $ are real valued independent Brownian motions on
the filtered probability space 
$\left( \Omega,\mathfrak{F},\left( 
\mathfrak{F}_{t}\right) _{t \in \mathbb{I}},\mathbb{P}\right) $,
where $\left(\mathfrak{F}_{t}\right) _{t \in \mathbb{I}}$ satisfies the so-called usual conditions.

\item $\left( \hat{x}_{t}\right) _{t \in \mathbb{I}}$ is an 
$L^2 \left( \mathbb{N} ;  \mathfrak{h} \right)$-valued adapted process with continuous sample paths 
such that  $\hat{x}_{0}$  is distributed according to $\theta$
and  $ \left\Vert \hat{x}_{t}\right\Vert =1$ $\mathbb{P}$-$a.s.$

\item For any $t \in \mathbb{I}$, 
$\hat{x}_{t}\in \mathcal{D}\left( \widetilde{C} \right)$  $\mathbb{P}$-$a.s.$
and 
$\sup_{s\in\left[ 0,t\right] }\mathbb{E}_{\mathbb{P}} \left(  \left\Vert  \widetilde{C}  \hat{x}_{s}\right\Vert ^{2} \right)<\infty$.

\item $\mathbb{P}$-$a.s.$ for all $t \in \mathbb{I}$,
\begin{eqnarray}
\fl
\nonumber
 \hat{x}_t
& =
\hat{x}_0+\int_0^t\left( \widetilde{ G\left( s \right)  } \pi_{\mathcal{D}\left(  \widetilde{C}\right) }\left( \hat{x}_{s} \right) 
+  \widetilde{ g }  \left( s,  \hat{x}_{s}  \right) 
\right)ds
\\
\label{eq:SIWE_fn}
\fl
& \quad
+
\sum_{ \ell =1}^{ \infty} \int_0^t 
\left(
 \widetilde{ L_{\ell} \left( s \right) }  \pi_{\mathcal{D}\left(  \widetilde{C}\right) }\left( \hat{x}_{s} \right) 
 - \Re \left(  \langle  \hat{x}_{s} , \widetilde{ L_{\ell} \left( s \right) }  \pi_{\mathcal{D}\left(  \widetilde{C}\right) }\left( \hat{x}_{s} \right) \rangle  \right)  \hat{x}_{s}
 \right) dW_s^k ,
\end{eqnarray}
where
$
 \pi _{\mathcal{D}\left( \widetilde{ C }\right)} \left( x \right)
 =
 \cases{
 x \quad & if  $ x\in \mathcal{D}\left(  \widetilde{C} \right) $
\\
0 \quad & if  $ x \notin \mathcal{D}\left(  \widetilde{C }\right) $
}
$ and we define $\widetilde{  g } \left( s, z \right)$ to be
\[
\fl
\sum_{ \ell =1}^{ \infty} \left( \Re \left( \langle z, \widetilde{ L_{\ell} \left( s \right) }  \pi_{\mathcal{D}\left(  \widetilde{C}\right) }\left( z \right) \rangle  \right) \widetilde{ L_{\ell} \left( s \right) } \pi_{\mathcal{D}\left(  \widetilde{C}\right) }\left( z \right)
-\frac{1}{2} \Re^2 \left(\langle z, \widetilde{ L_{\ell} \left( s \right) } \pi_{\mathcal{D}\left(  \widetilde{C}\right) }\left( z \right) \rangle \right) z\right) .
\]

\end{itemize}

We shall say, for short, that 
$\left( \mathbb{P}, \left( \hat{x}_{t}\right)_{t\in\mathbb{I}}, \left( W_{t}^{\ell}\right) _{t \in \mathbb{I}}^{\ell\in\mathbb{N}} \right)$ 
is a $\widetilde{C}$-solution of (\ref{eq:NLSSE_e}).
\end{definition}

\begin{theorem}
\label{th:EyU-NLSSE-e} 
Suppose that Hypothesis \ref{hyp:CF}  is fulfilled.
Let $\theta$ be  a probability measure on $L^2 \left( \mathbb{N} ;  \mathfrak{h} \right)$ concentrated on 
$\mathcal{D} \left(\widetilde{C} \right)\cap\left\{ x \in L^2 \left( \mathbb{N} ;  \mathfrak{h} \right)  :\left\Vert x\right\Vert =1\right\} $ 
such that 
\[
\int_{L^2 \left( \mathbb{N} ;  \mathfrak{h} \right)}\left\Vert  \widetilde{C} x\right\Vert ^{2}\theta\left( dx\right) < \infty
.
\] 
Then, 
(\ref{eq:NLSSE_e}) has a unique solution of class $\widetilde{C}$ on $\mathbb{I}$ with initial law $\theta$.
Here, the  uniqueness is interpreted in the weak probabilistic sense
(i.e., the finite-dimensional distributions of the pair 
$\left(  \left( \hat{x}_{t}\right), \left( W_{t}^{\ell}\right)^{\ell\in\mathbb{N}} \right)$ 
are uniquely determined for $\theta$ and the coefficients of (\ref{eq:NLSSE_e})).
\end{theorem}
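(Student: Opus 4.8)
The plan is to recognize that (\ref{eq:NLSSE_e}) is nothing but the non-linear stochastic Schr\"odinger equation (\ref{eq:NLSSE_i}) transplanted from $\mathfrak{h}$ to the separable Hilbert space $L^2 \left( \mathbb{N} ;  \mathfrak{h} \right)$, with the coefficients replaced by the lifted operators $\widetilde{G \left( s \right)}$ and $\widetilde{L_{\ell} \left( s \right)}$, and with the regularizing operator $C$ replaced by $\widetilde{C}$. Once this identification is made, existence and uniqueness should follow directly from the abstract theory of (\ref{eq:NLSSE_i}) developed in \cite{FagMora2013}, instantiated on $L^2 \left( \mathbb{N} ;  \mathfrak{h} \right)$ in place of $\mathfrak{h}$. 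The proof is thus essentially a matter of checking that the hypotheses of that abstract theory are met and that the vocabulary matches.

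The first step is to confirm that the structural hypotheses required by the abstract theory hold for the lifted coefficients. This is exactly the content of Lemma \ref{lem:hyp-CF-e}, which shows that Hypothesis \ref{hyp:CF} remains valid after replacing $\mathfrak{h}, C, G \left( \cdot \right), L_{\ell} \left( \cdot \right), \mathfrak{D}_{1}$ by $L^2 \left( \mathbb{N} ;  \mathfrak{h} \right), \widetilde{C}, \widetilde{G \left( \cdot \right)}, \widetilde{L_{\ell} \left( \cdot \right)}, \widetilde{\mathfrak{D}_{1}}$. In particular $\widetilde{C}$ is self-adjoint and non-negative by Lemma \ref{lem:Adjunto}, the growth bound H2.1, the conservativity identity H2.2 and the coercivity estimate H2.3 all transfer, and the measurability requirements of Hypothesis \ref{hyp:L-G-C-def} are preserved for the lifted coefficients.

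The second step is to supply the linear input that the normalization approach to (\ref{eq:NLSSE_i}) requires. Here Theorem \ref{th:EyU-LSSE-e} plays that role: it guarantees that the associated linear equation (\ref{eq:LSSE_e}) on $L^2 \left( \mathbb{N} ;  \mathfrak{h} \right)$ has a unique strong $\widetilde{C}$-solution, that $\left\Vert x_t \right\Vert^2$ is conserved in mean, and that the a priori bound on $\mathbb{E}_{\mathbb{Q}} \left\Vert \widetilde{C} x_t \right\Vert^2$ holds. I would then verify that the notion of solution of class $\widetilde{C}$ introduced in Definition \ref{def:sol_L2} coincides with the notion of regular weak solution used in \cite{FagMora2013} for (\ref{eq:NLSSE_i}), and that the hypothesis on $\theta$ --- concentration on $\mathcal{D} \left( \widetilde{C} \right) \cap \left\{ x : \left\Vert x \right\Vert = 1 \right\}$ together with $\int \left\Vert \widetilde{C} x \right\Vert^2 \theta \left( dx \right) < \infty$ --- is precisely the admissibility condition on the initial law required there. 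With these matches in place, Theorem 2.12 of \cite{FagMora2013} applies verbatim and yields both the existence of a class-$\widetilde{C}$ solution and the uniqueness of the finite-dimensional distributions of $\left( \left( \hat{x}_{t} \right), \left( W_{t}^{\ell} \right)^{\ell \in \mathbb{N}} \right)$.

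The main obstacle I anticipate is not a single hard estimate but the bookkeeping needed to ensure that the framework of \cite{FagMora2013} is genuinely stated for an arbitrary separable Hilbert space, so that it legitimately instantiates to $L^2 \left( \mathbb{N} ;  \mathfrak{h} \right)$, and that the dictionary between the two solution concepts and between the two initial-law conditions is exact. In particular, one must confirm that the joint measurability of $\widetilde{G \left( \cdot \right)} \circ \pi_{\widetilde{C}}$ and $\widetilde{L_{\ell} \left( \cdot \right)} \circ \pi_{\widetilde{C}}$ established in Lemma \ref{lem:hyp-CF-e} is exactly the measurability hypothesis invoked by the cited theorem, so that no additional regularity on $\theta$ or on the coefficients is tacitly required; after that verification the result is immediate.
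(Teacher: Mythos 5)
Your proposal follows exactly the paper's own route: the paper proves this theorem by combining Lemma \ref{lem:hyp-CF-e} (transfer of Hypothesis \ref{hyp:CF} to the lifted operators on $L^2\left(\mathbb{N};\mathfrak{h}\right)$) with Theorem \ref{th:EyU-LSSE-e} for the linear equation, and then invoking Theorem 2.12 of \cite{FagMora2013}, which is stated for an arbitrary separable Hilbert space and so instantiates to $L^2\left(\mathbb{N};\mathfrak{h}\right)$. Your identification of the solution concepts and of the admissibility condition on $\theta$ is precisely the dictionary the paper relies on, so the proposal is correct and essentially identical to the paper's argument.
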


\subsubsection{Proof of Theorem \ref{th:EyU-SIWE}}

\begin{lemma}
\label{lem:Existence}
Adopt the assumptions of Theorem \ref{th:EyU-SIWE}.
Let $\left( \mathbb{P}, \left( \hat{x}_{t}\right)_{t\in\mathbb{I}},  \left( W_{t}^{\ell}\right) _{t \in \mathbb{I}}^{\ell\in\mathbb{N}} \right)$ 
be a  $\widetilde{C}$-solution  of (\ref{eq:NLSSE_e}) on $\mathbb{I}$ with initial law  $\widetilde{\theta}$,
where $\widetilde{\theta}$ is the restriction of 
$\theta $ to $\mathcal{B} \left( L^2 \left( \mathbb{N} ;  \mathfrak{h} \right) \right)$.
For any $n \in \mathbb{N}$ we set 
$
\psi^n_t = \hat{x}_{t} \left( n \right)
$.
Then,
$\left( \mathbb{P}, \left( \psi^n_t  \right)_{t \in \mathbb{I}}^{n \in \mathbb{N}},\left( W_{t}^{\ell}\right) _{t \in \mathbb{I}}^{\ell\in\mathbb{N}} \right)$ 
is a $C$-solution of the stochastic system (\ref{st:SIWE}) with initial law $\theta $.
\end{lemma}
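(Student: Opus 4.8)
The plan is to obtain the stochastic system \ref{st:SIWE} by reading the $L^2\left(\mathbb{N};\mathfrak{h}\right)$-valued equation (\ref{eq:NLSSE_e}) coordinate by coordinate. The decisive structural observation is that for each fixed $n\in\mathbb{N}$ the evaluation map $\pi_n:L^2\left(\mathbb{N};\mathfrak{h}\right)\to\mathfrak{h}$, $f\mapsto f\left(n\right)$, is a bounded linear operator of norm at most $1$; hence $\psi^n_t=\pi_n\left(\hat{x}_t\right)$ inherits adaptedness and path-continuity from $\hat{x}_t$, and $\pi_n$ commutes both with the Bochner integral and with the Hilbert-space-valued It\^o integral appearing in (\ref{eq:SIWE_fn}). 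I would first dispatch the regularity requirements of Definition \ref{def:regular-sol-SIWE}: since $\hat{x}_t\in\mathcal{D}\left(\widetilde{C}\right)$ $\mathbb{P}$-a.s., the definition of $\widetilde{C}$ gives $\psi^n_t=\hat{x}_t\left(n\right)\in\mathcal{D}\left(C\right)$ for every $n$, and the identity $\|\widetilde{C}\hat{x}_t\|^2=\sum_{n}\|C\psi^n_t\|^2$ turns the bound $\sup_{s}\mathbb{E}\|\widetilde{C}\hat{x}_s\|^2<\infty$ into $\sup_{s}\sum_n\mathbb{E}\|C\psi^n_s\|^2<\infty$. Similarly $\sum_n\|\psi^n_t\|^2=\|\hat{x}_t\|^2=1$ holds $\mathbb{P}$-a.s. for each $t$, and path-continuity upgrades this to an a.s. statement valid for all $t$ simultaneously, recovering the normalization.

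Next I would match the initial laws. The subtle point is that $\theta$, although defined on $\mathfrak{h}^{\mathbb{N}}$, is concentrated by the hypotheses of Theorem \ref{th:EyU-SIWE} on the set of sequences $\left(x_n\right)$ with $x_n\in\mathcal{D}\left(C\right)$ and $\sum_n\|x_n\|^2=1$, hence on the subset $L^2\left(\mathbb{N};\mathfrak{h}\right)\subset\mathfrak{h}^{\mathbb{N}}$. Invoking Remark \ref{Nota:Distancia}, the trace of $\mathcal{B}\left(\mathfrak{h}^{\mathbb{N}}\right)$ on $L^2\left(\mathbb{N};\mathfrak{h}\right)$ coincides with $\mathcal{B}\left(L^2\left(\mathbb{N};\mathfrak{h}\right)\right)$, so the restriction $\widetilde{\theta}$ is a well-defined Borel probability measure on $L^2\left(\mathbb{N};\mathfrak{h}\right)$ concentrated on $\mathcal{D}\left(\widetilde{C}\right)\cap\{\|x\|=1\}$ with finite $\|\widetilde{C}\cdot\|^2$-integral, i.e. an admissible initial law for (\ref{eq:NLSSE_e}). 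Since $\hat{x}_0\sim\widetilde{\theta}$ and the sequence $\left(\psi^n_0\right)_n=\left(\hat{x}_0\left(n\right)\right)_n$ is precisely the image of $\hat{x}_0$ under the inclusion $L^2\left(\mathbb{N};\mathfrak{h}\right)\hookrightarrow\mathfrak{h}^{\mathbb{N}}$, the law of $\left(\psi^n_0\right)_n$ on $\left(\mathfrak{h}^{\mathbb{N}},\mathcal{B}\left(\mathfrak{h}\right)^{\mathbb{N}}\right)$ is $\theta$, as required.

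It then remains to verify the equation itself. Because $\widetilde{G\left(s\right)}$ and $\widetilde{L_\ell\left(s\right)}$ act coordinatewise (Definition \ref{def:Operador_en_L2}), applying $\pi_n$ to (\ref{eq:NLSSE_e})---equivalently to (\ref{eq:SIWE_fn})---replaces $\widetilde{G\left(s\right)}\hat{x}_s$ and $\widetilde{L_\ell\left(s\right)}\hat{x}_s$ inside the integrals by $G\left(s\right)\psi^n_s$ and $L_\ell\left(s\right)\psi^n_s$. The only coupling across coordinates enters through the scalar coefficients $\Re\langle\hat{x}_s,\widetilde{L_\ell\left(s\right)}\hat{x}_s\rangle$; by the definition of the inner product on $L^2\left(\mathbb{N};\mathfrak{h}\right)$ and the coordinatewise action of $\widetilde{L_\ell\left(s\right)}$, this equals $\sum_n\Re\langle\psi^n_s,L_\ell\left(s\right)\psi^n_s\rangle=\wp_\ell\left(s\right)$ of (\ref{eq:Def_pl}), the series converging because $\hat{x}_s\in\mathcal{D}\left(\widetilde{L_\ell\left(s\right)}\right)$ (a consequence of $\hat{x}_s\in\mathcal{D}\left(\widetilde{C}\right)$ together with Hypothesis \ref{hyp:CF}). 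Substituting and reading off the $n$-th coordinate reproduces (\ref{eq:SIWE_f}) verbatim, the projections $\pi_{\mathcal{D}\left(C\right)}$ acting as the identity on $\psi^n_s$ since $\psi^n_s\in\mathcal{D}\left(C\right)$ a.s.

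The hard part will not be the algebra of the coordinatewise reading but the bookkeeping that accompanies it: one must justify the interchange of $\pi_n$ with the stochastic integral (here the boundedness of $\pi_n$ is exactly what is needed), confirm that the finite-dimensional integrands are genuinely the $\pi_n$-images of the $L^2\left(\mathbb{N};\mathfrak{h}\right)$-valued integrand, and secure the Fubini-type summations of $\|C\psi^n_s\|^2$ and of the series defining $\wp_\ell$ uniformly in $s$ from the $\widetilde{C}$-solution moment bounds. Once these are in place each step is routine, so I expect the bulk of the write-up to be the measure-theoretic identification of $\theta$ with $\widetilde{\theta}$ across $\mathfrak{h}^{\mathbb{N}}$ and $L^2\left(\mathbb{N};\mathfrak{h}\right)$.
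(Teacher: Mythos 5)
Your proposal is correct and follows essentially the same route as the paper's proof: both read the $L^2\left( \mathbb{N};\mathfrak{h}\right)$-valued equation (\ref{eq:SIWE_fn}) coordinatewise, identify the coupling term $\Re\langle \hat{x}_s, \widetilde{L_\ell\left( s\right)}\hat{x}_s\rangle$ with $\wp_\ell\left( s\right)$ via the definition of the inner product on $L^2\left( \mathbb{N};\mathfrak{h}\right)$, transfer the domain and moment conditions from $\widetilde{C}$ to $C$, and handle adaptedness and the identification of $\theta$ with $\widetilde{\theta}$ through the measurability of the evaluation maps and the trace $\sigma$-algebra relation $\mathcal{B}\left( L^2\left( \mathbb{N};\mathfrak{h}\right)\right) = L^2\left( \mathbb{N};\mathfrak{h}\right) \cap \mathcal{B}\left( \mathfrak{h}\right)^{\mathbb{N}}$ (which the paper records in the remark following the lemma). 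The only presentational difference is that you make explicit some steps the paper leaves implicit, such as the interchange of the bounded projection $\pi_n$ with the Bochner and stochastic integrals and the continuity upgrade of the normalization to all $t$ simultaneously.
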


\begin{remark}
Since
$
L^2 \left( \mathbb{N} ;  \mathfrak{h} \right) \in \mathcal{B} \left( \mathfrak{h}  \right)^{\mathbb{N}}
$
and
$
\mathcal{B} \left( L^2 \left( \mathbb{N} ;  \mathfrak{h} \right) \right)
=
 L^2 \left( \mathbb{N} ;  \mathfrak{h} \right)  \cap \mathcal{B} \left( \mathfrak{h}  \right)^{\mathbb{N}}
$,
$ \mathcal{B} \left( L^2 \left( \mathbb{N} ;  \mathfrak{h} \right) \right) $
is a sub-$\sigma$-algebra of $ \mathcal{B} \left( \mathfrak{h}  \right)^{\mathbb{N}} $.
Therefore,
the restriction of $\theta $ to $\mathcal{B} \left( L^2 \left( \mathbb{N} ;  \mathfrak{h} \right) \right)$
is well-defined.
From
$
\theta \left( L^2 \left( \mathbb{N} ;  \mathfrak{h} \right)  \right) =1
$
we obtain that $\widetilde{\theta}$ is a probability measure defined on  
$\mathcal{B} \left( L^2 \left( \mathbb{N} ;  \mathfrak{h} \right) \right)$.
\end{remark}

\begin{remark}
Applying Theorem \ref{th:EyU-NLSSE-e} we obtain that 
(\ref{eq:NLSSE_e}) has a  $\widetilde{C}$-solution on $\mathbb{I}$ with initial law  $\widetilde{\theta}$,
under the assumptions of Theorem \ref{th:EyU-SIWE}.
This solution is unique in the weak probabilistic sense.
\end{remark}

\begin{proof}
We have that $\psi^n_t  \in \mathcal{D}\left( C\right) $ $\mathbb{P}$-a.s.,
because 
$\hat{x}_{t}\in \mathcal{D}\left( \widetilde{C} \right)$ $\mathbb{P}$-$a.s.$
Hence,
\[
 \pi_{\mathcal{D}\left(  \widetilde{C}\right) }\left( \hat{x}_{t} \right)  \left( n \right)
 =
 \hat{x}_{t}  \left( n \right)
 =
   \pi_{\mathcal{D}\left(  C \right) }\left( \psi^n_t  \right)
 \qquad \qquad \mathbb{P} \mathrm{-} a.s.
\]
Using (\ref{eq:SIWE_fn}) yields (\ref{eq:SIWE_f}), because
\[
\fl
\Re \left(  \langle  \hat{x}_{t} , \widetilde{ L_{\ell} \left( t \right) }  \pi_{\mathcal{D}\left(  \widetilde{C}\right) }\left( \hat{x}_{t} \right) \rangle \right)
=
\Re \sum_{n=1}^{\infty}   \langle   \psi^n_t  ,  L_{\ell} \left( t \right)   \pi_{\mathcal{D}\left(  C \right) }\left(  \psi^n_t  \right) \rangle
=
\wp_{\ell} \left( t  \right)  \qquad  \mathbb{P}\mathrm{-}a.s.
\]

For all $ f \in  \mathfrak{h}^ \mathbb{N} $ we set  $ pr_n \left( f  \right) = f  \left( n  \right) $.
Then,
for any $ A \in \mathcal{B} \left( \mathfrak{h}  \right)$ we have
\[
\left( \psi^n_t \right)^{-1} \left( A  \right)
=
\left( pr_n \circ \hat{x}_t \right)^{-1} \left( A  \right)
=
\left( \hat{x}_t \right)^{-1} \left( \left( pr_n \right)^{-1} \left( A  \right)  \cap L^2 \left( \mathbb{N} ;  \mathfrak{h} \right)  \right) .
\]
As 
$ \left( pr_n \right)^{-1} \left( A  \right) \cap L^2 \left( \mathbb{N} ;  \mathfrak{h} \right) \in \mathcal{B} \left( L^2 \left( \mathbb{N} ;  \mathfrak{h} \right) \right)$,
$\left(  \psi^n_t  \right)_{t \in \mathbb{I}} $  is an  $ \mathfrak{h}  $-valued adapted process.

Using Definition \ref{def:sol_L2} we finish checking  that 
$\left( \mathbb{P}, \left( \psi^n_t  \right)_{t \in \mathbb{I}}^{n \in \mathbb{N}},\left( W_{t}^{\ell}\right) _{t \in \mathbb{I}}^{\ell\in\mathbb{N}} \right)$ 
is a $C$-solution of the stochastic system (\ref{st:SIWE}) with initial law $\theta $.
\end{proof}

Next, we prove that the unique (in the weak probabilistic sense) $C$-solution to  the stochastic system (\ref{st:SIWE})  is 
the solution given by Lemma \ref{lem:Existence}.

\begin{proof}[Proof of Theorem \ref{th:EyU-SIWE}]
By Lemma \ref{lem:Existence},
the stochastic system (\ref{st:SIWE}) has a  solution of class $C$ with initial law $\theta$.
On the other hand, suppose that 
$\left( \mathbb{P}, \left( \psi^n_t  \right)_{ t\in \mathbb{I} }^{n \in \mathbb{N}},\left( W_{t}^{\ell}\right) _{t\in \mathbb{I}}^{\ell\in\mathbb{N}} \right)$ 
is a solution of class $C$  to the stochastic system (\ref{st:SIWE}) with initial law $\theta$.
Take
\[
\widetilde{\Omega} = \left\{ \omega \in \Omega: 
\sum_{n=1}^{\infty}  \left\Vert \psi^n_t \left( \omega \right) \right\Vert ^{2} = 1 \quad  \mathrm{for}\ \mathrm{all}\; t  \in \mathbb{I} 
 \right\} ,
 \]
and for any $t\in \mathbb{I}$ we set
\begin{equation*}
 \hat{x}_{t}  \left( \omega \right) 
=
\cases{
  \left( \psi^n_t   \left( \omega \right) \right)_{n \in \mathbb{N}} \quad
&
  if   $ \omega \in \widetilde{\Omega} $
 \\
 0 \quad
 &
  if   $\omega \in \Omega \setminus \widetilde{\Omega} $
} .
\end{equation*}
Thus,
$\left( \hat{x}_{t} \right)_{\in \mathbb{I}}$ has sample paths in $L^2 \left( \mathbb{N} ;  \mathfrak{h} \right) $.
Since  $\mathbb{P}\left( \widetilde{\Omega}^c  \right) = 0$, 
$ \widetilde{\Omega}  \in  \mathfrak{F}_{0}$
and
$\left( \hat{x}_{t} \right)_{t \in \mathbb{I}}$ and $\left( \left( \psi^n_t \right)_{n \in \mathbb{N}} \right)_{t \in  \mathbb{I} }$
are indistinguishable.
Due to
$
\mathcal{B} \left( L^2 \left( \mathbb{N} ;  \mathfrak{h} \right) \right)
\subset 
\mathcal{B} \left( \mathfrak{h}  \right)^{\mathbb{N}}
$,
$\left( \hat{x}_{t} \right)_{\in \mathbb{I}}$ is an $L^2 \left( \mathbb{N} ;  \mathfrak{h} \right)$-valued adapted stochastic process.

Choose $ \omega \in \widetilde{\Omega}$.
We claim that 
$\left( \hat{x}_{t} \left( \omega  \right) \right) _{t \in \mathbb{I}}$ is  continuous in $L^2 \left( \mathbb{N} ;  \mathfrak{h} \right)$.
Consider $t \in \mathbb{I}$ and a sequence $\left( s_n \right)_{n \in \mathbb{N}}$ of elements of $\mathbb{I}$
satisfying $s_n \rightarrow_{n \rightarrow \infty} t$.
Since $ \left\Vert \hat{x}_{s_n}  \left( \omega \right)  \right\Vert = 1$,
there exists a subsequence $\left( \hat{x}_{s_{n \left( k \right) }}   \left( \omega \right) \right)_{k \in \mathbb{N}}$ of 
$\left( \hat{x}_{s_{n  }}   \left( \omega \right) \right)_{n \in \mathbb{N}}$ such that
$\hat{x}_{s_{n \left( k \right) }}  \left( \omega \right)$ converges weakly to some $x \in L^2 \left( \mathbb{N} ;  \mathfrak{h} \right) $
as $k \rightarrow \infty$
(see, e.g., \cite{Kato}).
Using the continuity of $s \mapsto \psi^n_s $ yields $x =  \hat{x}_{t}  \left( \omega \right) $.
Due to 
$
\limsup_{k \rightarrow \infty} \left\Vert \hat{x}_{s_{n \left( k \right) }}  \left( \omega \right)  \right\Vert
= 1 
= \left\Vert \hat{x}_{t}  \left( \omega \right)  \right\Vert 
$,
$\hat{x}_{s_{n \left( k \right) }}  \left( \omega \right)$ converges to  $\hat{x}_{t}  \left( \omega \right)$
in $L^2 \left( \mathbb{N} ;  \mathfrak{h} \right) $
(see, e.g., \cite{Kato}).
Therefore, any sequence $\left( \hat{x}_{s_{n  }}   \left( \omega \right) \right)_{n \in \mathbb{N}}$ 
has a subsequence converging to  $\hat{x}_{t}  \left( \omega \right)$.
Hence, $ t \mapsto  \hat{x}_{t}  \left( \omega \right)$ is continuous $L^2 \left( \mathbb{N} ;  \mathfrak{h} \right)$.

From
$
\mathbb{E}_{\mathbb{P}}  \left( \sum_{n=1}^{\infty}  \left\Vert C \psi^n_t \right\Vert ^{2} \right) < \infty 
$
we have
$\sum_{n=1}^{\infty}  \left\Vert C \hat{x}_{t}  \left( n \right) \right\Vert ^{2} < \infty $
$\mathbb{P}$-a.s.,
and so $\hat{x}_{t}  \in  \mathcal{D} \left( \widetilde{C} \right)$ $\mathbb{P}$-a.s.
Therefore, 
$
 \pi_{\mathcal{D}\left(  \widetilde{C}\right) }\left( \hat{x}_{t} \right)  \left( n \right)
 =
 \hat{x}_{t}  \left( n \right)
 =
   \pi_{\mathcal{D}\left(  C \right) }\left( \psi^n_t  \right)
$  
$\mathbb{P}$-a.s.
Applying (\ref{eq:SIWE_f}) gives (\ref{eq:SIWE_fn}).
Hence,
$\left( \mathbb{P}, \left( \hat{x}_{t}\right)_{t\in\mathbb{I}},\left( W_{t}\right)_{t\in\mathbb{I}}\right)$ 
is a $\widetilde{C}$-solution of (\ref{eq:NLSSE_e}) with initial law  $\widetilde{\theta}$.
Since $\left( \hat{x}_{t} \right)_{t \in \mathbb{I}}$ and $\left( \left( \psi^n_t \right)_{n \in \mathbb{N}} \right)_{t \in  \mathbb{I} }$
are indistinguishable, 
applying Theorem \ref{th:EyU-NLSSE-e} we deduce the uniqueness of the finite-dimensional distributions of
$\left(\left( \psi^n_{t\left(k \right)}  \right)_{1\leq k \leq N }^{n \in \mathbb{N}},
\left( W_{t\left(k \right)}^{\ell}\right) _{1\leq k \leq N}^{\ell\in\mathbb{N}} \right)$. 
\end{proof}


\subsection{Proof of Theorem \ref{th:regular-sol-LSIWE}}
\label{sec:Proof:LSIWE}

\begin{proof}
From Theorem 2.4 of \cite{FagMora2013} we get that (\ref{eq:Linear_SSE})
has a unique strong $C$-solution $\left( \phi^n_t \left( \psi^n_0 \right) \right)_{ t \in \mathbb{I}}$ on  $\mathbb{I}$ 
with initial datum $ \psi^n_0$, and
\[ 
\mathbb{E} \left( \left\Vert C \phi_{t} \left( \psi^n_0 \right)   \right\Vert ^{2} \right)
\leq 
\exp \left( t \alpha \left( t \right) \right) 
\left( \mathbb{E} \left( \left\Vert C  \psi^n_0    \right\Vert ^{2} \right)
+ t \alpha\left( t \right)  \mathbb{E} \left( \left\Vert  \psi^n_0    \right\Vert ^{2} \right) \right)
\]
for all $t \in \mathbb{I}$.
This yields (\ref{eq:LS_cota}).
According to Theorem 2.7 of \cite{FagMora2013} we have that
$\left( \left\Vert \phi_{t} \left( \psi^n_0 \right)   \right\Vert ^{2} \right) _{t \in \mathbb{I}} $ 
is a martingale.
Then,
$
\mathbb{E} \sum_{n=1}^{\infty} \left\Vert \phi_{t} \left( \psi^n_0 \right)   \right\Vert ^{2}
= \sum_{n=1}^{\infty} \mathbb{E} \left\Vert \psi^n_0 \right\Vert ^{2} 
 <  \infty
$,
and so applying the monotone convergence theorem for conditional expectations (see, e.g., \cite{Dudley})
we deduce that $\left(  \sum_{n=1}^{\infty} \left\Vert \phi_{t} \left( \psi^n_0 \right)   \right\Vert ^{2} \right) _{t \in \mathbb{I}} $ 
is a martingale.
\end{proof}

\subsection{Proof of Theorem \ref{th:Model}}
\label{sec:Proof_Model}

\begin{proof}
Since $\left( \psi^n_0  \right)_{n \in \mathbb{N}} \in \mathcal{D} \left( \widetilde{C} \right)  $,
applying Theorem \ref{th:EyU-LSSE-e} we obtain that 
(\ref{eq:LSSE_e}) has a unique strong $\widetilde{C}$-solution 
$\left( x_{t} \right) _{t \in \mathbb{I}}$ with initial datum $\left( \psi^n_0  \right)_{n \in \mathbb{N}}$. 
Then, 
$\left( x_{t} \left( n \right) \right) _{t \in \mathbb{I}}$ is a  strong $C$-solution of  (\ref{eq:Linear_SSE})
with initial datum  $\psi^n_0$,
and so $  x_{t} \left( n \right)  =  \phi_{t} \left( \psi^n_0 \right)  $
due to the  strong $C$-solution of (\ref{eq:Linear_SSE}) is unique (see, e.g., Theorem 2.4 of \cite{FagMora2013}).

Proceeding as in the proof of Proposition 1 of \cite{MoraReAAP}  we deduce that 
\[
\left( \Omega,\mathfrak{F},\left( \mathfrak{F}_{t}\right) _{t\in\left[ 0,T\right] }, \left\Vert x_{T}
\right\Vert ^{2}\cdot\mathbb{P },\left( \hat{x}_{t}\right) _{t\in\left[ 0,T\right] },\left( W_{t}^{\ell}\right) _{t\in\left[ 0,T\right] }^{\ell\in \mathbb{N}}\right) 
\]
is the $\widetilde{C}$-solution of (\ref{eq:NLSSE_e}) on $\left[ 0,T\right]$ with initial datum $\theta$,
where $\theta$ is the probability distribution of $\left( \psi^n_0  \right)_{n \in \mathbb{N}}$,
$
\hat{x}_{t}=\left\{ 
\begin{array}{ll}
x_{t} / \left\Vert x_{t} \right\Vert 
& \;\mathrm{if } \; x_{t}  \neq 0 \\ 
0, & \;\mathrm{if } \; x_{t}  =0
\end{array}
\right. 
$,
and
$
W_{t}^{\ell} = B_{t}^{\ell}
-\int_{0}^{t}\frac{1}{\left\Vert x_{s} \right\Vert ^{2}}d\left[ B^{\ell}, 
\left\| x  \right\| ^2 \right]_{s} 
$.
Applying Lemma \ref{lem:Existence} it follows that 
$
\left( \left\Vert x_{T}\right\Vert ^{2}\cdot\mathbb{Q},\left( \hat{x}_{t} \left( n \right)\right) _{t\in\left[ 0,T\right] }^{n \in \mathbb{N}},\left( W_{t}^{\ell}\right) _{t\in\left[ 0,T\right] }^{\ell\in \mathbb{N}}\right) $
is a $C$-solution of the stochastic system (\ref{st:SIWE})  on $\left[ 0,T\right]$ with initial law $\theta $,
which is unique in the joint law sense by Theorem \ref{th:EyU-SIWE}.

For a given orthonormal basis $\left( e_n \right)_{n \in \mathbb{N}}$ of $\mathfrak{h}$
we define 
\begin{equation}
 \label{eq:Def_Conjugado}
  \overline{ u } = \sum_{n=1}^{\infty}  \overline{ \langle e_n , u \rangle } e_n
 \qquad \qquad
 \qquad \mathrm{for \ all \;} u \in \mathfrak{h} .
\end{equation}
From (\ref{eq:LSSE_e}) we obtain 
\begin{equation*}
	\overline{  x_t } 
	=
	\overline{ x_0 } + \int_0^t \overline{  \widetilde{ G\left( s \right) } \, x_s } \, ds 
	+
	\sum_{ \ell =1}^{ \infty} \int_0^t  \overline{  \widetilde{ L_{\ell} \left( s \right) } \, x_s } \, dB_s^{\ell} 
	\qquad 
	\mathrm{for} \ \mathrm{all} \; t \geq 0 .
\end{equation*}
Let 
$ 
F : L^2 \left( \mathbb{N} ;  \mathfrak{h} \right) \times  L^2 \left( \mathbb{N} ;  \mathfrak{h} \right) \rightarrow \mathbb{C} 
$
be given by
$
 F \left( x , y \right) = \langle \overline{x } ,  y  \rangle
 $.
Then, $F$ is a bilinear form on $L^2 \left( \mathbb{N} ;  \mathfrak{h} \right) $.
Applying the It\^{o} formula  to $F \left( \overline{x_t} , x_t \right)$ yields 
\[
\left\Vert x_{s} \right\Vert ^{2} 
= 
\left\Vert x_{0} \right\Vert ^{2} 
+
\sum_{ \ell =1}^{ \infty} \int_0^t 
2 \Re \left(  \langle  x_{s} , \widetilde{ L_{\ell} \left( s \right) }  x_{s}  \rangle  \right)   dB_s^k .
\]
Hence,
$
W_{t}^{\ell} = B_{t}^{\ell}
- \int_{0}^{t}  2 \Re \left(  \langle  \hat{x}_{s} , \widetilde{ L_{\ell} \left( s \right) }  \hat{x}_{s}  \rangle  \right) ds 
$.
Using  $  x_{t} \left( n \right)  =  \phi_{t} \left( \psi^n_0 \right)  $ completes the proof.
\end{proof}

\subsection{Proof of Theorem \ref{th:Def_Model1}}
\label{sec:Proof:Def_Model1}

First, we obtain that
any  $C$-regular random density operator $\varrho_0$
is equal to a statistical mixture of pure states represented by $ \mathfrak{h} $-valued regular random variables.

\begin{lemma}
\label{lem:Rep_Estado_Inicial}
Suppose that $\varrho_0$ is an $\mathfrak{L}_{1}\left( \mathfrak{h}\right)$-valued random variable such that
$\varrho_0 \in \mathfrak{L}_{1,C}^+ \left( \mathfrak{h}\right) $,
where $C$ is a self-adjoint non-negative operator in $\mathfrak{h}$. 
Then, 
there exist a decreasing sequence $\left( p_n \right)_{n \in \mathbb{N}}$ of non-negative random variables 
and a sequence $\left( \hat{\phi}^n_0 \right)_{n \in \mathbb{N}}$ of  $\mathfrak{h}$-valued random variables with unit norm  
such that
\begin{equation}
	\label{eq:Rep_Ei}
	\varrho_0 = \sum_{n =1}^{\infty} p_n \left\vert  \hat{\phi}^n_0  \rangle\langle  \hat{\phi}^n_0  \right\vert 
\end{equation}
and
$ 
\sum_{n=1}^{\infty} p_n \left\Vert C   \hat{\phi}^n_0 \right\Vert^2 
=
\Tr \left( C \varrho_0  C \right) 
< + \infty
$.
\end{lemma}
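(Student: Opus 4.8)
The plan is to diagonalize $\varrho_0(\omega)$ pointwise by the spectral theorem and then promote the eigenvalues and eigenvectors to genuine random variables by a measurable-selection argument. For each fixed $\omega$ the operator $\varrho_0(\omega)$ is non-negative, self-adjoint and trace class, hence compact, so it admits a spectral decomposition $\varrho_0(\omega)=\sum_{n} p_n(\omega)\left\vert \hat{\phi}^n_0(\omega)\rangle\langle \hat{\phi}^n_0(\omega)\right\vert$ with eigenvalues $p_1(\omega)\ge p_2(\omega)\ge\cdots\ge 0$ repeated according to multiplicity and orthonormal eigenvectors $\hat{\phi}^n_0(\omega)$. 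First I would record that the ordered eigenvalues are Borel functions of $\varrho_0$: by the Courant--Fischer min--max formula $p_n=\min_{\dim V=n-1}\max\{\langle u,\varrho_0 u\rangle:\|u\|=1,\ u\perp V\}$, each $p_n$ is $1$-Lipschitz in operator norm and hence continuous on $(\mathfrak{L}_{1}(\mathfrak{h}),\|\cdot\|_1)$, so $p_n:\Omega\to[0,\infty)$ is measurable and $(p_n)_{n\in\mathbb{N}}$ is decreasing by construction.

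The hard part will be selecting the eigenvectors measurably, since eigenvalue crossings and higher multiplicities obstruct a naive continuous choice. I would handle this recursively with the Kuratowski--Ryll-Nardzewski measurable selection theorem: having constructed measurable $\hat{\phi}^1_0,\dots,\hat{\phi}^{n-1}_0$, the set-valued map $\omega\mapsto\{u:\|u\|=1,\ u\perp\hat{\phi}^1_0(\omega),\dots,\hat{\phi}^{n-1}_0(\omega),\ \varrho_0(\omega)u=p_n(\omega)u\}$ has closed nonempty values (one eigenvector per repetition of $p_n$ remains orthogonal to those already chosen) and a Borel graph, since all the defining conditions are jointly measurable in $(\omega,u)$; it therefore admits a measurable selection $\hat{\phi}^n_0$. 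On the event where $p_n(\omega)=0$ the unit eigenvector is irrelevant to the representation but must lie in $\mathcal{D}(C)$ for the final sum to be meaningful, so there I would simply set $\hat{\phi}^n_0$ equal to a fixed unit vector $e\in\mathcal{D}(C)$; this keeps $\hat{\phi}^n_0$ measurable and preserves (\ref{eq:Rep_Ei}) because the null directions contribute nothing.

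It remains to verify the trace identity $\sum_{n}p_n\|C\hat{\phi}^n_0\|^2=\Tr(C\varrho_0 C)<\infty$. Here I would invoke the decomposition-independent characterization of $C$-regularity recorded after Definition \ref{def:C-regular-operator}, namely that $\Tr(C\varrho_0 C)=\sup_k\Tr\big(\varrho_0\,C^2(I+\tfrac{1}{k}C^2)^{-1}\big)$ is finite. Expanding in the spectral basis gives $\Tr\big(\varrho_0\,C^2(I+\tfrac{1}{k}C^2)^{-1}\big)=\sum_{n}p_n\,\langle\hat{\phi}^n_0,C^2(I+\tfrac{1}{k}C^2)^{-1}\hat{\phi}^n_0\rangle$, and for each $n$ the spectral calculus yields $\langle\hat{\phi}^n_0,C^2(I+\tfrac{1}{k}C^2)^{-1}\hat{\phi}^n_0\rangle\nearrow\|C\hat{\phi}^n_0\|^2$ as $k\to\infty$. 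Two applications of the monotone convergence theorem (in $k$ and in $n$) then interchange the supremum with the sum, giving $\sum_{n}p_n\|C\hat{\phi}^n_0\|^2=\Tr(C\varrho_0 C)$; finiteness of the latter in particular forces $\hat{\phi}^n_0\in\mathcal{D}(C)$ whenever $p_n>0$. Alternatively, once finiteness is secured one may read the identity off directly from Lemma \ref{le:A_rho_B} applied with $A=B=C$ to $x(n)=y(n)=\sqrt{p_n}\,\hat{\phi}^n_0$. The main obstacle throughout is the measurable eigenvector selection; the spectral and trace computations are routine once measurability is in hand.
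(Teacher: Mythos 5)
Your proposal is correct in substance, but it reaches the decomposition by a genuinely different route than the paper. The paper obtains the measurable spectral decomposition in one stroke by citing Proposition 1.8 of \cite{DaPrato}, which already supplies a decreasing sequence of eigenvalue random variables and measurable orthonormal eigenvectors; you instead rebuild that result by hand, getting measurability of the ordered eigenvalues from the Courant--Fischer formula and of the eigenvectors from a recursive Kuratowski--Ryll-Nardzewski selection. That is workable, but it is the most delicate part of your argument: to pass from a measurable graph to the weak measurability that the selection theorem needs, you must either complete the underlying $\sigma$-algebra or invoke an Aumann--von~Neumann projection argument, and it is cleaner to perform the selection as a Borel map on the Polish space $\mathfrak{L}_{1}\left( \mathfrak{h}\right)$ and then compose with $\varrho_0$; the citation sidesteps all of this. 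The two proofs also differ on where $\hat{\phi}^n_0 \in \mathcal{D}\left( C \right)$ comes from: the paper deduces it \emph{before} any trace computation, from the fact (quoted from \cite{MoraAP}) that the range of a $C$-regular operator lies in $\mathcal{D}\left( C \right)$, since $\varrho_0 \hat{\phi}^n_0 = p_n \hat{\phi}^n_0$; you deduce it \emph{a posteriori} from finiteness of the regularized traces, which is equally valid. Finally, for the identity $\sum_n p_n \left\Vert C \hat{\phi}^n_0 \right\Vert^2 = \Tr \left( C \varrho_0 C \right)$, the paper applies Lemma \ref{le:A_rho_B} (to a decomposition furnished by Definition \ref{def:C-regular-operator}) and then a Parseval computation in an orthonormal basis of $\mathcal{D}\left( C \right)$-vectors. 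Your primary route via $\sup_k \Tr \left( \varrho_0 \, C^2 \left( I + \tfrac{1}{k} C^2 \right)^{-1} \right)$ and two monotone-convergence passages is fine, with one caveat: the remark after Definition \ref{def:C-regular-operator} records only the \emph{finiteness} criterion, not the equality of that supremum with $\Tr \left( C \varrho_0 C \right)$ as the paper defines it (the trace of the unique bounded extension of $C \varrho_0 C$), so you should close the loop exactly as you suggest in your last sentence, i.e.\ by applying Lemma \ref{le:A_rho_B} with $A = B = C$ and $x \left( n \right) = y \left( n \right) = \sqrt{p_n}\, \hat{\phi}^n_0$ once summability is known --- which is precisely the paper's mechanism. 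In short: the paper's proof buys brevity by leaning on two cited results; yours buys self-containedness at the cost of the measurable-selection technicalities and the small bridging step identifying the regularized traces with the trace of the extension.
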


\begin{proof}
Since $\varrho_0$ is a  $\mathfrak{L}_{1} \left( \mathfrak{h}\right)$-valued non-negative random variable,
from, for instance, Proposition 1.8 of \cite{DaPrato} we have that
there exists  a decreasing sequence $\left( p_n \right)_{n \in \mathbb{N}}$ of non-negative random variables and a sequence $\left( \phi^n_0 \right)_{n \in \mathbb{N}}$ of  $\mathfrak{h}$-valued random variables such that 
\[
\varrho_0 = \sum_{n =1}^{\infty} p_n \left\vert  \phi^n_0  \rangle\langle  \phi^n_0  \right\vert ,
\]
$
\left\langle \phi^n_0, \phi^m_0 \right\rangle = 0
$
whenever $n \neq m$,
$
\left\Vert \phi^n_0  \left( \omega \right)  \right\Vert = 1
$
in case 
$
p_n  \left( \omega \right) > 0
$,
and
$
\left\Vert \phi^n_0  \left( \omega \right)  \right\Vert = 0
$
if
$
p_n  \left( \omega \right) = 0
$.
As 
$ \varrho_0 \, \phi_0^n = p_n \left\Vert  \phi_0^n \right\Vert^2  \phi_0^n $
and the range of $ \varrho_0 $ is contained in $ \mathcal{D}\left( C \right)$ (see, e.g., \cite{MoraAP}),
$ \phi_0^n \in  \mathcal{D}\left( C \right) $.
Fix an element $e$ of $ \mathcal{D}\left( C \right)$ with unit norm.
Then, for any $n \in \mathbb{N}$ we set 
\[
\hat{\phi}^n_0 \left( \omega \right)
=
\left\{ 
\begin{array}{ll}
	\phi^n_0  \left( \omega \right)
	& \;\mathrm{if } \; p_n  \left( \omega \right) > 0 \\ 
	e & \;\mathrm{if } \;  p_n  \left( \omega \right)  =0
\end{array}
\right. .
\]
Then,
the $\mathfrak{h}$-valued random variable $\hat{\phi}^n_0 $ satisfies (\ref{eq:Rep_Ei}).

According to Lemma \ref{le:A_rho_B} we have that
the unique bounded extension of $C \varrho_0 C $ belongs to  $\mathfrak{L}_{1} \left( \mathfrak{h}\right)$.
Using (\ref{eq:Rep_Ei}) we deduce that
\begin{eqnarray*}
	\fl
	\Tr \left( C \varrho_0 C \right)
	& = 
	\sum_{k=1}^{\infty} \langle e_k , C \varrho_0 C  e_k \rangle
	=
	\sum_{k=1}^{\infty} \langle C e_k ,  \varrho_0 C  e_k \rangle
	\\
	\fl
	& =
	\sum_{k=1}^{\infty} \sum_{n=1}^{\infty} p_n \langle C e_k ,   \hat{\phi}^n_0  \rangle \langle  \hat{\phi}^n_0 ,  C  e_k \rangle
	=
	\sum_{n=1}^{\infty} p_n \sum_{k=1}^{\infty}  \langle  e_k , C  \hat{\phi}^n_0  \rangle \langle C  \hat{\phi}^n_0 ,    e_k \rangle ,
\end{eqnarray*}
where 
$\left( e_n \right)_{n \in \mathbb{N}}$ is  an orthonormal basis of $ \mathfrak{h} $ formed by elements of $ \mathcal{D}\left( C \right)$.
Applying  the Parseval identity yields 
$
\Tr \left( C \varrho_0 C \right) 
= 
\sum_{n=1}^{\infty} p_n \left\Vert C  \hat{\phi}^n_0 \right\Vert^2 
$.
\end{proof}

\begin{proof}[Proof of Theorem \ref{th:Def_Model1}]

For any $\omega \in \mathfrak{L}_{1} \left( \mathfrak{h}\right) $ we set 
\[
\varrho_0 \left( \omega \right)
=
\cases{
  \omega &  if  $ \omega \in \mathfrak{L}_{1,C}^+ \left( \mathfrak{h}\right) $
\\
 0  &  if  $ \omega \notin \mathfrak{L}_{1,C}^+ \left( \mathfrak{h}\right) $
} .
\]
Then, 
$
\varrho_0: 
\left( \mathfrak{L}_{1}\left( \mathfrak{h}\right) ,  \mathcal{B}  \left( \mathfrak{L}_{1}\left( \mathfrak{h}\right) \right) , \mu \right)
\rightarrow
\left( \mathfrak{L}_{1}\left( \mathfrak{h}\right) ,  \mathcal{B}  \left( \mathfrak{L}_{1}\left( \mathfrak{h}\right) \right) \right)
$
is a random variable with distribution $\mu$.
Applying Lemma \ref{lem:Rep_Estado_Inicial} we obtain that there exist  random variables
$ p_n : \left( \mathfrak{L}_{1}\left( \mathfrak{h}\right) ,  \mathcal{B}  \left( \mathfrak{L}_{1}\left( \mathfrak{h}\right) \right) , \mu \right)
\rightarrow
\left( \mathbb{R} ,  \mathcal{B}  \left( \mathbb{R}\right) \right)
$
and
$
 \hat{\phi}^n_0 : \left( \mathfrak{L}_{1}\left( \mathfrak{h}\right) ,  \mathcal{B}  \left( \mathfrak{L}_{1}\left( \mathfrak{h}\right) \right) , \mu \right)
	\rightarrow
	\left( \mathfrak{h} ,  \mathcal{B}  \left( \mathfrak{h}\right) \right)
$
that satisfy (\ref{eq:Rep_Ei}), together with 
$p_n \geq 0$, 
$\left\Vert \hat{\phi}^n_0 \right\Vert = 1$,
and 
$ 
\sum_{n=1}^{\infty} p_n \left\Vert C   \hat{\phi}^n_0 \right\Vert^2 
=
\Tr \left( C \varrho_0  C \right) 
< + \infty
$.
Hence,
\[
\mathbb{E}_{\mu}  \left( \sum_{n=1}^{\infty} p_n \left\Vert C  \hat{\phi}^n_0 \right\Vert^2 \right)
=
\mathbb{E}_{\mu}  \Tr \left( C \varrho_0  C \right) 
=
\int_{ \mathfrak{L}_{1} \left( \mathfrak{h}\right) }
\Tr \left( C \varrho  \, C \right)  \mu \left( d \varrho \right) < \infty .
\]

Using Theorem \ref{th:EyU-SIWE} we deduce that the stochastic system (\ref{st:SIWE}) has a unique (in joint law) solution 
$\left( \mathbb{P}, \left( \psi^n_t  \right)_{t\in \mathbb{I} }^{n \in \mathbb{N}},\left( W_{t}^{\ell}\right) _{t\in \mathbb{I}}^{\ell\in\mathbb{N}} \right)$ of class $C$ such that 
$\left( \psi_0^n \right)_{n \in \mathbb{N}}$ is distributed according to the law of 
$\left( \sqrt{p_n} \hat{\phi}^n_0 \right)_{n \in \mathbb{N}}$.
The theorem follows from the fact that 
$\sum_{n = 1}^{\infty} \left\vert\psi^n_0  \rangle\langle\psi^n_0  \right\vert $ and $ \varrho_0 $ have the same probability distribution on $\mathfrak{L}_{1}\left( \mathfrak{h}\right) $.
\end{proof}

\subsection{Proof of Theorem \ref{th:Invarianza_ci}}
\label{sec:Proof:Invarianza}

We first examine the stochastic evolution operator $\Phi_t$ associated to the linear stochastic Schr\"odinger equation (\ref{eq:Linear_SSE}). In the physical literature, $\Phi_t$ is called propagator.

\begin{lemma}
	\label{OperadorPhi_t}
	Let Hypothesis \ref{hyp:CF} hold.
	For any $t > 0$ we set 
	\[
	\Phi_t \xi = \phi_{t} \left( \xi \right) ,
	\]
	where  $ \xi $ is an $\mathfrak{h}$-valued $\mathfrak{F}_{0}$-measurable random variable 
	satisfying
	$
	\mathbb{E} \left( \left\Vert C \xi \right\Vert ^{2} \right) +  \mathbb{E} \left( \left\Vert \xi \right\Vert ^{2} \right) < \infty 
	$,
	and
	$\phi_{t} \left( \xi \right) $ is the strong $C$-solution of  (\ref{eq:Linear_SSE}) with initial datum $\xi$.
	Then,
	$\Phi_t$ can be extended uniquely to an isometric operator on 
	$
	L^2 \left(  \Omega , \mathfrak{F}_{0}, \mathbb{Q}\right) 
	$
	to
	$
	L^2 \left(  \Omega , \mathfrak{F}_{t}, \mathbb{Q}\right) 
	$,
	which is denoted by $\Phi_t$ for simplicity of notation.
\end{lemma}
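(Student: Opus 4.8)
The plan is to define $\Phi_t$ on the dense subspace of regular initial data, show there that it is a linear isometry, and then extend it by the bounded linear transformation theorem. Throughout write $\mathfrak{H}_0 = L^2\left( \Omega, \mathfrak{F}_0, \mathbb{Q}; \mathfrak{h} \right)$ and $\mathfrak{H}_t = L^2\left( \Omega, \mathfrak{F}_t, \mathbb{Q}; \mathfrak{h} \right)$ for the Hilbert spaces of square-integrable $\mathfrak{h}$-valued random variables, and set
\[
\mathcal{D} = \left\{ \xi \in \mathfrak{H}_0 : \xi \in \mathcal{D}\left( C \right) \ \mathbb{Q}\text{-a.s. and } \mathbb{E}_{\mathbb{Q}} \left\Vert C \xi \right\Vert^2 < \infty \right\} .
\]
By Theorem \ref{th:regular-sol-LSIWE}, for every $\xi \in \mathcal{D}$ the equation (\ref{eq:Linear_SSE}) has a unique strong $C$-solution $\phi_t(\xi)$, so $\Phi_t \xi := \phi_t(\xi)$ is well defined and unambiguous on $\mathcal{D}$.

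First I would verify that $\mathcal{D}$ is dense in $\mathfrak{H}_0$. Denoting by $P_m = \mathbf{1}_{[0,m]}(C)$ the spectral projections of the self-adjoint operator $C$, for any $\xi \in \mathfrak{H}_0$ the truncations $P_m \xi$ are $\mathfrak{F}_0$-measurable, belong to $\mathcal{D}(C)$, and satisfy $\left\Vert C P_m \xi \right\Vert \le m \left\Vert \xi \right\Vert$, whence $\mathbb{E}_{\mathbb{Q}} \left\Vert C P_m \xi \right\Vert^2 \le m^2 \mathbb{E}_{\mathbb{Q}} \left\Vert \xi \right\Vert^2 < \infty$; thus $P_m \xi \in \mathcal{D}$. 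Since $P_m \xi \to \xi$ in $\mathfrak{h}$ pointwise and $\left\Vert P_m \xi - \xi \right\Vert \le 2 \left\Vert \xi \right\Vert$, the dominated convergence theorem gives $\mathbb{E}_{\mathbb{Q}} \left\Vert P_m \xi - \xi \right\Vert^2 \to 0$, so $\mathcal{D}$ is dense.

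Next I would record linearity and the isometry identity. Linearity of $\xi \mapsto \phi_t(\xi)$ on $\mathcal{D}$ follows from the linearity of (\ref{eq:Linear_SSE}) together with the uniqueness in Theorem \ref{th:regular-sol-LSIWE}: for $\xi, \eta \in \mathcal{D}$ and $a, b \in \mathbb{C}$, the process $a \phi_t(\xi) + b \phi_t(\eta)$ is a strong $C$-solution with initial datum $a \xi + b \eta \in \mathcal{D}$ and hence equals $\phi_t(a \xi + b \eta)$. The isometry is the crux of the argument: by Theorem \ref{th:regular-sol-LSIWE} (equivalently Theorem 2.7 of \cite{FagMora2013}) the real process $\left( \left\Vert \phi_t(\xi) \right\Vert^2 \right)_{t \in \mathbb{I}}$ is a martingale, and therefore
\[
\left\Vert \Phi_t \xi \right\Vert^2_{\mathfrak{H}_t} = \mathbb{E}_{\mathbb{Q}} \left\Vert \phi_t(\xi) \right\Vert^2 = \mathbb{E}_{\mathbb{Q}} \left\Vert \xi \right\Vert^2 = \left\Vert \xi \right\Vert^2_{\mathfrak{H}_0} .
\]
Because $\phi_t(\xi)$ is $\mathfrak{F}_t$-measurable, $\Phi_t$ maps $\mathcal{D}$ into $\mathfrak{H}_t$.

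Finally, $\Phi_t$ is a linear isometry from the dense subspace $\mathcal{D}$ into the complete space $\mathfrak{H}_t$, hence uniformly continuous, so the bounded linear transformation theorem yields a unique bounded linear extension to all of $\mathfrak{H}_0$; by continuity of the norm this extension still satisfies $\left\Vert \Phi_t \xi \right\Vert = \left\Vert \xi \right\Vert$ and so is an isometry, and its range remains in the closed subspace $\mathfrak{H}_t$. I expect the only delicate points to be the density of $\mathcal{D}$ and the measurability bookkeeping — making sure the truncations stay $\mathfrak{F}_0$-measurable and that the extension takes values in $\mathfrak{H}_t$ — since the isometry itself is immediate from the martingale property already at hand.
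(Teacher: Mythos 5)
Your proposal is correct and follows essentially the same route as the paper: restrict to the dense set of regular initial data, get the isometry identity $\mathbb{E}_{\mathbb{Q}}\left\Vert \phi_{t}\left( \xi \right) \right\Vert^{2} = \mathbb{E}_{\mathbb{Q}}\left\Vert \xi \right\Vert^{2}$ from the martingale property of $\left\Vert \phi_{t}\left( \xi \right) \right\Vert^{2}$ (Theorem 2.7 of \cite{FagMora2013}), and extend uniquely by the bounded linear transformation theorem. The only cosmetic difference is in the density step, where the paper uses the resolvent approximations $n\left( nI + C \right)^{-1}\widetilde{\xi}$ while you use the spectral truncations $\mathbf{1}_{[0,m]}\left( C \right)\xi$; both exploit the self-adjointness and non-negativity of $C$ in the same way.
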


\begin{proof}
	Consider $\widetilde{\xi} \in L^2 \left(  \Omega , \mathfrak{F}_{0}, \mathbb{Q}\right)$.
	Since 
	$ \left\Vert n \left( n I +C  \right)^{-1} \right\Vert \leq 1 $
	and
	$\lim_{n \rightarrow + \infty} n \left( n I +C  \right)^{-1} \widetilde{\xi}  = \widetilde{\xi} $
	(see, e.g., \cite{Pazy}),
	applying the dominated convergence theorem yields 
	\[
	\mathbb{E} \left( \left\Vert n \left( n I +C  \right)^{-1}  \widetilde{\xi} -   \widetilde{\xi} \right\Vert ^{2} \right)
	\rightarrow_{n \rightarrow + \infty} 0 .
	\]
	As $C n \left( n I +C  \right)^{-1} $ is a bounded linear operator,
	\[
	\mathbb{E} \left( \left\Vert C n \left( n I +C  \right)^{-1}  \widetilde{\xi}  \right\Vert ^{2} \right)
	\leq \left\Vert C n \left( n I +C  \right)^{-1}   \right\Vert \mathbb{E} \left( \left\Vert \widetilde{\xi}  \right\Vert ^{2} \right)
	< \infty .
	\]
	This gives
	\[
	\mathbb{E} \left( 
	\left\Vert  n \left( n I +C  \right)^{-1}  \widetilde{\xi} \right\Vert ^{2} \right) 
	+  
	\mathbb{E} \left( \left\Vert C \,  n \left( n I +C  \right)^{-1}  \widetilde{\xi} \right\Vert ^{2} \right) < \infty ,
	\]
	and so the domain of the linear operator
	$
	\Phi_t: 
	L^2 \left(  \Omega , \mathfrak{F}_{0}, \mathbb{Q}\right) 
	\rightarrow
	L^2 \left(  \Omega , \mathfrak{F}_{t}, \mathbb{Q}\right)
	$
	is dense in $ L^2 \left(  \Omega , \mathfrak{F}_{0}, \mathbb{Q}\right)  $.
	Since 
	$  
	\mathbb{E} \left( \left\Vert \Phi_t \xi \right\Vert ^{2} \right) 
	=
	\mathbb{E} \left( \left\Vert \xi \right\Vert ^{2} \right) 
	$
	for any $ \xi \in \mathcal{D}\left( \Phi_t \right) $
	(see, e.g., \cite{FagMora2013}),
	$\Phi_t$ can be extended uniquely to an isometric operator on 
	$
	L^2 \left(  \Omega , \mathfrak{F}_{0}, \mathbb{Q}\right) 
	$
	to
	$
	L^2 \left(  \Omega , \mathfrak{F}_{t}, \mathbb{Q}\right) 
	$. 
\end{proof}

We now obtain that the operator $\eta_t$ given by (\ref{def:eta})
is equal to $\Phi_t \rho_0 \Phi_t^{\star}$ under general conditions.
Then, $\eta_t$ does not depend on the decomposition 
$ \sum_{n = 1}^{\infty} p_n \left\vert  \phi^n_0   \rangle\langle \phi^n_0  \right\vert $
of the initial density operator $\rho_0$.

\begin{lemma}
\label{le:Invarianza_lineal} 
Let Hypothesis \ref{hyp:CF} hold.
Consider the sequence of random variables
$
\psi^n_0 :  \left( \Omega ,\mathfrak{F}_0 \right) \rightarrow \left( \mathfrak{h} , \mathcal{B} \left( \mathfrak{h}  \right) \right)
$ 
that satisfy
$
\sum_{n=1}^{\infty} \left\Vert \psi^n_0 \right\Vert ^{2} 
=
1
$ a.s.,
and
$
\sum_{n=1}^{\infty} \mathbb{E} \left( \left\Vert C \psi^n_0 \right\Vert ^{2} \right) < \infty 
$.
Then, for any $t > 0$ we have
\begin{equation}
	\label{eq:Representacion_Phi}
	\sum_{n = 1}^{\infty} \left\vert  \phi_{t} \left( \psi^n_0 \right)  \rangle\langle  \phi_{t} \left( \psi^n_0 \right)   \right\vert
	=
	\Phi_t 
	\left(\sum_{n = 1}^{\infty} \left\vert  \psi^n_0  \rangle\langle   \psi^n_0   \right\vert \right) 
	\Phi^{\star}_t ,
\end{equation}
where $\Phi_t$ is given by Lemma \ref{OperadorPhi_t},
and 
$\phi_{t} \left( \psi^n_0 \right) $ denotes 
the strong $C$-solution of  (\ref{eq:Linear_SSE}) with initial datum $ \psi^n_0$.
\end{lemma}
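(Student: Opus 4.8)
The plan is to read every factor in (\ref{eq:Representacion_Phi}) as a trace-class operator on the Hilbert space of $\mathfrak{h}$-valued square-integrable random variables: $|\psi^n_0\rangle\langle\psi^n_0|$ is the rank-one operator on $L^2\left(\Omega,\mathfrak{F}_0,\mathbb{Q}\right)$ and $|\phi_t(\psi^n_0)\rangle\langle\phi_t(\psi^n_0)|$ the rank-one operator on $L^2\left(\Omega,\mathfrak{F}_t,\mathbb{Q}\right)$, both spaces carrying their $\mathfrak{h}$-valued inner product. In this reading $\Phi_t$ from Lemma \ref{OperadorPhi_t} is a genuine isometric operator between these two spaces, so $\Phi_t(\cdot)\Phi^{\star}_t$ maps trace-class operators on $L^2\left(\Omega,\mathfrak{F}_0,\mathbb{Q}\right)$ to trace-class operators on $L^2\left(\Omega,\mathfrak{F}_t,\mathbb{Q}\right)$ and the identity typechecks.

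First I would check that the two series converge in trace norm. Each $\psi^n_0$ satisfies $\mathbb{E}\Vert\psi^n_0\Vert^2\le\mathbb{E}\sum_m\Vert\psi^m_0\Vert^2=1$ and $\mathbb{E}\Vert C\psi^n_0\Vert^2\le\sum_m\mathbb{E}\Vert C\psi^m_0\Vert^2<\infty$, so each lies in the domain on which $\Phi_t\psi^n_0=\phi_t(\psi^n_0)$ by Lemma \ref{OperadorPhi_t}. The partial sums of $\sum_n|\psi^n_0\rangle\langle\psi^n_0|$ form an increasing sequence of positive operators whose traces equal $\sum_n\mathbb{E}\Vert\psi^n_0\Vert^2=\mathbb{E}\sum_n\Vert\psi^n_0\Vert^2=1$ by Tonelli, so the tails vanish in trace norm and the series converges to a positive trace-class operator $\rho_0$. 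Since $\Phi_t$ is isometric, $\mathbb{E}\Vert\phi_t(\psi^n_0)\Vert^2=\mathbb{E}\Vert\psi^n_0\Vert^2$, and the same estimate shows that $\sum_n|\phi_t(\psi^n_0)\rangle\langle\phi_t(\psi^n_0)|$ converges in trace norm as well.

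The algebraic core is the elementary conjugation identity for rank-one operators: for a bounded $T:H_1\to H_2$ and $v\in H_1$ one has $T|v\rangle\langle v|T^{\star}=|Tv\rangle\langle Tv|$, which I would verify by acting on an arbitrary $\xi\in H_2$ and using $\langle v,T^{\star}\xi\rangle_{H_1}=\langle Tv,\xi\rangle_{H_2}$. Applying this with $T=\Phi_t$ and $v=\psi^n_0$ gives $\Phi_t|\psi^n_0\rangle\langle\psi^n_0|\Phi^{\star}_t=|\phi_t(\psi^n_0)\rangle\langle\phi_t(\psi^n_0)|$ for every $n$. Finally, since $A\mapsto\Phi_t A\Phi^{\star}_t$ is a trace-norm contraction ($\Vert\Phi_t A\Phi^{\star}_t\Vert_1\le\Vert\Phi_t\Vert\,\Vert A\Vert_1\,\Vert\Phi^{\star}_t\Vert=\Vert A\Vert_1$ by the ideal property of $\mathfrak{L}_1$ and isometry of $\Phi_t$), it is continuous and commutes with the trace-norm-convergent sum, whence $\Phi_t\rho_0\Phi^{\star}_t=\sum_n\Phi_t|\psi^n_0\rangle\langle\psi^n_0|\Phi^{\star}_t=\sum_n|\phi_t(\psi^n_0)\rangle\langle\phi_t(\psi^n_0)|$, which is exactly (\ref{eq:Representacion_Phi}).

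There is no deep obstacle here; the entire point is bookkeeping, and the rank-one identity does all the work once the setting is fixed. The step I would be most careful about is the well-definedness layer: confirming that the $\psi^n_0$ really fall in the domain where the isometric extension $\Phi_t$ coincides with the strong $C$-solution map $\phi_t(\cdot)$, and that the interchange of $\Phi_t(\cdot)\Phi^{\star}_t$ with the infinite sum is licensed by trace-norm continuity rather than merely by strong or weak convergence of the partial sums.
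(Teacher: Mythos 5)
Your proof is correct and rests on the same mechanism as the paper's own proof: the adjoint relation $\langle \zeta, \Phi_t \psi^n_0\rangle = \langle \Phi_t^{\star}\zeta, \psi^n_0\rangle$ applied term by term to the rank-one summands, with both sides of (\ref{eq:Representacion_Phi}) interpreted as operators built on $L^2\left(\Omega,\mathfrak{F}_t,\mathbb{Q}\right)$. The only difference is bookkeeping: you sum the termwise conjugation identity $\Phi_t\ketbra{\psi^n_0}{\psi^n_0}\Phi_t^{\star}=\ketbra{\Phi_t\psi^n_0}{\Phi_t\psi^n_0}$ in trace norm and invoke continuity of $A\mapsto\Phi_t A\Phi_t^{\star}$, whereas the paper verifies the equality of quadratic forms against an arbitrary $\zeta$ (which is the same identity tested against $\zeta$, together with a Cauchy--Schwarz bound showing the right-hand side is well defined) and then concludes by polarization.
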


\begin{proof}
For any $\zeta  \in L^2 \left(  \Omega , \mathfrak{F}_{t}, \mathbb{Q}\right) $,
\begin{eqnarray*}
	\fl
	\mathbb{E} \langle \zeta, \sum_{n = 1}^{\infty} \left\vert  \phi_{t} \left( \psi^n_0 \right)  \rangle\langle  \phi_{t} \left( \psi^n_0 \right)   \right\vert  \zeta  \rangle
	& = 
	\sum_{n = 1}^{\infty} \mathbb{E} \left( \left\vert \langle \zeta,    \Phi_t  \psi^n_0   \rangle \right\vert^2  \right)
	\\
	\fl
	& = 
	\sum_{n = 1}^{\infty} \mathbb{E} \left( \left\vert \langle  \Phi^{\star}_t \zeta,     \psi^n_0   \rangle \right\vert^2  \right)  
	= 
	\mathbb{E} \left( \sum_{n = 1}^{\infty} \left\vert \langle  \Phi^{\star}_t \zeta,     \psi^n_0   \rangle \right\vert^2  \right) 	.
\end{eqnarray*}
Hence,
\begin{equation}
	\label{eq:5.7.2}
	\mathbb{E} \langle \zeta, \sum_{n = 1}^{\infty} \left\vert  \phi_{t} \left( \psi^n_0 \right)  \rangle\langle  \phi_{t} \left( \psi^n_0 \right)   \right\vert  \zeta  \rangle
	=
	\mathbb{E} \langle \Phi^{\star}_t \zeta , \sum_{n = 1}^{\infty} \left\vert  \psi^n_0  \rangle\langle   \psi^n_0   \right\vert  \Phi^{\star}_t \zeta  \rangle .
\end{equation}
Using  the Cauchy-Schwarz inequality gives
\[
\left\Vert
\sum_{n = 1}^{\infty} \left\vert  \psi^n_0  \rangle\langle   \psi^n_0   \right\vert  \Phi^{\star}_t \zeta
\right\Vert
=
\left\Vert
\sum_{n = 1}^{\infty} \langle   \psi^n_0 ,  \Phi^{\star}_t \zeta \rangle \psi^n_0  
\right\Vert
\leq
\left\Vert \Phi^{\star}_t \zeta \right\Vert \sum_{n = 1}^{\infty} \left\Vert \psi^n_0 \right\Vert^2  
= 
\left\Vert \Phi^{\star}_t \zeta \right\Vert ,
\]
and so
$
\sum_{n = 1}^{\infty} \left\vert  \psi^n_0  \rangle\langle   \psi^n_0   \right\vert  \Phi^{\star}_t \zeta
\in
L^2 \left(  \Omega , \mathfrak{F}_{t}, \mathbb{Q}\right) 
$.
From (\ref{eq:5.7.2}) we obtain
\[
\mathbb{E} \langle \zeta, \sum_{n = 1}^{\infty} \left\vert  \phi_{t} \left( \psi^n_0 \right)  \rangle\langle  \phi_{t} \left( \psi^n_0 \right)   \right\vert  \zeta  \rangle
=
\mathbb{E} \langle  \zeta , \Phi_t \sum_{n = 1}^{\infty} \left\vert  \psi^n_0  \rangle\langle   \psi^n_0   \right\vert  \Phi^{\star}_t \zeta  \rangle .
\]
Applying the polarization identity we deduce that
\[
\mathbb{E} \langle \zeta_1, \sum_{n = 1}^{\infty} \left\vert  \phi_{t} \left( \psi^n_0 \right)  \rangle\langle  \phi_{t} \left( \psi^n_0 \right)   \right\vert  \zeta_2  \rangle
=
\mathbb{E} \langle  \zeta_1 , \Phi_t \sum_{n = 1}^{\infty} \left\vert  \psi^n_0  \rangle\langle   \psi^n_0   \right\vert  \Phi^{\star}_t \zeta_2  \rangle 
\]
for all $\zeta_1, \zeta_2 \in L^2 \left(  \Omega , \mathfrak{F}_{t}, \mathbb{Q}\right)$.
This implies (\ref{eq:Representacion_Phi}).
\end{proof}

 Lemma \ref{le:Invarianza_lineal} leads to the following lemma.

\begin{lemma}
\label{le:Invarianza_nolineal} 
Let Hypothesis \ref{hyp:CF} hold.
Consider the sequence of random variables
$
\psi^n_0, \chi^n_0 :  \left( \Omega ,\mathfrak{F}_0 \right) \rightarrow \left( \mathfrak{h} , \mathcal{B} \left( \mathfrak{h}  \right) \right)
$ 
that satisfy
$
\sum_{n=1}^{\infty} \mathbb{E} \left( \left\Vert C \psi^n_0 \right\Vert ^{2} +  \left\Vert C \chi^n_0 \right\Vert ^{2}  \right) < \infty 
$,
$
\sum_{n=1}^{\infty} \left\Vert \psi^n_0 \right\Vert ^{2} 
=
1
$ a.s.,
$
\sum_{n=1}^{\infty} \left\Vert \chi^n_0 \right\Vert ^{2} 
= 1 
$ a.s.
and
$
\sum_{n = 1}^{\infty} \left\vert\psi^n_0  \rangle\langle\psi^n_0  \right\vert 
=
\sum_{n = 1}^{\infty} \left\vert\chi^n_0  \rangle\langle\chi^n_0  \right\vert 
$
a.s.
Let $\phi_{t} \left( \psi^n_0 \right) $ and $ \phi_{t} \left( \chi^n_0 \right)$ denote 
the strong $C$-solution of  (\ref{eq:Linear_SSE}) with initial datum $ \psi^n_0$ and $\chi^n_0$,
respectively.
Then,
\begin{equation}
 \label{eq:5.7.1}
 \sum_{n = 1}^{\infty} \left\vert  \hat{ \psi }^n_t   \rangle\langle \hat{ \psi }^n_t  \right\vert
=
\sum_{n = 1}^{\infty} \left\vert  \hat{ \chi }^n_t  \rangle\langle \hat{ \chi }^n_t  \right\vert 
\qquad a.s.,
\end{equation}
where
\[
\hat{\psi}^n_t = 
\cases{
 \phi_{t} \left( \psi^n_0 \right) / \sqrt{  \sum_{n=1}^{\infty} \left\Vert \phi_{t} \left( \psi^n_0 \right)   \right\Vert ^{2}  }
&
 if  $  \sum_{n=1}^{\infty} \left\Vert \phi_{t} \left( \psi^n_0 \right)   \right\Vert ^{2} \neq 0 $
\\
0
&
 if  $  \sum_{n=1}^{\infty} \left\Vert \phi_{t} \left( \psi^n_0 \right)   \right\Vert ^{2} = 0 $
} 
,
\]
and
\[
\hat{\chi}^n_t = 
\cases{
 \phi_{t} \left( \chi^n_0 \right) / \sqrt{  \sum_{n=1}^{\infty} \left\Vert \phi_{t} \left( \chi^n_0 \right)   \right\Vert ^{2}  }
&
if  $ \sum_{n=1}^{\infty} \left\Vert \phi_{t} \left( \chi^n_0 \right)   \right\Vert ^{2} \neq 0 $
\\
0
&
 if  $  \sum_{n=1}^{\infty} \left\Vert \phi_{t} \left( \chi^n_0 \right)   \right\Vert ^{2} = 0 $
} .
\]
\end{lemma}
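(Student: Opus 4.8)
The plan is to deduce the nonlinear (normalized) identity (\ref{eq:5.7.1}) directly from the linear (unnormalized) representation already established in Lemma \ref{le:Invarianza_lineal}. First I would verify that both data sequences $\left( \psi^n_0 \right)_{n \in \mathbb{N}}$ and $\left( \chi^n_0 \right)_{n \in \mathbb{N}}$ fall under the hypotheses of Lemma \ref{le:Invarianza_lineal}: the normalizations $\sum_{n=1}^{\infty} \left\Vert \psi^n_0 \right\Vert^2 = \sum_{n=1}^{\infty} \left\Vert \chi^n_0 \right\Vert^2 = 1$ a.s.\ are assumed here, while the two summability conditions $\sum_{n=1}^{\infty} \mathbb{E} \left( \left\Vert C \psi^n_0 \right\Vert^2 \right) < \infty$ and $\sum_{n=1}^{\infty} \mathbb{E} \left( \left\Vert C \chi^n_0 \right\Vert^2 \right) < \infty$ are immediate consequences of the single hypothesis $\sum_{n=1}^{\infty} \mathbb{E} \left( \left\Vert C \psi^n_0 \right\Vert^2 + \left\Vert C \chi^n_0 \right\Vert^2 \right) < \infty$.

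Next I would apply Lemma \ref{le:Invarianza_lineal} to each sequence separately. Writing $\rho_0 = \sum_{n=1}^{\infty} \left\vert \psi^n_0 \rangle\langle \psi^n_0 \right\vert = \sum_{n=1}^{\infty} \left\vert \chi^n_0 \rangle\langle \chi^n_0 \right\vert$ a.s.\ (the coincidence of the two decompositions being part of the hypothesis), the representation (\ref{eq:Representacion_Phi}) gives
\[
\sum_{n=1}^{\infty} \left\vert \phi_{t} \left( \psi^n_0 \right) \rangle\langle \phi_{t} \left( \psi^n_0 \right) \right\vert
=
\Phi_t \rho_0 \Phi_t^{\star}
=
\sum_{n=1}^{\infty} \left\vert \phi_{t} \left( \chi^n_0 \right) \rangle\langle \phi_{t} \left( \chi^n_0 \right) \right\vert
\qquad \mathrm{a.s.}
\]
The decisive point is that the middle expression $\Phi_t \rho_0 \Phi_t^{\star}$ depends on the initial data only through the operator $\rho_0$ and not through the chosen ensemble; hence the two \emph{unnormalized} evolved operators, call them $\eta_t^{\psi}$ and $\eta_t^{\chi}$, coincide almost surely.

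It then remains to pass to the normalized objects. By Lemma \ref{le:L1} one has $\Tr \left( \eta_t^{\psi} \right) = \sum_{n=1}^{\infty} \left\Vert \phi_{t} \left( \psi^n_0 \right) \right\Vert^2$ and likewise $\Tr \left( \eta_t^{\chi} \right) = \sum_{n=1}^{\infty} \left\Vert \phi_{t} \left( \chi^n_0 \right) \right\Vert^2$; since $\eta_t^{\psi} = \eta_t^{\chi}$ a.s., the two normalizing factors agree a.s.\ and, in particular, their vanishing sets coincide. On the event where this common trace is nonzero, the definitions of $\hat{\psi}^n_t$ and $\hat{\chi}^n_t$ give $\sum_{n=1}^{\infty} \left\vert \hat{\psi}^n_t \rangle\langle \hat{\psi}^n_t \right\vert = \eta_t^{\psi} / \Tr \left( \eta_t^{\psi} \right)$ and the analogous identity for $\chi$, which are equal; on the complementary event both sums vanish by definition. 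This yields (\ref{eq:5.7.1}).

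I do not expect a serious obstacle, since the statement is essentially a corollary of Lemma \ref{le:Invarianza_lineal} once one recognizes that $\rho_0 \mapsto \Phi_t \rho_0 \Phi_t^{\star}$ is a well-defined function of $\rho_0$ alone. The only point demanding genuine care is the bookkeeping around normalization: one must confirm that the random normalizing factors coincide almost surely and that the degenerate zero-trace event is common to both decompositions, so that the convention $\hat{\psi}^n_t = \hat{\chi}^n_t = 0$ there is applied consistently on both sides.
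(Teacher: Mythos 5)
Your proposal is correct and takes essentially the same route as the paper's proof: both apply Lemma \ref{le:Invarianza_lineal} to each decomposition, use the a.s.\ coincidence of the initial density operators to identify both unnormalized evolved operators with $\Phi_t \rho_0 \Phi_t^{\star}$, and then pass to the normalized operators by dividing by the common trace, which coincides a.s.\ for the two families. Your explicit handling of the degenerate zero-trace event is in fact slightly more careful than the paper's, which only remarks that the identity holds provided both normalizing factors are nonzero.
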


\begin{proof}
 Applying Lemma \ref{le:Invarianza_lineal} we obtain 
\begin{eqnarray*}
 \sum_{n = 1}^{\infty} \left\Vert  \phi_{t} \left( \psi^n_0 \right)  \right\Vert^2
& =
\Tr
\left(
\sum_{n = 1}^{\infty} \left\vert  \phi_{t} \left( \psi^n_0 \right)  \rangle\langle  \phi_{t} \left( \psi^n_0 \right)   \right\vert
\right)
\\
&
=
\Tr
\left(
\Phi_t 
\left(\sum_{n = 1}^{\infty} \left\vert  \psi^n_0  \rangle\langle   \psi^n_0   \right\vert \right) 
\Phi^{\star}_t 
\right) 
=
\sum_{n = 1}^{\infty} \left\Vert  \phi_{t} \left( \chi^n_0 \right)  \right\Vert^2 
\end{eqnarray*}
 almost surely. 
Hence,
\begin{eqnarray*}
\fl
 &
 \sum_{n = 1}^{\infty} \left\vert \frac{ \phi_{t} \left( \psi^n_0 \right) }{ \sqrt{\sum_{n = 1}^{\infty} \left\Vert  \phi_{t} \left( \psi^n_0 \right)  \right\Vert^2} } \rangle
\langle \frac{ \phi_{t} \left( \psi^n_0 \right)  
 }{ \sqrt{\sum_{n = 1}^{\infty} \left\Vert  \phi_{t} \left( \psi^n_0 \right)  \right\Vert^2} } 
\right\vert 
	=
	\frac{
		\Phi_t 
		\left(\sum_{n = 1}^{\infty} \left\vert  \psi^n_0  \rangle\langle   \psi^n_0   \right\vert \right) 
		\Phi^{\star}_t
	}{
	\sum_{n = 1}^{\infty} \left\Vert  \phi_{t} \left( \psi^n_0 \right)  \right\Vert^2 }
\\
\fl \qquad
& =
\sum_{n = 1}^{\infty} \left\vert \frac{ \phi_{t} \left( \chi^n_0 \right) }{
\sqrt{\sum_{n = 1}^{\infty} \left\Vert  \chi_{t} \left( \psi^n_0 \right)  \right\Vert^2} } 
 \rangle\langle  \frac{ \phi_{t} \left( \chi^n_0 \right) }{
\sqrt{\sum_{n = 1}^{\infty} \left\Vert  \phi_{t} \left( \chi^n_0 \right)  \right\Vert^2} } 
  \right\vert 
  \qquad a.s.
\end{eqnarray*}
provided that 
$ \sum_{n = 1}^{\infty} \left\Vert  \phi_{t} \left( \psi^n_0 \right)  \right\Vert^2 \neq 0$ 
and 
$ \sum_{n = 1}^{\infty} \left\Vert  \phi_{t} \left( \chi^n_0 \right)  \right\Vert^2 \neq 0$.
This implies (\ref{eq:5.7.1}).
\end{proof}

We now prove that the law of the density operator $\rho_{t}$ defined by (\ref{eq:Def_rho})
does not depend on  the decomposition of the initial density operator.
A difficulty we face is that the decompositions 
$ \sum_{n = 1}^{\infty}  \left\vert   \psi^n_0   \rangle\langle   \psi^n_0  \right\vert$
and  $ \sum_{n = 1}^{\infty}  \left\vert   \chi^n_0   \rangle\langle   \chi^n_0  \right\vert$
can live in different probability spaces.

\begin{proof}[Proof of Theorem \ref{th:Invarianza_ci}.] 
First, for all $\varrho \in \mathfrak{L}_{1} \left( \mathfrak{h}\right)$  we set 
\[
\Gamma \left( \varrho \right) = 
\cases{
 \varrho & if  $ \varrho \in \mathfrak{L}_{1}^{+} \left( \mathfrak{h}\right) $
 \\
 0 & if  $ \varrho \notin \mathfrak{L}_{1}^{+} \left( \mathfrak{h}\right) $
}, 
\]
where  
$\mathfrak{L}_{1}^{+} \left( \mathfrak{h}\right)$ stands for the set of all  non-negative  trace-class operators on $\mathfrak{h}$.
Since $  \mathfrak{L}_{1}^{+} \left( \mathfrak{h}\right) \in  \mathcal{B} \left( \mathfrak{L}_{1} \left( \mathfrak{h}\right) \right) $,
$
\Gamma :  \left( \mathfrak{L}_{1} \left( \mathfrak{h}\right) , \mathcal{B} \left( \mathfrak{L}_{1} \left( \mathfrak{h}\right) \right) \right)
 \rightarrow 
 \left( \mathfrak{L}_{1} \left( \mathfrak{h}\right) , \mathcal{B} \left( \mathfrak{L}_{1} \left( \mathfrak{h}\right) \right) \right) 
$
is a non-negative measurable function,
and so there exists a sequence $\left( \zeta_n \right)_{n \in \mathbb{N}}$ of  $\mathfrak{h}$-valued random variables such that
\begin{equation}
\label{eq:rep_Gamma}
\Gamma = \sum_{n \in \mathbb{N}} \left\vert \zeta_n  \rangle\langle  \zeta_n  \right\vert
\end{equation}
and 
$ \langle \zeta_n, \zeta_m \rangle = 0 $ whenever $n\neq m$ (see, e.g., \cite{DaPrato}).
We set $\zeta = \left( \zeta_n \right)_{n \in \mathbb{N}}$. 
Since $ \zeta_n:  \left( \mathfrak{L}_{1} \left( \mathfrak{h}\right) , \mathcal{B} \left( \mathfrak{L}_{1} \left( \mathfrak{h}\right) \right) \right)
 \rightarrow 
\left(  \mathfrak{h}   , \mathcal{B} \left(  \mathfrak{h}\right)  \right)  $ is measurable for all $n \in \mathbb{N}$,
 $\zeta : \left( \mathfrak{L}_{1} \left( \mathfrak{h}\right) , \mathcal{B} \left( \mathfrak{L}_{1} \left( \mathfrak{h}\right) \right) \right)
 \rightarrow 
\left(  \mathfrak{h}^{\mathbb{N}}   , \mathcal{B} \left(  \mathfrak{h}^{\mathbb{N}} \right)  \right)  
 $
is measurable, because
 $ \mathcal{B} \left( \left( \mathfrak{h}^{\mathbb{N}}, d_{\mathfrak{h}^{\mathbb{N}}}  \right)\right)
 =
 \mathcal{B} \left(  \mathfrak{h} \right)^{\mathbb{N}}
 $
 (see, e.g., Remark \ref{Nota:Distancia}).

Second, 
using classical arguments we will specify a stochastic basis over 
$\Omega =  \mathfrak{h}^{\mathbb{N}} \times  C \left( \left[ 0,T\right] ,  \mathbb{R}^{\mathbb{N}} \right)$.
Consider the Wiener measure $\mathbb{Q}_{B}$ on the canonical space 
$\left( C \left( \left[ 0,T\right] ,  \mathbb{R}^{\mathbb{N}} \right) ,  \mathcal{B}  \left( C \left( \left[ 0,T\right] ,  \mathbb{R}^{\mathbb{N}} \right) \right) \right)$,
that is,  with respect to $\mathbb{Q}_{B}$ 
the coordinate maps $  \left( w_k \right)_{k \in \mathbb{N}} \mapsto w_n $ make up a sequence of independent real Brownian motions.
We choose 
 $\mathfrak{F} = \mathcal{B}  \left(  \mathfrak{h}^{\mathbb{N}} \right) \otimes
\mathcal{B}  \left( C \left( \left[ 0,T\right] ,  \mathbb{R}^{\mathbb{N}} \right) \right)$
and 
$ \mathbb{Q} = \mathbb{P}_{ \left( \psi^n_0  \right)_{n \in \mathbb{N}} } \otimes \mathbb{Q}_{B} $,
where 
$\mathbb{P}_{ \left( \psi^n_0  \right)_{n \in \mathbb{N}} }$ is the distribution of $\left( \psi^n_0  \right)_{n \in \mathbb{N}} $
in $ \mathfrak{h}^{\mathbb{N}}$.
Let $\mathfrak{F}^a_t$ be the sub-$\sigma$-algebra of 
$\mathcal{B}  \left( C \left( \left[ 0,T\right] ,  \mathbb{R}^{\mathbb{N}} \right) \right)$
generated by the $ \mathbb{R}^{\mathbb{N}} $-valued maps $f \mapsto f  \left( s\right)$ with $s \in  \left[ 0, t \right]$.
Is worth pointing out that $\mathfrak{F}^a_T = \mathcal{B}  \left( C \left( \left[ 0,T\right] ,  \mathbb{R}^{\mathbb{N}} \right) \right)$ (see, e.g., \cite{Ondrejat2004}).
We take
\[
\mathfrak{F}_t 
=
\cap_{\epsilon > 0} \sigma\left( 
\mathcal{B}  \left(  \mathfrak{h}^{\mathbb{N}} \right) \otimes \mathfrak{F}^a_{\left( t + \epsilon \right) \wedge T}  
\cup
\left\{ A \in \mathfrak{F}: \mathbb{Q} \left( A \right) = 0
\right\}
 \right).
\]
Thus, the filtration $\left( \mathfrak{F}_t \right)_{t \in \left[ 0, T \right]}$ satisfies the so-called usual conditions.
We define the functions $B^k : \Omega \rightarrow C \left( \left[ 0,T\right] ,  \mathbb{R} \right) $ 
by 
$
B^n \left( \left( \left( x_k  \right)_{k \in \mathbb{N}} ,  \left( w_k  \right)_{k \in \mathbb{N}} \right) \right)
=
w_n  
$.
Then $B^{1}, B^2, \ldots $ are real valued independent Brownian motions on
$\left( \Omega,\mathfrak{F},\left( \mathfrak{F}_{t}\right) _{t \in \mathbb{I}},\mathbb{Q}\right) $.

Third,
we define $\widetilde{\psi}_0  : \Omega \rightarrow \mathfrak{h}^{\mathbb{N}} $ by 
\[
\widetilde{\psi}_0 \left( \left( \left( x_n  \right)_{n \in \mathbb{N}} , \left( w_n  \right)_{n \in \mathbb{N}} \right) \right)
=
\left( x_n  \right)_{n \in \mathbb{N}} \, \mathbf{1}_{ \mathcal{D}\left( \widetilde{ C } \right)  \cap
\left\{ y \in \mathfrak{h}^{\mathbb{N}} : \sum_{n=1}^{\infty} \left\Vert y^n \right\Vert ^{2} = 1 \right\} }
\left( \left( x_n  \right)_{n \in \mathbb{N}} \right) ,
\]
where $\mathbf{1}$ denotes the indicator function.
Since 
\[ 
\mathbb{P}_{ \left( \psi^n_0  \right)_{n \in \mathbb{N}} }  \left(
 \mathcal{D}\left( \widetilde{ C } \right)  \cap
\left\{ y \in \mathfrak{h}^{\mathbb{N}} : \sum_{n=1}^{\infty} \left\Vert y^n \right\Vert ^{2} = 1 \right\}\right) = 1,
\]
$ \widetilde{\psi}_0 $ is distributed according to $\mathbb{P}_{ \left( \psi^n_0  \right)_{n \in \mathbb{N}} }$.
From Theorem \ref{th:regular-sol-LSIWE} we obtain that
there exists a unique $C$-solution $\phi_{t} \left( \widetilde{\psi}^n_0 \right) $  of  (\ref{eq:Linear_SSE})
with initial datum $ \widetilde{\psi}^n_0$  for any $n \in \mathbb{N}$,
where $\left( \widetilde{\psi}^n_0  \right)_{n \in \mathbb{N} }= \widetilde{\psi}_0$.

Fourth, we set 
\[
\Psi_n = \zeta_n \left( \sum_{n = 1}^{\infty}  \left\vert    \widetilde{\psi}^n_0   \rangle\langle   \widetilde{\psi}^n_0  \right\vert \right) .
\]
Thus,
$\Psi = \left(   \Psi_n \right)_{n \in \mathbb{N} }  
=  \zeta \left( \sum_{n = 1}^{\infty}  \left\vert    \widetilde{\psi}^n_0   \rangle\langle   \widetilde{\psi}^n_0  \right\vert \right)$
is an    $\mathfrak{h}^{\mathbb{N}}$-valued $\mathfrak{F}_{0}$-random variable,
which defined on $\Omega$.
As $ \sum_{n = 1}^{\infty}  \left\vert    \widetilde{\psi}^n_0   \rangle\langle   \widetilde{\psi}^n_0  \right\vert
 \in \mathfrak{L}_{1}^{+} \left( \mathfrak{h}\right)$,
 using (\ref{eq:rep_Gamma}) gives
\begin{equation}
\label{eq:Igualdad_psi}
\sum_{n = 1}^{\infty}  \left\vert    \widetilde{\psi}^n_0   \rangle\langle   \widetilde{\psi}^n_0  \right\vert 
= 
\sum_{n = 1}^{\infty} \left\vert   \Psi_n   \rangle\langle  \Psi_n  \right\vert .
\end{equation}

Using (\ref{eq:Igualdad_psi}) gives 
\[
\sum_{n = 1}^{\infty} \left\Vert \Psi_n   \right\Vert ^2 
= \Tr \left(  \sum_{n = 1}^{\infty}  \left\vert    \widetilde{\psi}^n_0   \rangle\langle   \widetilde{\psi}^n_0  \right\vert   \right) 
= \sum_{n = 1}^{\infty} \left\Vert \widetilde{\psi}^n_0   \right\Vert ^2
= 1
\quad  \mathbb{Q}\mathrm{-}a.s.
\]
Since 
$   \sum_{n = 1}^{\infty}  \left\vert    \widetilde{\psi}^n_0   \rangle\langle   \widetilde{\psi}^n_0  \right\vert
\in \mathfrak{L}_{1,C}^+ \left( \mathfrak{h}\right) $,
the range of $  \sum_{n = 1}^{\infty}  \left\vert    \widetilde{\psi}^n_0   \rangle\langle   \widetilde{\psi}^n_0  \right\vert  $ 
is contained in $ \mathcal{D}\left( C \right)$
(see, e.g., Theorems 3.1 and 3.2 of \cite{MoraAP}).
This implies $ \Psi_n \in  \mathcal{D}\left( C \right) $,
because 
$\left( \sum_{k = 1}^{\infty}  \left\vert   \psi^k_0   \rangle\langle  \psi^k_0  \right\vert \right) \Psi_n
= \left\Vert \Psi_n \right\Vert^2 \Psi_n $.
Applying the Parseval identity yields 
\begin{eqnarray*}
\fl
 \sum_{k=1}^{\infty} \langle e_k , C  \left( \sum_{n = 1}^{\infty} \left\vert   \Psi_n   \rangle\langle  \Psi_n  \right\vert \right)  C  e_k \rangle
& =
 \sum_{k=1}^{\infty} \langle C e_k ,   \left( \sum_{n = 1}^{\infty} \left\vert   \Psi_n   \rangle\langle  \Psi_n  \right\vert \right)  C e_k \rangle
 \\
 & =
\sum_{k=1}^{\infty} \sum_{n=1}^{\infty} \langle C e_k ,  \Psi_n  \rangle \langle \Psi_n ,  C  e_k \rangle
\\
& =
\sum_{n=1}^{\infty} \sum_{k=1}^{\infty}  \langle  e_k , C \Psi_n  \rangle \langle C \Psi_n ,    e_k \rangle
=
\sum_{n=1}^{\infty} \left\Vert C \Psi_n \right\Vert^2 ,
\end{eqnarray*}
where 
$\left( e_n \right)_{n \in \mathbb{N}}$ is  an orthonormal basis of $ \mathfrak{h} $ formed by elements of $ \mathcal{D}\left( C \right)$.
Likewise,
\[
 \sum_{k=1}^{\infty} \langle e_k , C  \left( \sum_{n = 1}^{\infty}  \left\vert    \widetilde{\psi}^n_0   \rangle\langle   \widetilde{\psi}^n_0  \right\vert \right)  C  e_k 
=
\sum_{n=1}^{\infty} \left\Vert C  \widetilde{\psi}^n_0 \right\Vert^2 .
\]
Hence,
$
\sum_{n=1}^{\infty} \mathbb{E} \left\Vert C \Psi_n \right\Vert^2 
= \sum_{n=1}^{\infty} \mathbb{E}  \left\Vert C \widetilde{\psi}^n_0  \right\Vert^2
< + \infty 
$.
According to Theorem \ref{th:regular-sol-LSIWE} we have that
there exists a unique $C$-solution $\phi_{t} \left(  \Psi_n \right) $  of  (\ref{eq:Linear_SSE}),
defined on $\Omega$,
with initial datum $ \Psi_n$  for any $n \in \mathbb{N}$.

Fifth, 
we compare the probability distributions of the density operators 
$ \sum_{n = 1}^{\infty} \left\vert  \psi^n_t   \rangle\langle \psi^n_t \right\vert$
and $\sum_{n = 1}^{\infty} \left\vert  \hat{ \Psi }^n_t  \rangle\langle \hat{ \Psi }^n_t  \right\vert ) $,
where
\begin{equation}
 \label{eq:Def_Psi}
 \hat{\Psi}^n_t = 
\cases{
 \phi_{t} \left( \Psi_n \right) / \sqrt{  \sum_{n=1}^{\infty} \left\Vert \phi_{t} \left( \Psi_n \right)   \right\Vert ^{2}  }
&
if  $  \sum_{n=1}^{\infty} \left\Vert \phi_{t} \left(  \Psi_n \right)   \right\Vert ^{2} \neq 0 $
\\
0
&
if  $  \sum_{n=1}^{\infty} \left\Vert \phi_{t} \left( \Psi_n \right)   \right\Vert ^{2} = 0 $
} .
\end{equation}
Combining (\ref{eq:Igualdad_psi}) with Lemma \ref{le:Invarianza_nolineal} we get
\begin{equation}
\label{eq:Igualdad_psi_2}
\sum_{n = 1}^{\infty} \left\vert  \hat{ \Psi }^n_t  \rangle\langle \hat{ \Psi }^n_t  \right\vert 
=
 \sum_{n = 1}^{\infty} \left\vert  \hat{ \psi }^n_t   \rangle\langle \hat{ \psi }^n_t  \right\vert
\qquad a.s.,
\end{equation}
where
\[
\hat{\psi}^n_t = 
\cases{
 \phi_{t} \left( \widetilde{\psi}^n_0 \right) / \sqrt{  \sum_{n=1}^{\infty} \left\Vert \phi_{t} \left( \widetilde{\psi}^n_0 \right)   \right\Vert ^{2}  }
&
 if  $  \sum_{n=1}^{\infty} \left\Vert \phi_{t} \left( \widetilde{\psi}^n_0 \right)   \right\Vert ^{2} \neq 0 $
\\
0
&
if  $ \sum_{n=1}^{\infty} \left\Vert \phi_{t} \left( \widetilde{\psi}^n_0 \right)   \right\Vert ^{2} = 0 $
} .
\]
From Theorem \ref{th:Model} we obtain that
$
\left( \left(  \sum_{n=1}^{\infty} \left\Vert \phi_{T} \left( \widetilde{\psi}^n_0 \right)   \right\Vert ^{2} \right) \cdot \mathbb{Q}, \left( \hat{\psi}^n_t  \right)_{t \in \left[ 0,T\right]}^{n \in \mathbb{N}},\left( \hat{W}_{t}^{\ell}\right) _{t \in\left[ 0,T\right]}^{\ell\in\mathbb{N}} \right)
$
is a $C$-solution of the stochastic system (\ref{st:SIWE}),
where
\begin{equation}
\label{eq:Def_W}
\hat{W}_{t}^{\ell} = B_{t}^{\ell}
- \int_{0}^{t}  2 \Re \left(  \sum_{n=1}^{\infty} \langle  \hat{\psi}^n_s , L_{\ell} \left( s \right) \hat{\psi}^n_s  \rangle  \right) ds .
\end{equation}
Applying Lemma \ref{le:A_rho_B} gives
\[
\hat{W}_{t}^{\ell} 
= 
B_{t}^{\ell}
- \int_{0}^{t}  2 \Re \left(  \Tr  \left( \sum_{n = 1}^{\infty}  \left( \left\vert   \hat{\psi}^n_s   \rangle\langle  \hat{\psi}^n_s  \right\vert \right) L_{\ell} \left( s \right) \right) \right)  ds ,
\]
and so (\ref{eq:Igualdad_psi_2}) leads to
\begin{equation}
 \label{eq:Def_Wh}
 \hat{W}_{t}^{\ell} 
=
B_{t}^{\ell}
- \int_{0}^{t}  2 \Re \left(  \Tr  \left( \sum_{n = 1}^{\infty} \left( \left\vert  \hat{\Psi}^n_s   \rangle\langle   \hat{\Psi}^n_s  \right\vert \right) L_{\ell} \left( s \right) \right) \right)  ds .
\end{equation}

Since $\left( \widetilde{\psi}^n_0  \right)_{n \in \mathbb{N} }$ and $\left( \psi^n_0  \right)_{n \in \mathbb{N} }$
have the same law,
applying Theorem \ref{th:EyU-SIWE} we obtain that the finite-dimensional distributions (under the probability $\mathbb{P}$) of 
$
\left(\left( \psi^n_t  \right)^{n \in \mathbb{N}},\left( W_{t}^{\ell}\right)^{\ell\in\mathbb{N}} \right)_{t \in\left[ 0,T\right]}
$
are equal to the finite-dimensional distributions (under the probability $\left( \sum_{n=1}^{\infty} \left\Vert \phi_{T} \left(\widetilde{\psi}^n_0 \right)   \right\Vert ^{2} \right) \cdot \mathbb{Q}$) of 
$
 \left( \left( \hat{\psi}^n_t  \right)^{n \in \mathbb{N}},\left( \hat{W}_{t}^{\ell}\right)^{\ell\in\mathbb{N}} \right)_{t \in\left[ 0,T\right]}
$.
Therefore,
$
\left( \sum_{n = 1}^{\infty} \left\vert  \psi^n_t   \rangle\langle \psi^n_t  \right\vert ,\left( W_{t}^{\ell}\right)^{\ell\in\mathbb{N}} \right)_{t \in\left[ 0,T\right]}
$
and
$\left(  \sum_{n = 1}^{\infty} \left\vert   \hat{\psi}^n_t   \rangle\langle  \hat{\psi}^n_t  \right\vert , \left(  \hat{ W }_{t}^{\ell}\right)^{\ell\in\mathbb{N}} \right)_{t \in\left[ 0,T\right]}$
have the same finite-dimensional distributions.

Combining (\ref{eq:Igualdad_psi}) with Lemma \ref{le:Invarianza_lineal} yields
\[
\sum_{n = 1}^{\infty} \left\vert \phi_{T} \left(\widetilde{\psi}^n_0 \right)  \rangle\langle \phi_{T} \left(\widetilde{\psi}^n_0 \right)  \right\vert
=
\sum_{n = 1}^{\infty} \left\vert  \phi_{T} \left(\Psi_n \right)  \rangle\langle \phi_{T} \left(\Psi_n \right)  \right\vert 
\qquad \mathbb{Q}\mathrm{-}a.s.,
\] 
and so
\[
\sum_{n=1}^{\infty} \left\Vert \phi_{T} \left(\widetilde{\psi}^n_0 \right)   \right\Vert ^{2} 
=
\sum_{n=1}^{\infty} \left\Vert  \phi_{T} \left(\Psi_n \right)   \right\Vert ^{2} 
\qquad 
\mathbb{Q}\mathrm{-}a.s.
\]
Therefore,
$
\left( \sum_{n=1}^{\infty} \left\Vert \phi_{T} \left(\widetilde{\psi}^n_0 \right)   \right\Vert ^{2} \right) \cdot \mathbb{Q}
=
\left( \sum_{n=1}^{\infty} \left\Vert \phi_{T} \left( \Psi_n \right)   \right\Vert ^{2} \right) \cdot \mathbb{Q}
$,
and so using (\ref{eq:Igualdad_psi_2}) we get that 
the finite-dimensional distributions (with respect to $\mathbb{P}$) of
$
\left( \sum_{n = 1}^{\infty} \left\vert  \psi^n_t   \rangle\langle \psi^n_t  \right\vert,\left( W_{t}^{\ell}\right)^{\ell\in\mathbb{N}} \right)_{t \in\left[ 0,T\right]}
$
are equal to the  finite-dimensional distributions (with respect to  $\left( \sum_{n=1}^{\infty} \left\Vert \phi_{T} \left( \Psi_n \right)   \right\Vert ^{2} \right) \cdot \mathbb{Q}$) of
$
\left(  \sum_{n = 1}^{\infty} \left\vert  \hat{ \Psi }^n_t  \rangle\langle \hat{ \Psi }^n_t  \right\vert ,\left(  \hat{ W }_{t}^{\ell}\right)^{\ell\in\mathbb{N}} \right)_{t \in\left[ 0,T\right]}
$.

Sixth, we verify that $ \left( \left(  \hat{\Psi}^n_t \right)_{n \in \mathbb{N} } \right)_{t \in \left[ 0,T\right]} $ satisfies  (\ref{eq:SIWE}).
From (\ref{eq:Def_Wh}) we get
\[
\hat{W}_{t}^{\ell} = B_{t}^{\ell}
- \int_{0}^{t}  2 \Re \left(  \sum_{n=1}^{\infty} \langle  \hat{\Psi}^n_s , L_{\ell} \left( s \right) \hat{\Psi}^n_s  \rangle  \right) ds ,
\]
and so
using Theorem \ref{th:Model} we get that
$
\left( \left(  \sum_{n=1}^{\infty} \left\Vert \phi_{T} \left( \Psi_n \right)   \right\Vert ^{2} \right) \cdot \mathbb{Q}, \left( \hat{\Psi}^n_t  \right)_{t \in \left[ 0,T\right]}^{n \in \mathbb{N}},\left( \hat{W}_{t}^{\ell}\right) _{t \in\left[ 0,T\right]}^{\ell\in\mathbb{N}} \right)
$
is a $C$-solution of (\ref{eq:SIWE}).
We have that $ \hat{\Psi}^n_0 = \Psi_n $
due to $\sum_{n \in \mathbb{N}} \left\Vert \Psi_n   \right\Vert ^2 = 1$ $\mathbb{Q}$-a.s.
Using the fact that $ \sum_{n=1}^{\infty} \left\Vert \phi_{T} \left( \Psi_n \right)   \right\Vert ^{2} $ is a $\mathbb{Q}$-martingale,
we deduce that the distribution of the initial condition $\left( \hat{\Psi}^n_0  \right)_{n \in \mathbb{N}} $
(with respect to $\left(  \sum_{n=1}^{\infty} \left\Vert \phi_{T} \left( \Psi_n \right)   \right\Vert ^{2} \right) \cdot \mathbb{Q}$)
is equal to $\mathbb{Q} \circ \left( \Psi_n  \right)_{n \in \mathbb{N}}^{-1}$.
From the definition of $\Psi$ we have that 
for any $A \in \mathcal{B} \left(  \mathfrak{h}^{\mathbb{N}} \right)$,
\[
\fl
\mathbb{Q} \circ \left( \Psi_n  \right)_{n \in \mathbb{N}}^{-1} \left( A \right)
=
\mathbb{Q}  \left(  
\sum_{n = 1}^{\infty}  \left\vert    \widetilde{\psi}^n_0   \rangle\langle   \widetilde{\psi}^n_0  \right\vert \in \zeta^{-1} \left( A \right)
\right)
=
\mathbb{P} 
\left(  \sum_{n = 1}^{\infty}  \left\vert   \psi^n_0   \rangle\langle   \psi^n_0  \right\vert \in \zeta^{-1} \left( A \right)
\right) .
\]

Seventh, 
in the previous  stages we replace 
$\left( \psi^n_0  \right)_{n \in \mathbb{N} }$ and $ \mathbb{P}$ by  
$\left( \chi^n_0  \right)_{n \in \mathbb{N} }$ and $ \widetilde{\mathbb{P}}$, respectively.
Thus, $\mathbb{Q}$ becomes 
$ \widetilde{\mathbb{Q}} = \widetilde{ \mathbb{P} }_{ \left( \chi^n_0  \right)_{n \in \mathbb{N}} } \otimes \mathbb{Q}_{B} $.
Moreover, we have that the finite-dimensional distributions (under  $\widetilde{\mathbb{P}}$) of
$
\left( \sum_{n = 1}^{\infty} \left\vert  \chi^n_t   \rangle\langle \chi^n_t  \right\vert  ,\left( \widetilde{W}_{t}^{\ell}\right)^{\ell\in\mathbb{N}} \right)_{t \in\left[ 0,T\right]}
$
are equal to the  finite-dimensional distributions (under  $\left( \sum_{n=1}^{\infty} \left\Vert \phi_{T} \left( \Psi_n \right)   \right\Vert ^{2} \right) \cdot  \widetilde{\mathbb{Q}} $) of
$
\left(  \sum_{n = 1}^{\infty} \left\vert  \hat{ \Psi }^n_t  \rangle\langle \hat{ \Psi }^n_t  \right\vert  ,
\left(  \hat{ W }_{t}^{\ell}\right)^{\ell\in\mathbb{N}} \right)_{t \in\left[ 0,T\right]}
$,
where, by abuse of notation, 
$  \hat{ \Psi }^n_t$ is defined by (\ref{eq:Def_Psi}) with $\phi_{t} \left( \Psi_n \right) $ 
$C$-solution of  (\ref{eq:Linear_SSE}) defined on $\left( \Omega ,\widetilde{\mathbb{Q}} \right)$.
We also have that 
$
\left( \left(  \sum_{n=1}^{\infty} \left\Vert \phi_{T} \left( \Psi_n \right)   \right\Vert ^{2} \right) \cdot \widetilde{\mathbb{Q}}, \left( \hat{\Psi}^n_t  \right)_{t \in \left[ 0,T\right]}^{n \in \mathbb{N}},\left( \hat{W}_{t}^{\ell}\right) _{t \in\left[ 0,T\right]}^{\ell\in\mathbb{N}} \right)
$
is a $C$-solution of the stochastic system (\ref{st:SIWE}) with the initial distribution given by 
$ A \mapsto 
\widetilde{\mathbb{P}}
\left(  \sum_{n = 1}^{\infty}  \left\vert   \chi^n_0   \rangle\langle   \chi^n_0  \right\vert \in \zeta^{-1} \left( A \right)
\right) 
$
for any $A \in \mathcal{B} \left(  \mathfrak{h}^{\mathbb{N}} \right)$.

Finally,
$
\mathbb{P} 
\left(  \sum_{n = 1}^{\infty}  \left\vert   \psi^n_0   \rangle\langle   \psi^n_0  \right\vert \in \zeta^{-1} \left( A \right)
\right) 
=
\widetilde{\mathbb{P}}
\left(  \sum_{n = 1}^{\infty}  \left\vert   \chi^n_0   \rangle\langle   \chi^n_0  \right\vert \in \zeta^{-1} \left( A \right)
\right) 
$
for all $A \in \mathcal{B} \left(  \mathfrak{h}^{\mathbb{N}} \right)$,
because  
$
\sum_{n = 1}^{\infty} \left\vert\psi^n_0  \rangle\langle\psi^n_0  \right\vert 
$
and
$
\sum_{n = 1}^{\infty} \left\vert\chi^n_0  \rangle\langle\chi^n_0  \right\vert 
$
have the same law in $\mathfrak{L}_{1}\left( \mathfrak{h}\right)$.
Hence,
$
\left( \left(  \sum_{n=1}^{\infty} \left\Vert \phi_{T} \left( \Psi_n \right)   \right\Vert ^{2} \right) \cdot \mathbb{Q}, \left( \hat{\Psi}^n_t  \right)_{t \in \left[ 0,T\right]}^{n \in \mathbb{N}},\left( \hat{W}_{t}^{\ell}\right) _{t \in\left[ 0,T\right]}^{\ell\in\mathbb{N}} \right)
$
and
$
\left( \left(  \sum_{n=1}^{\infty} \left\Vert \phi_{T} \left( \Psi_n \right)   \right\Vert ^{2} \right) \cdot \widetilde{\mathbb{Q}}, \left( \hat{\Psi}^n_t  \right)_{t \in \left[ 0,T\right]}^{n \in \mathbb{N}},\left( \hat{W}_{t}^{\ell}\right) _{t \in\left[ 0,T\right]}^{\ell\in\mathbb{N}} \right)
$
are $C$-solutions of the stochastic system (\ref{st:SIWE})  with the same initial distribution;
as we pointed out we use the same expression $  \hat{ \Psi }^n_t$  for two different stochastic processes.
Applying Theorem \ref{th:EyU-SIWE} we deduce that the  finite-dimensional distributions of 
$
\left(  \sum_{n = 1}^{\infty} \left\vert  \hat{ \Psi }^n_t  \rangle\langle \hat{ \Psi }^n_t  \right\vert , \left(  \hat{ W }_{t}^{\ell}\right) ^{\ell\in\mathbb{N}} \right)_{t \in\left[ 0,T\right]}
$
under $\left( \sum_{n=1}^{\infty} \left\Vert \phi_{T} \left( \Psi_n \right)   \right\Vert ^{2} \right) \cdot  \mathbb{Q} $
coincides with the finite-dimensional distributions of 
$
\left( \sum_{n = 1}^{\infty} \left\vert  \hat{ \Psi }^n_t  \rangle\langle \hat{ \Psi }^n_t  \right\vert ,\left(  \hat{ W }_{t}^{\ell}\right)^{\ell\in\mathbb{N}} \right)_{t \in\left[ 0,T\right]}
$
under  $\left( \sum_{n=1}^{\infty} \left\Vert \phi_{T} \left( \Psi_n \right)   \right\Vert ^{2} \right) \cdot  \widetilde{\mathbb{Q}} $.
Therefore,
from the fifth and seventh stages we obtain that 
$
\left( \sum_{n = 1}^{\infty} \left\vert  \psi^n_t   \rangle\langle \psi^n_t  \right\vert ,
\left( W_{t}^{\ell}\right)^{\ell\in\mathbb{N}} \right)_{t \in\left[ 0,T\right]}
$
and
$
\left( \sum_{n = 1}^{\infty} \left\vert  \chi^n_t   \rangle\langle \chi^n_t  \right\vert , \left( \widetilde{W}_{t}^{\ell}\right)^{\ell\in\mathbb{N}} \right)_{t \in\left[ 0,T\right]}
$
have the same finite-dimensional distributions.
Applying Theorem \ref{th:EyU-SIWE} yields the assertion of the theorem.
\end{proof}

\subsection{ Proof of Theorem \ref{th:Existence_SQME}}
\label{sec:Proof:Existence_SQME}

\begin{proof}
The assertion (a) follows from Theorem \ref{th:Def_Model1}.
In order to obtain the assertion (b) we define  $\theta $  to be the probability distribution of 
$ \left( \psi^n_0  \right)_{n \in \mathbb{N}}$ on $\mathfrak{h}^{\mathbb{N}} $.
As in Lemma \ref{lem:Existence}, 
$\widetilde{\theta}$ is the restriction of $\theta $ to $\mathcal{B} \left( L^2 \left( \mathbb{N} ;  \mathfrak{h} \right) \right)$.
According to  Theorem \ref{th:EyU-NLSSE-e},  
(\ref{eq:NLSSE_e})  has  a $\widetilde{C}$-solution $\left( \mathbb{P}, \left( \hat{x}_{t}\right)_{t \geq 0}, \left( B_{t}^{\ell}\right) _{t \geq 0 }^{\ell\in\mathbb{N}} \right)$ with initial distribution   $\widetilde{\theta}$.
Set
\[
\varrho_{t} = \sum_{n = 1}^{\infty} \left\vert \hat{x}_t \left( n \right)  \rangle\langle \hat{x}_t \left( n \right)  \right\vert 
\qquad \qquad \forall t \geq 0 .
\]
Since $ \hat{x}_t  \in  L^2 \left( \mathbb{N} ;  \mathfrak{h} \right) $,
using Lemma \ref{le:L1} yields  $\varrho_{t} \in \mathfrak{L}_{1}\left( \mathfrak{h}\right)$.
Since  $\hat{x}_{t}\in \mathcal{D}\left( \widetilde{C} \right)$  $\mathbb{P}$-$a.s.$ and $ \left\Vert \hat{x}_{t}\right\Vert =1$ $\mathbb{P}$-$a.s.$,
$\varrho_{t} \in  \mathfrak{L}_{1,C}^+ \left( \mathfrak{h}\right) $  $\mathbb{P}$-a.s.  for any $t \geq 0 $.

As in the proof of Theorem \ref{th:Model} we define 
\[
 \overline{ u } = \sum_{n=1}^{\infty}  \overline{ \langle e_n , u \rangle } e_n
 \qquad \qquad
 \qquad \mathrm{for \ all \;} u \in \mathfrak{h} ,
\]
where 
$\left( e_n \right)_{n \in \mathbb{N}}$ is a given orthonormal basis of $\mathfrak{h}$.
Then,
\begin{eqnarray*}
\fl
& \overline{ \hat{x}_t }
 =
\overline{ \hat{x}_0 }
+ \sum_{ \ell =1}^{ \infty} \int_0^t \overline{ \left(
\widetilde{ L_{\ell} \left( s \right) }   \hat{x}_s 
- \Re \left(  \langle \hat{x}_s, \widetilde{ L_{\ell} \left( s \right) }  \hat{x}_s \rangle \right)  \hat{x}_s
\right) } dB_s^k
\\
\fl
&  \, \, + \int_0^t\left( 
\overline{ \widetilde{ G\left( s \right) } \hat{x}_s } 
+
\sum_{ \ell =1}^{ \infty} \overline{ \left( \Re \left( \langle \hat{x}_s, \widetilde{ L_{\ell} \left( s \right) } \hat{x}_s \rangle  \right) 
\widetilde{ L_{\ell} \left( s \right) } \hat{x}_s
-\frac{1}{2} \Re^2 \left(\langle \hat{x}_s, \widetilde{ L_{\ell} \left( s \right) } \hat{x}_s \rangle  \right) \hat{x}_s \right) }
 \right)ds 
 .
\end{eqnarray*}
Take $u, v \in \mathfrak{h} $.
Let 
$ F : L^2 \left( \mathbb{N} ;  \mathfrak{h} \right) \times  L^2 \left( \mathbb{N} ;  \mathfrak{h} \right)
\rightarrow \mathbb{C} $
be given by
\begin{equation}
 \label{eq:Def_F}
 F \left( x , y \right) = \sum_{n=1}^{\infty} \langle \overline{x \left( n \right)} , u \rangle \langle v, y \left( n \right)  \rangle 
\qquad \qquad 
\quad \mathrm{for \ all \;}
x, y \in L^2 \left( \mathbb{N} ;  \mathfrak{h} \right) .
\end{equation}
Then, $F$ is a bilinear form on $L^2 \left( \mathbb{N} ;  \mathfrak{h} \right) $.
Applying the It\^{o} formula  to $F \left( \overline{\hat{x}_t} , \hat{x}_t \right)$ we obtain
\begin{eqnarray}
\label{eq:ItoE}
\fl
 & F \left( \overline{\hat{x}_t} , \hat{x}_t \right) 
  = 
 F \left( \overline{\hat{x}_0} , \hat{x}_0 \right)
 +
 \int_0^t  \xi \left( s \right) ds
\\
\fl
\nonumber
& \quad  +
\sum_{ \ell =1}^{ \infty}
\int_0^t  \left(
 \sum_{n=1}^{\infty} \langle \sigma_{\ell}  \left( s , \hat{x}_s \left( n \right) \right)  , u \rangle \langle v,  \hat{x}_s \left( n \right) \rangle 
 +
 \sum_{n=1}^{\infty} \langle \hat{x}_s \left( n \right) , u \rangle \langle v,  \sigma_{\ell}  \left( s , \hat{x}_s \left( n \right) \right)  \rangle 
\right) dB_s^{\ell}  ,
\end{eqnarray}
where 
$
\sigma_{\ell}  \left( s , z \right) 
=
L_{\ell} \left( s \right) z - \wp_{\ell} \left( s \right) z
$ 
with
$
\wp_{\ell} \left( s  \right)
=
\sum_{n=1}^{\infty} \Re \left( \langle \hat{x}_s \left( n \right) ,  L_{\ell} \left( s \right)  \hat{x}_s \left( n \right) \rangle \right) 
$,
\begin{eqnarray*}
 \xi \left( s \right)
 & =
\sum_{n=1}^{\infty} \langle 
G\left( s \right) \hat{x}_s \left( n \right) 
+  
g \left( s , \hat{x}_s \left( n \right)  \right)  , u \rangle \langle v, \hat{x}_s \left( n \right)  \rangle
\\
& \quad + 
\sum_{n=1}^{\infty} \langle \hat{x}_s \left( n \right),  u \rangle \langle v, 
G\left( s \right) \hat{x}_s \left( n \right) 
+  
g \left( s , \hat{x}_s \left( n \right)  \right) \rangle
\\
& \quad
+ 
\sum_{ \ell =1}^{ \infty}  \sum_{n=1}^{\infty} \langle \sigma_{\ell}  \left( s , \hat{x}_s \left( n \right) \right)   , u \rangle \langle v,  \sigma_{\ell}  \left( s , \hat{x}_s \left( n \right) \right)   \rangle 
\end{eqnarray*}
and
$
g \left( s , z \right) 
=
\sum_{ \ell =1}^{ \infty} \left( \wp_{\ell} \left( s \right)  L_{\ell} \left( s \right)  z
-\frac{1}{2} \wp_{\ell} \left( s  \right)^2  z \right) 
$.

From the definitions of $g$ and  $\sigma_{\ell}$ we have
\begin{eqnarray*}
\fl
& \langle g \left( s , \hat{x}_s \left( n \right)  \right)  , u \rangle \langle v, \hat{x}_s \left( n \right)  \rangle
+
\langle \hat{x}_s \left( n \right),  u \rangle \langle v, g \left( s , \hat{x}_s \left( n \right)  \right) \rangle
\\
\fl
& =
\sum_{ \ell =1}^{ \infty}   \wp_{\ell} \left( s \right)  \langle L_{\ell} \left( s \right)   \hat{x}_s \left( n \right) , u \rangle \langle v, \hat{x}_s \left( n \right)  \rangle
+
 \sum_{ \ell =1}^{ \infty}  \wp_{\ell} \left( s \right) \langle \hat{x}_s \left( n \right),  u \rangle \langle v,   L_{\ell} \left( s \right)   \hat{x}_s \left( n \right)    \rangle 
 \\
 \fl
& \quad
-  \sum_{ \ell =1}^{ \infty}  \wp_{\ell} \left( s  \right)^2 \langle \hat{x}_s \left( n \right),  u \rangle \langle v,     \hat{x}_s \left( n \right)    \rangle ,
\end{eqnarray*}
and 
\begin{eqnarray*}
\fl
&  \langle \sigma_{\ell}  \left( s , \hat{x}_s \left( n \right) \right)   , u \rangle \langle v,  \sigma_{\ell}  \left( s , \hat{x}_s \left( n \right) \right)   \rangle
%
\\
\fl
& =
\langle  L_{\ell} \left( s \right) \hat{x}_s \left( n \right)   , u \rangle \langle v,   L_{\ell} \left( s \right) \hat{x}_s \left( n \right)  \rangle
-  \wp_{\ell} \left( s \right) \langle  L_{\ell} \left( s \right) \hat{x}_s \left( n \right)   , u \rangle  \langle v,  \hat{x}_s \left( n \right)   \rangle 
\\
\fl
& \quad + 
 \wp_{\ell} \left( s \right)^2 \langle \hat{x}_s \left( n \right)   , u \rangle \langle v,  \hat{x}_s \left( n \right)   \rangle 
-  \wp_{\ell} \left( s \right)  \langle \hat{x}_s \left( n \right)   , u \rangle \langle v,   L_{\ell} \left( s \right) \hat{x}_s \left( n \right)  \rangle 
.
\end{eqnarray*}
Adding the two preceding expressions gives 
\begin{eqnarray*}
 \xi \left( s \right)
 & =
\sum_{n=1}^{\infty} \langle  G\left( s \right) \hat{x}_s \left( n \right) , u \rangle \langle v, \hat{x}_s \left( n \right)  \rangle
+ 
\sum_{n=1}^{\infty} \langle \hat{x}_s \left( n \right),  u \rangle \langle v,  G\left( s \right) \hat{x}_s \left( n \right) \rangle
\\
& \quad + 
\sum_{ \ell =1}^{ \infty} \sum_{n=1}^{\infty} 
\langle  L_{\ell} \left( s \right) \hat{x}_s \left( n \right)   , u \rangle \langle v,   L_{\ell} \left( s \right) \hat{x}_s \left( n \right)  \rangle .
\end{eqnarray*}

Combining Lemma  \ref{le:A_rho_B} with Hypothesis H2.1 we get that 
the unique bounded extension of $ G\left( s \right) \varrho_s   $ is equal to
$ \sum_{n=1}^{\infty} \left\vert  G\left( s \right) \hat{x}_s \left( n \right)  \rangle\langle \hat{x}_s \left( n \right) \right\vert$.
Hence,
\[
\langle v,   G\left( s \right) \varrho_s \, u   \rangle
=
\sum_{n=1}^{\infty} \langle \hat{x}_s \left( n \right),  u \rangle \langle v,  G\left( s \right) \hat{x}_s \left( n \right) \rangle 
\]
and 
\begin{equation}
 \label{eq:TrA_G_rho}
 \Tr \left( A \,  G\left( s \right) \varrho_s  \right)
=
 \sum_{n=1}^{\infty}  \langle \hat{x}_s \left( n \right) ,   A \, G\left( s \right) \hat{x}_s \left( n \right)  \rangle 
 \qquad
 \mathrm{for \ all \;} A \in  \mathfrak{L} \left( \mathfrak{h}\right) .
\end{equation}
Using (\ref{eq:TrA_G_rho}) and Hypothesis \ref{hyp:L-G-C-def} 
we obtain that  
$ \Tr \left( A \,  G\left( s \right) \varrho_s  \right) :  \left[ 0 , t \right]  \times \Omega \rightarrow \mathbb{C} $
is $\mathcal{B}\left(  \left[ 0 , t \right] \right) \otimes \mathfrak{F}_{t}$-measurable
for all $A \in  \mathfrak{L} \left( \mathfrak{h}\right) $.
Therefore,
$s \mapsto G\left( s \right) \varrho_s$ is a predictable  stochastic process with values in both spaces
$ \mathfrak{L}_{1}\left( \mathfrak{h}\right)$ and $ \mathfrak{L}_{2}\left( \mathfrak{h}\right)$.
Combining Hypothesis H2.1, $ \left\Vert   \hat{x}_s \right\Vert = 1$ and the Cauchy-Schwarz inequality yields
\begin{eqnarray*}
\fl
 \left\Vert  G\left( s \right) \varrho_s  \right\Vert_{  \mathfrak{L}_{1}\left( \mathfrak{h}\right) }
& \leq
\sum_{n=1}^{\infty} \left\Vert  \, \left\vert  G\left( s \right) \hat{x}_s \left( n \right)  \rangle\langle \hat{x}_s \left( n \right) \right\vert  \right\Vert_{  \mathfrak{L}_{1}\left( \mathfrak{h}\right) } 
 =
\sum_{n=1}^{\infty} \left\Vert   G\left( s \right) \hat{x}_s \left( n \right)   \right\Vert \left\Vert   \hat{x}_s \left( n \right)  \right\Vert
\\
\fl
& \leq
\left( \sum_{n=1}^{\infty} \left\Vert   G\left( s \right) \hat{x}_s \left( n \right)   \right\Vert^2 \right)^{1/2}
\leq
K \left( t \right)  \left(  1 +  \left\Vert \widetilde{ C } \hat{x}_s    \right\Vert ^2  \right).
\end{eqnarray*}

The operator $ G\left( s \right) $ is densely defined,
because  $\mathcal{D}\left(C \right)$ is a subset of  $\mathcal{D}\left( G  \left( s \right) \right)$.
According to Hypothesis \ref{hyp:Closable} we have that $ G\left( s \right) $ is closable.
Therefore,
$ G\left( s \right)^{\ast}  $ is a densely defined closed operator in $\mathfrak{h}$,
and $ G\left( s \right)^{ \ast \ast }$ is equal to the closure of  $ G\left( s \right)$.
Hence, $ \mathcal{D}\left(C \right) \subset \left( G\left( s \right)^{\ast}  \right)^{\ast} $.
Applying Lemma \ref{le:A_rho_B} we obtain that the unique bounded extension of $ \varrho_s   G\left( s \right)^{\ast}  $
is equal to  
$ \sum_{n=1}^{\infty} \left\vert \hat{x}_s \left( n \right)  \rangle\langle \left( G\left( s \right)^{\ast}  \right)^{\ast}  \hat{x}_s \left( n \right) \right\vert$,
and so, it becomes 
$ \sum_{n=1}^{\infty} \left\vert \hat{x}_s \left( n \right)  \rangle\langle G\left( s \right) \hat{x}_s \left( n \right) \right\vert$.
This gives
\[
\langle v, \varrho_s   G\left( s \right)^{\ast}  u   \rangle
= 
\sum_{n=1}^{\infty} \langle  G\left( s \right) \hat{x}_s \left( n \right) , u \rangle \langle v, \hat{x}_s \left( n \right)  \rangle .
\]
and
\[
\left\Vert  \varrho_s   G\left( s \right)^{\ast} \right\Vert_{  \mathfrak{L}_{1}\left( \mathfrak{h}\right) }
\leq
\sum_{n=1}^{\infty} \left\Vert   \hat{x}_s \left( n \right)  \right\Vert \left\Vert   G\left( s \right) \hat{x}_s \left( n \right)   \right\Vert 
\leq
K \left( t \right)  \left(  1 +  \left\Vert \widetilde{ C } \hat{x}_s \left( n \right)    \right\Vert ^2  \right).
\]
Since
$\Tr \left( A   \, \varrho_s   G\left( s \right)^{\ast}  \right)
=
 \sum_{n=1}^{\infty}  \langle G\left( s \right) \hat{x}_s \left( n \right) ,A \hat{x}_s \left( n \right)  \rangle
$
for all $ A \in \mathfrak{L} \left( \mathfrak{h}\right)$,
using Hypothesis \ref{hyp:L-G-C-def}  we obtain that
$ s \mapsto \varrho_s   G\left( s \right)^{\ast}  $ is 
 a predictable  stochastic process with values in both spaces
$ \mathfrak{L}_{1}\left( \mathfrak{h}\right)$ and $ \mathfrak{L}_{2}\left( \mathfrak{h}\right)$.

Analysis similar to that in the two previous paragraphs shows that 
the unique bounded extension of $ L_{\ell} \left( s \right) \varrho_s   L_{\ell} \left( s \right) ^{\ast}  $
is equal to  the linear operator
$ \sum_{n=1}^{\infty} \left\vert  L_{\ell} \left( s \right) \hat{x}_s \left( n \right)  \rangle\langle  L_{\ell} \left( s \right) \hat{x}_s \left( n \right) \right\vert$.
Hence,
\[
\langle v,  L_{\ell} \left( s \right) \varrho_s   L_{\ell} \left( s \right) ^{\ast}  u   \rangle
= 
\sum_{n=1}^{\infty} \langle  L_{\ell} \left( s \right)  \hat{x}_s \left( n \right) , u \rangle \langle v,   L_{\ell} \left( s \right)  \hat{x}_s \left( n \right)  \rangle 
\]
and
$ L_{\ell} \left( s \right) \varrho_s   L_{\ell} \left( s \right) ^{\ast}  $ 
is progressively measurable as 
a  stochastic process with values in both spaces
$ \mathfrak{L}_{1}\left( \mathfrak{h}\right)$ and $ \mathfrak{L}_{2}\left( \mathfrak{h}\right)$.
Then,
\[ 
\xi \left( s \right) 
 = 
 \langle v, \varrho_s   G\left( s \right)^{\ast}  u   \rangle + \langle v,   G\left( s \right) \varrho_s u   \rangle
 + \sum_{ \ell =1}^{ \infty} \langle v,  L_{\ell} \left( s \right) \varrho_s   L_{\ell} \left( s \right) ^{\ast}  u   \rangle 
.
\]

From Hypothesis H2.2 we conclude 
\[
\fl
 \sum_{ \ell =1}^{ \infty} \left\Vert  L_{\ell} \left( s \right) \hat{x}_s \left( n \right)  \right\Vert^2
 =
 -2  \Re\left\langle  \hat{x}_s \left( n \right) , G \left( s \right)  \hat{x}_s \left( n \right)\right\rangle
 \leq 
 2 \left\Vert \hat{x}_s \left( n \right) \right\Vert \left\Vert  G \left( s \right)  \hat{x}_s \left( n \right) \right\Vert .
\]
Since
\[
\sum_{ \ell =1}^{ \infty}  \sum_{n=1}^{\infty} 
\left\Vert \, 
\left\vert  L_{\ell} \left( s \right) \hat{x}_s \left( n \right)  \rangle\langle  L_{\ell} \left( s \right) \hat{x}_s \left( n \right) \right\vert
  \, \right\Vert_{  \mathfrak{L}_{1}\left( \mathfrak{h}\right) }
 =
 \sum_{ \ell =1}^{ \infty}  \sum_{n=1}^{\infty} 
\left\Vert  L_{\ell} \left( s \right) \hat{x}_s \left( n \right)  \right\Vert^2 ,
\]
combining Hypothesis H2.1 with  $ \left\Vert \hat{x}_s  \right\Vert = 1 $ yields
\begin{equation}
\label{eq:Des-L}
\sum_{ \ell =1}^{ \infty}  \sum_{n=1}^{\infty} 
\left\Vert \, 
\left\vert  L_{\ell} \left( s \right) \hat{x}_s \left( n \right)  \rangle\langle  L_{\ell} \left( s \right) \hat{x}_s \left( n \right) \right\vert
  \, \right\Vert_{  \mathfrak{L}_{1}\left( \mathfrak{h}\right) }
 \leq K \left( s \right) \left( 1 +  \left\Vert  \widetilde{C} \, \hat{x}_s   \right\Vert^2 \right).
\end{equation}
Hence,
$\sum_{ \ell =1}^{ N}  L_{\ell} \left( s \right) \varrho_s   L_{\ell} \left( s \right) ^{\ast} $
converges in $  \mathfrak{L}_{1}\left( \mathfrak{h}\right) $ to 
$\sum_{ \ell =1}^{ \infty}  L_{\ell} \left( s \right) \varrho_s   L_{\ell} \left( s \right) ^{\ast} $
as $N \rightarrow + \infty$.
Therefore,
\[
\left\Vert  \sum_{ \ell =1}^{ \infty}  L_{\ell} \left( s \right) \varrho_s   L_{\ell} \left( s \right) ^{\ast} 
\right\Vert_{  \mathfrak{L}_{1}\left( \mathfrak{h}\right) }
\leq K \left( s \right) \left( 1 +  \left\Vert  \widetilde{C} \, \hat{x}_s   \right\Vert^2 \right) ,
\]
\begin{equation*}
 \xi \left( s \right) 
 = 
 \langle v, \varrho_s   G\left( s \right)^{\ast}  u   \rangle + \langle v,   G\left( s \right) \varrho_s u   \rangle
 + \langle v, \left(   \sum_{ \ell =1}^{ \infty}  L_{\ell} \left( s \right) \varrho_s   L_{\ell} \left( s \right) ^{\ast} \right) u   \rangle ,
\end{equation*}
and $ s \mapsto \sum_{ \ell =1}^{ \infty} L_{\ell} \left( s \right) \varrho_s   L_{\ell} \left( s \right) ^{\ast} $
 is a predictable  stochastic process with values in both spaces
$ \mathfrak{L}_{1}\left( \mathfrak{h}\right)$ and $ \mathfrak{L}_{2}\left( \mathfrak{h}\right)$.
Since $\sup_{s\in\left[ 0, T\right] }\mathbb{E} \left( \left\Vert  \widetilde{C} \hat{x}_{s}\right\Vert ^{2} \right)<+ \infty$,
\[
\int_0^T \left\Vert  
 \varrho_s   G\left( s \right)^{\ast} + G\left( s \right) \varrho_s 
 +  \sum_{ \ell =1}^{ \infty} L_{\ell} \left( s \right) \varrho_s   L_{\ell} \left( s \right) ^{\ast}
\right\Vert_{  \mathfrak{L}_{1}\left( \mathfrak{h}\right) } ds
< + \infty 
\qquad \mathbb{P} \mathrm{-}a.s.
\]
Then, for all $ t \in \left[ 0, T \right]$ we have 
\begin{equation}
 \label{eq:ItoE_T1}
 \fl
 \int_{0}^t \xi \left( s \right) ds 
=
\langle v,  \int_{0}^t 
\left(
\varrho_s   G\left( s \right)^{\ast}  +  G\left( s \right) \varrho_s  +  \sum_{ \ell =1}^{ \infty}  L_{\ell} \left( s \right) \varrho_s   L_{\ell} \left( s \right) ^{\ast}
 \right) ds \, u   \rangle, 
\end{equation}
where
the integral is understood as a Bochner integral  in both $ \mathfrak{L}_{1} \left( \mathfrak{h}\right)$ and 
$ \mathfrak{L}_{2} \left( \mathfrak{h}\right)$.

Next, we examine  the stochastic integrals of (\ref{eq:ItoE}).
According to the definition of  $\sigma_{\ell}$ we have 
\begin{eqnarray*}
&
\sum_{n=1}^{\infty} \langle \sigma_{\ell}  \left( s , \hat{x}_s \left( n \right) \right)  , u \rangle \langle v,  \hat{x}_s \left( n \right) \rangle 
 +
 \sum_{n=1}^{\infty} \langle \hat{x}_s \left( n \right) , u \rangle \langle v,  \sigma_{\ell}  \left( s , \hat{x}_s \left( n \right) \right)  \rangle 
 \\
& =
\sum_{n=1}^{\infty} \langle L_{\ell} \left( s \right)  \hat{x}_s \left( n \right) , u \rangle \langle v,  \hat{x}_s \left( n \right) \rangle 
+
\sum_{n=1}^{\infty}  \langle \hat{x}_s \left( n \right) , u \rangle \langle v,  L_{\ell} \left( s \right)  \hat{x}_s \left( n \right)   \rangle 
\\
& \quad 
- 
2 \wp_{\ell} \left( s  \right)  \sum_{n=1}^{\infty}  \langle    \hat{x}_s \left( n \right) , u \rangle \langle v,  \hat{x}_s \left( n \right) \rangle .
\end{eqnarray*}
By employing H2.1 and H2.2 in a similar manner to (\ref{eq:Des-L})
and making use of Lemma  \ref{le:A_rho_B},
we obtain that the unique bounded extension of $ L_{\ell} \left( s \right) \varrho_s   $ is equal to
$ \sum_{n=1}^{\infty} \left\vert L_{\ell} \left( s \right) \hat{x}_s \left( n \right)  \rangle\langle \hat{x}_s \left( n \right) \right\vert$,
and consequently 
\[
\langle v,   L_{\ell} \left( s \right) \varrho_s \, u   \rangle
=
\sum_{n=1}^{\infty} \langle \hat{x}_s \left( n \right),  u \rangle \langle v,  L_{\ell} \left( s \right) \hat{x}_s \left( n \right) \rangle .
\]
Applying Lemma \ref{le:L1} we get
\[
tr \left(  L_{\ell} \left( s \right) \varrho_s \right)
=
\sum_{n=1}^{\infty} \langle \hat{x}_s \left( n \right), L_{\ell} \left( s \right) \hat{x}_s \left( n \right) \rangle ,
\]
and so
$ \wp_{\ell} \left( s  \right) = \Re \left( tr \left(  L_{\ell} \left( s \right) \varrho_s \right) \right)$.
Since $  L_{\ell} \left( s \right) $ is densely defined and  closable,
$ L_{\ell} \left( s \right)^{\ast}  $ is  densely defined closed operator
and $  L_{\ell} \left( s \right)^{\ast  \ast}$ is identical to the closure of  $  L_{\ell}\left( s \right)$.
Applying Lemma \ref{le:A_rho_B} with $A = I$ and $B = L_{\ell}\left( s \right)^{\ast} $,
where $I$ is the identity operator in $\mathfrak{h}$, 
we deduce that the unique bounded extension of $ \varrho_s  \,  L_{\ell}\left( s \right)^{\ast}  $
is equal to   
$ \sum_{n=1}^{\infty} \left\vert \hat{x}_s \left( n \right)  \rangle\langle  L_{\ell}\left( s \right) \hat{x}_s \left( n \right) \right\vert$.
Therefore,
\[
\langle v, \varrho_s  \,  L_{\ell}\left( s \right)^{\ast}  u   \rangle
= 
\sum_{n=1}^{\infty} \langle   L_{\ell} \left( s \right) \hat{x}_s \left( n \right) , u \rangle \langle v, \hat{x}_s \left( n \right)  \rangle .
\]
We have obtained that
\begin{eqnarray}
 \nonumber
&
\sum_{n=1}^{\infty} \langle \sigma_{\ell}  \left( s , \hat{x}_s \left( n \right) \right)  , u \rangle \langle v,  \hat{x}_s \left( n \right) \rangle 
 +
 \sum_{n=1}^{\infty} \langle \hat{x}_s \left( n \right) , u \rangle \langle v,  \sigma_{\ell}  \left( s , \hat{x}_s \left( n \right) \right)  \rangle 
 \\
 \label{eq:ItoE_T2}
& =
\langle v, \varrho_s  \,  L_{\ell}\left( s \right)^{\ast}  u   \rangle 
+
\langle v,   L_{\ell} \left( s \right) \varrho_s \, u   \rangle 
- 
2  \Re \left( tr \left(  L_{\ell} \left( s \right) \varrho_s \right) \right) \langle v,   \varrho_s \, u   \rangle  .
\end{eqnarray}
For all $  A \in \mathfrak{L} \left( \mathfrak{h}\right) $,
\begin{eqnarray*}
&
\Tr \left(
A \left(
L_{\ell} \left( s \right) \varrho_s + \varrho_s  \,  L_{\ell}\left( s \right)^{\ast}
- 2  \Re \left( tr \left(  L_{\ell} \left( s \right) \varrho_s \right) \right)    \varrho_s
\right) \right)
\\
&
=
 \sum_{n=1}^{\infty} \langle \hat{x}_s \left( n \right) ,  A \, L_{\ell} \left( s \right) \hat{x}_s \left( n \right)  \rangle
 +
 \sum_{n=1}^{\infty} \langle  L_{\ell}\left( s \right) \hat{x}_s \left( n \right) ,  A \, \hat{x}_s \left( n \right)  \rangle  
 \\
 & \quad
 - 
 2 \Re \left( 
 \sum_{n=1}^{\infty} \langle \hat{x}_s \left( n \right), L_{\ell} \left( s \right) \hat{x}_s \left( n \right) \rangle 
 \right)
  \sum_{n=1}^{\infty} \langle  \hat{x}_s \left( n \right) ,  A \, \hat{x}_s \left( n \right)  \rangle  ,
\end{eqnarray*}
and so 
using  Hypothesis \ref{hyp:L-G-C-def}  we deduce that 
$s \mapsto L_{\ell} \left( s \right) \varrho_s + \varrho_s  \,  L_{\ell}\left( s \right)^{\ast}
- 2  \Re \left( tr \left(  L_{\ell} \left( s \right) \varrho_s \right) \right)    \varrho_s $
is  a predictable  stochastic process with values in both spaces
$ \mathfrak{L}_{1}\left( \mathfrak{h}\right)$ and $ \mathfrak{L}_{2}\left( \mathfrak{h}\right)$.

Since 
$ \varrho_s  \,  L_{\ell}\left( s \right)^{\ast}  = \sum_{n=1}^{\infty} \left\vert \hat{x}_s \left( n \right)  \rangle\langle  L_{\ell}\left( s \right) \hat{x}_s \left( n \right) \right\vert$,
\begin{eqnarray*}
 \sum_{\ell =1}^{\infty}  \left\Vert  \varrho_s  \,  L_{\ell}\left( s \right)^{\ast}  \right\Vert^2_{  \mathfrak{L}_{1}\left( \mathfrak{h}\right) }
& \leq
\sum_{\ell =1}^{\infty}  \left(
\sum_{n=1}^{\infty}   \left\Vert  \,   \left\vert \hat{x}_s \left( n \right)  \rangle\langle  L_{\ell}\left( s \right) \hat{x}_s \left( n \right) \right\vert \,  \right\Vert_{  \mathfrak{L}_{1}\left( \mathfrak{h}\right) } \right)^2
\\
& =
\sum_{\ell =1}^{\infty}  \left(
\sum_{n=1}^{\infty}   \left\Vert  \hat{x}_s \left( n \right)  \right\Vert   \left\Vert  L_{\ell}\left( s \right) \hat{x}_s \left( n \right) \right\Vert
\right)^2 .
\end{eqnarray*}
Similarly,
the property
$ L_{\ell} \left( s \right) \varrho_s    = \sum_{n=1}^{\infty} \left\vert L_{\ell} \left( s \right) \hat{x}_s \left( n \right)  \rangle\langle \hat{x}_s \left( n \right) \right\vert$
leads to
\begin{eqnarray*}
 \sum_{\ell =1}^{\infty}  \left\Vert   L_{\ell} \left( s \right) \varrho_s  \right\Vert^2_{  \mathfrak{L}_{1}\left( \mathfrak{h}\right) }
& \leq
\sum_{\ell =1}^{\infty}  \left(
\sum_{n=1}^{\infty}   \left\Vert  \,   \left\vert   L_{\ell}\left( s \right) \hat{x}_s \left( n \right)  \rangle\langle  \hat{x}_s \left( n \right) \right\vert \,  \right\Vert_{  \mathfrak{L}_{1}\left( \mathfrak{h}\right) } \right)^2
\\
& =
\sum_{\ell =1}^{\infty}  \left(
\sum_{n=1}^{\infty}   \left\Vert    L_{\ell}\left( s \right) \hat{x}_s \left( n \right)  \right\Vert   \left\Vert  \hat{x}_s \left( n \right) \right\Vert
\right)^2 .
\end{eqnarray*}
Combining $tr \left(  \varrho_s  \right) = 1$ with 
 $ tr \left(  L_{\ell} \left( s \right) \varrho_s \right) =  \sum_{n=1}^{\infty} \langle \hat{x}_s \left( n \right) , L_{\ell} \left( s \right) \hat{x}_s \left( n \right)  \rangle$
 we get
\begin{eqnarray*}
 \sum_{\ell =1}^{\infty} 
 \left\Vert \Re \left( tr \left(  L_{\ell} \left( s \right) \varrho_s \right) \right) \varrho_s  \right\Vert^2_{  \mathfrak{L}_{1}\left( \mathfrak{h}\right) }
 & =
 \sum_{\ell =1}^{\infty} 
 \left\vert \Re \left( tr \left(  L_{\ell} \left( s \right) \varrho_s \right) \right)  \right\vert^2 
 \left\Vert \varrho_s  \right\Vert^2_{  \mathfrak{L}_{1}\left( \mathfrak{h}\right) }
 \\ & 
 = 
 \sum_{\ell =1}^{\infty} 
 \left\vert \Re \left( tr \left(  L_{\ell} \left( s \right) \varrho_s \right) \right)  \right\vert^2
 \\
 & \leq \sum_{\ell =1}^{\infty} \left( \sum_{n=1}^{\infty} 
 \left\vert \langle \hat{x}_s \left( n \right) , L_{\ell} \left( s \right) \hat{x}_s \left( n \right)  \rangle \right\vert \right)^2 
 \\
 & \leq \sum_{\ell =1}^{\infty} \left( \sum_{n=1}^{\infty} 
 \left\Vert \hat{x}_s \left( n \right)  \right\Vert  \left\Vert L_{\ell} \left( s \right) \hat{x}_s \left( n \right)   \right\Vert \right)^2 .
\end{eqnarray*}

We proceed to estimate 
$\sum_{\ell =1}^{\infty}  \left(
\sum_{n=1}^{\infty}   \left\Vert  \hat{x}_s \left( n \right)  \right\Vert   \left\Vert  L_{\ell}\left( s \right) \hat{x}_s \left( n \right) \right\Vert
\right)^2$.
Since  $ \left\Vert  \hat{x}_s \right\Vert = 1$,
\begin{eqnarray*}
\fl
\left(
\sum_{n=1}^{\infty}   \left\Vert  \hat{x}_s \left( n \right)  \right\Vert   \left\Vert  L_{\ell}\left( s \right) \hat{x}_s \left( n \right) \right\Vert
\right)^2
& \leq
\left( \sum_{n=1}^{\infty}   \left\Vert  \hat{x}_s \left( n \right)  \right\Vert^2 \right)
\left( \sum_{n=1}^{\infty}   \left\Vert  L_{\ell}\left( s \right) \hat{x}_s \left( n \right) \right\Vert^2  \right)
 \\ 
 \fl
 & 
=
 \sum_{n=1}^{\infty}   \left\Vert  L_{\ell}\left( s \right) \hat{x}_s \left( n \right) \right\Vert^2  .
\end{eqnarray*}
According to Hypothesis H2.2 we have that
\begin{eqnarray*}
 \sum_{n=1}^{\infty} \sum_{\ell =1}^{\infty}   \left\Vert  L_{\ell}\left( s \right) \hat{x}_s \left( n \right) \right\Vert^2 
 &  =
- \sum_{n=1}^{\infty}  2 \, \Re\left\langle   \hat{x}_s \left( n \right)  , G \left( s \right)  \hat{x}_s \left( n \right) \right\rangle 
 \\ & 
\leq 
\sum_{n=1}^{\infty}  2 \, \left\Vert  \hat{x}_s \left( n \right) \right\Vert \, \left\Vert   G \left( s \right) \hat{x}_s \left( n \right) \right\Vert .
\end{eqnarray*}
Using Hypothesis H2.1, together with  $ \left\Vert  \hat{x}_s \right\Vert = 1$, we get
\begin{eqnarray*}
\fl
\sum_{n=1}^{\infty} \sum_{\ell =1}^{\infty}   \left\Vert  L_{\ell}\left( s \right) \hat{x}_s \left( n \right) \right\Vert^2 
 & \leq K \left( s \right) 
 \sum_{n=1}^{\infty}  \left\Vert  \hat{x}_s \left( n \right)  \right\Vert  \sqrt{
 \left\Vert  \hat{x}_s \left( n \right)  \right\Vert^2 + \left\Vert  C \hat{x}_s \left( n \right)  \right\Vert^2 }
 \\
 & \leq K \left( s \right) 
 \sum_{n=1}^{\infty}  \left\Vert  \hat{x}_s \left( n \right)  \right\Vert  \left(
 \left\Vert  \hat{x}_s \left( n \right)  \right\Vert + \left\Vert  C \hat{x}_s \left( n \right)  \right\Vert \right)
 \\
 & =  K \left( s \right) 
 \sum_{n=1}^{\infty}    \left( \left\Vert  \hat{x}_s \left( n \right)  \right\Vert^2 
 + \left\Vert  \hat{x}_s \left( n \right)  \right\Vert \left\Vert  C \hat{x}_s \left( n \right)  \right\Vert \right)
 \\
 & \leq
  K \left( s \right)  \left( 1 + 
  \left(  \sum_{n=1}^{\infty}   \left\Vert  \hat{x}_s \left( n \right)  \right\Vert^2   \right)^{1/2}
  \left(  \sum_{n=1}^{\infty}  \left\Vert  C \hat{x}_s \left( n \right)  \right\Vert^2   \right)^{1/2}
   \right) 
   \\
 & =
  K \left( s \right)  \left( 1 + \left( \sum_{n=1}^{\infty}  \left\Vert  C \hat{x}_s \left( n \right)  \right\Vert^2   \right)^{1/2}  \right) .
\end{eqnarray*}
Therefore,
\begin{eqnarray*}
\fl
 \mathbb{E} \left( 
 \sum_{\ell =1}^{\infty} 
\left(\sum_{n=1}^{\infty}   \left\Vert  \hat{x}_s \left( n \right)  \right\Vert   \left\Vert  L_{\ell}\left( s \right) \hat{x}_s \left( n \right) \right\Vert
\right)^2 \right)
&
\leq 
K \left( s \right)  \left( 1 +  \mathbb{E}  \left\Vert  \widetilde{C} \hat{x}_s   \right\Vert  \right) 
\\
&
\leq 
K \left( s \right)  \left( 1 +  \sqrt{\mathbb{E}  \left(\left\Vert  \widetilde{C} \hat{x}_s   \right\Vert^2 \right) }\right) .
\end{eqnarray*}

Let $T> 0$.
Since 
\begin{eqnarray*}
& \sum_{ \ell =1}^{ \infty} 
\mathbb{E} \left(   \left\Vert  
 \varrho_s  \,  L_{\ell}\left( s \right)^{\ast}  +   L_{\ell} \left( s \right) \varrho_s 
 - 2  \Re \left( tr \left(  L_{\ell} \left( s \right) \varrho_s \right) \right) \varrho_s 
 \right\Vert^2_{  \mathfrak{L}_{1}\left( \mathfrak{h}\right) } \right)
 \\
& \leq
 3 \mathbb{E} \left(   \sum_{\ell =1}^{\infty}  \left\Vert  \varrho_s  \,  L_{\ell}\left( s \right)^{\ast}  \right\Vert^2_{  \mathfrak{L}_{1}\left( \mathfrak{h}\right) } \right)
 +  3 \mathbb{E} \left(  \sum_{\ell =1}^{\infty}  \left\Vert   L_{\ell} \left( s \right) \varrho_s  \right\Vert^2_{  \mathfrak{L}_{1}\left( \mathfrak{h}\right) } \right)
 \\
 & \quad
 + 6 \mathbb{E} \left(   \sum_{\ell =1}^{\infty} 
 \left\Vert \Re \left( tr \left(  L_{\ell} \left( s \right) \varrho_s \right) \right) \varrho_s  \right\Vert^2_{  \mathfrak{L}_{1}\left( \mathfrak{h}\right) } \right) ,
\end{eqnarray*}
\begin{eqnarray*}
& \sum_{ \ell =1}^{ \infty} 
\mathbb{E} \left(   \left\Vert  
 \varrho_s  \,  L_{\ell}\left( s \right)^{\ast}  +   L_{\ell} \left( s \right) \varrho_s 
 - 2  \Re \left( tr \left(  L_{\ell} \left( s \right) \varrho_s \right) \right) \varrho_s 
 \right\Vert^2_{  \mathfrak{L}_{1}\left( \mathfrak{h}\right) } \right)
 \\
 &   \leq 
K \left( s \right)  \left( 1 +  \sqrt{\mathbb{E}  \left(\left\Vert  \widetilde{C} \hat{x}_s   \right\Vert^2 \right) }\right)  .
\end{eqnarray*}
As $\sup_{s\in\left[ 0, T\right] }\mathbb{E} \left( \left\Vert  \widetilde{C} \hat{x}_{s}\right\Vert ^{2} \right)<+ \infty$
we have
\[
\sup_{s\in\left[ 0, T\right] } \sum_{ \ell =1}^{ \infty} 
\mathbb{E} \left(   \left\Vert  
 \varrho_s  \,  L_{\ell}\left( s \right)^{\ast}  +   L_{\ell} \left( s \right) \varrho_s 
 - 2  \Re \left( tr \left(  L_{\ell} \left( s \right) \varrho_s \right) \right) \varrho_s 
 \right\Vert^2_{  \mathfrak{L}_{1}\left( \mathfrak{h}\right) } \right)
< + \infty .
\]
Since 
$ \left\Vert  A \right\Vert^2_{  \mathfrak{L}_{2}\left( \mathfrak{h}\right) } 
\leq 
\left\Vert  A \right\Vert^2_{  \mathfrak{L}_{1}\left( \mathfrak{h}\right) } $
for any $A \in  \mathfrak{L}_{1}\left( \mathfrak{h}\right) $,
\[
\sup_{s\in\left[ 0, T\right] } \sum_{ \ell =1}^{ \infty} 
\mathbb{E} \left(   \left\Vert  
 \varrho_s  \,  L_{\ell}\left( s \right)^{\ast}  +   L_{\ell} \left( s \right) \varrho_s 
 - 2  \Re \left( tr \left(  L_{\ell} \left( s \right) \varrho_s \right) \right) \varrho_s 
 \right\Vert^2_{  \mathfrak{L}_{2}\left( \mathfrak{h}\right) } \right)
< + \infty .
\]
Hence,
$
\left( \sum_{ \ell =1}^{ \infty} \int_0^t  \left( \varrho_s  \,  L_{\ell}\left( s \right)^{\ast}  +   L_{\ell} \left( s \right) \varrho_s 
 - 2  \Re \left( tr \left(  L_{\ell} \left( s \right) \varrho_s \right) \right) \varrho_s \right) 
 dB_s^{\ell} \right)_{t \geq 0 }
$
and
\[
t \mapsto  \int_0^t  \left( \varrho_s  \,  L_{\ell}\left( s \right)^{\ast}  +   L_{\ell} \left( s \right) \varrho_s 
 - 2  \Re \left( tr \left(  L_{\ell} \left( s \right) \varrho_s \right) \right) \varrho_s \right) 
 dB_s^{\ell} 
\]
are continuous square integrable martingales on any bounded interval,
where the integrals are defined as stochastic integrals in  $ \mathfrak{L}_{2} \left( \mathfrak{h}\right)$.
Using  (\ref{eq:ItoE_T2}) gives
\begin{eqnarray}
\fl
\nonumber
 &
\sum_{ \ell =1}^{ \infty}
\int_0^t  \left(
 \sum_{n=1}^{\infty} \langle \sigma_{\ell}  \left( s , \hat{x}_s \left( n \right) \right)  , u \rangle \langle v,  \hat{x}_s \left( n \right) \rangle 
 +
 \sum_{n=1}^{\infty} \langle \hat{x}_s \left( n \right) , u \rangle \langle v,  \sigma_{\ell}  \left( s , \hat{x}_s \left( n \right) \right)  \rangle 
\right) dB_s^{\ell}  
\\
\fl
\label{eq:ItoE_T2R}
& =
 \langle v,  \left(
\sum_{ \ell =1}^{ \infty} \int_0^t  \left( \varrho_s  \,  L_{\ell}\left( s \right)^{\ast}  +   L_{\ell} \left( s \right) \varrho_s 
 - 2  \Re \left( tr \left(  L_{\ell} \left( s \right) \varrho_s \right) \right) \varrho_s \right) 
 dB_s^{\ell} 
 \right)  u \rangle .
\end{eqnarray}

Combining (\ref{eq:ItoE}), (\ref{eq:ItoE_T1}) and (\ref{eq:ItoE_T2R}) yields 
\begin{eqnarray}
 \nonumber
  \varrho_t  
 & = 
  \varrho_0 
 +
 \int_0^t   \left( \varrho_s   G\left( s \right)^{\ast} +  G\left( s \right) \varrho_s 
 + \sum_{ \ell =1}^{ \infty}  L_{\ell} \left( s \right) \varrho_s   L_{\ell} \left( s \right) ^{\ast} \right)  
ds
\\
\label{eq:pE_1}
&  \quad +
\sum_{ \ell =1}^{ \infty}
\int_0^t 
  \left( \varrho_s  \,  L_{\ell}\left( s \right)^{\ast}  +   L_{\ell} \left( s \right) \varrho_s 
 - 2  \Re \left( tr \left(  L_{\ell} \left( s \right) \varrho_s \right) \right) \varrho_s \right) 
 dB_s^{\ell}  .
\end{eqnarray}

According to Lemma \ref{lem:Existence} we have that
$\left( \mathbb{P}, \left( \hat{x}_t \left( n \right)  \right)_{t \in \mathbb{I}}^{n \in \mathbb{N}},\left( B_{t}^{\ell}\right) _{t \in \mathbb{I}}^{\ell\in\mathbb{N}} \right)$ 
is a $C$-solution of the stochastic system (\ref{st:SIWE}) with initial law $\theta $.
Applying Theorem \ref{th:EyU-SIWE} we obtain that 
for any  collection 
$ t_{k \left( 1 \right)} <  t_{k \left( 2 \right)} < \cdots < t_{k \left( N \right)}$,
where $N \in \mathbb{N}$, the distributions of
$\left( \left( \hat{x}_{t\left(k \right)} \left( n \right)  \right)_{1\leq k \leq N}^{n \in \mathbb{N}},\left( B_{t\left(k \right)}^{\ell}\right) _{ 1\leq k \leq N }^{\ell\in\mathbb{N}} \right)$ 
and 
$\left(\left( \psi^n_{t\left(k \right)}  \right)_{1\leq k \leq N }^{n \in \mathbb{N}},
\left( W_{t\left(k \right)}^{\ell}\right) _{1\leq k \leq N}^{\ell\in\mathbb{N}} \right)$
coincide.
Hence,
$\left( \left( \varrho_{t\left(k \right)}  \right)_{1\leq k \leq N},\left( B_{t\left(k \right)}^{\ell}\right) _{ 1\leq k \leq N }^{\ell\in\mathbb{N}} \right)$ 
and 
$\left(\left( \rho_{t\left(k \right)}  \right)_{1\leq k \leq N },
\left( W_{t\left(k \right)}^{\ell}\right) _{1\leq k \leq N}^{\ell\in\mathbb{N}} \right)$
have the same probability distributions.
Therefore,
\[
t \mapsto \sum_{ \ell =1}^{ \infty} \int_0^t  \left( \rho_s  \,  L_{\ell}\left( s \right)^{\ast}  +   L_{\ell} \left( s \right) \rho_s 
 - 2  \Re \left( tr \left(  L_{\ell} \left( s \right) \rho_s \right) \right) \rho_s \right) 
 dB_s^{\ell} 
\]
is a continuous square integrable martingale on any bounded interval,
and using (\ref{eq:pE_1}) gives  (\ref{eq:NSME-e}) (see, e.g., Section 8 of \cite{Ondrejat2004}).
\end{proof}

\section{Conclusions}

We derive a model describing the evolution of quantum systems undergoing continuous measurements
that is based on a non-linear system of stochastic evolution equations in the Hilbert space representing the state space.
We prove that this model is well-defined under general conditions,
which applied to a variety of physical situations as we illustrate in two examples.
The model under study is an alternative to the diffusive stochastic quantum master equation, 
which is also known as Belavkin equation,
and it provides a  physically meaningful solution to the Belavkin equation.   

\ack 
The authors thank the referees for their valuable comments and suggestions on the manuscript.

\section*{References}



\begin{thebibliography}{10}

\bibitem{Cohen-Tannoudji}
C.~Cohen-Tannoudji, B.~Diu, and F.~Lalo\"{e}.
\newblock {\em Quantum Mechanics, Volume 1: Basic Concepts, Tools, and
  Applications}.
\newblock John Wiley \& Sons, 2019.

\bibitem{Gough}
J.~Gough and A.~Sobolev.
\newblock Continuous measurement of canonical observables and limit stochastic
  {S}chr{\"o}dinger equations.
\newblock {\em Phys. Rev. A}, 69:032107, 2004.

\bibitem{Sakurai}
J.~J. Sakurai and J.~Napolitano.
\newblock {\em Modern Quantum Mechanics}.
\newblock Cambridge, 2020.

\bibitem{HarocheRaimond2006}
S.~Haroche and J.-M. Raimond.
\newblock {\em Exploring the Quantum: Atoms, Cavities, and Photons}.
\newblock Oxford University Press, 2006.

\bibitem{Niemczyk2010}
T.~Niemczyk and et~al.
\newblock Circuit quantum electrodynamics in the ultrastrong-coupling regime.
\newblock {\em Nature Phys.}, 6:772--776, 2010.

\bibitem{ReedSimonVol2}
M.~Reed and B.~Simon.
\newblock {\em Methods of modern mathematical physics. Vol II: Fourier analysis
  and self-adjointness}.
\newblock Academic Press, 1975.

\bibitem{MASPERO2017721}
A.~Maspero and D.~Robert.
\newblock On time dependent {S}chr\"{o}dinger equations: Global well-posedness
  and growth of sobolev norms.
\newblock {\em J. Funct. Anal.}, 273(2):721--781, 2017.

\bibitem{Liang20224850}
Zhenguo Liang and Zhiqiang Wang.
\newblock Reducibility of 1d quantum harmonic oscillator with decaying
  conditions on the derivative of perturbation potentials.
\newblock {\em Nonlinearity}, 35(9):4850 – 4875, 2022.

\bibitem{AlickiLendi2007}
R.~Alicki and K.~Lendi.
\newblock {\em Quantum dynamical semigroups and applications}, volume 717 of
  {\em Lecture Notes in Physics}.
\newblock Springer, Berlin, second edition edition, 2007.

\bibitem{BreuerPetruccione}
H.P. Breuer and F.~Petruccione.
\newblock {\em The theory of open quantum systems}.
\newblock Oxford University Press, 2002.

\bibitem{ChebFagn2}
A.M. Chebotarev and F.~Fagnola.
\newblock Sufficient conditions for conservativity of minimal quantum dynamical
  semigroups.
\newblock {\em J. Funct. Anal.}, 153:382--404, 1998.

\bibitem{Chebotarev2000}
A.M. Chebotarev.
\newblock {\em Lectures on quantum probability}.
\newblock Sociedad Matem\'{a}tica Mexicana, M\'exico, 2000.

\bibitem{Fagnola}
F.~Fagnola.
\newblock Quantum {M}arkov semigroups and quantum flows.
\newblock {\em Proyecciones}, 18:1--144, 1999.

\bibitem{FagMora2019}
F.~Fagnola and C.~M. Mora.
\newblock Basic properties of a mean field laser equation.
\newblock {\em Open Syst. Inf. Dyn.}, 26:1950015, 2019.

\bibitem{MoraAP}
C.M. Mora.
\newblock Regularity of solutions to quantum master equations: a stochastic
  approach.
\newblock {\em Ann. Probab.}, 41:1978--2012, 2013.

\bibitem{BarchielliBelavkin1991}
A.~Barchielli and V.~P. Belavkin.
\newblock Measurements continuous in time and a posteriori states in quantum
  mechanics.
\newblock {\em J. Phys. A}, 24:1495--1514, 1991.

\bibitem{Belavkin1}
V.~P. Belavkin.
\newblock A new wave equation for a continuous nondemolition measurement.
\newblock {\em Phys. Lett. A.}, 140:355--358, 1989.

\bibitem{Ghirardi1990}
G.C. Ghirardi, P.~Pearle, and A.~Rimini.
\newblock Markov processes in {H}ilbert space and continuous spontaneous
  localization of systems of identical particles.
\newblock {\em Phys. Rev. A}, 42:78--98, 1990.

\bibitem{Kupsch1996}
J.~Kupsch.
\newblock {\em Chaper 7. Open Quantum Systems; and Appendix A8}, pages
  223--247.
\newblock Springer Berlin Heidelberg, Berlin, Heidelberg, 1996.

\bibitem{Percival}
I.C. Percival.
\newblock {\em Quantum state diffusion}.
\newblock Cambridge University Press, 1998.

\bibitem{MoraAAP2005}
C.M. Mora.
\newblock Numerical solution of conservative finite-dimensional stochastic
  {S}chr{\"o}dinger equations.
\newblock {\em Ann. Appl. Probab.}, 15:2144--2171, 2005.

\bibitem{FagMora2013}
F.~Fagnola and C.~M. Mora.
\newblock Stochastic {S}chr\"odinger equations and applications to
  {E}hrenfest-type theorems.
\newblock {\em ALEA, Lat. Am. J. Probab. Math. Stat.}, 10:191--223, 2013.

\bibitem{MoraReAAP}
C.M. Mora and R.~Rebolledo.
\newblock Basic properties of non-linear stochastic {S}chr{\"o}dinger equations
  driven by {B}rownian motions.
\newblock {\em Ann. Appl. Probab.}, 18:591--619, 2008.

\bibitem{BarrueDebusscheTusseau}
G.~Barru\'e, A.~Debussche, and M.~Tusseau.
\newblock Approximation diffusion for the nonlinear {S}chr\"{o}dinger equation
  with a random potential.
\newblock {\em Asymptot. Anal.}, 138:175--224, 2024.

\bibitem{Barchielli}
A.~Barchielli and M.~Gregoratti.
\newblock {\em Quantum trajectories and measurements in continuous time: the
  diffusive case}, volume 782 of {\em Lecture Notes in Physics}.
\newblock Springer, Berlin, 2009.

\bibitem{BarchielliHolevo1995}
A.~Barchielli and A.~S. Holevo.
\newblock Constructing quantum measurement processes via classical stochastic
  calculus.
\newblock {\em Stochastic Process. Appl.}, 58:293--317, 1995.

\bibitem{Pellegrini2010AHP}
C.~Pellegrini.
\newblock Markov chains approximation of jump-diffusion stochastic master
  equations.
\newblock {\em ANN I H POINCARE-PR}, 46:924--948, 2010.

\bibitem{WisemanMilburn2010}
H.M. Wiseman and G.J. Milburn.
\newblock {\em Quantum Measurement and Control}.
\newblock Cambridge University Press,, 2010.

\bibitem{Belavkin2013}
V.~P. Belavkin.
\newblock Quantum quasi-{M}arkov processes in eventum mechanics dynamics,
  observation, filtering and control.
\newblock {\em Quantum Inf Process}, 12:1539–1626, 2013.

\bibitem{BarchielliGregoratti2013}
A.~Barchielli and M.~Gregoratti.
\newblock Quantum continuous measurements: The stochastic {S}chr\"odinger
  equations and the spectrum of the output.
\newblock {\em Quantum Meas. Quantum Metrol}, 1:34--56, 2013.

\bibitem{Bouten2007}
L.~Bouten, R.~van Handel, and M.~R. James.
\newblock An introduction to quantum filtering.
\newblock {\em SIAM J. Control Optim.}, 46:2199--2241, 2007.

\bibitem{Pellegrini2008}
C.~Pellegrini.
\newblock Existence, uniqueness and approximation of a stochastic
  {S}chr\"{o}dinger equation: the diffusive case.
\newblock {\em Ann. Probab.}, 36:2332--2353, 2008.

\bibitem{BarchielliPaganoniZucca}
A.~Barchielli, A.M. Paganoni, and F.~Zucca.
\newblock On stochastic differential equations and semigroups of probability
  operators in quantum probability.
\newblock {\em Stochastic Process. Appl.}, 73:69--86, 1998.

\bibitem{Holevo}
A.S. Holevo.
\newblock On dissipative stochastic equations in a {H}ilbert space.
\newblock {\em Probab. Theory Related Fields}, 104:483--500, 1996.

\bibitem{MoraMC2004}
C.M. Mora.
\newblock Numerical simulation of stochastic evolution equations associated to
  quantum {M}arkov semigroups.
\newblock {\em Math. Comp.}, 73:1393--1415, 2004.

\bibitem{MoraReIDAQP}
C.M. Mora and R.~Rebolledo.
\newblock Regularity of solutions to linear stochastic {S}chr{\"o}dinger
  equations.
\newblock {\em Infinite Dimens. Anal. Quantum Probab. Rel. Topics},
  10:237--259, 2007.

\bibitem{MoraFernBiscay2018}
C.~M. Mora, J.~Fern\'andez, and R.~Biscay.
\newblock Numerical solution of stochastic master equations by exponential
  schemes.
\newblock {\em J. Comput. Phys.}, pages 28--48, 2018.

\bibitem{Dudley}
R.~M. Dudley.
\newblock {\em Real analysis and probability}.
\newblock Chapman and Hall, New York, 1989.

\bibitem{MoraJFA}
C.M. Mora.
\newblock Heisenberg evolution of quantum observables represented by unbounded
  operators.
\newblock {\em J. Funct. Anal.}, 255:3249--3273, 2008.

\bibitem{DaPrato}
G.~Da Prato and J.~Zabczyk.
\newblock {\em Stochastic equations in infinite dimensions}.
\newblock Cambridge University Press, 1992.

\bibitem{Schervish1995}
M.~J. Schervish.
\newblock {\em Theory of Statistics}.
\newblock Springer-Verlag, 1995.

\bibitem{BarchielliPaganoni}
A.~Barchielli and A.M. Paganoni.
\newblock {\em Stochastic Partial Differential Equations and Applications},
  volume 227 of {\em Lecture Notes in Pure and Appl. Math}, chapter Stochastic
  Differential Equations for Trace-Class Operators and Quantum Continual
  Measurements, pages 53--67.
\newblock Taylor \& Francis Group, 2002.

\bibitem{ChebGarQue98}
A.~M. Chebotarev, J.~Garcia, and R.~Quezada.
\newblock {\em A priori estimates and existence theorems for the {L}indblad
  equation with unbounded time-dependent coefficients}, volume 1035, pages
  44--65.
\newblock Publ. Res. Inst. Math. Sci. Kokyuroku, 1998.

\bibitem{Kato}
T.~Kato.
\newblock {\em Perturbation theory for linear operators. Corrected printing of
  the second edition.}
\newblock Springer, 1980.

\bibitem{Bassi2009}
A.~Bassi and D.~D{\"u}rr.
\newblock On the long-time behavior of {H}ilbert space diffusion.
\newblock {\em Europhys. Lett.}, 84:10005, 2008.

\bibitem{Pazy}
A.~Pazy.
\newblock {\em Semigroups of linear operators and applications to partial
  differential equations}.
\newblock Springer, 1983.

\bibitem{Ondrejat2004}
M.~Ondrej\'at.
\newblock Uniqueness for stochastic evolution equations in {B}anach spaces.
\newblock {\em Dissertationes Math.}, 426, 2004.

\end{thebibliography}

\providecommand{\noopsort}[1]{}\providecommand{\singleletter}[1]{#1}%

\end{document}